\definecolor{ultramarine}{RGB}{63, 0, 255}
\definecolor{medblue}{RGB}{0, 0, 100}
\definecolor{panblue}{RGB}{0,24,150}
\definecolor{carmine}{RGB}{150, 0, 24}
\definecolor{gray}{RGB}{150, 150, 150}
\definecolor{darkred}{RGB}{200, 0, 0}
\definecolor{darkgreen}{RGB}{0, 80, 0}
\definecolor{darkblue}{RGB}{0, 0, 200}
\definecolor{darkorange}{rgb}{1.0, 0.55, 0.0}
\definecolor{nred}{rgb}{0.9,0.1,0.1}
\definecolor{nblack}{rgb}{0,0,0}
\definecolor{nblue}{rgb}{0.2,0.2,0.8}
\definecolor{ngreen}{rgb}{0.2,0.6,0.2}
\definecolor{darkestblue}{rgb}{0, 0, 0.3}
\newcommand{\prob}{\ensuremath{\mathsf{Prob}}}
\newcommand{\op}[1]{\ensuremath{\mathsf{#1}}}
\newcommand{\crealist}{classical realist }
\newcommand{\Crealist}{Classical realist }
\newcommand{\neutralize}[1]{\expandafter\let\csname c@#1\endcsname\count@}
\newtheorem{conj}{Conjecture}
\newenvironment{conjbis}[1]
  {%
     \neutralize{conj}\phantomsection
   \begin{conj}}
  {\end{conj}}
\tikzstyle{CqWire}=[color=gray,line width = .75pt,->-]
\tikzstyle{CcWire}=[cWire]
\tikzstyle{RqWire}=[line width = 1pt, color=black,-<-]
\tikzstyle{RcWire}=[cWire]
\tikzstyle{env}=[copoint,regular polygon rotate=0,minimum width=0.2cm, fill=black]
\tikzstyle{probs}=[shape=semicircle,fill=white,draw=black,shape border rotate=180,minimum width=1.2cm]
\tikzset{->-/.style={decoration={
  markings,
  mark=at position .5 with {\arrow{>}}},postaction={decorate}}}
\tikzset{-<-/.style={decoration={
  markings,
  mark=at position .5 with {\arrow{<}}},postaction={decorate}}}
\tikzstyle{bwSpider}=[
 \tikzstyle{wbSpider}=[
\tikzstyle{every picture}=[baseline=-0.25em,scale=0.5]
\tikzstyle{dotpic}=[] 
\tikzstyle{diredges}=[every to/.style={diredge}]
\tikzstyle{math matrix}=[matrix of math nodes,left delimiter=(,right delimiter=),inner sep=2pt,column sep=1em,row sep=0.5em,nodes={inner sep=0pt},text height=1.5ex, text depth=0.25ex]
\tikzstyle{inline text}=[text height=1.5ex, text depth=0.25ex,yshift=0.5mm]
\tikzstyle{label}=[font=\footnotesize,text height=1.5ex, text depth=0.25ex,yshift=0.5mm]
\tikzstyle{left label}=[label,anchor=east,xshift=1.5mm]
\tikzstyle{right label}=[label,anchor=west,xshift=-1mm]
\tikzstyle{up label}=[label,anchor=south,yshift=-1mm]
\tikzstyle{braceedge}=[decorate,decoration={brace,amplitude=2mm,raise=-1mm}]
\tikzstyle{small braceedge}=[decorate,decoration={brace,amplitude=1mm,raise=-1mm}]
\tikzstyle{doubled}=[line width=1.6pt] 
\tikzstyle{boldedge}=[doubled,shorten <=-0.17mm,shorten >=-0.17mm]
\tikzstyle{boldedgegray}=[doubled,gray,shorten <=-0.17mm,shorten >=-0.17mm]
\tikzstyle{singleedgegray}=[gray]
\tikzstyle{semidoubled}=[line width=1.4pt] 
\tikzstyle{semiboldedgegray}=[semidoubled,gray,shorten <=-0.17mm,shorten >=-0.17mm]
\tikzstyle{boxedge}=[semiboldedgegray]
\tikzstyle{boldedgedashed}=[very thick,dashed,shorten <=-0.17mm,shorten >=-0.17mm]
\tikzstyle{vboldedgedashed}=[doubled,dashed,shorten <=-0.17mm,shorten >=-0.17mm]
\tikzstyle{left hook arrow}=[left hook-latex]
\tikzstyle{right hook arrow}=[right hook-latex]
\tikzstyle{sembracket}=[line width=0.5pt,shorten <=-0.07mm,shorten >=-0.07mm]
\tikzstyle{causal edge}=[->,thick,gray]
\tikzstyle{causal nondir}=[thick,gray]
\tikzstyle{timeline}=[thick,gray, dashed]
\tikzstyle{cedge}=[<->,thick,gray!70!white]
\tikzstyle{empty diagram}=[draw=gray!40!white,dashed,shape=rectangle,minimum width=1cm,minimum height=1cm]
\tikzstyle{empty diagram small}=[draw=gray!50!white,dashed,shape=rectangle,minimum width=0.6cm,minimum height=0.5cm]
\tikzstyle{dot}=[inner sep=0mm,minimum width=2mm,minimum height=2mm,draw,shape=circle]
\tikzstyle{bigdot}=[inner sep=0mm,minimum width=5mm,minimum height=5mm,draw,shape=circle]
\tikzstyle{leak}=[white dot, shape=regular polygon, minimum size=3.3 mm, regular polygon sides=3, outer sep=-0.2mm, regular polygon rotate=270]
\tikzstyle{proj}=[regular polygon,regular polygon sides=4,draw,scale=0.75,inner sep=-0.5pt,minimum width=6mm,fill=white]
\tikzstyle{projOut}=[regular polygon,regular polygon sides=3,draw,scale=0.75,inner sep=-0.5pt,minimum width=7.5mm,fill=white,regular polygon rotate=180]
\tikzstyle{projIn}=[regular polygon,regular polygon sides=3,draw,scale=0.75,inner sep=-0.5pt,minimum width=7.5mm,fill=white]
\tikzstyle{Vleak}=[white dot, shape=regular polygon, minimum size=3.3 mm, regular polygon sides=3, outer sep=-0.2mm, regular polygon rotate=90]
\tikzstyle{dleak}=[white dot, line width=1.6pt, shape=regular polygon, minimum size=3.3 mm, regular polygon sides=3, outer sep=-0.2mm, regular polygon rotate=270]
\tikzstyle{Wsquare}=[white dot, shape=regular polygon, rounded corners=0.8 mm, minimum size=3.3 mm, regular polygon sides=3, outer sep=-0.2mm]
\tikzstyle{Wsquareadj}=[white dot, shape=regular polygon, rounded corners=0.8 mm, minimum size=3.3 mm, regular polygon sides=3, outer sep=-0.2mm, regular polygon rotate=180]
\tikzstyle{ddot}=[inner sep=0mm, doubled, minimum width=2.5mm,minimum height=2.5mm,draw,shape=circle]
\tikzstyle{black dot}=[dot,fill=black]
\tikzstyle{white Wsquare}=[Wsquare,fill=gray,text depth=-0.2mm]
\tikzstyle{white Wsquareadj}=[Wsquareadj,fill=white,text depth=-0.2mm]
\tikzstyle{green dot}=[white dot] 
\tikzstyle{gray dot}=[dot,fill=gray,text depth=-0.2mm]
\tikzstyle{red dot}=[gray dot] 
\tikzstyle{black ddot}=[ddot,fill=black]
\tikzstyle{white ddot}=[ddot,fill=white]
\tikzstyle{gray ddot}=[ddot,fill=gray!40!white]
\tikzstyle{gray edge}=[gray!60!white]
\tikzstyle{small dot}=[inner sep=0.2mm,minimum width=0pt,minimum height=0pt,draw,shape=circle]
\tikzstyle{small black dot}=[small dot,fill=black]
\tikzstyle{small white dot}=[small dot,fill=white]
\tikzstyle{small gray dot}=[small dot,fill=gray,draw=gray]
\tikzstyle{causal dot}=[inner sep=0.4mm,minimum width=0pt,minimum height=0pt,draw=white,shape=circle,fill=gray!40!white]
\tikzstyle{phase dimensions}=[minimum size=5mm,font=\footnotesize,rectangle,rounded corners=2.5mm,inner sep=0.2mm,outer sep=-2mm]
\tikzstyle{dphase dimensions}=[minimum size=5mm,font=\footnotesize,rectangle,rounded corners=2.5mm,inner sep=0.2mm,outer sep=-2mm]
\tikzstyle{white phase dot}=[dot,fill=white,phase dimensions]
\tikzstyle{white phase ddot}=[ddot,fill=white,dphase dimensions]
\tikzstyle{white rect ddot}=[draw=black,fill=white,doubled,minimum size=5mm,font=\footnotesize,rectangle,rounded corners=2.5mm,inner sep=0.2mm]
\tikzstyle{gray rect ddot}=[draw=black,fill=gray!40!white,doubled,minimum size=6mm,font=\footnotesize,rectangle,rounded corners=3mm]
\tikzstyle{gray phase dot}=[dot,fill=gray!40!white,phase dimensions]
\tikzstyle{gray phase ddot}=[ddot,fill=gray!40!white,dphase dimensions]
\tikzstyle{grey phase dot}=[gray phase dot]
\tikzstyle{grey phase ddot}=[gray phase ddot]
\tikzstyle{small phase dimensions}=[minimum size=4mm,font=\tiny,rectangle,rounded corners=2mm,inner sep=0.2mm,outer sep=-2mm]
\tikzstyle{small dphase dimensions}=[minimum size=4mm,font=\tiny,rectangle,rounded corners=2mm,inner sep=0.2mm,outer sep=-2mm]
\tikzstyle{small gray phase dot}=[dot,fill=gray!40!white,small phase dimensions]
\tikzstyle{small gray phase ddot}=[ddot,fill=gray!40!white,small dphase dimensions]
\tikzstyle{small map}=[draw,shape=rectangle,minimum height=4mm,minimum width=4mm,fill=white]
\tikzstyle{cnot}=[fill=white,shape=circle,inner sep=-1.4pt]
\tikzstyle{asym hadamard}=[fill=white,draw,shape=NEbox,inner sep=0.6mm,font=\footnotesize,minimum height=4mm]
\tikzstyle{asym hadamard conj}=[fill=white,draw,shape=NWbox,inner sep=0.6mm,font=\footnotesize,minimum height=4mm]
\tikzstyle{asym hadamard dag}=[fill=white,draw,shape=SEbox,inner sep=0.6mm,font=\footnotesize,minimum height=4mm]
\tikzstyle{hadamard}=[fill=white,draw,inner sep=0.6mm,font=\footnotesize,minimum height=4mm,minimum width=4mm]
\tikzstyle{small hadamard}=[fill=white,draw,inner sep=0.6mm,minimum height=1.5mm,minimum width=1.5mm]
\tikzstyle{small hadamard rotate}=[small hadamard,rotate=45]
\tikzstyle{dhadamard}=[hadamard,doubled]
\tikzstyle{small dhadamard}=[small hadamard,doubled]
\tikzstyle{small dhadamard rotate}=[small hadamard rotate,doubled]
\tikzstyle{antipode}=[white dot,inner sep=0.3mm,font=\footnotesize]
\tikzstyle{small gray box}=[small box,fill=gray!30]
\tikzstyle{medium box}=[rectangle,inline text,fill=white,draw,minimum height=5mm,yshift=-0.5mm,minimum width=10mm,font=\small]
\tikzstyle{square box}=[small box] 
\tikzstyle{medium gray box}=[small box,fill=gray!30]
\tikzstyle{semilarge box}=[rectangle,inline text,fill=white,draw,minimum height=5mm,yshift=-0.5mm,minimum width=12.5mm,font=\small]
\tikzstyle{large box}=[rectangle,inline text,fill=white,draw,minimum height=5mm,yshift=-0.5mm,minimum width=15mm,font=\small]
\tikzstyle{large gray box}=[small box,fill=gray!30]
\tikzstyle{Bayes box}=[rectangle,fill=black,draw, minimum height=3mm, minimum width=3mm]
\tikzstyle{gray square point}=[small box,fill=gray!50]
\tikzstyle{dphase box white}=[dhadamard]
\tikzstyle{dphase box gray}=[dhadamard,fill=gray!50!white]
\tikzstyle{phase box white}=[hadamard]
\tikzstyle{phase box gray}=[hadamard,fill=gray!50!white]
\tikzstyle{point nosep}=[regular polygon,regular polygon sides=3,draw,scale=0.75,inner sep=-2pt,minimum width=9mm,fill=white,regular polygon rotate=180]
\tikzstyle{dpoint}=[point,doubled]
\tikzstyle{dcopoint}=[copoint,doubled]
\tikzstyle{pointgrow}=[shape=cornerpoint,kpoint common,scale=0.75,inner sep=3pt]
\tikzstyle{pointgrow dag}=[shape=cornercopoint,kpoint common,scale=0.75,inner sep=3pt]
\tikzstyle{wide copoint}=[fill=white,draw,shape=isosceles triangle,shape border rotate=90,isosceles triangle stretches=true,inner sep=0pt,minimum width=1.5cm,minimum height=6.12mm]
\tikzstyle{wide point}=[fill=white,draw,shape=isosceles triangle,shape border rotate=-90,isosceles triangle stretches=true,inner sep=0pt,minimum width=1.5cm,minimum height=6.12mm,yshift=-0.0mm]
\tikzstyle{wide point plus}=[fill=white,draw,shape=isosceles triangle,shape border rotate=-90,isosceles triangle stretches=true,inner sep=0pt,minimum width=1.74cm,minimum height=7mm,yshift=-0.0mm]
\tikzstyle{wide dpoint}=[fill=white,doubled,draw,shape=isosceles triangle,shape border rotate=-90,isosceles triangle stretches=true,inner sep=0pt,minimum width=1.5cm,minimum height=6.12mm,yshift=-0.0mm]
\tikzstyle{tinypoint}=[regular polygon,regular polygon sides=3,draw,scale=0.55,inner sep=-0.15pt,minimum width=6mm,fill=white,regular polygon rotate=180]
\tikzstyle{white point}=[point]
\tikzstyle{white dpoint}=[dpoint]
\tikzstyle{green point}=[white point] 
\tikzstyle{white copoint}=[copoint]
\tikzstyle{gray point}=[point,fill=gray!40!white]
\tikzstyle{gray dpoint}=[gray point,doubled]
\tikzstyle{red point}=[gray point] 
\tikzstyle{gray copoint}=[copoint,fill=gray!40!white]
\tikzstyle{gray dcopoint}=[gray copoint,doubled]
\tikzstyle{white point guide}=[regular polygon,regular polygon sides=3,font=\scriptsize,draw,scale=0.65,inner sep=-0.5pt,minimum width=9mm,fill=white,regular polygon rotate=180]
\tikzstyle{black point}=[point,fill=black,font=\color{white}]
\tikzstyle{black copoint}=[copoint,fill=black,font=\color{white}]
\tikzstyle{tiny gray point}=[tinypoint,fill=gray!40!white]
\tikzstyle{diredge}=[->]
\tikzstyle{ddiredge}=[<->]
\tikzstyle{rdiredge}=[<-]
\tikzstyle{thickdiredge}=[->, very thick]
\tikzstyle{pointer edge}=[->,very thick,gray]
\tikzstyle{pointer edge part}=[very thick,gray]
\tikzstyle{dashed edge}=[dashed]
\tikzstyle{thick dashed edge}=[very thick,dashed]
\tikzstyle{thick map edge}=[very thick,|->]
\newcommand{\boxshape}[3]{%
\pgfdeclareshape{#1}{
\inheritsavedanchors[from=rectangle] 
\inheritanchorborder[from=rectangle]
\inheritanchor[from=rectangle]{center}
\inheritanchor[from=rectangle]{north}
\inheritanchor[from=rectangle]{south}
\inheritanchor[from=rectangle]{west}
\inheritanchor[from=rectangle]{east}
\backgroundpath{
\southwest \pgf@xa=\pgf@x \pgf@ya=\pgf@y
\northeast \pgf@xb=\pgf@x \pgf@yb=\pgf@y

\@tempdima=#2
\@tempdimb=#3

\pgfpathmoveto{\pgfpoint{\pgf@xa - 5pt + \@tempdima}{\pgf@ya}}
\pgfpathlineto{\pgfpoint{\pgf@xa - 5pt - \@tempdima}{\pgf@yb}}
\pgfpathlineto{\pgfpoint{\pgf@xb + 5pt + \@tempdimb}{\pgf@yb}}
\pgfpathlineto{\pgfpoint{\pgf@xb + 5pt - \@tempdimb}{\pgf@ya}}
\pgfpathlineto{\pgfpoint{\pgf@xa - 5pt + \@tempdima}{\pgf@ya}}
\pgfpathclose
}
}}
\newcommand{\smallboxshape}[3]{%
\pgfdeclareshape{#1}{
\inheritsavedanchors[from=rectangle] 
\inheritanchorborder[from=rectangle]
\inheritanchor[from=rectangle]{center}
\inheritanchor[from=rectangle]{north}
\inheritanchor[from=rectangle]{south}
\inheritanchor[from=rectangle]{west}
\inheritanchor[from=rectangle]{east}
\backgroundpath{
\southwest \pgf@xa=\pgf@x \pgf@ya=\pgf@y
\northeast \pgf@xb=\pgf@x \pgf@yb=\pgf@y

\@tempdima=#2
\@tempdimb=#3

\pgfpathmoveto{\pgfpoint{\pgf@xa - 3pt + \@tempdima}{\pgf@ya}}
\pgfpathlineto{\pgfpoint{\pgf@xa - 3pt - \@tempdima}{\pgf@yb}}
\pgfpathlineto{\pgfpoint{\pgf@xb + 3pt + \@tempdimb}{\pgf@yb}}
\pgfpathlineto{\pgfpoint{\pgf@xb + 3pt - \@tempdimb}{\pgf@ya}}
\pgfpathlineto{\pgfpoint{\pgf@xa - 3pt + \@tempdima}{\pgf@ya}}
\pgfpathclose
}
}}
\tikzstyle{cloud}=[shape=cloud,draw,minimum width=1.5cm,minimum height=1.5cm]
\tikzstyle{map}=[draw,shape=NEbox,inner sep=1pt,minimum height=4mm,fill=white]
\tikzstyle{dashedmap}=[draw,dashed,shape=NEbox,inner sep=2pt,minimum height=6mm,fill=white]
\tikzstyle{mapdag}=[draw,shape=SEbox,inner sep=1pt,minimum height=4mm,fill=white]
\tikzstyle{mapadj}=[draw,shape=SEbox,inner sep=2pt,minimum height=6mm,fill=white]
\tikzstyle{maptrans}=[draw,shape=SWbox,inner sep=2pt,minimum height=6mm,fill=white]
\tikzstyle{mapconj}=[draw,shape=NWbox,inner sep=2pt,minimum height=6mm,fill=white]
\tikzstyle{medium map}=[draw,shape=NEbox,inner sep=2pt,minimum height=6mm,fill=white,minimum width=7mm]
\tikzstyle{medium map dag}=[draw,shape=SEbox,inner sep=2pt,minimum height=6mm,fill=white,minimum width=7mm]
\tikzstyle{medium map adj}=[draw,shape=SEbox,inner sep=2pt,minimum height=6mm,fill=white,minimum width=7mm]
\tikzstyle{medium map trans}=[draw,shape=SWbox,inner sep=2pt,minimum height=6mm,fill=white,minimum width=7mm]
\tikzstyle{medium map conj}=[draw,shape=NWbox,inner sep=2pt,minimum height=6mm,fill=white,minimum width=7mm]
\tikzstyle{semilarge map}=[draw,shape=NEbox,inner sep=2pt,minimum height=6mm,fill=white,minimum width=9.5mm]
\tikzstyle{semilarge map trans}=[draw,shape=SWbox,inner sep=2pt,minimum height=6mm,fill=white,minimum width=9.5mm]
\tikzstyle{semilarge map adj}=[draw,shape=SEbox,inner sep=2pt,minimum height=6mm,fill=white,minimum width=9.5mm]
\tikzstyle{semilarge map dag}=[draw,shape=SEbox,inner sep=2pt,minimum height=6mm,fill=white,minimum width=9.5mm]
\tikzstyle{semilarge map conj}=[draw,shape=NWbox,inner sep=2pt,minimum height=6mm,fill=white,minimum width=9.5mm]
\tikzstyle{large map}=[draw,shape=NEbox,inner sep=2pt,minimum height=6mm,fill=white,minimum width=12mm]
\tikzstyle{large map conj}=[draw,shape=NWbox,inner sep=2pt,minimum height=6mm,fill=white,minimum width=12mm]
\tikzstyle{very large map}=[draw,shape=NEbox,inner sep=2pt,minimum height=6mm,fill=white,minimum width=17mm]
\tikzstyle{medium dmap}=[draw,doubled,shape=NEbox,inner sep=2pt,minimum height=6mm,fill=white,minimum width=7mm]
\tikzstyle{medium dmap dag}=[draw,doubled,shape=SEbox,inner sep=2pt,minimum height=6mm,fill=white,minimum width=7mm]
\tikzstyle{medium dmap adj}=[draw,doubled,shape=SEbox,inner sep=2pt,minimum height=6mm,fill=white,minimum width=7mm]
\tikzstyle{medium dmap trans}=[draw,doubled,shape=SWbox,inner sep=2pt,minimum height=6mm,fill=white,minimum width=7mm]
\tikzstyle{medium dmap conj}=[draw,doubled,shape=NWbox,inner sep=2pt,minimum height=6mm,fill=white,minimum width=7mm]
\tikzstyle{semilarge dmap}=[draw,doubled,shape=NEbox,inner sep=2pt,minimum height=6mm,fill=white,minimum width=9.5mm]
\tikzstyle{semilarge dmap trans}=[draw,doubled,shape=SWbox,inner sep=2pt,minimum height=6mm,fill=white,minimum width=9.5mm]
\tikzstyle{semilarge dmap adj}=[draw,doubled,shape=SEbox,inner sep=2pt,minimum height=6mm,fill=white,minimum width=9.5mm]
\tikzstyle{semilarge dmap dag}=[draw,doubled,shape=SEbox,inner sep=2pt,minimum height=6mm,fill=white,minimum width=9.5mm]
\tikzstyle{semilarge dmap conj}=[draw,doubled,shape=NWbox,inner sep=2pt,minimum height=6mm,fill=white,minimum width=9.5mm]
\tikzstyle{large dmap}=[draw,doubled,shape=NEbox,inner sep=2pt,minimum height=6mm,fill=white,minimum width=12mm]
\tikzstyle{large dmap conj}=[draw,doubled,shape=NWbox,inner sep=2pt,minimum height=6mm,fill=white,minimum width=12mm]
\tikzstyle{large dmap trans}=[draw,doubled,shape=SWbox,inner sep=2pt,minimum height=6mm,fill=white,minimum width=12mm]
\tikzstyle{large dmap adj}=[draw,doubled,shape=SEbox,inner sep=2pt,minimum height=6mm,fill=white,minimum width=12mm]
\tikzstyle{large dmap dag}=[draw,doubled,shape=SEbox,inner sep=2pt,minimum height=6mm,fill=white,minimum width=12mm]
\tikzstyle{very large dmap}=[draw,doubled,shape=NEbox,inner sep=2pt,minimum height=6mm,fill=white,minimum width=19.5mm]
\tikzstyle{muxbox}=[draw,shape=rectangle,minimum height=3mm,minimum width=3mm,fill=white]
\tikzstyle{dmuxbox}=[muxbox,doubled]
\tikzstyle{box}=[draw,shape=rectangle,inner sep=2pt,minimum height=6mm,minimum width=6mm,fill=white]
\tikzstyle{dbox}=[draw,doubled,shape=rectangle,inner sep=2pt,minimum height=6mm,minimum width=6mm,fill=white]
\tikzstyle{dmap}=[draw,doubled,shape=NEbox,inner sep=2pt,minimum height=6mm,fill=white]
\tikzstyle{dmapdag}=[draw,doubled,shape=SEbox,inner sep=2pt,minimum height=6mm,fill=white]
\tikzstyle{dmapadj}=[draw,doubled,shape=SEbox,inner sep=2pt,minimum height=6mm,fill=white]
\tikzstyle{dmaptrans}=[draw,doubled,shape=SWbox,inner sep=2pt,minimum height=6mm,fill=white]
\tikzstyle{dmapconj}=[draw,doubled,shape=NWbox,inner sep=2pt,minimum height=6mm,fill=white]
\tikzstyle{ddmap}=[draw,doubled,dashed,shape=NEbox,inner sep=2pt,minimum height=6mm,fill=white]
\tikzstyle{ddmapdag}=[draw,doubled,dashed,shape=SEbox,inner sep=2pt,minimum height=6mm,fill=white]
\tikzstyle{ddmapadj}=[draw,doubled,dashed,shape=SEbox,inner sep=2pt,minimum height=6mm,fill=white]
\tikzstyle{ddmaptrans}=[draw,doubled,dashed,shape=SWbox,inner sep=2pt,minimum height=6mm,fill=white]
\tikzstyle{ddmapconj}=[draw,doubled,dashed,shape=NWbox,inner sep=2pt,minimum height=6mm,fill=white]
\tikzstyle{smap}=[draw,shape=sNEbox,fill=white]
\tikzstyle{smapdag}=[draw,shape=sSEbox,fill=white]
\tikzstyle{smapadj}=[draw,shape=sSEbox,fill=white]
\tikzstyle{smaptrans}=[draw,shape=sSWbox,fill=white]
\tikzstyle{smapconj}=[draw,shape=sNWbox,fill=white]
\tikzstyle{dsmap}=[draw,dashed,shape=sNEbox,fill=white]
\tikzstyle{dsmapdag}=[draw,dashed,shape=sSEbox,fill=white]
\tikzstyle{dsmaptrans}=[draw,dashed,shape=sSWbox,fill=white]
\tikzstyle{dsmapconj}=[draw,dashed,shape=sNWbox,fill=white]
\tikzstyle{mmap}=[draw,shape=mNEbox]
\tikzstyle{mmapdag}=[draw,shape=mSEbox]
\tikzstyle{mmaptrans}=[draw,shape=mSWbox]
\tikzstyle{mmapconj}=[draw,shape=mNWbox]
\tikzstyle{mmapgray}=[draw,fill=gray!40!white,shape=mNEbox]
\tikzstyle{smapgray}=[draw,fill=gray!40!white,shape=sNEbox]
\pgfmathsetmacro{\pgf@shorten@left}{\pgfkeysvalueof{/tikz/shorten left}}
\pgfmathsetmacro{\pgf@shorten@right}{\pgfkeysvalueof{/tikz/shorten right}}
\pgfmathsetmacro{\pgf@shorten@left}{\pgfkeysvalueof{/tikz/shorten left}}
\pgfmathsetmacro{\pgf@shorten@right}{\pgfkeysvalueof{/tikz/shorten right}}
\tikzstyle{kpoint common}=[draw,fill=white,inner sep=1pt,minimum height=4mm]
\tikzstyle{kpoint sc}=[shape=cornerpoint,kpoint common]
\tikzstyle{kpoint adjoint sc}=[shape=cornercopoint,kpoint common]
\tikzstyle{kpoint}=[shape=cornerpoint,shorten left=5pt,kpoint common]
\tikzstyle{kpoint adjoint}=[shape=cornercopoint,shorten left=5pt,kpoint common]
\tikzstyle{kpoint conjugate}=[shape=cornerpoint,shorten right=5pt,kpoint common]
\tikzstyle{kpoint transpose}=[shape=cornercopoint,shorten right=5pt,kpoint common]
\tikzstyle{kpoint symm}=[shape=cornerpoint,shorten left=5pt,shorten right=5pt,kpoint common]
\tikzstyle{wide kpoint sc}=[shape=cornerpoint,kpoint common, minimum width=1 cm]
\tikzstyle{wide kpointdag sc}=[shape=cornercopoint,kpoint common, minimum width=1 cm]
\tikzstyle{black kpoint}=[shape=cornerpoint,shorten left=5pt,kpoint common,fill=black,font=\color{white}]
\tikzstyle{black kpoint sm}=[shape=cornerpoint,shorten left=5pt,kpoint common,fill=black,font=\color{white},scale=0.75]
\tikzstyle{black kpoint adjoint}=[shape=cornercopoint,shorten left=5pt,kpoint common,fill=black,font=\color{white}]
\tikzstyle{black kpointadj}=[shape=cornercopoint,shorten left=5pt,kpoint common,fill=black,font=\color{white}]
\tikzstyle{black kpointadj sm}=[shape=cornercopoint,shorten left=5pt,kpoint common,fill=black,font=\color{white},scale=0.75]
\tikzstyle{black dkpoint}=[shape=cornerpoint,shorten left=5pt,kpoint common,fill=black, doubled,font=\color{white}]
\tikzstyle{black dkpoint adjoint}=[shape=cornercopoint,shorten left=5pt,kpoint common,fill=black, doubled,font=\color{white}]
\tikzstyle{black dkpointadj}=[shape=cornercopoint,shorten left=5pt,kpoint common,fill=black, doubled,font=\color{white}]
\tikzstyle{black dkpoint sm}=[shape=cornerpoint,shorten left=5pt,kpoint common,fill=black, doubled,font=\color{white},scale=0.75]
\tikzstyle{black dkpointadj sm}=[shape=cornercopoint,shorten left=5pt,kpoint common,fill=black, doubled,font=\color{white},scale=0.75]
\tikzstyle{kpointdag}=[kpoint adjoint]
\tikzstyle{kpointadj}=[kpoint adjoint]
\tikzstyle{kpointconj}=[kpoint conjugate]
\tikzstyle{kpointtrans}=[kpoint transpose]
\tikzstyle{big kpoint}=[kpoint, minimum width=1.2 cm, minimum height=8mm, inner sep=4pt, text depth=3mm]
\tikzstyle{wide kpoint}=[kpoint, minimum width=1 cm, inner sep=2pt]
\tikzstyle{wide kpointdag}=[kpointdag, minimum width=1 cm, inner sep=2pt]
\tikzstyle{wide kpointconj}=[kpointconj, minimum width=1 cm, inner sep=2pt]
\tikzstyle{wide kpointtrans}=[kpointtrans, minimum width=1 cm, inner sep=2pt]
\tikzstyle{wider kpoint}=[kpoint, minimum width=1.25 cm, inner sep=2pt]
\tikzstyle{wider kpointdag}=[kpointdag, minimum width=1.25 cm, inner sep=2pt]
\tikzstyle{wider kpointconj}=[kpointconj, minimum width=1.25 cm, inner sep=2pt]
\tikzstyle{wider kpointtrans}=[kpointtrans, minimum width=1.25 cm, inner sep=2pt]
\tikzstyle{gray kpoint}=[kpoint,fill=gray!50!white]
\tikzstyle{gray kpointdag}=[kpointdag,fill=gray!50!white]
\tikzstyle{gray kpointadj}=[kpointadj,fill=gray!50!white]
\tikzstyle{gray kpointconj}=[kpointconj,fill=gray!50!white]
\tikzstyle{gray kpointtrans}=[kpointtrans,fill=gray!50!white]
\tikzstyle{gray dkpoint}=[kpoint,fill=gray!50!white,doubled]
\tikzstyle{gray dkpointdag}=[kpointdag,fill=gray!50!white,doubled]
\tikzstyle{gray dkpointadj}=[kpointadj,fill=gray!50!white,doubled]
\tikzstyle{gray dkpointconj}=[kpointconj,fill=gray!50!white,doubled]
\tikzstyle{gray dkpointtrans}=[kpointtrans,fill=gray!50!white,doubled]
\tikzstyle{white label}=[draw,fill=white,rectangle,inner sep=0.7 mm]
\tikzstyle{gray label}=[draw,fill=gray!50!white,rectangle,inner sep=0.7 mm]
\tikzstyle{black label}=[draw,fill=black,rectangle,inner sep=0.7 mm]
\tikzstyle{dkpoint}=[kpoint,doubled]
\tikzstyle{wide dkpoint}=[wide kpoint,doubled]
\tikzstyle{dkpointdag}=[kpoint adjoint,doubled]
\tikzstyle{wide dkpointdag}=[wide kpointdag,doubled]
\tikzstyle{dkcopoint}=[kpoint adjoint,doubled]
\tikzstyle{dkpointadj}=[kpoint adjoint,doubled]
\tikzstyle{dkpointconj}=[kpoint conjugate,doubled]
\tikzstyle{dkpointtrans}=[kpoint transpose,doubled]
\tikzstyle{kscalar}=[kpoint common, shape=EBox, inner xsep=-1pt, inner ysep=3pt,font=\small]
\tikzstyle{kscalarconj}=[kpoint common, shape=WBox, inner xsep=-1pt, inner ysep=3pt,font=\small]
\tikzstyle{spekpoint}=[kpoint sc,minimum height=5mm,inner sep=3pt]
\tikzstyle{spekcopoint}=[kpoint adjoint sc,minimum height=5mm,inner sep=3pt]
\tikzstyle{dspekpoint}=[spekpoint,doubled]
\tikzstyle{dspekcopoint}=[spekcopoint,doubled]
 \tikzstyle{bigground}=[regular polygon,regular polygon sides=3,draw=gray,scale=0.50,inner sep=-0.5pt,minimum width=10mm,fill=gray]
\tikzstyle{arrs}=[-latex,font=\small,auto]
\tikzstyle{arrow plain}=[arrs]
\tikzstyle{arrow dashed}=[dashed,arrs]
\tikzstyle{arrow bold}=[very thick,arrs]
\tikzstyle{arrow hide}=[draw=white!0,-]
\tikzstyle{arrow reverse}=[latex-]
\tikzstyle{cdnode}=[]
\tikzstyle{dscalar}=[diamond,doubled, draw,inner sep=0.5pt,font=\small]
\tikzstyle{epiCopoint}=[regular polygon,regular polygon sides=3,draw,scale=0.75,inner sep=-0.5pt,minimum width=5mm,fill=white,regular polygon rotate=0,line width=1pt]
\tikzstyle{epiPoint}=[regular polygon,regular polygon sides=3,draw,scale=0.75,inner sep=-0.5pt,minimum width=5mm,fill=white,regular polygon rotate=180,line width=1pt]
\tikzstyle{epiPointWide}=[regular polygon,regular polygon sides=3,draw,xscale=0.75,yscale=.5,inner sep=-0.5pt,minimum width=8mm,fill=white,regular polygon rotate=180,line width=1pt]
\tikzstyle{epiBox}=[fill=white,draw, line width = 1pt,inner sep=0.6mm,font=\footnotesize,minimum height=3mm,minimum width=3mm]
\tikzstyle{epiBoxWide}=[fill=white,draw, line width = 1pt,inner sep=0.6mm,font=\footnotesize,minimum height=3mm,minimum width=5mm]
\tikzstyle{epiBoxVeryWide}=[fill=white,draw, line width = 1pt,inner sep=0.6mm,font=\footnotesize,minimum height=3mm,minimum width=7mm]
\tikzstyle{clear dot}=[dot,fill=none,text depth=-0.2mm,draw=gray, line width = .75pt]
\tikzstyle{tall clear dot}=[dot,fill=none,text depth=-0.2mm,draw=gray, line width = .75pt,shape=ellipse, minimum height=5mm]
\tikzstyle{wide clear dot}=[dot,fill=none,text depth=-0.2mm,draw=gray, line width = .75pt, shape=ellipse, minimum width = 5mm]
\tikzstyle{very wide clear dot}=[dot,fill=none,text depth=-0.2mm,draw=gray, line width = .75pt, shape=ellipse, minimum width = 7mm ]
\tikzstyle{point}=[regular polygon,regular polygon sides=3,draw,scale=0.75,inner sep=-0.5pt,minimum width=9mm,fill=white,regular polygon rotate=180]
\tikzstyle{infpoint}=[regular polygon,regular polygon sides=3,draw,scale=0.75,inner sep=-0.5pt,minimum width=9mm,fill=white,regular polygon rotate=90]
\tikzstyle{infcopoint}=[regular polygon,regular polygon sides=3,draw,scale=0.75,inner sep=-0.5pt,minimum width=9mm,fill=white,regular polygon rotate=270]
\tikzstyle{copoint}=[regular polygon,regular polygon sides=3,draw,scale=0.75,inner sep=-0.5pt,minimum width=9mm,fill=white]
\tikzstyle{small box}=[rectangle,inline text,fill=white,draw,minimum height=5mm,yshift=-0.5mm,minimum width=5mm,font=\small]
\tikzstyle{scalar}=[diamond,draw,inner sep=0.5pt,font=\small]
\tikzstyle{white dot}=[dot,fill=white,text depth=-0.2mm]
\tikzstyle{uControl}= [draw, shape=diamond, aspect=.5,inner sep=0pt,minimum height=2mm,minimum width=3.5mm,fill=black]
\tikzstyle{funcApp} = [draw, shape=diamond, aspect=2,inner sep=0pt,minimum height=3.5mm,minimum width=2mm,fill=black]
\tikzstyle{funcSplit} = [draw,shape=regular polygon, regular polygon sides = 3,inner sep=0pt,minimum height=3mm,minimum  width=1mm,regular polygon rotate=210, fill=black]
\tikzstyle{copy} = [black dot]
\tikzstyle{infMerge} = [draw=gray,fill=gray,shape=regular polygon, regular polygon sides = 3,inner sep=0pt,minimum height=2.5mm,minimum  width=1mm,regular polygon rotate=30]
\tikzstyle{infSplit}= [draw=gray,fill=gray,shape=regular polygon, regular polygon sides = 3,inner sep=0pt,minimum height=2.5mm,minimum  width=1mm,regular polygon rotate=210]
\tikzstyle{seqComp}=[white dot]
\tikzstyle{parComp}=[circuit ee IEC, bulb,fill=white]
\tikzstyle{addStar}=[draw, shape=star, star points=5, star rotate=90,minimum size=2mm, inner sep=0pt,fill=black]
\tikzstyle{removeStar}=[draw, shape=star, star points=5, star rotate =270,minimum size=2mm, inner sep=0pt,fill=white]
\tikzset{XOR/.style={draw,fill=white,circle,append after command={
        [shorten >=\pgflinewidth, shorten <=\pgflinewidth,]
        (\tikzlastnode.north) edge (\tikzlastnode.south)
        (\tikzlastnode.east) edge (\tikzlastnode.west)
        }
    }
}
\tikzstyle{upground}=[circuit ee IEC,thick,ground,rotate=90,xscale=2.5,yscale=2]
 \tikzstyle{downground}=[circuit ee IEC,thick,ground,rotate=-90,xscale=2.5,yscale=2]
 \tikzstyle{infupground}=[circuit ee IEC,thick,ground,rotate=0,xscale=2.5,yscale=2]
  \tikzstyle{ignore}=[circuit ee IEC,thick,ground,rotate=0,xscale=1.5,yscale=1,xshift=5pt]
 \tikzstyle{infdownground}=[circuit ee IEC,thick,ground,rotate=180,xscale=2.5,yscale=2]
\tikzstyle{oWire}=[line width = .75pt, color=green!40!black!70!]
\tikzstyle{qWire}=[line width = 1pt, color=black]
\tikzstyle{cWire}=[color=gray,line width = .75pt]
\tikzstyle{thick gray dashed edge}=[thick dashed edge,gray!40]
\let\olddagger\dagger
\renewcommand{\dagger}{\ensuremath{\olddagger}\xspace}
\theoremstyle{plain}
\newtheorem*{main theorem}{Main Theorem}
\newtheorem{theorem}{Theorem}[section]
\newtheorem{corollary}[theorem]{Corollary}
\newtheorem{lemma}[theorem]{Lemma}
\newtheorem{proposition}[theorem]{Proposition}
\newtheorem{definition}[theorem]{Definition}
\newtheorem{example*}[theorem]{Example*}
\newtheorem{examples*}[theorem]{Examples*}
\newtheorem{remark}[theorem]{Category-theoretic remark}
\newtheorem{remark*}[theorem]{Remark*}
\newtheorem*{search problem}{Search Problem}
\def\bR{\begin{color}{red}}
\def\bB{\begin{color}{blue}}
\def\bM{\begin{color}{magenta}}
\def\bC{\begin{color}{cyan}}
\def\bW{\begin{color}{white}}
\def\bBl{\begin{color}{black}}
\def\bG{\begin{color}{green}}
\def\bY{\begin{color}{yellow}}
\def\e{\end{color}\xspace}
\newcommand{\bit}{\begin{itemize}}
\newcommand{\eit}{\end{itemize}\par\noindent}
\newcommand{\ben}{\begin{enumerate}}
\newcommand{\een}{\end{enumerate}\par\noindent}
\newcommand{\beq}{\begin{equation}}
\newcommand{\eeq}{\end{equation}\par\noindent}
\newcommand{\beqa}{\begin{align*}}
\newcommand{\eeqa}{\end{align*}}
\newcommand{\beqn}{\begin{align}}
\newcommand{\eeqn}{\end{align}\par\noindent}
\def\jR{\begin{color}{black}}
\def\jB{\begin{color}{black}}
\def\jM{\begin{color}{magenta}}
\def\jC{\begin{color}{cyan}}
\def\jW{\begin{color}{white}}
\def\jBl{\begin{color}{black}}
\def\jG{\begin{color}{green}}
\def\jY{\begin{color}{yellow}}
\newcommand{\morph}[2]{\meas{#1}{#2}}
\newcommand{\meas}[2]{{\ensuremath{ \; \overline{\! \mathsf{#1} \!{\shortrightarrow} \! \mathsf{#2} \! }\;}}}
\newcommand{\FI}{\ensuremath{\textsc{F\!\text{-}\!S}}}
\newcommand{\PI}{\ensuremath{\textsc{P\!\text{-}\!S}}}
\newcommand{\PQI}{\ensuremath{\textsc{P}_\textsc{Q}\!\text{-}\!\textsc{S}}}
\newcommand{\CI}{\ensuremath{\textsc{C\!\text{-}\!I}}}
\newcommand{\Func}{\ensuremath{\textsc{Func} }}
\newcommand{\Proc}{\ensuremath{\textsc{Proc} }}
\newcommand{\Inf}{\ensuremath{\textsc{SubStoch} }}
\newcommand{\genInf}{\ensuremath{\textsc{Inf} }}
\newcommand{\Caus}{\ensuremath{\textsc{Caus} }}
\newcommand{\XFXI}{\ensuremath{\textsc{XF\!\text{-}\!XS}}}
\newcommand{\QFQI}{\ensuremath{\textsc{QF\!\text{-}\!QS}}}
\newcommand{\InfSplit}{%
\begin{tikzpicture}
	\begin{pgfonlayer}{nodelayer}
		\node [style=infSplit] (0) at (0, -0) {};
	\end{pgfonlayer}
\end{tikzpicture}}}
\newcommand{\InfMerge}{%
\begin{tikzpicture}
	\begin{pgfonlayer}{nodelayer}
		\node [style=infMerge] (0) at (0, -0) {};
	\end{pgfonlayer}
\end{tikzpicture}}}
\newcommand{\BlackDiamond}{%
\begin{tikzpicture}
	\begin{pgfonlayer}{nodelayer}
		\node [style=funcApp] (0) at (0, 0) {};
	\end{pgfonlayer}
\end{tikzpicture}
}}
\begin{document}

\title{Unscrambling the omelette of causation and inference: \\The framework of causal-inferential theories}

\author{David Schmid}
\email{dschmid@perimeterinstitute.ca}
\affiliation{Perimeter Institute for Theoretical Physics, 31 Caroline Street North, Waterloo, Ontario Canada N2L 2Y5}
\affiliation{Institute for Quantum Computing, University of Waterloo, Waterloo, Ontario N2L 3G1, Canada}
\author{John H. Selby}
\email{john.h.selby@gmail.com}
\affiliation{International Centre for Theory of Quantum Technologies,
University of Gda\'nsk, Wita Stwosza 63, 80-308 Gda\'nsk, Poland}
\author{Robert W. Spekkens}
\affiliation{Perimeter Institute for Theoretical Physics, 31 Caroline Street North, Waterloo, Ontario Canada N2L 2Y5}

\begin{abstract}
Using a process-theoretic formalism, we introduce the notion of a {\em causal-inferential theory}: a triple consisting of a theory of causal influences, a theory of inferences, and a specification of how these interact. Recasting the notions of operational and realist theories in this mold clarifies what a realist account of an experiment offers beyond an operational account. It also yields a novel characterization of the assumptions and implications of standard no-go theorems for realist representations of operational quantum theory, namely, those  based on Bell's notion of locality and those based on generalized noncontextuality.  Moreover, our process-theoretic characterization of generalized noncontextuality is shown to be implied by an even more natural principle which we term {\em Leibnizianity}. Most strikingly, our framework offers a way forward in a research program  that seeks to circumvent these no-go results.  Specifically, we argue that if one can identify axioms for a realist causal-inferential theory such that the notions of causation and inference can differ from their conventional (classical) interpretations, then one has the means of defining an intrinsically quantum notion of realism, and thereby a realist representation of operational quantum theory that salvages the spirit of locality and of noncontextuality.
\end{abstract}
\maketitle
\tableofcontents

\vspace{3mm}
\begin{tcolorbox}
We recommend that the reader watch \href{http://pirsa.org/displayFlash.php?id=20110051}{http://pirsa.org/displayFlash.php?id=20110051} while reading this paper.
\end{tcolorbox}

\section{Introduction}

One of the key disagreements among quantum researchers is the question of which elements of the quantum formalism refer to ontological concepts and which refer to epistemological concepts.  The importance of settling this issue was famously noted by E.T. Jaynes~\cite{Jaynesquote}:
\begin{quote}
[O]ur present [quantum mechanical] formalism is not purely
epistemological; it is a peculiar mixture
describing in part realities of Nature, in part
incomplete human information about Nature
--- all scrambled up by Heisenberg and Bohr
into an omelette that nobody has seen how
to unscramble. Yet we think that the
unscrambling is a prerequisite for any further
advance in basic physical theory. For, if we
cannot separate the subjective and objective
aspects of the formalism, we cannot know
what we are talking about; it is just that
simple.
\end{quote}

In our view, the most constructive way of defining `realities of Nature' is as causal mechanisms acting on causal relata.  
 That is, we here take an account of an operational phenomenon to be {\em realist} if it secures a {\em causal explanation} of that phenomenon. 
Hence, the particular omelette of  ontology and epistemology that we will be endeavouring to unscramble in this work is the one that results from the mixing up of the concepts of {\em epistemic inference} on the one hand, and of {\em causal influence} on the other.

Scrambling of this sort is not unique to the quantum formalism---it arises also in the standard formalism for classical statistics.  In that context, the difference can be characterized as follows: Bayesian inference stipulates how {\em learning} the value of one variable allows an agent to update their information about the value of another, while causal influence stipulates how the value of one variable {\em determines} the value of another (with a consequence being that an agent who {\em controls} the first variable can come to have some control over the second).  
Despite the apparent clarity of the distinction, it is often challenging to disentangle the two concepts.  The statistical phenomena known as `Simpson's paradox'~\cite{simpsonsparadox} and `Berkson's paradox'~\cite{berksonparadox}, for example, have the appearance of paradoxes
 {\em precisely because of} our tendency to inappropriately slide from statements about conditional probabilities (which merely support inferences) to statements about cause-effect relations.
A satisfactory understanding of these phenomena was only found after the development of the mathematical framework of causal modeling~\cite{spirtes2000causation,pearl2009causality}
 that incorporated certain
 formal distinctions between inference and influence which are absent in the standard framework for statistical reasoning.

The conceptual difficulty of disentangling influence and inference is only compounded in the quantum realm, where the interpretation of the elements of the mathematical formalism is even less clear than it is in classical statistics.

The current article takes up this challenge more broadly, by pursuing the unscrambling project for
  two mathematical frameworks that have been used extensively in attempts to understand the conceptual content of quantum theory.  The first is the framework of {\em operational theories}, which aims to clarify what is distinctive about quantum theory by situating it in a landscape of other possible theories, all characterized in a minimalist way in terms of their operational predictions.  The second is the framework of {\em realist theories} (including ontological models), which has been used to constrain the possibilities for
   causal explanations of the operational predictions of quantum theory (and other operational theories).

We aim to recast both types of theory within a new mathematical framework
that incorporates
a formal distinction between inference and influence---
a distinction which is lacking in previous frameworks.

A theory in our framework is termed a {\em causal-inferential theory}, and is constructed out of two components:
\begin{itemize}
\item a causal theory, which describes physical systems in the world and the causal mechanisms that relate them, and
\item an inferential theory, which describes an agent's beliefs about these systems and about the causal mechanisms that relate them, as well as how such beliefs are
 updated under the acquisition of new information.
\end{itemize}
The full causal-inferential theory is defined by the interplay between these two components, and allows one to describe a physical scenario in a manner which cleanly distinguishes causal and inferential aspects.

Different causal-inferential theories can be obtained by varying the causal theory, varying the inferential theory, or varying both simultaneously.
Note, however, that these two components are required to interact in a particular manner, so that the choice of one may be limited by the choice of the other. In this work, we explore, in detail, two particular choices of the causal theory and a single choice of inferential theory, as we now outline.

We take the inferential theory to be Bayesian probability theory combined with Boolean logic.
Although we do not explicitly construct any alternatives to this choice in this article, we will have much to say about the possiblity of {\em nonclassical} alternatives to this inferential theory.
  Given that such a putative nonclassical inferential theory is the primary contrast class for us, we will refer to the inferential theory consisting of Bayesian probability theory and Boolean logic as the {\em classical} theory of inference.

The two types of causal theory that we consider correspond to operational and realist theories, respectively.
In the first type, systems are conceptualized as the causal inputs and causal outputs of experimental procedures,
and the causal mechanisms holding between such systems are simply descriptions of these experimental procedures.
The causal-inferential theories that one can construct from this causal theory together with the classical inferential theory are called \emph{operational causal-inferential theories}, and can be viewed as a refinement of the notion of operational theory described in Ref.~\cite{Spe05} and as a competitor
 to the framework of  Operational Probabilistic Theories~\cite{chiribella2010probabilistic}.

  In the second type of causal theory we consider, systems are classical variables and the causal mechanisms holding between these are functions.
  The (unique) causal-inferential theory that we construct from this is termed a {\em classical realist causal-inferential theory} and is a refinement of the notion of a  {\em structural equation model} in the field of causal inference~\cite{pearl2009causality}.\footnote{Although such frameworks achieved significantly more unscrambling of the causal-inferential omelette than the statistical frameworks that preceded them---as noted above in our discussion of statistical `paradoxes'---our novel framework achieves some further unscrambling.}

In order to make a connection to other standard notions of operational and realist theories, it is useful to introduce a notion of {\em inferential equivalence}.  Two processes are said to be inferentially equivalent if they lead one to make the same inferences whatever scenario they might be embedded within. If one quotients a causal-inferential theory with respect to the congruence relation associated to inferential equivalence, one obtains a novel type of theory, which we term a {\em quotiented} causal-inferential theory.  Importantly, the latter sort of theory
blurs the distinction between causation and inference, and hence constitutes a partial rescrambling of the causal-inferential omelette.
 {\em Generalized probabilistic theories} (GPTs)~\cite{Hardy,barrett2007,hardy2011reformulating}, we argue,  are best understood as subtheories of quotiented operational causal-inference theories\footnote{This view of GPTs as quotiented operational causal-inferential theories is closely related to the quotienting of operational probabilistic theories of \cite{chiribella2010probabilistic}.}
   and consequently, unlike {\em unquotiented} operational causal-inferential theories,
   they necessarily involve some scrambling of causal and inferential concepts.
We also show that {\em ontological models}~\cite{Spe05} (or, more precisely, the ontological theories that are the codomain of ontological modelling maps) are best understood as subtheories of quotiented classical realist causal-inferential theories,
 and hence are {\em also} guilty of such scrambling.

Our framework leverages the mathematics of process theories~\cite{coecke2018picturing,gogioso2017fantastic,selby2018reconstructing}, which allows it to be manifestly compositional, and consequently to apply to operational or realist scenarios with arbitrarily complex causal and inferential structure.
Many previous frameworks for operational theories~\cite{chiribella2010probabilistic,hardy2011reformulating} have also availed themselves of the mathematics of process theories to allow the representation of arbitrarily complex structures. These did not, however, explicitly distinguish the structures that are causal and those that are inferential, as we do here.  Our use of the mathematics of process theories represents more of an innovation on the realist  side, since previous frameworks for realist theories focused almost exclusively on the simple structures that arise when describing prepare-measure scenarios, sometimes with an intervening transformation or sequence of transformations~\cite{lillystone2019single,PP1,PP2,AWV,AWVrobust,Lostaglio2020contextualadvantage}. (These frameworks also did not distinguish structures that are causal from those that are inferential.)

One of the motivations for the standard framework for ontological models was to answer the question of whether the predictions of a given operational theory admit of an explanation in terms of an underlying ontology.
The counterpart of this question in our new framework is whether the predictions of a given {\em operational causal-inferential theory} admit of an explanation in terms of an underlying {\em \crealist  causal-inferential} theory.
Such an explanation is deemed possible if the former can be {\em represented} in terms of the latter.  We refer to this as a {\em \crealist representation} of an operational causal-inferential theory.

The key constraint we impose on such representations is that they preserve the causal and inferential structures encoded in the diagrams, a property that is formalized by demanding that the map between the two process theories (operational and  \crealist) is {\em diagram-preserving}~\cite{schmid2020structure}.  
We show that this constraint
 involves no loss of generality in terms of the sorts of realist models of experimental phenomena one can describe in the framework. 
 Moreover, in concert with standard hypotheses about the causal and inferential structure, it acts as an umbrella principle
 which subsumes many principles that have previously been used to derive no-go theorems for  \crealist  representations of operational quantum theory~\cite{schmid2020structure}.

In particular, in the case of a Bell scenario, the assumption of diagram preservation subsumes the causal and inferential assumptions that go into deriving Bell inequalities (when this derivation is conceptualized in terms of causal modeling~\cite{wood2015lesson}).  However, it is much more general than this, and subsumes the causal and inferential assumptions that go into deriving Bell-like inequalities (also known as {\em causal compatibility inequalities}) for scenarios that have a causal structure distinct from the Bell scenario~\cite{pearl2009causality,fritz2012beyond,wood2015lesson,inflation,triangle,Chaves2018}.
 Our framework therefore also constitutes a refinement of (or alternative to) recently proposed
 frameworks~\cite{pearl2009causality,inflation} for identifying causal compatibility constraints in such scenarios.

We also demonstrate how a principle proposed by Leibniz and used extensively by Einstein~\cite{SpekLeibniz19} can be  generalized in a natural way to apply to theories incorporating epistemological claims in addition to ontological claims and that this principle implies a formal 
  constraint on realist representations of an operational causal-inferential theory  that we term {\em Leibnizianity}: the representation must  {\em preserve inferential equivalences}.
We also demonstrate that the principle of Leibnizianity implies a rehabilitated version of the principle of generalized noncontextuality~\cite{Spe05}, such that no-go theorems for generalized-noncontextual classical realist representations of operational quantum theory 
 imply no-go theorems for Leibnizian classical realist representations.  The question of whether the reverse implication holds remains open.
We also discuss the connections between this principle and the old version of generalized noncontextuality.

Bell's no-go theorem is understood in our framework as follows. If quantum theory is cast as an operational causal-inferential theory, then it predicts distributions for certain 
 causal-inferential 
 structures that cannot be realized by a classical realist causal-inferential theory with the same 
  causal-inferential 
  structure.  Meanwhile, noncontextuality no-go theorems are understood in our framework similarly, but where one demands that inferential equivalences {\em as well as} the 
   causal-inferential structure are preserved.

The conventional realist responses to the standard no-go theorems are unsatisfactory in various ways, such as requiring superluminal causal influences, requiring fine-tuning, and running afoul of Leibniz's principle.  In light of this, it has been suggested that a more satisfactory way out of these no-go theorems may be achieved by modifying the notion of a realist representation (see, e.g., Sec. 1.C of Ref.~\cite{spekkens2016quasi}).  This has been described in past work as `going beyond the standard ontological models framework', but here is understood as seeking a nonclassical generalization of the notion of a classical realist representation.   Our process-theoretic framework
 provides the formal means of achieving this because it allows the interpretation of causal and inferential concepts to be determined by the axioms of the process theories that describe them and hence to differ from the conventional, classical interpretations of these concepts.  This is analogous to how, in nonEuclidean geometries, the concepts of point and line acquire novel meanings distinct from their conventional ones.  Success in such a research program consists in finding an intrinsically quantum notion of a realist causal-inferential theory which can provide a Leibnizian representation of operational quantum theory.   We propose natural constraints on the axioms describing a theory of causal influences, a theory of epistemic inferences, and their interactions.  We also point to pre-existing work that offer clues for how to proceed.

 Thus, the work we present here provides a significant step forward in this research program.
  On the one hand, it provides, for the first time, a concrete proposal for the mathematical form of the sought-after theory,  and, on the other hand, it provides a set of ideas for the form of its axioms, thereby providing  a road map for future research.

\subsection{Process theories and diagram-preserving maps} \label{sec:PTs}

We formalize the ideas discussed in the introduction using the mathematical language of process theories and diagram-preserving maps. This section serves as a brief introduction to these concepts.
Process theories provide a mathematical framework for describing an extremely broad class of theories, finding utility both in physics~\cite{gogioso2017categorical,selby2018reconstructing} and beyond~\cite{CSC,bankova2016graded,MarthaC}.
They can ultimately be given a category-theoretic foundation, but in this paper we will require only the diagrammatic approach. We do provide some ``category-theoretic remarks'' when a given definition or result can be expressed concisely in this language, but these remarks can be skipped without impacting the comprehensibility of the rest of the article. 

As the name would suggest, the basic building blocks of a process theory are processes. These could correspond to physical processes in the world, but equally well could apply to computational processes, mathematical processes, etc. In our work, we will focus on causal and inferential processes.

\begin{definition}[Processes]
A \emph{process} is defined as a labeled box with labeled input and output \emph{systems}, e.g.:
\beq%
\InputIfFileExists{Diagrams/process.tikz}{}{\input{./figures/Diagrams/process.tikz}}.\eeq
The label of a system, e.g., $A$, is known as the \emph{type} of the system while the label on the box, e.g., $u$, is simply the \emph{name} of the process.

Note that it is allowed for a process to have no inputs systems and/or no output systems. Processes with no inputs are called \emph{states}, those with no outputs are called \emph{effects}, and those with neither inputs nor outputs are  simply called \emph{closed diagrams} (or sometimes \emph{scalars}). 
That is,
\beq
\InputIfFileExists{Diagrams/bipartiteState.tikz}{}{\input{./figures/Diagrams/bipartiteState.tikz}}\ \text{,}\quad %
\InputIfFileExists{Diagrams/tripartiteEffect.tikz}{}{\input{./figures/Diagrams/tripartiteEffect.tikz}} \text{ , and}\quad %
\begin{tikzpicture}
	\begin{pgfonlayer}{nodelayer}
		\node [style=scalar] (0) at (0, -0) {{\color{white}*}$r${\color{white}*}};
	\end{pgfonlayer}
\end{tikzpicture}
}
\eeq
are examples of a state, an effect and a closed diagram 
 respectively.
\end{definition}

Before we can define a process theory we must introduce the idea of a diagram of processes.
\begin{definition}[Diagrams]
A \emph{diagram} is defined as a `wiring together' of a finite set of processes---that is, an output of one process is connected to the input of another, such that the system types match and no cycles are created. For example,
\beq%
\InputIfFileExists{Diagrams/diagram.tikz}{}{\input{./figures/Diagrams/diagram.tikz}}.\eeq

Only the \emph{connectivity} of the diagram---
which systems are wired together and the ordering of the input and output systems---is relevant.
That is, two diagrams are the same if one can be deformed into the other while preserving this connectivity. For example,
\beq%
\InputIfFileExists{Diagrams/DiagrammaticEquality1.tikz}{}{\input{./figures/Diagrams/DiagrammaticEquality1.tikz}}\quad\equiv\quad%
\InputIfFileExists{Diagrams/DiagrammaticEquality2.tikz}{}{\input{./figures/Diagrams/DiagrammaticEquality2.tikz}}.\eeq
\end{definition}

We will now formally define what we mean by a process theory.

\begin{definition}[Process theories]
A process theory  is defined as a collection of processes, $\textsc{T}$, which is closed under forming diagrams. For example, we can draw a box around the above diagram and view it as another process in the theory. That is,
\beq%
\InputIfFileExists{Diagrams/diagram2.tikz}{}{\input{./figures/Diagrams/diagram2.tikz}}\quad\in\ \textsc{T}\eeq
for $u,v,w \in \textsc{T}$.
\end{definition}

This completes the definition of a process theory. However, it is sometimes useful to introduce elementary notions from which any diagram can be built up. To start, we highlight certain elements of the diagrams by picking them out with dashed boxes below:
\beq \label{swapetc}
\InputIfFileExists{Diagrams/extraElements.tikz}{}{\input{./figures/Diagrams/extraElements.tikz}}\quad =:\quad %
\InputIfFileExists{Diagrams/extraElementsDef.tikz}{}{\input{./figures/Diagrams/extraElementsDef.tikz}}.
\eeq
That is: i) one can view the empty box on the left as a  special closed diagram 
(as it has no input and no output) which we refer to as the scalar $1$; ii) one can view the box with the $A$ wire running through it as an identity process $\mathds{1}_A$; iii) one can view the box with the crossed $A$ and $B$ wires as a swap process $\mathds{S}_{BA}$; and iv) one can view the output system of $v$ as the trivial system $I$. Clearly our diagrammatic notation implies constraints on these extra elements. In particular, wiring identity processes onto any other process leaves that process invariant; the composite of a trivial system with any other system is just that system; swapping twice is the identity on the two systems; and finally, composing a process with the scalar $1$ leaves that process invariant. Together with the elements just introduced, one can introduce two elementary notions of composition from which any diagram can be built up: sequential composition of processes, denoted
\beq
\InputIfFileExists{Diagrams/sequentialComp2.tikz}{}{\input{./figures/Diagrams/sequentialComp2.tikz}}\ \circ\ %
\InputIfFileExists{Diagrams/sequentialComp1.tikz}{}{\input{./figures/Diagrams/sequentialComp1.tikz}} \quad := \quad %
\InputIfFileExists{Diagrams/sequentialComp3.tikz}{}{\input{./figures/Diagrams/sequentialComp3.tikz}} \ ,
\eeq
and parallel composition of processes, denoted
\beq
\InputIfFileExists{Diagrams/parallelComp2.tikz}{}{\input{./figures/Diagrams/parallelComp2.tikz}}\ \otimes \ %
\InputIfFileExists{Diagrams/parallelComp1.tikz}{}{\input{./figures/Diagrams/parallelComp1.tikz}} \quad := \quad %
\InputIfFileExists{Diagrams/parallelComp3.tikz}{}{\input{./figures/Diagrams/parallelComp3.tikz}}\ .
\eeq
Note that, given any diagram, there are a (generally infinite) number of ways in which it can be expressed in terms of these primitive notions of composition, and yet these are all the same diagram. Hence, we view the diagrammatic representation as being the fundamental description, and we view the elementary notions from which they can be built as an (at times) convenient mathematical representation of them.

\begin{remark}
Having defined these extra structures implicitly in the diagrammatic notation, it should be clear how to identify the structure of a process theory with that of a \emph{strict symmetric monoidal category} (SMC). In short, we take processes to be morphisms and systems to be objects, with sequential and parallel composition providing morphism composition and the monoidal product, respectively. For a more formal treatment, see Ref.~\cite{patterson2021wiring}. 
\end{remark}

We will often consider higher-order processes such as
\beq
\begin{tikzpicture}
	\begin{pgfonlayer}{nodelayer}
		\node [style=none] (0) at (0, 1.25) {};
		\node [style=none] (1) at (0, -1.25) {};
		\node [style=none] (2) at (-0.5, 1.25) {};
		\node [style=none] (3) at (-0.5, 1.75) {};
		\node [style=none] (4) at (1.75, 1.75) {};
		\node [style=none] (5) at (1.75, -1.75) {};
		\node [style=none] (6) at (-0.5, -1.75) {};
		\node [style=none] (7) at (-0.5, -1.25) {};
		\node [style=none] (8) at (0.75, 1.25) {};
		\node [style=none] (9) at (0.75, -1.25) {};
		\node [style=none] (10) at (1.25, 0) {$\tau$};
		\node [style={right label}] (11) at (0, -1) {$A$};
		\node [style={right label}] (12) at (0, 0.75) {$B$};
		\node [style=none] (13) at (0, 0.5000002) {};
		\node [style=none] (14) at (0, -0.7499998) {};
		\node [style={right label}] (15) at (1.25, -2) {$A'$};
		\node [style=none] (16) at (1.25, -1.75) {};
		\node [style=none] (17) at (1.25, 1.75) {};
		\node [style={right label}] (18) at (1.25, 2) {$B'$};
		\node [style=none] (19) at (1.25, 2.5) {};
		\node [style=none] (20) at (1.25, -2.25) {};
	\end{pgfonlayer}
	\begin{pgfonlayer}{edgelayer}
		\draw (6.center) to (5.center);
		\draw (5.center) to (4.center);
		\draw (4.center) to (3.center);
		\draw (3.center) to (2.center);
		\draw (2.center) to (8.center);
		\draw (8.center) to (9.center);
		\draw (9.center) to (7.center);
		\draw (7.center) to (6.center);
		\draw [qWire] (0.center) to (13.center);
		\draw [qWire] (14.center) to (1.center);
		\draw [qWire] (19.center) to (17.center);
		\draw [qWire] (16.center) to (20.center);
	\end{pgfonlayer}
\end{tikzpicture},
\eeq
which we will call \emph{clamps}. These can be thought of as objects which map a process from $A$ to $B$ to a process from $A'$ to $B'$  via
\beq
\begin{tikzpicture}
	\begin{pgfonlayer}{nodelayer}
		\node [style=small box] (0) at (0, 0) {$T$};
		\node [style=none] (1) at (0, 1.25) {};
		\node [style=none] (2) at (0, -1.25) {};
		\node [style=right label] (12) at (0, -1) {$A$};
		\node [style=right label] (13) at (0, 0.75) {$B$};
	\end{pgfonlayer}
	\begin{pgfonlayer}{edgelayer}
		\draw [style=qWire] (0) to (1.center);
		\draw [style=qWire] (0) to (2.center);
	\end{pgfonlayer}
\end{tikzpicture}
\quad\mapsto\quad
\begin{tikzpicture}
	\begin{pgfonlayer}{nodelayer}
		\node [style=none] (0) at (0, 1.25) {};
		\node [style=none] (1) at (0, -1.25) {};
		\node [style=none] (2) at (-0.5, 1.25) {};
		\node [style=none] (3) at (-0.5, 1.75) {};
		\node [style=none] (4) at (1.75, 1.75) {};
		\node [style=none] (5) at (1.75, -1.75) {};
		\node [style=none] (6) at (-0.5, -1.75) {};
		\node [style=none] (7) at (-0.5, -1.25) {};
		\node [style=none] (8) at (0.75, 1.25) {};
		\node [style=none] (9) at (0.75, -1.25) {};
		\node [style=none] (10) at (1.25, 0) {$\tau$};
		\node [style={right label}] (11) at (0, -1) {$A$};
		\node [style={right label}] (12) at (0, 0.75) {$B$};
		\node [style=none] (13) at (0, -0) {};
		\node [style=none] (14) at (0, -0) {};
		\node [style={right label}] (15) at (1.25, -2) {$A'$};
		\node [style=none] (16) at (1.25, -1.75) {};
		\node [style=none] (17) at (1.25, 1.75) {};
		\node [style={right label}] (18) at (1.25, 2) {$B'$};
		\node [style=none] (19) at (1.25, 2.5) {};
		\node [style=none] (20) at (1.25, -2.25) {};
		\node [style={small box}] (21) at (0, -0) {$T$};
	\end{pgfonlayer}
	\begin{pgfonlayer}{edgelayer}
		\draw (6.center) to (5.center);
		\draw (5.center) to (4.center);
		\draw (4.center) to (3.center);
		\draw (3.center) to (2.center);
		\draw (2.center) to (8.center);
		\draw (8.center) to (9.center);
		\draw (9.center) to (7.center);
		\draw (7.center) to (6.center);
		\draw [qWire] (0.center) to (13.center);
		\draw [qWire] (14.center) to (1.center);
		\draw [qWire] (19.center) to (17.center);
		\draw [qWire] (16.center) to (20.center);
	\end{pgfonlayer}
\end{tikzpicture}.
\eeq
These are not primitive notions within the framework of process theories, but rather are constructed out of two processes $x_\tau$ and $y_\tau$ connected together with an auxiliary system $W_\tau$, as
\beq\label{eq:tester}
\begin{tikzpicture}
	\begin{pgfonlayer}{nodelayer}
		\node [style=none] (0) at (0, 1.25) {};
		\node [style=none] (1) at (0, -1.25) {};
		\node [style=none] (2) at (-0.5, 1.25) {};
		\node [style=none] (3) at (-0.5, 1.75) {};
		\node [style=none] (4) at (1.75, 1.75) {};
		\node [style=none] (5) at (1.75, -1.75) {};
		\node [style=none] (6) at (-0.5, -1.75) {};
		\node [style=none] (7) at (-0.5, -1.25) {};
		\node [style=none] (8) at (0.75, 1.25) {};
		\node [style=none] (9) at (0.75, -1.25) {};
		\node [style=none] (10) at (1.25, 0) {$\tau$};
		\node [style={right label}] (11) at (0, -1) {$A$};
		\node [style={right label}] (12) at (0, 0.75) {$B$};
		\node [style=none] (13) at (0, 0.5000002) {};
		\node [style=none] (14) at (0, -0.7499998) {};
		\node [style={right label}] (15) at (1.25, -2) {$A'$};
		\node [style=none] (16) at (1.25, -1.75) {};
		\node [style=none] (17) at (1.25, 1.75) {};
		\node [style={right label}] (18) at (1.25, 2) {$B'$};
		\node [style=none] (19) at (1.25, 2.5) {};
		\node [style=none] (20) at (1.25, -2.25) {};
	\end{pgfonlayer}
	\begin{pgfonlayer}{edgelayer}
		\draw (6.center) to (5.center);
		\draw (5.center) to (4.center);
		\draw (4.center) to (3.center);
		\draw (3.center) to (2.center);
		\draw (2.center) to (8.center);
		\draw (8.center) to (9.center);
		\draw (9.center) to (7.center);
		\draw (7.center) to (6.center);
		\draw [qWire] (0.center) to (13.center);
		\draw [qWire] (14.center) to (1.center);
		\draw [qWire] (19.center) to (17.center);
		\draw [qWire] (16.center) to (20.center);
	\end{pgfonlayer}
\end{tikzpicture}
\ = \
\begin{tikzpicture}
	\begin{pgfonlayer}{nodelayer}
		\node [style=none] (0) at (0.2500001, 1.25) {};
		\node [style=none] (1) at (0.2500001, -1.25) {};
		\node [style={right label}] (2) at (0.2500001, -0.9999999) {$A$};
		\node [style={right label}] (3) at (0.2500001, 0.7499998) {$B$};
		\node [style=none] (4) at (0.2500001, 0.5000002) {};
		\node [style=none] (5) at (0.2500001, -0.7499998) {};
		\node [style=none] (6) at (-0.5000002, -1.25) {};
		\node [style=none] (7) at (2.5, -1.25) {};
		\node [style=none] (8) at (2.5, 1.25) {};
		\node [style=none] (9) at (-0.5000002, 1.25) {};
		\node [style=none] (10) at (2, 1.25) {};
		\node [style=none] (11) at (2, -1.25) {};
		\node [style=none] (12) at (1.25, -1.75) {$x_\tau$};
		\node [style=none] (13) at (1.25, 1.75) {$y_\tau$};
		\node [style={right label}] (14) at (2, -0) {$W_\tau$};
		\node [style=none] (15) at (1.5, 2.25) {};
		\node [style={right label}] (16) at (1.5, -2.75) {$A'$};
		\node [style=none] (17) at (1.5, -3) {};
		\node [style=none] (18) at (1.5, 3) {};
		\node [style={right label}] (19) at (1.5, 2.75) {$B'$};
		\node [style=none] (20) at (1.5, -2.25) {};
		\node [style=none] (21) at (0.5000001, 2.25) {};
		\node [style=none] (22) at (2, 2.25) {};
		\node [style=none] (23) at (2, -2.25) {};
		\node [style=none] (24) at (0.5000001, -2.25) {};
	\end{pgfonlayer}
	\begin{pgfonlayer}{edgelayer}
		\draw [qWire] (0.center) to (4.center);
		\draw [qWire] (5.center) to (1.center);
		\draw (9.center) to (8.center);
		\draw (7.center) to (6.center);
		\draw [qWire] (10.center) to (11.center);
		\draw [qWire] (18.center) to (15.center);
		\draw [qWire] (20.center) to (17.center);
		\draw (9.center) to (21.center);
		\draw (21.center) to (22.center);
		\draw (22.center) to (8.center);
		\draw (6.center) to (24.center);
		\draw (24.center) to (23.center);
		\draw (23.center) to (7.center);
	\end{pgfonlayer}
\end{tikzpicture}.
\eeq

 We will consider a number of different process theories, some of which are sub-process-theories of others.
\begin{definition}
A sub-process-theory $T \subseteq T'$ is a process theory where the processes are a subset of the processes in $T'$ and
composition of processes in $T$ is given by composition in $T'$.
Note that since a sub-process-theory is a process theory, $T$
must be closed under composition.
\end{definition}
\begin{remark}
In terms of the associated SMCs, this is simply defining $T$ as a subSMC of $T'$.
\end{remark}

As well as these process theories and sub-process-theories, we will also consider
structure-preserving maps between these. The relevant structure which we demand be preserved is the composition of processes as described by diagrams.

\begin{definition}[Diagram-preserving maps]
A diagram-preserving map, \colorbox{green!40}{$\mathbf{m}:\textsc{T}\to \textsc{T}'$}, is a map from processes in $\textsc{T}$ to processes in $\textsc{T}'$
 such that wiring together processes in $\textsc{T}$ to form a diagram and then applying the map $\mathbf{m}$ is the same as applying $\mathbf{m}$ to each of the component processes and then wiring them together in $\textsc{T}'$.
We depict these maps as shaded boxes, e.g.
\beq%
\InputIfFileExists{Diagrams/diagramMap1.tikz}{}{\input{./figures/Diagrams/diagramMap1.tikz}},\eeq
where the diagram in the green box is a diagram in $\textsc{T}$ (with input $A$ and outputs $B$ and $A$) which is mapped by the green box $\mathbf{m}$ to a process in $\textsc{T}'$ with input $\mathbf{m}_A$ and outputs $\mathbf{m}_A$ and $\mathbf{m}_B$.
In this example, the constraint of diagram preservation is simply that
\beq%
\InputIfFileExists{Diagrams/diagramMap1.tikz}{}{\input{./figures/Diagrams/diagramMap1.tikz}}\quad=\quad%
\InputIfFileExists{Diagrams/diagramMap2.tikz}{}{\input{./figures/Diagrams/diagramMap2.tikz}}.\eeq
\end{definition}

\begin{remark}
If one views each process theory as an SMC, then such diagram-preserving maps are simply \emph{strict symmetric monoidal functors} between the SMCs. The diagrammatic notation as shaded regions was introduced to us by Ref.~\cite{fritz2018bimonoidal}, which was itself was based on Ref.~\cite{mellies2006functorial}. 
\end{remark}

It will also be useful to consider partial diagram preserving (DP) maps where the domain is limited in scope.

\begin{definition}[Partial diagram-preserving maps]
 A partial diagram-preserving map \colorbox{green!40}{$\mathbf{m}:\textsc{T}'\to \textsc{T}''$} is a diagram-preserving map from some sub-process-theory $T\subseteq \textsc{T}'$ to $\textsc{T}''$.
\end{definition}

\begin{remark}
Categorically, such a map is a partial strict symmetric monoidal functor between the relevant SMCs.
\end{remark}

\begin{remark}[Category of process theories, $\textsc{ProcessTheory}$]
The category of process theories is defined as follows:
The objects of $\textsc{ProcessTheory}$ are process theories and the morphisms are diagram-preserving maps. It is simple to see that this is indeed a category as one can easily define identity morphisms and morphism composition satisfying the relevant conditions.
\end{remark}

\section{Causal primitives}\label{sec:causal}

We denote a generic process theory of causal relations 
 by $\Caus$.
The primitive elements of such a theory 
 are systems and the causal mechanisms that relate them.
Systems correspond to physical degrees of freedom in the conventional sense of being the loci of causal relations, i.e., the causal relata. 
Causal mechanisms are
autonomous physical relationships between these systems, relationships governed by the laws of nature and by the arrangement of relevant physical systems and apparatuses.

In classical theories, systems are often represented by sets, and causal mechanisms by functions between these (e.g., in structural equation models~\cite{pearl2009causality}). In quantum theory, systems are typically represented by Hilbert spaces, and causal mechanisms by unitaries between these~\cite{allen2017quantum}\footnote{ Although how to decompose a given unitary with multiple outputs and a given internal causal structure remains an open problem~\cite{allen2017quantum,Lorenz2020}.}. In  operational causal-inferential (CI) theories, one typically does not have a direct description of the causal mechanisms, but rather only a very coarse-grained description of them
 in terms of laboratory procedures that are implemented on the relevant systems; these systems are represented only as an abstract label, and typically represent the physical systems one imagines are input and output from the apparatuses.

In the next two sections, we will consider two distinct classes of causal primitives in more detail, first those relevant for operational CI theories, and then those relevant for \crealist CI theories. In Section~\ref{realconstr}, we return to the question of what properties any process theory must satisfy for it to be considered a good theory of causal relations.

\subsection{Process theories of laboratory procedures}

We now define the sort of process theory which will ultimately  constitute the causal component
of an operational CI theory. We denote it $\Proc$ (as shorthand for `procedure' not `process').
The systems in $\Proc$ label the primitive causal relata, while the processes, which describe the potential causal relations between them, are labeled by
laboratory procedures, conceptualized as a list of instructions of what to do in the lab, and presumed to be individuated by the system they act on (the input system) and the system they prepare (the output system).

We will label general systems by $\op{A}$, $\op{B}$, etc. Some systems (which could, for instance, represent setting or outcome variables) will be deemed {\em classical}. A classical system $\op{X}$ will be associated with a set $X$ which represents the set of distinct states of the classical system. Diagrammatically, a general laboratory procedure $\op{t}$ with input system $\op{A}$ and output system $\op{B}$ will be drawn as
\beq%
\InputIfFileExists{Diagrams/procedureAB.tikz}{}{\input{./figures/Diagrams/procedureAB.tikz}}.\eeq
We define a measurement $\op{m}$ on system $\op{A}$ as a process with a generic input system $\op{A}$ and a classical output system $\op{X}$:
\beq%
\InputIfFileExists{Diagrams/procedureAn.tikz}{}{\input{./figures/Diagrams/procedureAn.tikz}}.\eeq
Classical systems in $\Proc$ will be drawn with a light grey wire, as was done in this diagram.
We denote the set of operations with input system $\op{A}$ and output system $\op{B}$ as $\morph{\op{A}}{\op{B}}$. The set of measurements on a system $\op{A}$ having outcome space $X$ is therefore denoted $\morph{\op{A}}{X}$. (Classical systems allow one to describe more than just measurement outcomes; for instance, they can also represent classical control systems.)

One can compose these operations to describe experiments. For example, a preparation procedure on system $\op{A}$ followed by a measurement on $\op{A}$ with outcome space $X$ is described by the diagram
\beq%
\InputIfFileExists{Diagrams/procedurePrepMeasScenario.tikz}{}{\input{./figures/Diagrams/procedurePrepMeasScenario.tikz}} \ , \eeq
while controlling a transformation from $\op{B}$ to $\op{C}$  on the output $\op{X}$ of a measurement on $\op{A}$
would be described by the diagram
\beq%
\InputIfFileExists{Diagrams/measurementControlledProcedure.tikz}{}{\input{./figures/Diagrams/measurementControlledProcedure.tikz}}.\eeq
An example of a more general diagram is
\beq%
\InputIfFileExists{Diagrams/diagramSmall.tikz}{}{\input{./figures/Diagrams/diagramSmall.tikz}}.\eeq
In our formalism, such a diagram represents a hypothesis about the fundamental causal structure. In Appendix~\ref{manifestorcausal2} we discuss the consequences of this choice and how it differs from the choice typically made in operational frameworks.
Here, two wires in parallel should be interpreted as independent subsystems (e.g., independent degrees of freedom), where one can talk independently about either subsystem as a potential causal influence on other systems.

 Note that we have here defined a {\em class} of process theories, insofar as we have not specified the {\em particular} set of systems and procedures that define $\Proc$.  Perhaps the most common operational theory to consider is that containing all known physical systems and laboratory procedures on them. One might also consider a restriction of this set, for example, the set of two-level systems and laboratory procedures on them.  Finally, one might consider a foil operational theory~\cite{chiribella2016quantum}, with a set of hypothetical systems and procedures on them.  Each possible choice for $\Proc$ defines a different operational CI theory.

\begin{remark}
Unlike the other process theories that we deal with in this paper, $\Proc$ is a \emph{free} process theory. This means that there are no equalities other than those defined by the framework of process theories---two diagrams are equal if and only if they can be transformed into one another by sliding the processes around on the page while preserving the wiring.
\end{remark}

\subsection{Process theory of classical functional dynamics} \label{funcdyn}

We now define the process theory which will ultimately act as the causal component of our notion of a \crealist causal-inferential theory. We denote it $\Func$.

The systems in $\Func$ again label the primitive causal relata. However, what distinguishes them from the systems in $\Proc$ is that we assume that these relata are described by ontic state spaces, that is, some finite\footnote{This assumption of finiteness is made for simplicity of presentation, but could be removed in future work.
} sets $\Lambda, \Lambda',...$. The processes in $\Func$ are functions $f:\Lambda \to \Lambda'$ describing dynamics on these ontic state spaces.\footnote{Note that we will allow arbitrary functions in this work, although in some cases one might wish to restrict the dynamics, e.g., to reversible functions or to symplectomorphisms.}

Diagrammatically, a function $f$ with input $\Lambda$ and output $\Lambda'$ will be drawn as
\beq%
\InputIfFileExists{Diagrams/funcAB.tikz}{}{\input{./figures/Diagrams/funcAB.tikz}} \ .\eeq

We denote the set of functions with input $\Lambda$ and output $\Lambda'$ as $\morph{\Lambda}{\Lambda'}$.  We take the trivial system to be the singleton set $\star = \{*\}$. Hence, states correspond to functions $s:\star \to \Lambda$ which are in one-to-one correspondence with the elements of $\Lambda$; there is a unique effect $u:\Lambda \to \star$ for each system, defined by $u(\lambda)=*$ for all $\lambda\in \Lambda$; and, there is a unique scalar $1:\star \to \star$ corresponding to the identity function on the singleton set.

We can compose these functions to describe ontological scenarios. A function which prepares some ontic state of system $\Lambda$ followed by a function describing the functional dynamics of the system
is described by the diagram
\beq\label{CompositionExample1}%
\InputIfFileExists{Diagrams/funcPrepMeasScenario.tikz}{}{\input{./figures/Diagrams/funcPrepMeasScenario.tikz}},\eeq
while some more general ontological scenario could be described by the diagram
\beq\label{CompositionExample2}%
\InputIfFileExists{Diagrams/diagramSmallOnt.tikz}{}{\input{./figures/Diagrams/diagramSmallOnt.tikz}} \ .\eeq

The key formal distinction between $\Func$ and $\Proc$ is that $\Func$ is {\em not} a free process theory. Indeed, there are many nontrivial equalities provided by composition of functions. An example is provided by Eq.~\eqref{CompositionExample1}, the diagram, which is made up of two functions $g:\star \to \Lambda$ and $m:\Lambda\to\Lambda'$, is {\em strictly equal} to the diagram
\beq
\begin{tikzpicture}
	\begin{pgfonlayer}{nodelayer}
		\node [style=point] (0) at (0, -0) {$h$};
		\node [style=none] (1) at (0, 1.5) {};
		\node [style={right label}] (2) at (0, 1) {$\Lambda'$};
	\end{pgfonlayer}
	\begin{pgfonlayer}{edgelayer}
		\draw [style=cWire] (1.center) to (0);
	\end{pgfonlayer}
\end{tikzpicture} \ ,
\eeq
where $h$ is the sequential composition of $m$ and $g$, i.e., $h:\star\to \Lambda'::*\to m(g(*))$.

Composite systems are given by the Cartesian product of the associated sets:
\beq
\begin{tikzpicture}
	\begin{pgfonlayer}{nodelayer}
		\node [style=none] (0) at (0, 0.7500001) {};
		\node [style=none] (1) at (0, -0.7500001) {};
		\node [style={right label}] (2) at (0, -0.5000001) {$\Lambda$};
	\end{pgfonlayer}
	\begin{pgfonlayer}{edgelayer}
		\draw [oWire] (0.center) to (1.center);
	\end{pgfonlayer}
\end{tikzpicture}}%
\begin{tikzpicture}
	\begin{pgfonlayer}{nodelayer}
		\node [style=none] (0) at (0, 0.7500001) {};
		\node [style=none] (1) at (0, -0.7500001) {};
		\node [style={right label}] (2) at (0, -0.5000001) {$\Lambda'$};
	\end{pgfonlayer}
	\begin{pgfonlayer}{edgelayer}
		\draw [oWire] (0.center) to (1.center);
	\end{pgfonlayer}
\end{tikzpicture}}\quad := \quad %
\begin{tikzpicture}
	\begin{pgfonlayer}{nodelayer}
		\node [style=none] (0) at (0, 0.7500001) {};
		\node [style=none] (1) at (0, -0.7500001) {};
		\node [style={right label}] (2) at (0, -0.5000001) {$\Lambda\times\Lambda'$};
	\end{pgfonlayer}
	\begin{pgfonlayer}{edgelayer}
		\draw [oWire] (0.center) to (1.center);
	\end{pgfonlayer}
\end{tikzpicture}}.
\eeq
 Parallel composition of functions is given by their Cartesian product, 
\beq
\InputIfFileExists{Diagrams/onticFunction2.tikz}{}{\input{./figures/Diagrams/onticFunction2.tikz}}%
\InputIfFileExists{Diagrams/onticFunction1.tikz}{}{\input{./figures/Diagrams/onticFunction1.tikz}}\quad := \quad %
\InputIfFileExists{Diagrams/onticFunction.tikz}{}{\input{./figures/Diagrams/onticFunction.tikz}},
\eeq
and sequential composition is given by composition of functions,
\beq
\InputIfFileExists{Diagrams/onticFunction3.tikz}{}{\input{./figures/Diagrams/onticFunction3.tikz}}\quad := \quad %
\InputIfFileExists{Diagrams/onticFunction4.tikz}{}{\input{./figures/Diagrams/onticFunction4.tikz}}.
\eeq
It follows that any diagram is equal to the function from its inputs to its outputs, and that one can compute this effective function directly from the specific functions which comprise the diagram.

\begin{remark}
Categorically we are simply defining the symmetric monoidal category $\textsc{FinSet}$ whose objects are finite sets, morphisms are all functions between them, and the monoidal structure is given by the cartesian product and the singleton set.
\end{remark}

\section{Inferential primitives}\label{sec:inferential}

 We denote a generic process theory of inference by $\genInf$. The primitive inferential notions in our framework are systems and the processes which specify what an agent knows about them and how the agent reasons---for example, how they update what they know about one system given new information about 
  another.  When the inferential theory is defined intrinsically, the systems within the theory are simply understood as the entities about which one has states of knowledge and about which one asks questions, regardless of what these entities are: one could be making inferences about physical systems, or about mathematical truths, and so on.  When an inferential theory is considered as a part of a causal-inferential theory, however, the entities about which one makes inferences are causal mechanisms and the causal relata that these act on. 

The processes in $\genInf$ are particular inferences (i.e., updates of the knowledge one has about one system given new information about another), while the rewrite rules in $\genInf$ encode the laws of inference that an 
agent should follow if they are to be rational.  The standard laws of inference are those of Bayesian probability theory and Boolean logic. We formalize the laws of inference that will be relevant in this article 
within a single process theory, namely $\Inf$, the process theory of substochastic maps. 
$\Inf$ is the only explicit example of an inferential process theory that we will consider in this paper\footnote{ However, another good example that one could consider is the category $\textsc{Rel}$ of finite sets and relations. This would describe possibilistic reasoning as opposed to the probabilistic reasoning of $\Inf$.}. 
In Section~\ref{realconstr}, we return to the question of what properties any process theory must satisfy for it to be considered a good process theory of inference.

To diagrammatically distinguish the causal structure encoded in a given diagram of $\Caus$ from the inferential structure encoded in a given diagram of $\genInf$, we draw diagrams in  the former vertically, and diagrams in the latter horizontally. We will term the systems in the former {\em causal systems}, and systems in the latter {\em inferential systems}. 

\subsection{Bayesian probability theory}

The first component of $\Inf$ is (classical) Bayesian probability theory, describing an agent's states of knowledge and the updating thereof.   We denote this process theory $\textsc{Bayes}$.

Systems in  $\textsc{Bayes}$ are represented by finite sets $X, Y, \dots$.
A process from $X$ to $Y$ in this theory is interpreted as the propagation of an agent's knowledge about $X$ to her knowledge about $Y$, and is represented by a stochastic map.
    Such a process will be depicted diagrammatically as
\beq
\InputIfFileExists{Diagrams/inferenceProcess.tikz}{}{\input{./figures/Diagrams/inferenceProcess.tikz}}.
\eeq
The trivial system is the singleton set, $\star:=\{*\}$, and a map $q$ from the trivial system to system $X$, depicted as
\beq
\begin{tikzpicture}
	\begin{pgfonlayer}{nodelayer}
		\node [style=infpoint] (0) at (0, -0) {$\sigma$};
		\node [style=none] (1) at (1, -0) {};
		\node [style={up label}] (2) at (0.75, -0) {$X$};
	\end{pgfonlayer}
	\begin{pgfonlayer}{edgelayer}
		\draw [style=CcWire] (0) to (1.center);
	\end{pgfonlayer}
\end{tikzpicture}
},
\eeq
corresponds to a probability distribution over $X$.  We will denote the point distribution $\delta_{X,x}$ on some element $x\in X$ as $[x]$.
There is a unique effect for each system which corresponds to marginalisation over the variable. This is drawn as
\beq
\begin{tikzpicture}
	\begin{pgfonlayer}{nodelayer}
		\node [style=none] (0) at (0,0) {};
		\node [style=infupground] (1) at (1.5, 0) {};
		\node [style={up label}] (2) at (0.75, 0) {$X$};
	\end{pgfonlayer}
	\begin{pgfonlayer}{edgelayer}
		\draw [style=CcWire] (0) to (1);
	\end{pgfonlayer}
\end{tikzpicture}
}\ .
\eeq
As this is the unique effect, it is clear that any closed diagram is associated with the number 1, which is the unique scalar in $\textsc{Bayes}$.
It follows that given a state of knowledge $\sigma$ about a pair of variables, $X$ and $Y$, one can define the marginal distribution on the variable $X$ as
\beq
\InputIfFileExists{Diagrams/marginalisationNew.tikz}{}{\input{./figures/Diagrams/marginalisationNew.tikz}}.
\eeq

Note that it is possible to take convex combinations of stochastic processes provided that they have matching system types. We denote a convex combination of stochastic processes $\{s_i:X\to Y\}$ with weights $\{p_i\}$ as
\beq\label{eq:ConvexMixing}
\InputIfFileExists{Diagrams/ConvexMixing.tikz}{}{\input{./figures/Diagrams/ConvexMixing.tikz}},
\eeq
where $s:X\to Y$ is another stochastic process in the theory, namely, $s=\sum_i p_i s_i$.

\begin{remark}
Categorically, $\textsc{Bayes}$ is simply the symmetric monoidal category $\textsc{FinStoch}$  where objects are finite sets and morphisms are stochastic maps between them.
\end{remark}

\subsection{Boolean propositional logic}

The second component of $\Inf$ is the description of propositions, as governed by (classical) Boolean propositional logic. We denote the process theory describing this as $\textsc{Boole}$.  The systems in $\textsc{Boole}$ are finite sets labeled by $X,Y,...$, just as in $\textsc{Bayes}$. However, the processes in $\textsc{Boole}$ are partial functions; that is, functions that may only be defined on a subset of their domain.

Many of the key processes in $\textsc{Boole}$ are simply functions; we introduce these first, and only later discuss the more general processes in $\textsc{Boole}$ for which partial functions are required.

First, we consider states in $\textsc{Boole}$: functions from the trivial system $\star$ to a generic system $X$. These are in one-to-one correspondence with the elements of $X$, and so we can simply label each function by the element $x$ which is the image of $*$ under it. Hence, we can write $x:\star \to X::* \mapsto x$, depicted diagrammatically as
\beq
\begin{tikzpicture}
	\begin{pgfonlayer}{nodelayer}
		\node [style=infpoint] (0) at (0, -0) {$x$};
		\node [style=none] (1) at (1, -0) {};
		\node [style={up label}] (2) at (0.75, -0) {$X$};
	\end{pgfonlayer}
	\begin{pgfonlayer}{edgelayer}
		\draw [style=CcWire] (0) to (1.center);
	\end{pgfonlayer}
\end{tikzpicture}.
\eeq
We will refer to states in $\textsc{Boole}$ as {\bf value assignments}, because they can naturally be viewed as assigning a value $x$ to the variable ranging over the set $X$.

 Yes-no questions about a system $X$ can be represented as functions from $X$ to the answer set $\textsc{b}:=\{\textsc{y},\textsc{n}\}$, we refer to this answer set $\textsc{b}$ as the {\bf Boolean system}. Diagrammatically, these yes-no questions are denoted by
  \beq
\InputIfFileExists{Diagrams/yesno.tikz}{}{\input{./figures/Diagrams/yesno.tikz}}.
\eeq
A value assignment $x\in X$ assigns the answer `yes' or `no' to such questions via composition:
\beq
\begin{tikzpicture}
	\begin{pgfonlayer}{nodelayer}
		\node [style=small box] (0) at (-0.5, 0) {$\pi$};
		\node [style=infpoint] (1) at (-2.25, 0) {$x$};
		\node [style=up label] (2) at (-1.5, 0) {$X$};
		\node [style=none] (3) at (0.75, 0) {};
		\node [style=up label] (4) at (0.5, 0) {$\textsc{b}$};
	\end{pgfonlayer}
	\begin{pgfonlayer}{edgelayer}
		\draw [cWire] (1) to (0);
		\draw [cWire] (0) to (3.center);
	\end{pgfonlayer}
\end{tikzpicture}
\in \left\{\begin{tikzpicture}
	\begin{pgfonlayer}{nodelayer}
		\node [style=infpoint] (0) at (0, -0) {$\textsc{y}$};
		\node [style=none] (1) at (1, -0) {};
		\node [style={up label}] (2) at (0.75, -0) {$\textsc{b}$};
	\end{pgfonlayer}
	\begin{pgfonlayer}{edgelayer}
		\draw [style=CcWire] (0) to (1.center);
	\end{pgfonlayer}
\end{tikzpicture}, \begin{tikzpicture}
	\begin{pgfonlayer}{nodelayer}
		\node [style=infpoint] (0) at (0, -0) {$\textsc{n}$};
		\node [style=none] (1) at (1, -0) {};
		\node [style={up label}] (2) at (0.75, -0) {$\textsc{b}$};
	\end{pgfonlayer}
	\begin{pgfonlayer}{edgelayer}
		\draw [style=CcWire] (0) to (1.center);
	\end{pgfonlayer}
\end{tikzpicture}\right\}.
\eeq
 Now, each of these yes-no questions can be uniquely characterized by the subset of $X$ for which the answer is `yes', that is, $\{x \in X\ \text{s.t.}\ \pi::x\mapsto \textsc{y}\}$. This means that they are in one-to-one correspondence with the elements of the powerset of $X$ which we will view as a Boolean algebra, and, hence which we will denote by $\mathcal{B}(X)$.  Due to this correspondence, we will use the symbol $\pi$ both to denote the function $\pi:X\to\textsc{B}$ and the element of the Boolean algebra $\pi\in\mathcal{B}(X)$. We therefore refer to such functions as {\bf propositional questions}\footnote{These could also have been termed `predicates'.}.

We now show how the structure of the Boolean algebra $\mathcal{B}(X)$ can be diagrammatically represented using such propositional questions. To begin, there are two distinguished propositions in the Boolean algebra, the tautological proposition $\top$ (corresponding to the subset $X\subseteq X$) and the contradictory proposition $\perp$ (corresponding to the empty subset $\emptyset \subseteq X$). As propositional questions these can be defined diagrammatically via
\begin{align}
\forall x\in X\qquad
\begin{tikzpicture}
	\begin{pgfonlayer}{nodelayer}
		\node [style=small box] (0) at (-0.5, 0) {$\top$};
		\node [style=infpoint] (1) at (-2.25, 0) {$x$};
		\node [style=up label] (2) at (-1.5, 0) {$X$};
		\node [style=none] (3) at (0.75, 0) {};
		\node [style=up label] (4) at (0.5, 0) {$\textsc{b}$};
	\end{pgfonlayer}
	\begin{pgfonlayer}{edgelayer}
		\draw [cWire] (1) to (0);
		\draw [cWire] (0) to (3.center);
	\end{pgfonlayer}
\end{tikzpicture}
 \quad &= \quad  \begin{tikzpicture}
	\begin{pgfonlayer}{nodelayer}
		\node [style=infpoint] (0) at (0, -0) {$\textsc{y}$};
		\node [style=none] (1) at (1, -0) {};
		\node [style={up label}] (2) at (0.75, -0) {$\textsc{b}$};
	\end{pgfonlayer}
	\begin{pgfonlayer}{edgelayer}
		\draw [style=CcWire] (0) to (1.center);
	\end{pgfonlayer}
\end{tikzpicture} \\
\forall x\in X\qquad \begin{tikzpicture}
	\begin{pgfonlayer}{nodelayer}
		\node [style=small box] (0) at (-0.5, 0) {$\perp$};
		\node [style=infpoint] (1) at (-2.25, 0) {$x$};
		\node [style=up label] (2) at (-1.5, 0) {$X$};
		\node [style=none] (3) at (0.75, 0) {};
		\node [style=up label] (4) at (0.5, 0) {$\textsc{b}$};
	\end{pgfonlayer}
	\begin{pgfonlayer}{edgelayer}
		\draw [cWire] (1) to (0);
		\draw [cWire] (0) to (3.center);
	\end{pgfonlayer}
\end{tikzpicture}
\quad &=\quad \begin{tikzpicture}
	\begin{pgfonlayer}{nodelayer}
		\node [style=infpoint] (0) at (0, -0) {$\textsc{n}$};
		\node [style=none] (1) at (1, -0) {};
		\node [style={up label}] (2) at (0.75, -0) {$\textsc{b}$};
	\end{pgfonlayer}
	\begin{pgfonlayer}{edgelayer}
		\draw [style=CcWire] (0) to (1.center);
	\end{pgfonlayer}
\end{tikzpicture}.
\end{align}

Moreover, our representation of propositions as functions in $\textsc{Boole}$ allows us to diagrammatically represent unary and binary logical operations on the propositions, by defining them in terms of their action on Boolean systems.
For example, the NOT operation on an arbitrary proposition can be represented as
\beq
\InputIfFileExists{Diagrams/logicalConnective2.tikz}{}{\input{./figures/Diagrams/logicalConnective2.tikz}}\quad = \quad %
\InputIfFileExists{Diagrams/logicalConnective3.tikz}{}{\input{./figures/Diagrams/logicalConnective3.tikz}},
\eeq
where the dot decorated by the $\neg$ symbol is the function $\neg:\textsc{b}\to \textsc{b}$ whose action on the Boolean system reflects the truth table of the logical NOT, namely $\neg(\textsc{y})=\textsc{n}$ and $\neg(\textsc{n})=\textsc{y}$ (implying that $\neg$ is self-inverse).
Similarly, one can represent the logical OR operation (disjunction), denoted $\lor$, as
\beq \label{logicalconnect1}
\InputIfFileExists{Diagrams/logicalConnective.tikz}{}{\input{./figures/Diagrams/logicalConnective.tikz}}\quad = \quad %
\InputIfFileExists{Diagrams/logicalConnective1.tikz}{}{\input{./figures/Diagrams/logicalConnective1.tikz}} \ ,
\eeq
where the black dot is the copy function $\bullet:X\to X\times X$ defined by $\bullet(x)=(x,x)$, and the dot decorated by the $\lor$ symbol is the function  $\lor:\textsc{b}\times \textsc{b} \to \textsc{b}$ whose action on the Boolean system reflects the truth table of the logical OR, namely, $\textsc{y}\lor \textsc{y} = \textsc{y}\lor \textsc{n} = \textsc{n}\lor \textsc{y} = \textsc{y}$ and $\textsc{n}\lor \textsc{n} = \textsc{n}$.
In a similar manner, one can construct representations of the logical AND, logical implication, exclusive OR, etc.

 We have therefore diagrammatically represented the basic operations required to define a Boolean algebra $\mathcal{B}(X)$. Moreover,
the basic properties of a Boolean algebra (associativity, absorption, commutativity, identity, annihilation, idempotence, complements, and distributivity) can also be shown to hold. These are defined and proven in Appendix~\ref{Boolpropproof}.  

One can also express propositional questions about composite systems, for example
\beq \label{compprop}
\InputIfFileExists{Diagrams/compositePropositionalQuestion.tikz}{}{\input{./figures/Diagrams/compositePropositionalQuestion.tikz}},
\eeq
where $\pi \in \mathcal{B}(X\times Y)$.
We discuss in Appendix~\ref{compositepropns} how these can be constructed out of single-system propositions.

Arbitrary functions are also valid processes in $\textsc{Boole}$, since (as we now show) each is a valid \emph{Boolean algebra homomorphism}, that is, a map between Boolean algebras that preserves the logical connectives $\land$ and $\lor$ as well as the top and bottom elements $\top$ and $\perp$.
Consider a general function
\beq
\InputIfFileExists{Diagrams/propositionalMap.tikz}{}{\input{./figures/Diagrams/propositionalMap.tikz}}.
\eeq
In the process theory $\textsc{Bayes}$, this process would be viewed
as a stochastic map acting on the right to take
 states of knowledge about $X$ to states of knowledge about $Y$.
Within $\textsc{Boole}$, however, this process is viewed
as a map acting on the left, taking
a propositional question about $Y$ to a propositional question about $X$ via
\beq\label{eq:propositionalquestionmap}
\InputIfFileExists{Diagrams/propositionalMapYN1.tikz}{}{\input{./figures/Diagrams/propositionalMapYN1.tikz}} \quad =: \quad %
\InputIfFileExists{Diagrams/propositionalMapYN2.tikz}{}{\input{./figures/Diagrams/propositionalMapYN2.tikz}}.
\eeq
We now show that $f(\_)$ defines a Boolean Algebra homomorphism from $\mathcal{B}(Y)$ to $\mathcal{B}(X)$, where each subset of $Y$ is mapped to the subset of $X$ which is the preimage under $f(\_)$ of the subset of $Y$. We will sometimes refer to such generic functions simply as {\bf propositional maps}.
It is easy to see that propositional maps do indeed preserve $\land$, $\lor$, $\top$, and $\perp$. The fact that the propositional map $\top: Y \to \textsc{b}:: y \mapsto \textsc{y}$ is preserved follows immediately from the fact that $f(\_)$ maps every $x$ to {\em some} $y$, which $\top$ then necessarily maps to $\textsc{y}$.
Preservation of $\perp$ is analogous.
To see that $\land$ is preserved---namely, that
$f(\pi \land \pi') = f(\pi) \land f(\pi')$---note that
\begin{align}
\begin{tikzpicture}
	\begin{pgfonlayer}{nodelayer}
		\node [style={small box}] (0) at (0, -0) {$f(\pi\land\pi')$};
		\node [style=none] (1) at (-1.75, -0) {};
		\node [style={up label}] (2) at (-1.75, -0) {$X$};
		\node [style=none] (3) at (1.75, -0) {};
		\node [style={up label}] (4) at (1.75, -0) {$\textsc{b}$};
	\end{pgfonlayer}
	\begin{pgfonlayer}{edgelayer}
		\draw [cWire] (1.center) to (0);
		\draw [cWire](0) to (3.center);
	\end{pgfonlayer}
\end{tikzpicture}
\quad&=\quad
\InputIfFileExists{Diagrams/logicalConnectivePres.tikz}{}{\input{./figures/Diagrams/logicalConnectivePres.tikz}}\\
 &= \quad %
\InputIfFileExists{Diagrams/logicalConnectivePres2.tikz}{}{\input{./figures/Diagrams/logicalConnectivePres2.tikz}}\label{lcpres2}\\
 &=\quad %
\InputIfFileExists{Diagrams/logicalConnectivePres1.tikz}{}{\input{./figures/Diagrams/logicalConnectivePres1.tikz}} \label{lcpres1}\\
 &=\quad %
\InputIfFileExists{Diagrams/logicalConnectivePres3.tikz}{}{\input{./figures/Diagrams/logicalConnectivePres3.tikz}} \\
 &=\quad \begin{tikzpicture}
	\begin{pgfonlayer}{nodelayer}
		\node [style={small box}] (0) at (0, -0) {$f(\pi)\land f(\pi')$};
		\node [style=none] (1) at (-2.25, -0) {};
		\node [style={up label}] (2) at (-2.25, -0) {$X$};
		\node [style=none] (3) at (2.5, -0) {};
		\node [style={up label}] (4) at (2.5, -0) {$\textsc{b}$};
	\end{pgfonlayer}
	\begin{pgfonlayer}{edgelayer}
		\draw [cWire] (1.center) to (0);
		\draw [cWire](0) to (3.center);
	\end{pgfonlayer}
\end{tikzpicture},
\end{align}
where the equality between Eq.~\eqref{lcpres2} and Eq.~\eqref{lcpres1} simply states that copying the output of a function is the same as copying the input and applying the function to the two copies of the input. The proof of preservation of $\lor$ is analogous.

\begin{remark}
Categorically, this dual picture of $f(\_)$ as a function from $X$ to $Y$ and as a Boolean Algebra homomorphism from $\mathcal{B}(Y)$ to $\mathcal{B}(X)$ is the duality between the categories $\textsc{FinSet}$ and $\textsc{FinBoolAlg}$ where $B=\{\textsc{y},\textsc{n}\}$ is the dualising object.
\end{remark}

To express the truth value assigned to a proposition, we must introduce scalars and effects, which moreover requires us to go beyond functions and consider partial functions\footnote{This is because in the process theory of functions, there is a unique scalar and a unique effect. The unique scalar is the function taking the singleton set to itself, while the unique effect is the function from $X$ to the singleton set which maps every element $x\in X$ to $*$.}.
A partial function $\hat{f}:X \to Y$ is a function from some (possibly empty) subset $\chi_{\hat{f}}\subseteq X$ to $Y$; the partial function is simply undefined on the elements of $X$ outside of this subset.

There are two scalars in $\textsc{Boole}$, which we identify with true and false; namely, the function $\mathsf{True}: \star \to \star:: * \mapsto *$ and the partial function $\mathsf{False}:\star \to \star$, which is defined only on the empty set $\emptyset$, that is, $\chi_{\mathsf{False}} =\emptyset$.
The scalar $\mathsf{True}$ is depicted by the empty diagram, since composing it with any other process leaves that process invariant:
\beq
\InputIfFileExists{Diagrams/emptyDiagram1.tikz}{}{\input{./figures/Diagrams/emptyDiagram1.tikz}}\quad = \quad %
\InputIfFileExists{Diagrams/emptyDigram2.tikz}{}{\input{./figures/Diagrams/emptyDigram2.tikz}}.
\eeq
The scalar $\mathsf{False}$ behaves as a `zero scalar', in the following sense. Defining the `zero process' for a pair of systems $(X,Y)$ as the unique partial function ${\bf{0}}:X\to Y$ such that $\chi_{\mathbf{0}} = \emptyset$, it follows that composing any other partial function $\hat{f}:X\to Y$ with $\mathsf{False}$ will give this zero process:
\beq
\InputIfFileExists{Diagrams/zeroDiagram1.tikz}{}{\input{./figures/Diagrams/zeroDiagram1.tikz}}\quad = \quad %
\InputIfFileExists{Diagrams/zeroDiagram2.tikz}{}{\input{./figures/Diagrams/zeroDiagram2.tikz}} \ .
\eeq

Now, effects within $\textsc{Boole}$ are partial functions taking $X\to \star$, diagrammatically denoted by
\beq
\begin{tikzpicture}
	\begin{pgfonlayer}{nodelayer}
		\node [style=infcopoint] (0) at (0, -0) {$\pi$};
		\node [style=none] (1) at (-1.25, -0) {};
		\node [style={up label}] (2) at (-1.25, -0) {$X$};
	\end{pgfonlayer}
	\begin{pgfonlayer}{edgelayer}
		\draw [cWire] (1.center) to (0);
	\end{pgfonlayer}
\end{tikzpicture} \ .
\eeq
We will see that these are also in one-to-one correspondence with the elements of the Boolean algebra $\mathcal{B}(X)$, which justifies our labelling them by propositions $\pi$. To see this, first note that value assignments $x\in X$ assign a truth-value to effects within $\textsc{Boole}$ when the two are composed together:
\beq
\begin{tikzpicture}
	\begin{pgfonlayer}{nodelayer}
		\node [style=infcopoint] (0) at (0.7499998, -0) {$\pi$};
		\node [style=none] (1) at (-0.7500001, -0) {};
		\node [style={up label}] (2) at (0, -0) {$X$};
		\node [style=infpoint] (3) at (-0.7500001, -0) {$x$};
	\end{pgfonlayer}
	\begin{pgfonlayer}{edgelayer}
		\draw [cWire] (1.center) to (0);
	\end{pgfonlayer}
\end{tikzpicture}
\quad \in \quad \left\{\begin{tikzpicture}
	\begin{pgfonlayer}{nodelayer}
		\node [style=scalar] (0) at (0, -0) {$\textsf{True}$};
	\end{pgfonlayer}
\end{tikzpicture}\ \  ,\ \begin{tikzpicture}
	\begin{pgfonlayer}{nodelayer}
		\node [style=scalar] (0) at (0, -0) {$\textsf{False}$};
	\end{pgfonlayer}
\end{tikzpicture}\right\}.
\eeq
Hence, one can uniquely associate a partial function with the subset of $X$ for which we obtain $\textsf{True}$; indeed, this subset is the domain $\chi_\pi$ of the effect, viewed as a partial function.
We will call such partial functions {\bf propositional effects}.\

At this point, we have three uses of the symbol $\pi$: we have $\pi\in\mathcal{B}(X)$ as a subset (an element of a Boolean Algebra), $\pi:X\to\textsc{b}$ as a propositional question, and now $\pi:X\to\star$ as a propositional effect.  The distinction between these should be clear from context.

To more explicitly see the connections between propositional effects and propositional questions, let us consider the special case of propositional effects for the Boolean system $\textsc{b}$.  There are four of these, corresponding to the four subsets of $\textsc{b}$ on which the partial function from $\textsc{b}$ to $\star$ can be defined, namely $\{\textsc{y}\}$, $\{\textsc{n} \}$, $\{\textsc{y}, \textsc{n} \}$, and $\emptyset$.

We denote these effects, respectively, as
\beq
\begin{tikzpicture}
	\begin{pgfonlayer}{nodelayer}
		\node [style=infcopoint] (0) at (0, -0) {$\textsc{y}$};
		\node [style=none] (1) at (-1.25, -0) {};
		\node [style={up label}] (2) at (-1.25, -0) {$\textsc{B}$};
	\end{pgfonlayer}
	\begin{pgfonlayer}{edgelayer}
		\draw [cWire] (1.center) to (0);
	\end{pgfonlayer}
\end{tikzpicture}
\ \ ,\
\begin{tikzpicture}
	\begin{pgfonlayer}{nodelayer}
		\node [style=infcopoint] (0) at (0, -0) {$\textsc{n}$};
		\node [style=none] (1) at (-1.25, -0) {};
		\node [style={up label}] (2) at (-1.25, -0) {$\textsc{b}$};
	\end{pgfonlayer}
	\begin{pgfonlayer}{edgelayer}
		\draw [cWire] (1.center) to (0);
	\end{pgfonlayer}
\end{tikzpicture}
\ \ ,\
\begin{tikzpicture}
	\begin{pgfonlayer}{nodelayer}
		\node [style=infcopoint] (0) at (0, -0) {$\top$};
		\node [style=none] (1) at (-1.25, -0) {};
		\node [style={up label}] (2) at (-1.25, -0) {$\textsc{b}$};
	\end{pgfonlayer}
	\begin{pgfonlayer}{edgelayer}
		\draw [cWire] (1.center) to (0);
	\end{pgfonlayer}
\end{tikzpicture}
\ \ ,\ \text{and}\
\begin{tikzpicture}
	\begin{pgfonlayer}{nodelayer}
		\node [style=infcopoint] (0) at (0, -0) {$\perp$};
		\node [style=none] (1) at (-1.25, -0) {};
		\node [style={up label}] (2) at (-1.25, -0) {$\textsc{b}$};
	\end{pgfonlayer}
	\begin{pgfonlayer}{edgelayer}
		\draw [cWire] (1.center) to (0);
	\end{pgfonlayer}
\end{tikzpicture}
\ \ .
\eeq
 Then, we can write a given propositional effect in terms of the associated propositional question via
\beq
\begin{tikzpicture}
	\begin{pgfonlayer}{nodelayer}
		\node [style=infcopoint] (0) at (0, -0) {$\pi$};
		\node [style=none] (1) at (-1.25, -0) {};
		\node [style={up label}] (2) at (-1.25, -0) {$X$};
	\end{pgfonlayer}
	\begin{pgfonlayer}{edgelayer}
		\draw [cWire] (1.center) to (0);
	\end{pgfonlayer}
\end{tikzpicture}
\ \ = \
\begin{tikzpicture}
	\begin{pgfonlayer}{nodelayer}
		\node [style=infcopoint] (0) at (2.25, -0) {$\textsc{y}$};
		\node [style={up label}] (1) at (1, -0) {$\textsc{b}$};
		\node [style={small box}] (2) at (0, -0) {$\pi$};
		\node [style=none] (3) at (-2, -0) {};
		\node [style={up label}] (4) at (-1, -0) {$X$};
	\end{pgfonlayer}
	\begin{pgfonlayer}{edgelayer}
		\draw [cWire] (3) to (2);
		\draw [cWire] (2) to (0);
	\end{pgfonlayer}
\end{tikzpicture}.
\eeq
Value assignments to propositional effects are then consistent with value assignment to propositional questions, in the sense that\footnote{ It is worth noting that one could dispense with the notion of propositional questions and express all claims in  terms of propositional effects.  We include the notion of a propositional question here because it helps to clarify certain conceptual distinctions.}
\beq
\begin{tikzpicture}
	\begin{pgfonlayer}{nodelayer}
		\node [style=infcopoint] (0) at (0.7499998, -0) {$\pi$};
		\node [style=none] (1) at (-0.7500001, -0) {};
		\node [style={up label}] (2) at (0, -0) {$X$};
		\node [style=infpoint] (3) at (-0.7500001, -0) {$x$};
	\end{pgfonlayer}
	\begin{pgfonlayer}{edgelayer}
		\draw [cWire] (1.center) to (0);
	\end{pgfonlayer}
\end{tikzpicture}
=\begin{tikzpicture}
	\begin{pgfonlayer}{nodelayer}
		\node [style=scalar] (0) at (0, -0) {$\textsf{True}$};
	\end{pgfonlayer}
\end{tikzpicture}
\quad \iff \quad
\begin{tikzpicture}
	\begin{pgfonlayer}{nodelayer}
		\node [style=small box] (0) at (-0.5, 0) {$\pi$};
		\node [style=infpoint] (1) at (-2.25, 0) {$x$};
		\node [style=up label] (2) at (-1.5, 0) {$X$};
		\node [style=none] (3) at (0.75, 0) {};
		\node [style=up label] (4) at (0.5, 0) {$\textsc{b}$};
	\end{pgfonlayer}
	\begin{pgfonlayer}{edgelayer}
		\draw [cWire] (1) to (0);
		\draw [cWire] (0) to (3.center);
	\end{pgfonlayer}
\end{tikzpicture}
= \begin{tikzpicture}
	\begin{pgfonlayer}{nodelayer}
		\node [style=infpoint] (0) at (0, -0) {$\textsc{y}$};
		\node [style=none] (1) at (1, -0) {};
		\node [style={up label}] (2) at (0.75, -0) {$\textsc{b}$};
	\end{pgfonlayer}
	\begin{pgfonlayer}{edgelayer}
		\draw [style=CcWire] (0) to (1.center);
	\end{pgfonlayer}
\end{tikzpicture}.
\eeq

It turns out that {\em all} partial functions can be generated from the elements we have introduced so far, and hence all of these are in $\textsc{Boole}$. An arbitrary partial function $\hat{f}$ can be written as
\beq \label{anypartfunc1}
\InputIfFileExists{Diagrams/arbitraryPartial.tikz}{}{\input{./figures/Diagrams/arbitraryPartial.tikz}},
\eeq
where $\chi_{\hat{f}}$ is an arbitrary propositional effect, $F:X\to Y$ is an arbitrary propositional map, and the black dot is the copy function. Here, the top part of the diagram defines the subset of the domain on which the partial function is defined, and then $F$ defines the action of the partial function on that domain.
Hence, we have that
\begin{remark}
Categorically, the process theory $\textsc{Boole}$ is the symmetric monoidal category $\textsc{FinSet}_{\textsc{part}}$ where objects are finite sets and morphisms are partial functions.
\end{remark}

We discussed how the functions in $\textsc{Boole}$ correspond to Boolean algebra homomorphisms. In contrast, partial functions in $\textsc{Boole}$ are more general. In general, partial functions do not map propositional questions to propositional questions (via Eq.~\eqref{eq:propositionalquestionmap}), but rather take them to other partial functions.
However, partial functions do map propositional effects to propositional effects via
\beq
\begin{tikzpicture}
	\begin{pgfonlayer}{nodelayer}
		\node [style=infcopoint] (0) at (2.25, -0) {$\pi$};
		\node [style={up label}] (1) at (1, -0) {$Y$};
		\node [style={small box}] (2) at (0, -0) {$\hat{f}$};
		\node [style=none] (3) at (-2, -0) {};
		\node [style={up label}] (4) at (-1, -0) {$X$};
	\end{pgfonlayer}
	\begin{pgfonlayer}{edgelayer}
		\draw [cWire] (3) to (2);
		\draw [cWire] (2) to (0);
	\end{pgfonlayer}
\end{tikzpicture}\ \ =:\
\begin{tikzpicture}
	\begin{pgfonlayer}{nodelayer}
		\node [style=infcopoint] (0) at (0, -0) {$\hat{f}(\pi)$};
		\node [style=none] (1) at (-1.25, -0) {};
		\node [style={up label}] (2) at (-1.25, -0) {$X$};
	\end{pgfonlayer}
	\begin{pgfonlayer}{edgelayer}
		\draw [cWire] (1.center) to (0);
	\end{pgfonlayer}
\end{tikzpicture} \ ,
\eeq
and so we can ask which structures of the Boolean algebra of propositional effects are preserved by such a map.  We show in Appendix~\ref{partfuncapp} that $\perp$, $\lor$ and $\land$ are preserved, but $\top$ and $\neg$ are not. We then show that a partial function from $X$ to $Y$ corresponds to a Boolean Algebra homomorphism from $\mathcal{B}(Y)$ to $\mathcal{B}(\chi_{\hat{f}})$.

\subsection{The full inferential process theory}\label{inftheory}

We now show how the probabilistic and the propositional parts of the inferential theory interact---for example, allowing one to compute the probability one should assign to a propositional effect on a system $X$,
given an arbitrary state of knowledge about $X$.
 This interaction is possible because $\textsc{Bayes}$ and $\textsc{Boole}$ define a collection of processes on the same types of systems. However, the processes in $\textsc{Bayes}$ are stochastic maps, while those in $\textsc{Boole}$ are partial functions, and so it remains to define how these interact with one another.

We proceed by showing that both $\textsc{Bayes}$ and $\textsc{Boole}$ can be faithfully represented within the process theory of  substochastic linear maps, $\Inf$ (see Remark~\ref{remsubstoch}). 
More formally, there is a diagram-preserving inclusion map from $\textsc{Bayes}$ into $\Inf$, and there is a diagram-preserving map (given by Eq.~\eqref{parfuncrepn} below) from $\textsc{Boole}$ into $\Inf$.
Hence, we have
\beq
\begin{tikzpicture}
	\begin{pgfonlayer}{nodelayer}
		\node [style=none] (0) at (-4.25, 0) {$\textsc{Bayes}$};
		\node [style=none] (1) at (0, 0) {$\Inf$};
		\node [style=none] (2) at (4.25, 0) {$\textsc{Boole}$};
		\node [style=none] (3) at (-3.25, 0) {};
		\node [style=none] (4) at (-1.75, 0) {};
		\node [style=none] (5) at (3.25, 0) {};
		\node [style=none] (6) at (1.75, 0) {};
		\node [style=none] (7) at (3.75, -0.25) {};
		\node [style=none] (8) at (1.75, -0.25) {};
		\node [style=up label] (10) at (-2.5, 0) {};
		\node [style=up label] (11) at (2.5, 0) {};
	\end{pgfonlayer}
	\begin{pgfonlayer}{edgelayer}
		\draw [style=arrow plain] (3.center) to (4.center);
		\draw [style=arrow plain] (5.center) to (6.center);
	\end{pgfonlayer}
\end{tikzpicture}.
\eeq
Moreover, we show that any substochastic map can be realised by composing the processes from these two representations.

First, let us consider $\textsc{Boole}$. Note that any function $f:X\to Y$ can be represented by an associated stochastic map,
$(\mathbf{f}_x^y )_{x\in X}^{y\in Y}$, via:
\beq
\mathbf{f}_x^y =
\begin{cases}
    1 & \text{if } f(x)=y\\
    0 & \text{otherwise}.
\end{cases}
\eeq
 Any such stochastic map is {\em deterministic}, meaning that 
 there is precisely a single $1$ in each column, with the rest of the elements are $0$.
It is not difficult to check that these stochastic maps compose in the same way as the underlying functions, and so this gives us a representation of the process theory of functions within the process theory of stochastic maps.

More generally, a partial function $\hat{f}$ is also associated with a stochastic map via
\beq \label{parfuncrepn}
\hat{\mathbf{f}}_x^y =
\begin{cases}
    1 & \text{if } \hat{f}(x)=y\\
    0 & \text{otherwise},
\end{cases}
\eeq
with the only difference being that for functions, the $0$ case occurs only when $f(x)\neq y$, whereas for partial functions it will also occur when $\hat{f}$ is not defined on $x$. These are more general than deterministic stochastic maps in that some of the $1$s may be replaced by $0$s---that is, they are \emph{substochastic} maps. Again, one can check that the representative substochastic maps compose in the same way as the underlying partial functions.

Clearly, the probabilistic part of the theory, which is described by stochastic maps, can also be represented as substochastic maps, as the former are simply a special case of the latter.

Hence, both $\textsc{Bayes}$ and $\textsc{Boole}$ can be represented within $\textsc{SubStoch}$.

Within this representation, certain processes in $\textsc{Bayes}$ and certain processes in $\textsc{Boole}$ correspond to the same substochastic map, and hence are identified. For example, we have the identification
\beq \label{topequalsmarg}
\begin{tikzpicture}
	\begin{pgfonlayer}{nodelayer}
		\node [style=infcopoint] (0) at (0, -0) {$\top$};
		\node [style=none] (1) at (-1.25, -0) {};
		\node [style={up label}] (2) at (-1.25, -0) {$X$};
	\end{pgfonlayer}
	\begin{pgfonlayer}{edgelayer}
		\draw [cWire] (1.center) to (0);
	\end{pgfonlayer}
\end{tikzpicture}
\ \ \ = \
\begin{tikzpicture}
	\begin{pgfonlayer}{nodelayer}
		\node [style=none] (0) at (-1.25, 0) {};
		\node [style=infupground] (1) at (0, 0) {};
		\node [style={up label}] (2) at (-1.25, 0) {$X$};
	\end{pgfonlayer}
	\begin{pgfonlayer}{edgelayer}
		\draw [style=CcWire] (0) to (1);
	\end{pgfonlayer}
\end{tikzpicture} \ ,
\eeq
since both are represented by the all-ones column vector. As another simple example, the representation of any function in the propositional theory will coincide with some deterministic stochastic map in $\textsc{Bayes}$; hence, such processes in $\Inf$ can be viewed either
acting on the left as propositional maps, or acting on the right as stochastic maps.
As a final example, the representation of a delta function probability distribution $[x]$ from $\textsc{Bayes}$ coincides with the representation of the value assignment asserting $X=x$ from $\textsc{Boole}$.

Consider now the representation of the scalars in $\textsc{Bayes}$ and $\textsc{Boole}$. The unique scalar $1$ in $\textsc{Bayes}$ remains the same in this representation, while the pair of scalars in $\textsc{Boole}$, namely $\mathsf{True}$ and $\mathsf{False}$, are represented respectively by the scalars $1$ and $0$ within $\textsc{SubStoch}$.

These two representations interact in the obvious way. For example, we expect the diagram
\beq
\begin{tikzpicture}
	\begin{pgfonlayer}{nodelayer}
		\node [style=infpoint] (0) at (-1, -0) {$\sigma$};
		\node [style=infcopoint] (1) at (1, -0) {$\pi$};
		\node [style={up label}] (2) at (0, -0) {$X$};
	\end{pgfonlayer}
	\begin{pgfonlayer}{edgelayer}
		\draw [cWire] (0) to (1);
	\end{pgfonlayer}
\end{tikzpicture}\eeq
to give the probability $\mathsf{Prob}(\pi:\sigma)$
that the proposition $\pi$ about $X$ is true, given a state of knowledge $\sigma$ about $X$. Indeed, this can be computed within the theory of substochastic maps:
\begin{align}\label{eq:InfConsistency}
\InputIfFileExists{Diagrams/yesTruthProb.tikz}{}{\input{./figures/Diagrams/yesTruthProb.tikz}}\ &= \sum_{x\in X} \sigma(x)\ %
\InputIfFileExists{Diagrams/yesTruthProb1.tikz}{}{\input{./figures/Diagrams/yesTruthProb1.tikz}}\\ &=\sum_{x\in X} \sigma(x) \delta_{\pi(x), \textsc{y}}\\ &= \sum_{x\in \pi} \sigma(x)\\ &= \mathsf{Prob}(\pi:\sigma).
\end{align}
It turns out that arbitrary substochastic maps can be realized by the interaction of stochastic maps and partial functions. An arbitrary substochastic map can be represented as
\beq
\InputIfFileExists{Diagrams/substochRealisation.tikz}{}{\input{./figures/Diagrams/substochRealisation.tikz}} \ ,
\eeq
where $s$ and $w$ are arbitrary stochastic maps. Here, $w$ specifies the normalization of each column of the substochastic map, while $s$ specifies the action of each column (apart from the normalization factor).
Hence, we have that
\begin{remark} \label{remsubstoch}
Categorically, $\Inf$ is the symmetric monoidal category $\textsc{FinSubStoch}$ where objects are finite sets and morphisms are substochastic maps between them, together with a selected subobject classifier  $\{\textsc{y},\textsc{n}\}$. It is, however, useful to see how this structure arises from the interaction between the probabilistic part (as described by $\textsc{FinStoch}$) and the propositional part (as described by $\textsc{FinSet}_{\textsc{part}}$).
\end{remark}

\begin{remark}
Strictly speaking, in what follows, we will have certain inferential systems labeled by sets of infinite cardinality.
 Recalling that $\Proc$ is a free process theory, its hom-set will typically be of infinite cardinality.
(In contrast, note that the hom-sets in $\Func$ are finite, so the issue will not arise there). To formally deal with this, rather than working with $\textsc{FinSubStoch}$, we should work with $\textsc{SubStoch}$---defined as the Kleisli category of the subdistribution monad on $\textsc{Set}$.\footnote{Thanks are due to Martti Karvonen for recognizing this issue, giving the resolution to it, and then explaining the resolution to us.} In this process theory, states are subnormalised probability distributions with finite support, and general processes are substochastic maps which do not generate distributions with infinite support. In future work, it will be important to consider more sophisticated measure-theoretic approaches to infinite sets---in particular, to allow for ontic state spaces of infinite cardinality.
\end{remark}

\section{Causal-inferential theories}\label{sec:CI}

Having introduced theories of causal primitives and of inferential primitives, we can now describe how the two interact to define causal-inferential theories. 
We first (non-exhaustively) describe features that we expect of a generic causal-inferential theory.
We then develop two special cases that instantiate these features, namely causal-inferential theories of lab procedures, or {\em operational CI theories}, and the causal-inferential theory of functional dynamics, namely the {\em \crealist CI theory}.

%


A causal-inferential theory consists of a triple of process theories and a triple of DP (partial) maps between them:
\beq\label{eq:CIFull}
\InputIfFileExists{Diagrams/causalInferentialFull.tikz}{}{\input{./figures/Diagrams/causalInferentialFull.tikz}} \ ,
\eeq
where the process theory $\CI$ includes all of the causal and inferential systems coming from $\Caus$ and $\genInf$, respectively. (That is, $\mathbf{i}$ and $\mathbf{e}$ are injective on objects.)
\begin{remark}
  Note that we will continue to draw the causal systems (i.e., those in the image of $\mathbf{e}$) vertically and the inferential systems (i.e., those in the image of $\mathbf{i}$) horizontally (i.e., just as in the respective domains of $\mathbf{e}$ and $\mathbf{i}$). This choice is merely a convention; we could alternatively have just used a different style of wire or labelling system to keep track of this information.  That is, on a formal level, $\CI$, like $\Caus$ and $\genInf$, is simply a symmetric monoidal category.  
\end{remark}

We define how the causal and inferential systems interact by introducing three special processes that involve both causal and inferential systems, and that are subject to a collection of rewrite rules. These three processes allow one to (i) specify a state of knowledge about a particular causal dynamics, (ii) gain information about a classical causal system, and (iii) ignore causal systems. These three generators are denoted, respectively, by
\beq
\begin{tikzpicture}
	\begin{pgfonlayer}{nodelayer}
		\node [style=none] (0) at (-1, 0) {};
		\node [style=epiBox] (1) at (0, 0) {};
		\node [style=none] (2) at (0, 1) {};
		\node [style=none] (3) at (0, -1) {};
	\end{pgfonlayer}
	\begin{pgfonlayer}{edgelayer}
		\draw [style=CcWire] (0.center) to (1);
		\draw [style=qWire] (2.center) to (1);
		\draw [style=qWire] (1) to (3.center);
	\end{pgfonlayer}
\end{tikzpicture}
\ \ , \quad \begin{tikzpicture}
	\begin{pgfonlayer}{nodelayer}
		\node [style=clear dot] (0) at (0, 0) {};
		\node [style=none] (1) at (0, -1) {};
		\node [style=none] (3) at (0, 1) {};
		\node [style=none] (5) at (1, 0) {};
		\node [style=none] (7) at (0, 0.25) {};
		\node [style=none] (8) at (0, -0.25) {};
	\end{pgfonlayer}
	\begin{pgfonlayer}{edgelayer}
		\draw [style=qWire] (0) to (1.center);
		\draw [style=qWire] (3.center) to (0);
		\draw [style=CcWire] (5.center) to (0);
		\draw [style=qWire] (7.center) to (8.center);
	\end{pgfonlayer}
\end{tikzpicture}
\ \ \text{and} \quad  
\begin{tikzpicture}
	\begin{pgfonlayer}{nodelayer}
		\node [style=none] (0) at (0, -1) {};
		\node [style=small black dot] (1) at (0, 0.25) {};
		\node [style=ignore] (3) at (0, 0.25) {};
	\end{pgfonlayer}
	\begin{pgfonlayer}{edgelayer}
		\draw [qWire] (0.center) to (1);
	\end{pgfonlayer}
\end{tikzpicture}.
\eeq
In our examples, the theory corresponding to $\CI$ in Eq.~\eqref{eq:CIFull} 
 is defined as the process theory constructed out of arbitrary composition of these three `generators', together with the processes in the image of $\mathbf{i}$.\footnote{
Note that, unlike the situation with the $\mathbf{i}$ map, the construction of $\CI$ does not require any of the processes in the image of $\mathbf{e}$ to be generators, since (as we will see) the set of all such processes can be obtained by combining elements of $\Inf$ (the image of $\mathbf{i}$) with the first generator via Eq.~\eqref{gen1}.  Indeed, this is how the map $\mathbf{e}$ will be {\em defined}; see Eq.~\eqref{defnemap}.   
}

The rewrite rules that these satisfy will be (to some extent) dependent on the exact causal-inferential theory that one is interested in -- indeed, there will be an important distinction between the rewrite rules for our two key examples. 
An important direction for future research is therefore  to determine which aspects of these rewrite rules 
are in fact generic to {\em all} CI theories. We return to this question in Section~\ref{realconstr}. 
 
Finally, we introduce a partial map $\mathbf{p}$ that 
allows one to make inferential predictions, by mapping diagrams in $\CI$ with only inferential inputs and outputs to a particular inferential process within $\Inf$. That is, in our examples, the map $\mathbf{p}$ 
 makes a probabilistic prediction given one's knowledge about a particular causal scenario. 

We will for simplicity sometimes refer to $\CI$ as `the causal-inferential theory' or use the symbol $\CI$ to refer to the full causal-inferential theory, since $\CI$ is the primary process theory of relevance. Strictly speaking, however, a causal-inferential theory is given by a triple as in Eq.~\eqref{eq:CIFull} (including, in particular, the prediction map).

\subsection{Operational causal-inferential theories}\label{sec:Op}

We will use the term `operational causal-inferential (CI) theory' to refer to a causal-inferential theory of lab procedures; that is,  taking $\Caus=\Proc$ and $\genInf=\Inf$ in Eq.~\eqref{eq:CIFull}, i.e.,
\beq %
\InputIfFileExists{Diagrams/OpInferentialEquiv.tikz}{}{\input{./figures/Diagrams/OpInferentialEquiv.tikz}} \label{OpInferentialEquiv} \ . \eeq
We have already defined $\Proc$ and $\Inf$, but it remains to explicitly define $\PI$ and the diagram-preserving maps  $\mathbf{e}$, $\mathbf{i}$ and $\mathbf{p}$ between these three process theories.

$\Inf$ is a subprocess theory of $\PI$,
explicitly represented by the inclusion of $\Inf$ into $\PI$ via a DP map \colorbox{green!30}{$\mathbf{i}:\Inf \to \PI$}. Diagrammatically, we denote this as a green map, e.g.
\beq
\InputIfFileExists{Diagrams/infInclusion.tikz}{}{\input{./figures/Diagrams/infInclusion.tikz}}.
\eeq
That is, $\mathbf{i}$ denotes that some process in $\PI$ is a member of the subprocess theory $\Inf$. For example, in the equation
\beq\label{Eq:ImageOfi}
\begin{tikzpicture}
	\begin{pgfonlayer}{nodelayer}
		\node [style=small box] (0) at (0, -0) {$s$};
		\node [style=none] (1) at (2, -0) {};
		\node [style=up label] (2) at (1.75, -0) {$Y$};
		\node [style=none] (3) at (-2, -0) {};
		\node [style=up label] (4) at (-1.75, -0) {$X$};
	\end{pgfonlayer}
	\begin{pgfonlayer}{edgelayer}
		\draw [style=CcWire] (0) to (1.center);
		\draw [style=CcWire] (3.center) to (0);
	\end{pgfonlayer}
\end{tikzpicture}\quad = \quad %
\InputIfFileExists{Diagrams/infInclusion.tikz}{}{\input{./figures/Diagrams/infInclusion.tikz}},
\eeq
 the process $s$ on the RHS is a process in $\Inf$, shown being mapped by $\mathbf{i}$ to the process $s$ in $\PI$, shown on the LHS.
In this case, $s$ as a process in $\PI$ (on the LHS) is in the image under $\mathbf{i}$ of a process $s$ in $\Inf$ (on the RHS).

However, $\Proc$ is not a sub-process-theory of $\PI$; that is, the DP map \colorbox{blue!30}{$\mathbf{e}:\Proc \to \PI$} is a more complicated embedding. All of the systems from $\Proc$ are directly included as systems within $\PI$. In order to fully define the embedding map $\mathbf{e}$, we must define how the causal and inferential systems in $\PI$ interact.

 To proceed, we discuss the interpretation within $\PI$ of the three fundamental generators of interactions between the causal and inferential systems.

The first generator allows us to specify our state of knowledge about which procedure occurs. There is one such generator for each pair $(\op{A},\op{B})$ of systems,
 depicted as
\beq%
\InputIfFileExists{Diagrams/interactionEpistemicProcedural.tikz}{}{\input{./figures/Diagrams/interactionEpistemicProcedural.tikz}}\label{keygen1} \ .\eeq
We then interpret
\beq \label{gen1}
\begin{tikzpicture}
	\begin{pgfonlayer}{nodelayer}
		\node [style=epiBox] (0) at (0, -0) {};
		\node [style=none] (1) at (0, 1.5) {};
		\node [style=none] (2) at (0, -1.5) {};
		\node [style={right label}] (3) at (0, -1.25) {$\op{A}$};
		\node [style={right label}] (4) at (0, 1.25) {$\op{B}$};
		\node [style=infpoint] (5) at (-2.75, -0) {$\sigma$};
		\node [style={up label}] (6) at (-1.25, -0) {$\morph{\op{A}}{\op{B}}$};
	\end{pgfonlayer}
	\begin{pgfonlayer}{edgelayer}
		\draw [style=qWire] (2.center) to (0);
		\draw [style=qWire] (0) to (1.center);
		\draw [style=CcWire] (5) to (0);
	\end{pgfonlayer}
\end{tikzpicture}
\eeq
as describing that we have state of knowledge $\sigma$ about the transformation procedure taking $\op{A}$ to $\op{B}$. Indeed, $\sigma$ is here a probability distribution over the set $\morph{\op{A}}{\op{B}}$ of transformation procedures. We will denote the delta function state of knowledge on $t\in \Proc$ by $[t]$, so that
\beq
\begin{tikzpicture}
	\begin{pgfonlayer}{nodelayer}
		\node [style=epiBox] (0) at (0, -0) {};
		\node [style=none] (1) at (0, 1.5) {};
		\node [style=none] (2) at (0, -1.5) {};
		\node [style={right label}] (3) at (0, -1.25) {$\op{A}$};
		\node [style={right label}] (4) at (0, 1.25) {$\op{B}$};
		\node [style=infpoint] (5) at (-2.75, -0) {$[t]$};
		\node [style={up label}] (6) at (-1.25, -0) {$\morph{\op{A}}{\op{B}}$};
	\end{pgfonlayer}
	\begin{pgfonlayer}{edgelayer}
		\draw [style=qWire] (2.center) to (0);
		\draw [style=qWire] (0) to (1.center);
		\draw [style=CcWire] (5) to (0);
	\end{pgfonlayer}
\end{tikzpicture}
\eeq
represents certainty that $\op{A}$ transforms into $\op{B}$ via the procedure $t$.

Now, suppose that we have states of knowledge about each of two transformation procedures where the output of one is the only input of the other (so that they are purely cause-effect related), then there is a stochastic map which represents how to update knowledge about each of the two individual transformations to a state of  knowledge of the composite transformation procedure. We denote this stochastic map as
\beq
\begin{tikzpicture} \label{seqrule}
	\begin{pgfonlayer}{nodelayer}
		\node [style=up label] (0) at (0.75, -1) {$\morph{\op{A}}{\op{B}}$};
		\node [style=none] (1) at (0, 1) {};
		\node [style=up label] (2) at (0.75, 1) {$\morph{\op{B}}{\op{C}}$};
		\node [style=none] (3) at (2, -0) {};
		\node [style=up label] (4) at (3, -0) {$\morph{\op{A}}{\op{C}}$};
		\node [style=none] (5) at (3.5, -0) {};
		\node [style=none] (6) at (2, -0) {};
		\node [style=seqComp] (7) at (2, -0) {};
		\node [style=none] (8) at (0, -1) {};
		\node [style=none] (9) at (2, -0) {};
	\end{pgfonlayer}
	\begin{pgfonlayer}{edgelayer}
		\draw [style=CcWire, in=105, out=0, looseness=1.00] (1.center) to (3.center);
		\draw [style=CcWire] (6.center) to (5.center);
		\draw [style=CcWire, in=-105, out=0, looseness=1.00] (8.center) to (9.center);
	\end{pgfonlayer}
\end{tikzpicture} \ ,
\eeq
defined by linearity and its action on delta-function states of knowledge, namely
\beq \forall t, t' \qquad %
\InputIfFileExists{Diagrams/EpistemicSequentialConstraint3.tikz}{}{\input{./figures/Diagrams/EpistemicSequentialConstraint3.tikz}}\quad=\quad%
\InputIfFileExists{Diagrams/EpistemicSequentialConstraint4.tikz}{}{\input{./figures/Diagrams/EpistemicSequentialConstraint4.tikz}}\label{parallelconstr1}\eeq
where $\circ$ denotes sequential composition in $\Proc$.
To reproduce the intuitive notion of composition, we demand that
\beq\label{constraint1FIXED}%
\InputIfFileExists{Diagrams/EpistemicSequentialConstraint.tikz}{}{\input{./figures/Diagrams/EpistemicSequentialConstraint.tikz}}\quad=\quad%
\InputIfFileExists{Diagrams/EpistemicSequentialConstraint2.tikz}{}{\input{./figures/Diagrams/EpistemicSequentialConstraint2.tikz}}.\eeq

This rewrite rule can be understood as the equality of two different methods of specifying one's knowledge that the causal structure is a chain  $A \rightarrow B \rightarrow C$. The fact that $\op{B}$ is a complete causal mediary between $\op{A}$ and $\op{C}$ can be encoded in the causal structure of a diagram (as on the LHS), but the RHS encodes it in the inferential structure, as a state of knowledge about the transformation from $\op{A}$ to $\op{C}$ that is specified in terms of one's state of knowledge about a transformation from $\op{A}$ to $\op{B}$ and about a transformation from $\op{B}$ to $\op{C}$.

Similarly, we can define a stochastic map which represents how one combines states of knowledge about transformations that are causally disconnected.
Specifically, suppose that in a transformation from $\op{A}\op{C}$ to $\op{B}\op{D}$, $\op{B}$ is influenced only by $\op{A}$ and $\op{D}$ is influenced only by $\op{C}$.
The relevant stochastic map,
\beq
\begin{tikzpicture} \label{parrule}
	\begin{pgfonlayer}{nodelayer}
		\node [style=up label] (0) at (0.5, -1.25) {$\morph{\op{C}}{\op{D}}$};
		\node [style=none] (1) at (0, 1.25) {};
		\node [style=up label] (2) at (0.5, 1.25) {$\morph{\op{A}}{\op{B}}$};
		\node [style=none] (3) at (2, -0) {};
		\node [style=up label] (4) at (3.5, -0) {$\morph{\op{AC}}{\op{BD}}$};
		\node [style=none] (5) at (4.75, -0) {};
		\node [style=none] (6) at (2, -0) {};
		\node [style=parComp] (7) at (2, -0) {};
		\node [style=none] (8) at (0, -1.25) {};
		\node [style=none] (9) at (2, -0) {};
	\end{pgfonlayer}
	\begin{pgfonlayer}{edgelayer}
		\draw [style=CcWire, in=105, out=0, looseness=1.00] (1.center) to (3.center);
		\draw [style=CcWire] (6.center) to (5.center);
		\draw [style=CcWire, in=-105, out=0, looseness=1.00] (8.center) to (9.center);
	\end{pgfonlayer}
\end{tikzpicture},
\eeq
 can be defined by linearity and its action on delta-function states of knowledge, namely
\beq\forall t, t' \qquad %
\InputIfFileExists{Diagrams/EpistemicParallelConstraint3.tikz}{}{\input{./figures/Diagrams/EpistemicParallelConstraint3.tikz}}\quad=\quad%
\InputIfFileExists{Diagrams/EpistemicParallelConstraint4.tikz}{}{\input{./figures/Diagrams/EpistemicParallelConstraint4.tikz}}\ , \eeq
where $\otimes$ denotes the parallel composition of processes within $\Proc$.
Then, in analogy to Eq.~\eqref{constraint1FIXED}, we demand that
\beq%
\InputIfFileExists{Diagrams/EpistemicParallelConstraint.tikz}{}{\input{./figures/Diagrams/EpistemicParallelConstraint.tikz}}\quad=\quad%
\InputIfFileExists{Diagrams/EpistemicParallelConstraint2.tikz}{}{\input{./figures/Diagrams/EpistemicParallelConstraint2.tikz}}\label{constraint2}.\eeq

Finally, it will often be useful to be able to interpret some bits of wiring as themselves being processes in $\PI$, namely, the identity procedure $\mathds{1}$ and swap procedure $\mathds{S}$ respectively. We therefore impose that
\beq \label{identityembedding}
\InputIfFileExists{Diagrams/IdentityDelta.tikz}{}{\input{./figures/Diagrams/IdentityDelta.tikz}}\quad = \quad %
\begin{tikzpicture}
	\begin{pgfonlayer}{nodelayer}
		\node [style=none] (0) at (0, 1) {};
		\node [style=none] (1) at (0, -1) {};
	\end{pgfonlayer}
	\begin{pgfonlayer}{edgelayer}
		\draw [qWire] (1.center) to (0.center);
	\end{pgfonlayer}
\end{tikzpicture}}
\eeq
\beq
\InputIfFileExists{Diagrams/SwapDelta.tikz}{}{\input{./figures/Diagrams/SwapDelta.tikz}}\quad = \quad %
\InputIfFileExists{Diagrams/SwapDelta2.tikz}{}{\input{./figures/Diagrams/SwapDelta2.tikz}} \ .
\eeq

Essentially, these constraints allow us to lift the compositional, that is, causal, structure of $\Proc$ {\em into our theory of states of knowledge about $\Proc$}, that is, into $\PI$.  The choice about how to do this -- by focusing on $\circ$ and $\otimes$ -- is somewhat artificial and so we discuss an alternative approach in Appendix~\ref{app:firstgenerator}.  In particular, given the basic generators and constraints we have just introduced, one can construct a DP map \colorbox{blue!30}{$\mathbf{e}:\Proc\to\PI$} which embeds procedures into $\PI$ as delta-function states of knowledge:
\beq\label{defnemap}
\InputIfFileExists{Diagrams/EmbeddingProcToPI2.tikz}{}{\input{./figures/Diagrams/EmbeddingProcToPI2.tikz}}\quad :=\quad %
\InputIfFileExists{Diagrams/EmbeddingProcToPI.tikz}{}{\input{./figures/Diagrams/EmbeddingProcToPI.tikz}} \ .
\eeq
It is simple to check that our constraints on this generator imply that this map is indeed diagram-preserving.
For example,
\begin{align}
\InputIfFileExists{Diagrams/sequenceProof1.tikz}{}{\input{./figures/Diagrams/sequenceProof1.tikz}}\quad &= \quad %
\InputIfFileExists{Diagrams/sequenceProof2.tikz}{}{\input{./figures/Diagrams/sequenceProof2.tikz}}\\
&= \quad %
\InputIfFileExists{Diagrams/sequenceProof3.tikz}{}{\input{./figures/Diagrams/sequenceProof3.tikz}}\\
&= \quad %
\InputIfFileExists{Diagrams/sequenceProof4.tikz}{}{\input{./figures/Diagrams/sequenceProof4.tikz}}\\
&= \quad %
\InputIfFileExists{Diagrams/sequenceProof5.tikz}{}{\input{./figures/Diagrams/sequenceProof5.tikz}}\\
&= \quad %
\InputIfFileExists{Diagrams/sequenceProof6.tikz}{}{\input{./figures/Diagrams/sequenceProof6.tikz}} \ .
\end{align}

The second generator allows us to directly gain knowledge from a classical causal system. There is one such generator for each classical system $\op{X}$:
\beq%
\InputIfFileExists{Diagrams/ProcPropInteraction.tikz}{}{\input{./figures/Diagrams/ProcPropInteraction.tikz}}. \label{genattachprop} \eeq
This can equivalently be interpreted as a generator which allows us to ask a question about a classical system by attaching a proposition to it.
For example, a proposition $\pi$ about the outcome of a measurement $m$ is depicted as
\beq%
\InputIfFileExists{Diagrams/ProcPropInteractionExample.tikz}{}{\input{./figures/Diagrams/ProcPropInteractionExample.tikz}}.\eeq
Note that there is no such generator for systems that are not classical, since for these,  there is no way to directly gain information about the system; rather, one can only probe them indirectly via their interaction with classical systems.

As with the previous generator, this generator must satisfy certain constraints.
First, it must satisfy
\beq \label{trueprop}
\InputIfFileExists{Diagrams/TrivialProposition.tikz}{}{\input{./figures/Diagrams/TrivialProposition.tikz}} \quad = \quad %
\begin{tikzpicture}
	\begin{pgfonlayer}{nodelayer}
		\node [style=none] (0) at (0, 1) {};
		\node [style=none] (1) at (0, -1) {};
		\node [style=scalar] (2) at (1.5, 0) {$\textsf{True}$};
	\end{pgfonlayer}
	\begin{pgfonlayer}{edgelayer}
		\draw [cWire] (1.center) to (0.center);
	\end{pgfonlayer}
\end{tikzpicture}
} \quad = \quad %
\begin{tikzpicture}
	\begin{pgfonlayer}{nodelayer}
		\node [style=none] (0) at (0, 1) {};
		\node [style=none] (1) at (0, -1) {};
	\end{pgfonlayer}
	\begin{pgfonlayer}{edgelayer}
		\draw [cWire] (1.center) to (0.center);
	\end{pgfonlayer}
\end{tikzpicture}}\ \ .
\eeq
 That is, asking about the tautological proposition on a system is the same as not asking anything at all about the system.

Additionally, under sequential composition we demand
\beq%
\InputIfFileExists{Diagrams/ProcPropInteractionDef.tikz}{}{\input{./figures/Diagrams/ProcPropInteractionDef.tikz}} \label{constraint3} \ , \eeq
where $\bullet$ is the stochastic broadcasting map which can be defined by linearity and its action on delta-function states of knowledge, namely,
\beq
\InputIfFileExists{Diagrams/copyDotDef1.tikz}{}{\input{./figures/Diagrams/copyDotDef1.tikz}}\quad := \quad %
\InputIfFileExists{Diagrams/copyDotDef2.tikz}{}{\input{./figures/Diagrams/copyDotDef2.tikz}} \ .
\eeq
Eq.~\eqref{constraint3} states that directly gaining knowledge about the same system twice is the same as copying the knowledge gained from the system.

We also have a constraint for parallel composition:
\beq%
\InputIfFileExists{Diagrams/PropositionalParallelConstraint.tikz}{}{\input{./figures/Diagrams/PropositionalParallelConstraint.tikz}}\label{constraint4} \ ,\eeq
where $\InfSplit$ is really just the identity stochastic map, but where one is changing from diagrammatically denoting
a pair of systems as a single wire to denoting them by a pair of wires. Similarly, we have $\InfMerge$, which merges a pair of wires into a single wire:
\beq
\InputIfFileExists{Diagrams/infMerge.tikz}{}{\input{./figures/Diagrams/infMerge.tikz}} \ .
\eeq

Finally, we introduce our last generator, which represents the ignoring of a causal system as
\beq%
\InputIfFileExists{Diagrams/ignore.tikz}{}{\input{./figures/Diagrams/ignore.tikz}}. \eeq
We depict the process of ignoring a system by the same symbol (albeit smaller) as marginalisation in the inferential theory
to make clear that this is not a physical discarding process (such as physically annihilating a system somehow). It merely represents the fact that one is no longer interested in this system. That is, this {\em ignoring process} is applied whenever an agent decides that they will consider no further propositions about a system or its causal descendents.

The ignoring process satisfies the constraint
\beq%
\InputIfFileExists{Diagrams/ignoreConstraint1.tikz}{}{\input{./figures/Diagrams/ignoreConstraint1.tikz}}\quad =\quad %
\InputIfFileExists{Diagrams/ignoreConstraint2.tikz}{}{\input{./figures/Diagrams/ignoreConstraint2.tikz}}\label{constraint5}\ \ ,\eeq
stating that ignoring a composite system is the same as ignoring each of its components.
Moreover, it has a nontrivial interaction with the generator of Eq.~\eqref{keygen1}, as we demand that
\beq
\InputIfFileExists{Diagrams/ignoreConstraint3.tikz}{}{\input{./figures/Diagrams/ignoreConstraint3.tikz}}\quad =\quad %
\InputIfFileExists{Diagrams/ignoreConstraint4.tikz}{}{\input{./figures/Diagrams/ignoreConstraint4.tikz}}.\label{constraint6}
\eeq
That is, if one is not going to ask any propositional questions about $\op{B}$, then one can ignore the identity of the transformation from $\op{A}$ to $\op{B}$, as well as $\op{A}$ itself.
We term this the constraint of {\em ignorability}.

We will assume that the ignoring process for the trivial system, $\op{I}$, is simply given by the empty diagram and hence we obtain two special cases of Eq.~\eqref{constraint6}:
\beq \label{ignoreempty}
\InputIfFileExists{Diagrams/ignoreConstraintNew1.tikz}{}{\input{./figures/Diagrams/ignoreConstraintNew1.tikz}}\ =\ %
\InputIfFileExists{Diagrams/ignoreConstraintNew2.tikz}{}{\input{./figures/Diagrams/ignoreConstraintNew2.tikz}}\ =\ %
\InputIfFileExists{Diagrams/ignoreConstraintNew3.tikz}{}{\input{./figures/Diagrams/ignoreConstraintNew3.tikz}}\ =\ %
\InputIfFileExists{Diagrams/ignoreConstraintNew4.tikz}{}{\input{./figures/Diagrams/ignoreConstraintNew4.tikz}}
\eeq
and
\beq
\InputIfFileExists{Diagrams/ignoreConstraintNew5.tikz}{}{\input{./figures/Diagrams/ignoreConstraintNew5.tikz}}\ =\ %
\InputIfFileExists{Diagrams/ignoreConstraintNew6.tikz}{}{\input{./figures/Diagrams/ignoreConstraintNew6.tikz}}\ =\ %
\InputIfFileExists{Diagrams/ignoreConstraintNew7.tikz}{}{\input{./figures/Diagrams/ignoreConstraintNew7.tikz}}\ = \ %
\begin{tikzpicture}
	\begin{pgfonlayer}{nodelayer}
		\node [style=none] (0) at (-0.75, -0) {};
		\node [style=none] (1) at (-1.5, -0) {};
		\node [style=infupground] (2) at (-0.5, -0) {};
		\node [style={up label}] (3) at (-1.25, -0) {$\morph{I}{B}$};
	\end{pgfonlayer}
	\begin{pgfonlayer}{edgelayer}
		\draw [CcWire] (1.center) to (0.center);
	\end{pgfonlayer}
\end{tikzpicture}
}\ .
\eeq

Due to its compositional nature, this framework is clearly able to express scenarios far more general than the well-studied prepare-measure scenario. Even within a simple prepare-measure scenario, our framework allows us to express generality that is typically neglected. In a prepare-measure scenario, the conventional states of knowledge one has and propositions one considers (i.e., the conventional inferential structure) are represented in our framework by
\beq%
\InputIfFileExists{Diagrams/NormalPM.tikz}{}{\input{./figures/Diagrams/NormalPM.tikz}},\label{notgenpmopdiag}\eeq
where $\sigma$ is a state of knowledge about the preparation of $\op{A}$, $\tau$ is a state of knowledge about the measurement on $\op{A}$, and $\pi$ is a proposition about the measurement outcome $\op{X}$. One can change the inferential structure of the scenario without changing the causal structure of the scenario, e.g., to
\beq%
\InputIfFileExists{Diagrams/GeneralPM.tikz}{}{\input{./figures/Diagrams/GeneralPM.tikz}}\label{genpmopdiag}\ , \eeq
where $\sigma$ is a joint state of knowledge about the preparation and measurement procedures, $\tau$ is a state of knowledge about some auxiliary inferential system, and $\pi$ is a joint proposition about which 
 preparation and measurement 
 procedures were performed, the outcome of the measurement, and the auxiliary inferential system.
This extra generality is useful as it allows one to model scenarios where the choice of preparation and the choice of measurement are correlated.  This occurs, for instance, in two-party cryptography, wherein one party prepares a system and the other measures it, and they correlate their actions based on private randomness which they share.  Another example arises for a pair of communicating parties when the preparations and measurements are done relative to local reference frames in the labs of the parties, and where these are correlated with one another, but uncorrelated with a background reference frame.

It will be helpful to introduce a notation which represents a generic diagram in $\PI$ while not displaying all the internal structure---that is, how the diagram is built up out of the generators---but rather only shows its open inputs and outputs. We draw such a generic diagram as
\beq\label{mostgenproc}
\InputIfFileExists{Diagrams/genericProcess.tikz}{}{\input{./figures/Diagrams/genericProcess.tikz}}.
\eeq
 In a process theory, diagrams without any inputs and outputs are termed {\em closed diagrams}.
 Analogously, diagrams
 whose only inputs and outputs are inferential will be termed {\em causally closed} diagrams and diagrams
   whose only inputs and outputs are causal will be termed {\em inferentially closed} diagrams.

 At this point, we can make a useful observation: any diagram in $\PI$ which can be written using generators that do not involve causal systems can be considered as the image of some process in $\Inf$ under $\mathbf{i}$,
 as in Eq.~\eqref{Eq:ImageOfi}.

We have so far introduced the inferential and causal components of $\PI$ and the maps from these into 
 $\PI$, namely,
\beq%
\InputIfFileExists{Diagrams/OperationalTheory.tikz}{}{\input{./figures/Diagrams/OperationalTheory.tikz}}.\eeq

We now introduce the prediction map  $\mathbf{p}$, which describes the inferences one can make in a given scenario.\footnote{The idea of separating out the descriptive and the probabilistic components of one's notion of an operational theory can be found in earlier works, notably Ref.~\cite{chiribella2010probabilistic}. }  
$\mathbf{p}$ is a partial DP map whose domain is the set of
causally closed processes in $\PI$, and whose co-domain is $\Inf$.

 An example of the sort of causally closed processes on which $\mathbf{p}$ is defined is:
\beq%
\InputIfFileExists{Diagrams/GeneralHorizontal.tikz}{}{\input{./figures/Diagrams/GeneralHorizontal.tikz}},\eeq
A diagram which has open causal inputs or outputs is {\em not} in the domain of the prediction map, because the open causal wires correspond to systems about which either no state of knowledge or no propositional question has been specified.
For example, the inferences one should make in a situation described by the diagram
\beq %
\InputIfFileExists{Diagrams/FreeInput.tikz}{}{\input{./figures/Diagrams/FreeInput.tikz}}\eeq
depend on what one knows about previous procedures on $\op{A}$ as well as any propositions one considers about $\op{X}$.

Consider first the simple case of closed diagrams
 in $\PI$.   These are mapped to closed diagrams (scalars)
  in $\Inf$---i.e., elements of $[0,1]$. In the following example, 
\beq%
\InputIfFileExists{Diagrams/predictionMapScalar.tikz}{}{\input{./figures/Diagrams/predictionMapScalar.tikz}}\quad =\ \prob(\pi :\sigma),\eeq
 the prediction map specifies the probability that one should assign to the proposition $\pi$ being true given that one's state of knowledge is $\sigma$.
Meanwhile,
\beq%
\InputIfFileExists{Diagrams/etaNotFullExampleNew.tikz}{}{\input{./figures/Diagrams/etaNotFullExampleNew.tikz}}\eeq
is a stochastic map in $\Inf$,
 which takes a state of knowledge about the preparation $\op{X}$ as input and returns a state of knowledge about $X$.

There is an obvious consistency constraint on processes in $\PI$ which are also processes in the sub-process-theory $\Inf$, that is, those that are in the image of $\mathbf{i}$. If one maps a process in $\Inf$ to $\PI$ by the inclusion map $\mathbf{i}$ and then back to $\Inf$ via the prediction map $\mathbf{p}$, one should clearly obtain the process itself back again, so that
\beq\label{eq:constraintOnPredictions}%
\InputIfFileExists{Diagrams/inferentialProcessInclusion.tikz}{}{\input{./figures/Diagrams/inferentialProcessInclusion.tikz}}\quad =\quad %
\InputIfFileExists{Diagrams/inferentialProcessCopy.tikz}{}{\input{./figures/Diagrams/inferentialProcessCopy.tikz}} \ . \eeq
Hence it is a partial left inverse of $\mathbf{i}$, that is, $\mathbf{p}\circ \mathbf{i} = \mathds{1}_{\Inf}$.

Although $\mathbf{p}$ is only a partial map, it is still diagram-preserving on its domain; e.g., one can write
\begin{align}
\InputIfFileExists{Diagrams/predictionMapScalar.tikz}{}{\input{./figures/Diagrams/predictionMapScalar.tikz}}\quad &=\quad %
\InputIfFileExists{Diagrams/predictionMapDP.tikz}{}{\input{./figures/Diagrams/predictionMapDP.tikz}}.
\end{align}

In summary, an operational CI theory is specified by a triple of process theories and a triple of DP maps between them, succinctly drawn as
\beq%
\InputIfFileExists{Diagrams/OpInferentialEquiv.tikz}{}{\input{./figures/Diagrams/OpInferentialEquiv.tikz}},\eeq
where we use a dashed line to denote the fact that $\mathbf{p}$ is partial.

\subsubsection{Properties of the prediction map}

Our constraint of ignorability, Eq.~\eqref{constraint6}, implies that the probabilities assigned to propositions about systems are independent of what is known about the future processes applied to the system. For example, the probability
\begin{align}
\InputIfFileExists{Diagrams/CausalPredictions1.tikz}{}{\input{./figures/Diagrams/CausalPredictions1.tikz}}\quad &=\quad %
\InputIfFileExists{Diagrams/CausalPredictions2.tikz}{}{\input{./figures/Diagrams/CausalPredictions2.tikz}}\\
&=\quad%
\InputIfFileExists{Diagrams/CausalPredictions3.tikz}{}{\input{./figures/Diagrams/CausalPredictions3.tikz}} \label{causpred3}
 \end{align}
 is seen to be independent of the state of knowledge $\tau$.
In Ref.~\cite{chiribella2010probabilistic}, this constraint is termed `causality', and taken to be of central importance. In our framework, however, it does not express any notion of causality. As we discuss further in Appendix~\ref{manifestorcausal2}, the causal structure in our framework is primitive, and {\em cannot} be defined in terms of any probabilistic facts such as those expressed by Eq.~\eqref{causpred3}.
In our framework, the condition of ignorability, Eq.~\eqref{constraint6}, does not play a particularly special role; it is simply a fact about the way one makes inferences. In addition to implying Eq.~\eqref{causpred3}, it implies many similar independence relations. For example, it implies that a state of knowledge $\tau$ about a causal process occurring on one subsystem of a composite 
whose output is ignored is irrelevant for making inferences about the other  subsystem: 
 \begin{align}
\InputIfFileExists{Diagrams/BellNoSignalling.tikz}{}{\input{./figures/Diagrams/BellNoSignalling.tikz}}\quad &= \quad %
\InputIfFileExists{Diagrams/BellNoSignalling1.tikz}{}{\input{./figures/Diagrams/BellNoSignalling1.tikz}}\\
 &= \quad %
\InputIfFileExists{Diagrams/BellNoSignalling2.tikz}{}{\input{./figures/Diagrams/BellNoSignalling2.tikz}} \ .
 \end{align}

Conveniently, $\mathbf{p}$ can be fully specified by a relatively simple set of data: the probabilities assigned to point-distributed states of knowledge and atomic propositions. This is exactly the form of data provided in traditional approaches to operational theories.

\begin{theorem}\label{Thm:PointAndAtomicSuffice}
 For every process $\mathcal{D}\in\PI$ in the domain of \colorbox{red!30}{$\mathbf{p}:\PI\to\Inf$}, i.e., which is causally closed, the image $\mathbf{p}(\mathcal{D})$ of $\mathcal{D}$ under $\mathbf{p}$ is fully specified by the probabilities assigned to atomic propositions on its inferential output and point distributions on its inferential input.
\end{theorem}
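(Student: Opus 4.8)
The plan is to reduce the statement to two elementary facts about $\Inf$ and then transport them across $\mathbf{p}$ using diagram preservation. The two facts are: (i) every process in $\Inf$ is, qua substochastic map, completely determined by its matrix elements; and (ii) those matrix elements are exactly the scalars obtained by pre-composing with point distributions and post-composing with atomic propositional effects. Since $\mathbf{p}(\mathcal{D})$ lives in $\Inf$, determining its matrix elements determines it outright, and diagram preservation lets each matrix element be computed as the prediction assigned to a closed $\PI$-diagram built from $\mathcal{D}$, an input point distribution, and an output atomic proposition.

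First I would fix notation. Since $\mathcal{D}$ is causally closed, all of its open wires are inferential; write $X$ for its (possibly composite) inferential input and $Y$ for its inferential output. Because $\mathbf{p}\circ\mathbf{i}=\mathds{1}_{\Inf}$ (Eq.~\eqref{eq:constraintOnPredictions}), the map $\mathbf{p}$ inverts $\mathbf{i}$ on inferential systems, so $\mathbf{p}(\mathcal{D})$ is a substochastic map of type $X\to Y$; set $s:=\mathbf{p}(\mathcal{D})$, with matrix elements $s_x^y$. A substochastic map is nothing but this matrix, so it suffices to show that every $s_x^y$ is recovered from the stipulated data. Recall from the construction of $\Inf$ that feeding the point distribution $[x]$ into $s$ yields the subnormalized distribution $y\mapsto s_x^y$ on $Y$, and that the atomic proposition $\pi_y$ corresponding to the singleton $\{y\}\subseteq Y$ is represented in $\Inf$ by the row vector that is $1$ at $y$ and $0$ elsewhere. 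Composing these extracts the single entry, i.e.\ $\pi_y\circ s\circ[x]=s_x^y$, where $g\circ f$ denotes "$f$ first, then $g$". Thus point distributions on $X$ and atomic propositions on $Y$ jointly recover the full matrix, hence $s$.

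The key step is then to realize these same scalars as predictions in $\PI$. Point distributions and atomic propositions are processes in $\Inf$, so via $\mathbf{i}$ they are also processes in $\PI$, and pre-/post-composing them with $\mathcal{D}$ leaves the diagram causally closed and therefore inside the domain of $\mathbf{p}$. Using that $\mathbf{p}$ is diagram-preserving on its domain together with $\mathbf{p}\circ\mathbf{i}=\mathds{1}_{\Inf}$, I compute
\[
\mathbf{p}\big(\mathbf{i}(\pi_y)\circ\mathcal{D}\circ\mathbf{i}([x])\big)
=\pi_y\circ\mathbf{p}(\mathcal{D})\circ[x]
=s_x^y .
\]
The left-hand side is, by definition of $\mathbf{p}$ on closed diagrams, precisely the probability the theory assigns to the atomic proposition $\pi_y$ on the output given the point distribution $[x]$ on the input. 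Letting $x$ range over $X$ and $y$ over $Y$ produces every matrix element of $\mathbf{p}(\mathcal{D})$, which by the first step pins down $\mathbf{p}(\mathcal{D})$ uniquely. This is exactly the claim that $\mathbf{p}(\mathcal{D})$ is fully specified by the probabilities assigned to atomic propositions on its output and point distributions on its input.

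I expect the only real friction to be bookkeeping rather than genuine mathematics. One must confirm that, for composite inferential systems, the relevant objects are the \emph{global} point distributions (delta functions on tuples of $X$) and \emph{global} atomic propositions (singletons of the product set $Y$), and that the representation of the singleton effect in $\Inf$ really is the entry-extracting functional established in the embedding $\textsc{Boole}\hookrightarrow\Inf$. Once those identifications are made explicit, the diagram-preservation of $\mathbf{p}$ together with its left-inverse property does all the work, and no nontrivial calculation remains.
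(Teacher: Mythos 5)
Your proposal is correct and follows essentially the same route as the paper's own proof: apply $\mathbf{p}$ to the causally closed diagram, observe that the resulting substochastic map is fully characterized by the scalars obtained by sandwiching it between point distributions $[x]$ and atomic propositions $\llbracket y\rrbracket$, and then use diagram preservation to identify each such scalar with the prediction assigned to the corresponding closed diagram in $\PI$. Your explicit invocation of $\mathbf{p}\circ\mathbf{i}=\mathds{1}_{\Inf}$ and the remark about global point distributions and atomic propositions on composite systems are just slightly more careful bookkeeping of the same argument.
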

\proof
Consider an arbitrary causally closed process $\mathcal{D}$
and imagine mapping it into $\Inf$ via
\beq\label{processforproof}
%
\InputIfFileExists{Diagrams/Rep1.tikz}{}{\input{./figures/Diagrams/Rep1.tikz}} \ .
\eeq
This is simply a substochastic  map 
 and hence is fully characterized by the set of scalars
\beq
\left\{%
\InputIfFileExists{Diagrams/Rep2New.tikz}{}{\input{./figures/Diagrams/Rep2New.tikz}} \right\}_{x\in X, y\in Y}\ ,
\eeq
 where $\llbracket y \rrbracket$ is an atomic proposition on $Y$. 
Such scalars can be rewritten as
\begin{align}\label{atomicproof}
\InputIfFileExists{Diagrams/Rep2New.tikz}{}{\input{./figures/Diagrams/Rep2New.tikz}} \quad &=  \quad  %
\InputIfFileExists{Diagrams/Rep3New.tikz}{}{\input{./figures/Diagrams/Rep3New.tikz}}\\
&=\quad  %
\InputIfFileExists{Diagrams/Rep4New.tikz}{}{\input{./figures/Diagrams/Rep4New.tikz}};
\end{align}
 that is, they are the probabilities assigned to atomic propositions $\llbracket y \rrbracket$ on $Y$ given point distributions $[x]$ on $X$, which is what we set out to prove.  
\endproof

\subsubsection{Quantum theory as an operational CI theory} \label{QTquaopCI}

The most straightforward way to cast quantum theory as an operational CI theory is as follows.
 The causal subtheory for quantum theory, which we denote $\Proc_{Q}$, contains laboratory procedures whose inputs and outputs are classical and quantum systems. The inferential subtheory is the classical one, $\Inf$. The full theory, $\PQI$, is constructed as
\beq  \label{PQIfull}
\begin{tikzpicture}
	\begin{pgfonlayer}{nodelayer}
		\node [style=none] (0) at (5, 0) {$\Inf$};
		\node [style=none] (1) at (0.25, 0) {$\PQI$};
		\node [style=none] (2) at (3.25, 0) {};
		\node [style=none] (3) at (1.25, 0) {};
		\node [style=none] (4) at (3.25, -0.25) {};
		\node [style=none] (5) at (1.25, -0.25) {};
		\node [style=none] (6) at (-0.75, 0) {};
		\node [style=none] (7) at (-2.75, 0) {};
		\node [style=none] (8) at (-3.75, 0) {$\Proc_{Q}$};
		\node [style=up label] (9) at (-1.75, 0) {$\mathbf{e}_Q$};
		\node [style=up label] (10) at (2.25, 0) {$\mathbf{i}_Q$};
		\node [style=up label] (11) at (2.25, -0.85) {$\mathbf{p}_Q$};
	\end{pgfonlayer}
	\begin{pgfonlayer}{edgelayer}
		\draw [style=arrow plain] (2.center) to (3.center);
		\draw [style=arrow dashed] (5.center) to (4.center);
		\draw [style=arrow plain] (7.center) to (6.center);
	\end{pgfonlayer}
\end{tikzpicture}
 \ .
\eeq
Here, the specific prediction map $\mathbf{p}_Q$ singles out quantum theory, and is defined as follows. To every quantum system is associated an algebra of operators on a complex Hilbert space of some dimension, and to every classical system is associated an algebra of {\em commuting} operators on such a Hilbert space; to every diagram is associated a completely-positive~\cite{Schmidcausal} trace-nonincreasing map between these; then, the joint probability distributions (on any set of propositions attached to the classical systems) can be computed by composition of these completely-positive trace-preserving maps.

\subsection{ \Crealist causal-inferential theories}\label{sec:Ont}

Next, we turn our attention to the second class of causal-inferential theories that we will consider, namely \crealist CI theories. These are very similar to the operational CI theories just introduced,
but the causal theory is not taken to be a process theory $\Proc$ of laboratory procedures, but rather  a process theory representing fundamental dynamics of ontic states of systems. In our case, we will take this to be the process theory $\Func$ of functional dynamics, introduced in Section~\ref{funcdyn}. The extra structure in $\Func$ relative to $\Proc$ accounts for all the differences within our framework between an operational CI theory and a \crealist CI theory, and implies that there is essentially a unique classical realist CI theory, insofar as there is a unique prediction map.

We will use the term `\crealist CI theory' to refer to the following causal-inferential theory of functional dynamics, namely
\beq%
\InputIfFileExists{Diagrams/OntInferentialEquiv.tikz}{}{\input{./figures/Diagrams/OntInferentialEquiv.tikz}}\label{OntInferentialEquiv}.\eeq
We have labeled the diagram-preserving maps here by $\mathbf{e'}$, $\mathbf{i'}$ and $\mathbf{p^*}$ to distinguish them from those in an operational CI theory.

The construction proceeds much like that in the previous section. $\Inf$ is a subprocess theory of $\FI$, explicitly represented by the inclusion of $\Inf$ into $\FI$ via a DP map \colorbox{darkgreen!30}{$\mathbf{i'}:\Inf\to\FI$}, diagrammatically represented as
\beq%
\InputIfFileExists{Diagrams/infInclusionToFunc.tikz}{}{\input{./figures/Diagrams/infInclusionToFunc.tikz}}.\eeq
$\Func$ is not a sub-process-theory of $\FI$, but rather embeds into $\FI$ via a map \colorbox{darkblue!30}{$\mathbf{e'}:\Func\to\FI$}, which we will define after introducing some relevant generators.

The first generator again allows one to specify a state of knowledge about the functional dynamics. There is one such generator for each pair of systems $(\Lambda,\Lambda')$, depicted as
\beq%
\InputIfFileExists{Diagrams/interactionEpistemicFunctional.tikz}{}{\input{./figures/Diagrams/interactionEpistemicFunctional.tikz}}.\eeq
Then, the diagram
\beq
\begin{tikzpicture}
	\begin{pgfonlayer}{nodelayer}
		\node [style=epiBox] (0) at (0, -0) {};
		\node [style=none] (1) at (0, 1.5) {};
		\node [style=none] (2) at (0, -1.5) {};
		\node [style={right label}] (3) at (0, -1.25) {$\Lambda$};
		\node [style={right label}] (4) at (0, 1.25) {$\Lambda'$};
		\node [style=infpoint] (5) at (-2.75, -0) {$\sigma$};
		\node [style={up label}] (6) at (-1.25, -0) {$\morph{\Lambda}{\Lambda'}$};
	\end{pgfonlayer}
	\begin{pgfonlayer}{edgelayer}
		\draw [style=oWire] (2.center) to (0);
		\draw [style=oWire] (0) to (1.center);
		\draw [style=CcWire] (5) to (0);
	\end{pgfonlayer}
\end{tikzpicture}
\eeq
represents the state of knowledge $\sigma$ about the function from $\Lambda$ to $\Lambda'$ describing the dynamics. Naturally, we demand that constraints analogous to those in Eq.~\eqref{constraint1FIXED} and Eq.~\eqref{constraint2} are satisfied, which then implies that we can construct a DP map \colorbox{darkblue!30}{$\mathbf{e'}:\Func\to\FI$} defined as
\beq \label{defneprime}
\InputIfFileExists{Diagrams/EmbeddintFuncToFI2.tikz}{}{\input{./figures/Diagrams/EmbeddintFuncToFI2.tikz}}\quad :=\quad %
\InputIfFileExists{Diagrams/EmbeddingFuncToFI.tikz}{}{\input{./figures/Diagrams/EmbeddingFuncToFI.tikz}}.
\eeq

The second generator allows us to directly gain knowledge from an ontological system, 
or equivalently, to ask a question about a system by attaching a proposition to it. Here, we see the first key distinction between ontological and operational CI theories---for operational CI theories, we could only define such a generator for classical systems;
however, 
because all systems in $\Func$ are sets $\Lambda$, this generator
\beq \label{propont}
\InputIfFileExists{Diagrams/FuncPropInteraction.tikz}{}{\input{./figures/Diagrams/FuncPropInteraction.tikz}}\eeq
 can be defined for any system in $\FI$.
Naturally, we demand that each such generator satisfies 
 the constraints stipulated in Eqs.~\eqref{constraint3} and ~\eqref{constraint4}.

Finally, we introduce a generator
\beq%
\begin{tikzpicture}
	\begin{pgfonlayer}{nodelayer}
		\node [style=none] (0) at (0, -0.5000001) {};
		\node [style={small black dot}] (1) at (0, 0.7500002) {};
		\node [style={right label}] (2) at (0, -0.2500001) {$\Lambda$};
		\node [style=ignore] (3) at (0, 0.7500001) {};
	\end{pgfonlayer}
	\begin{pgfonlayer}{edgelayer}
		\draw [oWire] (0.center) to (1);
	\end{pgfonlayer}
\end{tikzpicture}
} \eeq
which represents ignoring the system $\Lambda$ and which satisfies constraints analogous to Eq.~\eqref{constraint5} and Eq.~\eqref{constraint6}.

We now have the tools to describe a wide range of scenarios. For example, the scenario
\beq%
\InputIfFileExists{Diagrams/NormalPMOntolMod.tikz}{}{\input{./figures/Diagrams/NormalPMOntolMod.tikz}}\eeq
might arise as a \crealist model of the operational scenario in Diagram~\eqref{notgenpmopdiag}.
This is analogous to a prepare-measure scenario.
Even in this simple causal structure, however, we can also describe more general inferential structures; for example, an analogue of Diagram~\eqref{genpmopdiag}, namely
\beq%
\InputIfFileExists{Diagrams/GeneralPMOnt.tikz}{}{\input{./figures/Diagrams/GeneralPMOnt.tikz}}.\eeq
In fact, this is even more general than Diagram~\eqref{genpmopdiag}, since in $\FI$ (unlike in $\PI$), one can consider propositions about {\em arbitrary} systems.

Perhaps the central distinction between $\PI$ and $\FI$ is that in $\FI$, there is a constraint on the interactions between the first two generators we introduced. This constraint is a consequence of the fact that one can attach propositions to any system in $\FI$, together with our assumption that the causal mechanisms are described by functions.   This means that we can, in certain situations, propagate what we know about one physical system to knowledge about another.  It is due to this single extra constraint that \crealist CI theories have so much more structure than operational CI theories. 

This interaction between the two generators is governed by the equality: 
\beq\label{Axiom:PropKnowGenerators}
\InputIfFileExists{Diagrams/OntConstraint1.tikz}{}{\input{./figures/Diagrams/OntConstraint1.tikz}} \quad = \quad  %
\InputIfFileExists{Diagrams/OntConstraint2.tikz}{}{\input{./figures/Diagrams/OntConstraint2.tikz}},
\eeq
where the black diamond converts a state of knowledge about $\Lambda$ and a state of knowledge about $\morph{\Lambda}{\Lambda'}$ into a state of knowledge about $\Lambda'$, and is defined by linearity and its action on delta-function states of knowledge, namely,
\beq
\begin{tikzpicture}
	\begin{pgfonlayer}{nodelayer}
		\node [style=infpoint] (0) at (-2.5, 1.75) {$[f]$};
		\node [style=none] (1) at (0, 0) {};
		\node [style=none] (2) at (0, 0) {};
		\node [style=infpoint] (3) at (-2.5, -0) {$[\lambda]$};
		\node [style=funcApp] (4) at (0, 0) {};
		\node [style=none] (5) at (1.5, 0) {};
		\node [style={up label}] (6) at (1.5, 0) {$\Lambda'$};
		\node [style={up label}] (7) at (-1.5, -0) {$\Lambda$};
		\node [style={up label}] (8) at (-1.25, 1.75) {$\morph{\Lambda}{\Lambda'}$};
	\end{pgfonlayer}
	\begin{pgfonlayer}{edgelayer}
		\draw [CcWire, in=0, out=120, looseness=1.25] (1.center) to (0);
		\draw [CcWire] (2.center) to (3);
		\draw [CcWire] (5.center) to (1.center);
	\end{pgfonlayer}
\end{tikzpicture}
 \quad = \quad
\begin{tikzpicture}
	\begin{pgfonlayer}{nodelayer}
		\node [style=none] (1) at (-0.5, 0) {};
		\node [style=none] (2) at (-0.5, 0) {};
		\node [style=infpoint] (4) at (-0.5, 0) {$[f(\lambda)]$};
		\node [style=none] (5) at (2, 0) {};
		\node [style=up label] (6) at (1.75, 0) {$\Lambda'$};
	\end{pgfonlayer}
	\begin{pgfonlayer}{edgelayer}
		\draw [CcWire] (5.center) to (1.center);
	\end{pgfonlayer}
\end{tikzpicture},
\eeq
or equivalently,
\beq
\begin{tikzpicture} 
	\begin{pgfonlayer}{nodelayer}
		\node [style=infpoint] (0) at (-2.5, 1.25) {$[f]$};
		\node [style=none] (1) at (0, 0) {};
		\node [style=none] (2) at (0, 0) {};
		\node [style=none] (3) at (-3, -0) {};
		\node [style=funcApp] (4) at (0, 0) {};
		\node [style=none] (5) at (1.5, 0) {};
		\node [style={up label}] (6) at (1.5, 0) {$\Lambda'$};
		\node [style={up label}] (7) at (-3, -0) {$\Lambda$};
		\node [style={up label}] (8) at (-1.25, 1.25) {$\morph{\Lambda}{\Lambda'}$};
	\end{pgfonlayer}
	\begin{pgfonlayer}{edgelayer}
		\draw [CcWire, in=0, out=120, looseness=1.25] (1.center) to (0);
		\draw [CcWire] (2.center) to (3);
		\draw [CcWire] (5.center) to (1.center);
	\end{pgfonlayer}
\end{tikzpicture} \quad =\quad
\begin{tikzpicture}
	\begin{pgfonlayer}{nodelayer}
		\node [style=none] (0) at (0, 0) {};
		\node [style=none] (1) at (0, 0) {};
		\node [style=none] (2) at (-1.5, -0) {};
		\node [style=small box] (3) at (0, 0) {$f$};
		\node [style=none] (4) at (1.5, 0) {};
		\node [style={up label}] (5) at (1.5, 0) {$\Lambda'$};
		\node [style={up label}] (6) at (-1.5, -0) {$\Lambda$};
	\end{pgfonlayer}
	\begin{pgfonlayer}{edgelayer}
		\draw [CcWire] (1.center) to (2.center);
		\draw [CcWire] (4.center) to (0.center);
	\end{pgfonlayer}
\end{tikzpicture}.
\eeq

Eq.~\eqref{Axiom:PropKnowGenerators} ensures that one can 
specify what one knows about the output $\Lambda'$ of some dynamical process in one of two equivalent ways: either by knowing about it directly, or by taking what is jointly known about the dynamics and the state that is input to the dynamics, and then propagating one's beliefs accordingly (i.e., according to the stochastic map $\BlackDiamond$). To give a simple example, suppose we have a delta function state of knowledge that the ingoing system $\Lambda$ is prepared in state $\lambda$, and that the functional dynamics are given by $f$, then this rewrite rule allows us to reason as follows: 
\begin{align}
%
\InputIfFileExists{Diagrams/ontInt1.tikz}{}{\input{./figures/Diagrams/ontInt1.tikz}} 
&=%
\InputIfFileExists{Diagrams/ontInt2.tikz}{}{\input{./figures/Diagrams/ontInt2.tikz}} \label{ontInt2}\\
&=%
\InputIfFileExists{Diagrams/ontInt3.tikz}{}{\input{./figures/Diagrams/ontInt3.tikz}} \label{ontInt3} \\
&=%
\InputIfFileExists{Diagrams/ontInt4.tikz}{}{\input{./figures/Diagrams/ontInt4.tikz}} \label{ontInt4}\\
&=%
\InputIfFileExists{Diagrams/ontInt5.tikz}{}{\input{./figures/Diagrams/ontInt5.tikz}} \label{ontInt5}\\
&=%
\InputIfFileExists{Diagrams/ontInt6.tikz}{}{\input{./figures/Diagrams/ontInt6.tikz}} \label{ontInt6}\\
&=%
\InputIfFileExists{Diagrams/ontInt7.tikz}{}{\input{./figures/Diagrams/ontInt7.tikz}} \\
&=%
\InputIfFileExists{Diagrams/ontInt8.tikz}{}{\input{./figures/Diagrams/ontInt8.tikz}} \label{ontInt8}\\
&=%
\InputIfFileExists{Diagrams/ontInt9.tikz}{}{\input{./figures/Diagrams/ontInt9.tikz}}.\label{ontInt9} 
\end{align}
From Eq.~\eqref{ontInt4} to Eq.~\eqref{ontInt5}, the black diamond turns a delta function state of knowledge about functional dynamics from $\Lambda$ to $\Lambda'$ into a (functional) propagation of one's state of knowledge about $\Lambda$ to one's state of knowledge about $\Lambda'$. 
The rewrites from Eq.~\eqref{ontInt5}-Eq.~\eqref{ontInt9} are analogous to Eq.~\eqref{ontInt2}-Eq.~\eqref{ontInt5}, but slightly more subtle insofar as they involve the special case where an input is trivial.

For this special case with a trivial input system, Eq.~\eqref{Axiom:PropKnowGenerators} becomes
\beq
\InputIfFileExists{Diagrams/PropsDistinguishStates1.tikz}{}{\input{./figures/Diagrams/PropsDistinguishStates1.tikz}} \quad = \quad %
\InputIfFileExists{Diagrams/PropsDistinguishStates2.tikz}{}{\input{./figures/Diagrams/PropsDistinguishStates2.tikz}} \ .
\eeq
where, in this special case, $\BlackDiamond$ is simply the isormorphism between $\morph{\star}{\Lambda}$ and $\Lambda$.
For convenience, we will henceforth denote this isomorphism by
\beq
\InputIfFileExists{Diagrams/addStar.tikz}{}{\input{./figures/Diagrams/addStar.tikz}}\quad \text{and}\quad%
\InputIfFileExists{Diagrams/removeStar.tikz}{}{\input{./figures/Diagrams/removeStar.tikz}},
\eeq
so that
\beq
\InputIfFileExists{Diagrams/removeStar2.tikz}{}{\input{./figures/Diagrams/removeStar2.tikz}}\quad=\quad%
\InputIfFileExists{Diagrams/removeStar.tikz}{}{\input{./figures/Diagrams/removeStar.tikz}}
\eeq
and
\beq \label{blackdotstar}
\InputIfFileExists{Diagrams/PropsDistinguishStates1.tikz}{}{\input{./figures/Diagrams/PropsDistinguishStates1.tikz}} \quad = \quad \begin{tikzpicture}
	\begin{pgfonlayer}{nodelayer}
		\node [style=epiPoint] (0) at (0, 0) {};
		\node [style=none] (1) at (0, 1) {};
		\node [style=copy] (2) at (-1, -0.5000001) {};
		\node [style=none] (3) at (0, -1) {};
		\node [style={right label}] (4) at (0, 0.75) {$\Lambda$};
		\node [style=none] (5) at (0, -1) {};
		\node [style=removeStar] (6) at (0, -1) {};
		\node [style=none] (7) at (1.25, -1) {};
		\node [style={up label}] (8) at (1.25, -1) {$\Lambda$};
		\node [style=none] (9) at (-2.5, -0.5000001) {};
		\node [style={up label}] (10) at (-2.5, -0.5000001) {$\morph{\star}{\Lambda}$};
	\end{pgfonlayer}
	\begin{pgfonlayer}{edgelayer}
		\draw [oWire] (0) to (1.center);
		\draw [CcWire, in=-60, out=180, looseness=1.25] (3.center) to (2);
		\draw [CcWire] (7.center) to (3.center);
		\draw [cWire] (9.center) to (2);
		\draw [cWire, in=180, out=60, looseness=1.25] (2) to (0);
	\end{pgfonlayer}
\end{tikzpicture}.
\eeq

Predictions are made in a \crealist CI theory in a manner analogous to how predictions are made in an operational CI theory. They are represented by a partial diagram-preserving map, \colorbox{darkred!30}{$\mathbf{p^*}:\FI\to\Inf$}, whose domain is given by the set of causally closed processes in $\FI$. 
As before, the prediction map is a partial left inverse of $\mathbf{i'}$, so that $\mathbf{p^*}\circ \mathbf{i'} = \mathds{1}_{\Inf}$.
For example, 
closed diagrams in $\FI$ are mapped to closed diagrams (scalars) in $\Inf$---elements of $[0,1]$; e.g.,
\beq%
\InputIfFileExists{Diagrams/predictionMapScalarOnt.tikz}{}{\input{./figures/Diagrams/predictionMapScalarOnt.tikz}}\quad = \mathsf{Prob}(\pi:\sigma) \ .
\eeq
Meanwhile,
\beq%
\InputIfFileExists{Diagrams/etaNotFullExampleOnt.tikz}{}{\input{./figures/Diagrams/etaNotFullExampleOnt.tikz}}\eeq
is a stochastic map in $\Inf$,
as before.
Analogous to Theorem~\ref{Thm:PointAndAtomicSuffice}, one has that for every process $\mathcal{D}\in\FI$ in the domain of the prediction map $\mathbf{p^*}$, its image under $\mathbf{p^*}$ is fully specified by the probabilities assigned to atomic propositions on its output given point distributions on its input.

There is a key difference between the prediction map in a \crealist and in an operational CI theory: for \crealist CI theories, this map is unique.
To show this, we first prove a normal form for general diagrams in $\FI$.

\begin{theorem} \label{thmnormalform}  Any diagram in the \crealist CI theory $\FI$ can be rewritten (using rewrite rules in $\FI$) into the  form
\beq%
\InputIfFileExists{Diagrams/NF1.tikz}{}{\input{./figures/Diagrams/NF1.tikz}} \ ,\eeq
where $S$ is a substochastic map in $\Inf$.
\end{theorem}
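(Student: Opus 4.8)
The plan is to reduce an arbitrary diagram of $\FI$ to the stated normal form by a sequence of rewrite moves that progressively absorb all causal (functional) structure into a single inferential map. First I would invoke two facts established earlier. Since $\Func \cong \textsc{FinSet}$, any subdiagram built purely from embedded functions equals a single function from its inputs to its outputs, so all such parts may be pre-composed. And by the definition of the embedding $\mathbf{e'}$ in Eq.~\eqref{defneprime}, each embedded function is itself nothing but the first generator fed with a delta-function state of knowledge; hence, after this preprocessing, every causal process in the diagram is a first generator (a ``specify-knowledge'' box) carrying some inferential state, together with the second and third generators.

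The core of the argument is the propagation axiom Eq.~\eqref{Axiom:PropKnowGenerators}. I would apply it repeatedly to ``evaluate'' the functional dynamics: wherever a state of knowledge about a function $\morph{\Lambda}{\Lambda'}$ meets a state of knowledge about its input $\Lambda$, the black diamond $\BlackDiamond$ replaces the two generators and the intervening causal wire by a single stochastic propagation inside $\Inf$. Applying this from the inferential boundary inward, and using the sequential and parallel composition constraints (the analogues of Eqs.~\eqref{constraint1FIXED} and \eqref{constraint2}) to merge the knowledge about successive and side-by-side functions, I can push every first generator toward the boundary, so that the remaining causal wires carry only propagated beliefs rather than undetermined dynamics. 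The trivial-input instance Eq.~\eqref{blackdotstar} handles the base case in which a function has no causal input.

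Next I would dispose of the second and third generators. Using the proposition-composition constraints Eqs.~\eqref{constraint3} and \eqref{constraint4} together with the copy map, all propositions attached along the diagram merge into propositions on the boundary systems; using the ignorability constraints (the analogues of Eqs.~\eqref{constraint5} and \eqref{constraint6}), every ignored causal system is removed along with any dynamics feeding only into it. What then remains on the inferential side is a composite of stochastic maps, black diamonds, copy dots, and propositions, each of which is a morphism of $\Inf$ (Remark~\ref{remsubstoch}); by closure of $\Inf$ under composition these collapse to a single substochastic map $S$, with the causal wires appearing only through the generators in the fixed arrangement of NF1. This is exactly the claimed form, and it makes the uniqueness of $\mathbf{p^*}$ transparent.

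The main obstacle I expect is not any single rewrite but the bookkeeping of arbitrary connectivity: a general diagram interleaves causal and inferential wires with functions composed in both sequence and parallel and with propositions and ignorings inserted at intermediate systems. I would therefore organize the reduction as an induction on a well-founded measure---for instance the number of causal wire segments not yet absorbed, or equivalently the number of first and second generators---and show that at each step one of the rewrite rules strictly decreases the measure while leaving the denoted process unchanged. The delicate point is verifying that the propagation axiom can \emph{always} be brought to bear: one must show that a ``specify-knowledge'' generator can always be made adjacent to the input of the dynamics it governs, possibly after re-associating the diagram via the composition constraints, and that doing so never creates a new causal wire. This, together with correctly treating the trivial-input special case, is where the argument needs the most care.
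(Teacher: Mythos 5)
Your proposal assembles the right ingredients---the three generators as the only non-inferential primitives, the propagation axiom Eq.~\eqref{Axiom:PropKnowGenerators} as the engine that converts causal wiring into stochastic propagation, the composition and ignorability constraints to merge and discard, and closure of $\Inf$ under composition to collapse everything into one substochastic map---and these are exactly the ingredients the paper uses. Where you differ is the induction scheme. You propose a termination argument on a well-founded measure over an arbitrary diagram, and you correctly flag the resulting obstruction: in an arbitrary configuration one must argue that a ``specify-knowledge'' generator can always be brought adjacent to the input of the dynamics it governs before Eq.~\eqref{Axiom:PropKnowGenerators} can fire. You leave that step unresolved, and as stated it is the weak point of your argument. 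The paper's proof (Appendix~\ref{App:NF}) dissolves this difficulty by choosing a different induction: it is a structural induction on diagram formation. Step (i) checks by inspection that each generator is already in normal form; Step (ii) shows that the composite of any two normal-form diagrams can be rewritten back into normal form. In Step (ii) the adjacency you worry about is automatic---whenever a causal wire connects the two subdiagrams, the first diagram's output necessarily presents a ``prepare'' generator directly against the second diagram's ``learn'' generator, so the propagation axiom applies without any re-association, and the remaining inferential clutter is absorbed into a single $S$ using Eqs.~\eqref{trueprop} and \eqref{identityembedding} and wire manipulations. So your route is salvageable but incomplete as written; adopting the compositional induction (or proving a separate normalization lemma that justifies the adjacency claim) is what closes the gap.
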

\proof
See Appendix \ref{App:NF}.
\endproof

We conjecture that this normal form is unique, or equivalently, that the substochastic map $S$ is unique. (To prove this, it would suffice to prove that the normal form description of each generator is unique, since the composition of two diagrams in normal forms has a unique normal form description.)

Note that there is {\em not} an equivalent normal form for diagrams in operational CI theories, as Theorem~\ref{thmnormalform}  strongly relies on the constraint of Eq.~\eqref{Axiom:PropKnowGenerators}. This normal form then allows us to prove that the interactions between $\Inf$ and $\Func$ single out a unique prediction map for the full theory $\FI$.

\begin{theorem} \label{uniquerule}
The prediction map $\mathbf{p^*}$ is unique.
\end{theorem}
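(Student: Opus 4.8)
The plan is to leverage the normal-form result of Theorem~\ref{thmnormalform} together with the single structural constraint already imposed on any prediction map, namely that it be a partial diagram-preserving left inverse of the inclusion, $\mathbf{p^*}\circ\mathbf{i'}=\mathds{1}_{\Inf}$. The idea is that these two facts together leave no freedom in the definition of $\mathbf{p^*}$: every process in its domain can be rewritten, using equalities that hold in $\FI$, into a form on which the value of $\mathbf{p^*}$ is forced by the left-inverse property alone. Since this forced value makes no reference to the particular map, any two admissible maps must coincide.

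First I would take an arbitrary process $\mathcal{D}$ in the domain of $\mathbf{p^*}$, i.e.\ an arbitrary causally closed diagram of $\FI$, and apply Theorem~\ref{thmnormalform} to rewrite it into the normal form built around a substochastic map $S\in\Inf$. The crucial observation is what causal-closedness does to this normal form: since $\mathcal{D}$ has no open causal inputs or outputs, the residual causal wires of the normal form must be absent, so the normal form collapses to a purely inferential diagram, namely the image $\mathbf{i'}(S)$ of some substochastic map $S$ under the inclusion. In other words, causal-closedness together with the normal form yields an equality of processes $\mathcal{D}=\mathbf{i'}(S)$ holding in $\FI$. Here the elimination of all causal structure is carried out entirely by the rewrite rules of $\FI$ --- above all by the interaction rule of Eq.~\eqref{Axiom:PropKnowGenerators}, which is precisely what lets knowledge about an ontic input and about the dynamics be propagated into knowledge about the output, and thereby re-expressed inferentially.

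With this in hand the conclusion is immediate. Since $\mathbf{p^*}$ is a (partial) diagram-preserving map, it is in particular a well-defined function on processes and hence respects every equality that holds in $\FI$; applying it to $\mathcal{D}=\mathbf{i'}(S)$ gives $\mathbf{p^*}(\mathcal{D})=\mathbf{p^*}(\mathbf{i'}(S))=S$, where the final step is the defining left-inverse constraint $\mathbf{p^*}\circ\mathbf{i'}=\mathds{1}_{\Inf}$. The right-hand side $S$ is fixed by $\mathcal{D}$ alone, independently of which prediction map we started with. Consequently any two maps satisfying the defining constraints of a prediction map agree on every causally closed process, which is exactly the assertion that $\mathbf{p^*}$ is unique.

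The main obstacle is entirely contained in the middle step: verifying that the normal form of a causally closed diagram really is of the form $\mathbf{i'}(S)$ with no surviving causal wires. Existence of the normal form is granted by Theorem~\ref{thmnormalform}, so what remains is to confirm that its only open wires are inferential ones tracking the open wires of the input diagram, and hence that causal-closedness forces all causal legs to vanish rather than the rewriting reintroducing causal inputs or outputs. Granting this reading of the normal form, the uniqueness argument requires nothing further; note also that full diagram-preservation is not even needed here beyond well-definedness, so the argument is robust.
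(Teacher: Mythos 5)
Your proposal is correct and follows essentially the same route as the paper's own proof: rewrite a causally closed $\mathcal{D}$ into normal form as $\mathbf{i'}(S)$ via Theorem~\ref{thmnormalform}, then apply the left-inverse constraint $\mathbf{p^*}\circ\mathbf{i'}=\mathds{1}_{\Inf}$ to force $\mathbf{p^*}(\mathcal{D})=S$. The paper additionally makes explicit that $S$ is determined by $\mathcal{D}$ because $\mathbf{i'}$ is an injective inclusion, a point you cover implicitly when you note that the forced value is independent of the choice of prediction map.
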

\proof
Consider an arbitrary process in the domain of $\mathbf{p^*}$---that is, an arbitrary causally closed process $\mathcal{D}$.
Writing it in normal form, we have
\beq
\begin{tikzpicture}
	\begin{pgfonlayer}{nodelayer}
		\node [style=none] (0) at (-0.5, 0.5) {};
		\node [style=none] (1) at (-0.5, -0.5) {};
		\node [style=none] (2) at (0.5, -0.5) {};
		\node [style=none] (3) at (0.5, 0.5) {};
		\node [style=none] (4) at (0, 0) {$\mathcal{D}$};
		\node [style=none] (9) at (1.5, 0) {};
		\node [style=none] (10) at (0.5, 0) {};
		\node [style=none] (11) at (-0.5, 0) {};
		\node [style=none] (12) at (-1.5, 0) {};
	\end{pgfonlayer}
	\begin{pgfonlayer}{edgelayer}
			\filldraw [fill=white,draw] (0.center) to (1.center) to (2.center) to (3.center) to cycle;
		\draw [CcWire] (9.center) to (10.center);
		\draw [CcWire] (11.center) to (12.center);
	\end{pgfonlayer}
\end{tikzpicture}
= \quad
\begin{tikzpicture}
	\begin{pgfonlayer}{nodelayer}
		\node [style=none] (0) at (-0.5, 0.5) {};
		\node [style=none] (1) at (-0.5, -0.5) {};
		\node [style=none] (2) at (0.5, -0.5) {};
		\node [style=none] (3) at (0.5, 0.5) {};
		\node [style=none] (4) at (0, 0) {$S$};
		\node [style=none] (9) at (1.75, 0) {};
		\node [style=none] (10) at (0.5, 0) {};
		\node [style=none] (11) at (-0.5, 0) {};
		\node [style=none] (12) at (-1.75, 0) {};
		\node [style=none] (13) at (1, -1) {\footnotesize $\mathbf{i'}$};
		\node [style=none] (14) at (-1, 1) {};
		\node [style=none] (15) at (1.25, 1) {};
		\node [style=none] (16) at (1.25, -1.25) {};
		\node [style=none] (17) at (-1, -1.25) {};
	\end{pgfonlayer}
	\begin{pgfonlayer}{edgelayer}
			\filldraw [fill=darkgreen!30,draw=darkgreen!60] (14.center) to (15.center) to (16.center) to (17.center) to cycle;
		\filldraw [fill=white,draw] (0.center) to (1.center) to (2.center) to (3.center) to cycle;
		\draw [CcWire] (9.center) to (10.center);
		\draw [CcWire] (11.center) to (12.center);
	\end{pgfonlayer}
\end{tikzpicture} \ ,
\eeq
for some substochastic map $S$. Furthermore, $S$ is unique since $\mathbf{i'}$ is an inclusion map and hence injective.
Applying the prediction map, then, one has
\beq
\begin{tikzpicture}
	\begin{pgfonlayer}{nodelayer}
		\node [style=none] (0) at (-0.5, 0.5) {};
		\node [style=none] (1) at (-0.5, -0.5) {};
		\node [style=none] (2) at (0.5, -0.5) {};
		\node [style=none] (3) at (0.5, 0.5) {};
		\node [style=none] (4) at (0, 0) {$\mathcal{D}$};
		\node [style=none] (9) at (1.75, 0) {};
		\node [style=none] (10) at (0.5, 0) {};
		\node [style=none] (11) at (-0.5, 0) {};
		\node [style=none] (12) at (-1.75, 0) {};
		\node [style=none] (13) at (1, -1) {\footnotesize $\mathbf{p^*}$};
		\node [style=none] (14) at (-1, 1) {};
		\node [style=none] (15) at (1.25, 1) {};
		\node [style=none] (16) at (1.25, -1.25) {};
		\node [style=none] (17) at (-1, -1.25) {};
	\end{pgfonlayer}
	\begin{pgfonlayer}{edgelayer}
			\filldraw [fill=darkred!30,draw=darkred!60] (14.center) to (15.center) to (16.center) to (17.center) to cycle;
		\filldraw [fill=white,draw] (0.center) to (1.center) to (2.center) to (3.center) to cycle;
		\draw [CcWire] (9.center) to (10.center);
		\draw [CcWire] (11.center) to (12.center);
	\end{pgfonlayer}
\end{tikzpicture}
\quad = \quad
\begin{tikzpicture}
	\begin{pgfonlayer}{nodelayer}
		\node [style=none] (0) at (-0.5, 0.5) {};
		\node [style=none] (1) at (-0.5, -0.5) {};
		\node [style=none] (2) at (0.5, -0.5) {};
		\node [style=none] (3) at (0.5, 0.5) {};
		\node [style=none] (4) at (0, 0) {$S$};
		\node [style=none] (9) at (2.5, 0) {};
		\node [style=none] (10) at (0.5, 0) {};
		\node [style=none] (11) at (-0.5, 0) {};
		\node [style=none] (12) at (-2, 0) {};
		\node [style=none] (13) at (1.5, -1.5) {\footnotesize $\mathbf{p^*}$};
		\node [style=none] (14) at (-1.25, 1.25) {};
		\node [style=none] (15) at (1.75, 1.25) {};
		\node [style=none] (16) at (1.75, -1.75) {};
		\node [style=none] (17) at (-1.25, -1.75) {};
		\node [style=none] (18) at (-1, 1) {};
		\node [style=none] (19) at (1.25, 1) {};
		\node [style=none] (20) at (1.25, -1.25) {};
		\node [style=none] (21) at (-1, -1.25) {};
		\node [style=none] (22) at (1, -1) {\footnotesize $\mathbf{i'}$};
	\end{pgfonlayer}
	\begin{pgfonlayer}{edgelayer}
				\filldraw [fill=darkred!30,draw=darkred!60] (14.center) to (15.center) to (16.center) to (17.center) to cycle;
								\filldraw [fill=darkgreen!30,draw=darkgreen!60] (18.center) to (19.center) to (20.center) to (21.center) to cycle;
		\filldraw [fill=white,draw] (0.center) to (1.center) to (2.center) to (3.center) to cycle;
		\draw [CcWire] (9.center) to (10.center);
		\draw [CcWire] (11.center) to (12.center);
	\end{pgfonlayer}
\end{tikzpicture}
= \quad
\begin{tikzpicture}
	\begin{pgfonlayer}{nodelayer}
		\node [style=none] (0) at (-0.5, 0.5) {};
		\node [style=none] (1) at (-0.5, -0.5) {};
		\node [style=none] (2) at (0.5, -0.5) {};
		\node [style=none] (3) at (0.5, 0.5) {};
		\node [style=none] (4) at (0, 0) {$S$};
		\node [style=none] (9) at (1.75, 0) {};
		\node [style=none] (10) at (0.5, 0) {};
		\node [style=none] (11) at (-0.5, 0) {};
		\node [style=none] (12) at (-1.75, 0) {};
		\node [style=none] (14) at (-1, 1) {};
		\node [style=none] (15) at (1.25, 1) {};
		\node [style=none] (16) at (1.25, -1.25) {};
		\node [style=none] (17) at (-1, -1.25) {};
	\end{pgfonlayer}
	\begin{pgfonlayer}{edgelayer}
		\filldraw [fill=white,draw] (0.center) to (1.center) to (2.center) to (3.center) to cycle;
		\draw [CcWire] (9.center) to (10.center);
		\draw [CcWire] (11.center) to (12.center);
	\end{pgfonlayer}
\end{tikzpicture} \ ,
\eeq
where the last line follows from the fact that $\mathbf{p^*}\circ \mathbf{i'} = \mathds{1}_{\Inf}$.
Hence, the prediction map applied to any process in its domain is associated to a unique real matrix, and so $\mathbf{p^*}$ is unique.
\endproof

The full picture of a \crealist CI theory is therefore given by a triple of process theories and a triple of DP maps between them:
\beq%
\InputIfFileExists{Diagrams/OntInferentialEquiv.tikz}{}{\input{./figures/Diagrams/OntInferentialEquiv.tikz}},\eeq
where we use a dashed line to denote that $\mathbf{p^*}$ is partial.

We close this section by noting that it remains to determine the scope of classical realist CI theories.  For instance, it is unclear whether Bohmian mechanics can {\em formally} be cast as such a theory.  (Note that this is not specific to our framework; it is also unclear whether it can be formalized within the standard framework of ontological models.)
In any case, we note that the central aim of our framework is not to capture the diversity of interpretational views, but rather to make progress on the questions posed in the introduction.

\section{Inferential equivalence} \label{sec:InfEquiv}

We now define
a notion of inferential equivalence between processes in a causal-inferential theory. This definition can clearly be made in any causal-inferential theory, but we will focus here only on operational CI theories and then on \crealist CI theories.
This will let us define quotiented operational CI theories and quotiented \crealist CI theories. We will discuss how the former relates to the notion of a generalized probabilistic theory, while the latter subsumes the traditional notion of an ontological model.

\subsection{Inferential equivalence in operational CI theories } \label{opquotinf}

Two elements of $\PI$ are inferentially equivalent if and only if they lead to exactly the same predictions, no matter which causally closed diagram they are embedded in. To make such statements diagrammatically, it is useful to introduce the notion of a {\em tester} for a given process---that is, a special case of a clamp (introduced in Section~\ref{sec:PTs})
whose composition with a given process yields a causally closed diagram.
As a simple example, we say that two states of knowledge $\sigma_{\morph{A}{B}}$ and $\sigma'_{\morph{A}{B}}$ about a transformation procedure from $\op{A}$ to $\op{B}$ are inferentially equivalent with respect to the prediction map $\mathbf{p}$, denoted
\beq
\label{infequiv}
\InputIfFileExists{Diagrams/InfEquiv1.tikz}{}{\input{./figures/Diagrams/InfEquiv1.tikz}} \quad \sim_\mathbf{p} \quad %
\InputIfFileExists{Diagrams/InfEquiv2.tikz}{}{\input{./figures/Diagrams/InfEquiv2.tikz}},
\eeq
if and only if they make the same predictions for all testers, $\mathcal{T}$, so that
\beq
\label{infequivdef}
\InputIfFileExists{Diagrams/InfEquiv3.tikz}{}{\input{./figures/Diagrams/InfEquiv3.tikz}}\ =\ %
\InputIfFileExists{Diagrams/InfEquiv4.tikz}{}{\input{./figures/Diagrams/InfEquiv4.tikz}}\quad \forall \mathcal{T}\in \PI.
\eeq

As an explicit example from within quantum theory, consider four lists of laboratory instructions, denoted $P_1$ to $P_4$, that are designed to prepare the quantum states $\ket{0}$, $\ket{1}$, $\ket{+}$, and $\ket{-}$, respectively. Then, the states of knowledge
\begin{equation} \label{twoinfstatesofknow}
\frac{1}{2}[P_1]+\frac{1}{2}[P_2]\ \ \ \text{   and   }\ \ \ \frac{1}{2}[P_3]+\frac{1}{2}[P_4],
\end{equation}
 although clearly distinct, are nonetheless inferentially equivalent, as they 
correspond to the same quantum state (namely the maximally mixed state).

More generally, the notion of inferential equivalence for any type of process in $\PI$ is defined as follows:
\begin{definition}[Inferential equivalence for operational CI theories]\label{Def:InfEquiv}
Two processes in  $\PI$, $\mathcal{D}$ and $\mathcal{E}$, are inferentially equivalent with respect to the prediction map $\mathbf{p}$, denoted $\mathcal{D}\sim_{\mathbf{p}}\mathcal{E}$, if and only if
\beq
\InputIfFileExists{Diagrams/InfEquiv5.tikz}{}{\input{./figures/Diagrams/InfEquiv5.tikz}}  =  %
\InputIfFileExists{Diagrams/InfEquiv6.tikz}{}{\input{./figures/Diagrams/InfEquiv6.tikz}}\quad \forall \mathcal{T} \in \PI.
\eeq
\end{definition}

In fact, one can test for inferential equivalence purely in terms of probabilities (as opposed to stochastic maps).

\begin{lemma} \label{simplerinfeq}
One has inferential equivalence $\mathcal{D}\sim_{\mathbf{p}}\mathcal{E}$ if and only if
\beq
\begin{tikzpicture}
	\begin{pgfonlayer}{nodelayer}
		\node [style=none] (0) at (-0.5, 0.5) {};
		\node [style=none] (1) at (-0.5, -0.5) {};
		\node [style=none] (2) at (0.5, -0.5) {};
		\node [style=none] (3) at (0.5, 0.5) {};
		\node [style=none] (4) at (-1, -1.25) {};
		\node [style=none] (5) at (-1, -2.25) {};
		\node [style=none] (6) at (1.5, -2.25) {};
		\node [style=none] (7) at (1.5, 1.5) {};
		\node [style=none] (8) at (-1.5, 1.5) {};
		\node [style=none] (9) at (-1.5, -0.5) {};
		\node [style=none] (10) at (-1.25, -0.5) {};
		\node [style=none] (11) at (-1.25, 1.25) {};
		\node [style=none] (12) at (1.25, 1.25) {};
		\node [style=none] (13) at (1.25, -1.25) {};
		\node [style=none] (14) at (-2, 2.25) {};
		\node [style=none] (15) at (2, 2.25) {};
		\node [style=none] (16) at (2, -2.75) {};
		\node [style=none] (17) at (-2, -2.75) {};
		\node [style=none] (18) at (0, 0) {$\mathcal{D}$};
		\node [style=none] (19) at (0, -1.75) {$\mathcal{T}$};
		\node [style=none] (20) at (0, 0.5) {};
		\node [style=none] (21) at (0, 1.25) {};
		\node [style=none] (22) at (0, -0.5) {};
		\node [style=none] (23) at (0, -1.25) {};
		\node [style=none] (24) at (1.25, 0) {};
		\node [style=none] (25) at (0.5, 0) {};
		\node [style=none] (26) at (-0.5, 0) {};
		\node [style=none] (27) at (-1.25, 0) {};
		\node [style=none] (28) at (1.75, -2.5) {\footnotesize $\mathbf{p}$};
	\end{pgfonlayer}
	\begin{pgfonlayer}{edgelayer}
	\filldraw [fill=red!30,draw=red!60] (14.center) to (15.center) to (16.center) to (17.center) to cycle;
		\filldraw [fill=white,draw] (0.center) to (1.center) to (2.center) to (3.center) to cycle;
		\filldraw [fill=white,draw] (4.center) to (5.center) to (6.center) to (7.center) to (8.center)to (9.center)to (10.center)to (11.center)to (12.center)to (13.center) to cycle;
		\draw [qWire] (20.center) to (21.center);
		\draw [qWire] (22.center) to (23.center);
		\draw [CcWire] (24.center) to (25.center);
		\draw [CcWire] (26.center) to (27.center);
	\end{pgfonlayer}
\end{tikzpicture}
 \quad = \quad \begin{tikzpicture}
	\begin{pgfonlayer}{nodelayer}
		\node [style=none] (0) at (-0.5, 0.5) {};
		\node [style=none] (1) at (-0.5, -0.5) {};
		\node [style=none] (2) at (0.5, -0.5) {};
		\node [style=none] (3) at (0.5, 0.5) {};
		\node [style=none] (4) at (-1, -1.25) {};
		\node [style=none] (5) at (-1, -2.25) {};
		\node [style=none] (6) at (1.5, -2.25) {};
		\node [style=none] (7) at (1.5, 1.5) {};
		\node [style=none] (8) at (-1.5, 1.5) {};
		\node [style=none] (9) at (-1.5, -0.5) {};
		\node [style=none] (10) at (-1.25, -0.5) {};
		\node [style=none] (11) at (-1.25, 1.25) {};
		\node [style=none] (12) at (1.25, 1.25) {};
		\node [style=none] (13) at (1.25, -1.25) {};
		\node [style=none] (14) at (-2, 2.25) {};
		\node [style=none] (15) at (2, 2.25) {};
		\node [style=none] (16) at (2, -2.75) {};
		\node [style=none] (17) at (-2, -2.75) {};
		\node [style=none] (18) at (0, 0) {$\mathcal{E}$};
		\node [style=none] (19) at (0, -1.75) {$\mathcal{T}$};
		\node [style=none] (20) at (0, 0.5) {};
		\node [style=none] (21) at (0, 1.25) {};
		\node [style=none] (22) at (0, -0.5) {};
		\node [style=none] (23) at (0, -1.25) {};
		\node [style=none] (24) at (1.25, 0) {};
		\node [style=none] (25) at (0.5, 0) {};
		\node [style=none] (26) at (-0.5, 0) {};
		\node [style=none] (27) at (-1.25, 0) {};
		\node [style=none] (28) at (1.75, -2.5) {\footnotesize $\mathbf{p}$};
	\end{pgfonlayer}
	\begin{pgfonlayer}{edgelayer}
	\filldraw [fill=red!30,draw=red!60] (14.center) to (15.center) to (16.center) to (17.center) to cycle;
		\filldraw [fill=white,draw] (0.center) to (1.center) to (2.center) to (3.center) to cycle;
		\filldraw [fill=white,draw] (4.center) to (5.center) to (6.center) to (7.center) to (8.center)to (9.center)to (10.center)to (11.center)to (12.center)to (13.center) to cycle;
		\draw [qWire] (20.center) to (21.center);
		\draw [qWire] (22.center) to (23.center);
		\draw [CcWire] (24.center) to (25.center);
		\draw [CcWire] (26.center) to (27.center);
	\end{pgfonlayer}
\end{tikzpicture}\quad \forall \mathcal{T} \in \PI.
\eeq
\end{lemma}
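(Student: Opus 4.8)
The plan is to exploit the fact that the testers quantified over in the Lemma are exactly those testers which close off \emph{all} of the open wires of $\mathcal{D}$ — both the causal wires and the inferential ones — so that $\mathbf{p}$ returns a scalar in $[0,1]$, whereas the testers in Definition~\ref{Def:InfEquiv} need only render the diagram causally closed, leaving the inferential input $X$ and output $Y$ open, so that $\mathbf{p}$ returns a substochastic map $X\to Y$ in $\Inf$. Since a fully-closing tester is in particular a causally-closing one, the forward implication is immediate: if $\mathcal{D}\sim_{\mathbf{p}}\mathcal{E}$ then $\mathbf{p}(\mathcal{T}[\mathcal{D}])=\mathbf{p}(\mathcal{T}[\mathcal{E}])$ for \emph{every} tester $\mathcal{T}$, and specializing to the testers that additionally close the inferential ports yields exactly the displayed scalar equation. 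The real content is the converse, namely that checking equality of the resulting scalars for every fully-closing tester already forces equality of the substochastic maps for every causally-closing tester.

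For the converse I would fix an arbitrary causally-closing tester $\mathcal{T}$ and set $S_{\mathcal{D}}:=\mathbf{p}(\mathcal{T}[\mathcal{D}])$ and $S_{\mathcal{E}}:=\mathbf{p}(\mathcal{T}[\mathcal{E}])$, which are substochastic maps from the inferential input $X$ to the inferential output $Y$ of $\mathcal{T}[\mathcal{D}]$. By Theorem~\ref{Thm:PointAndAtomicSuffice}, such a map is completely determined by the probabilities it assigns to atomic propositions $\llbracket y \rrbracket$ on $Y$ given point distributions $[x]$ on $X$; hence it suffices to prove $\llbracket y\rrbracket\circ S_{\mathcal{D}}\circ[x]=\llbracket y\rrbracket\circ S_{\mathcal{E}}\circ[x]$ for all $x\in X$ and $y\in Y$. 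Since $[x]$ and $\llbracket y\rrbracket$ lie in $\Inf$ and $\mathbf{p}$ is diagram-preserving on its domain with $\mathbf{p}\circ\mathbf{i}=\mathds{1}_{\Inf}$, I would pull $\mathbf{p}$ outward across these inferential processes to obtain
\beq
\llbracket y\rrbracket\circ S_{\mathcal{D}}\circ[x]\;=\;\mathbf{p}\big(\mathbf{i}(\llbracket y\rrbracket)\circ \mathcal{T}[\mathcal{D}]\circ \mathbf{i}([x])\big)\;=\;\mathbf{p}\big(\mathcal{T}_{x,y}[\mathcal{D}]\big),
\eeq
where $\mathcal{T}_{x,y}$ is the tester obtained from $\mathcal{T}$ by plugging the point distribution $[x]$ into its open inferential input and the atomic proposition $\llbracket y\rrbracket$ into its open inferential output. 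As $\mathcal{T}_{x,y}$ is a fully-closing tester, the hypothesis gives $\mathbf{p}(\mathcal{T}_{x,y}[\mathcal{D}])=\mathbf{p}(\mathcal{T}_{x,y}[\mathcal{E}])$, and running the same chain of equalities backwards for $\mathcal{E}$ yields equality of the matrix entries. Hence $S_{\mathcal{D}}=S_{\mathcal{E}}$, and since $\mathcal{T}$ was arbitrary, $\mathcal{D}\sim_{\mathbf{p}}\mathcal{E}$.

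The main obstacle I anticipate is purely the bookkeeping in this interchange step. One must verify that adjoining the inferential state $[x]$ and effect $\llbracket y\rrbracket$ to the open inferential ports of $\mathcal{T}[\mathcal{D}]$ genuinely yields a legitimate fully-closing tester $\mathcal{T}_{x,y}$ acting on $\mathcal{D}$, and — because $\mathbf{p}$ is only \emph{partially} defined — that every intermediate diagram to which $\mathbf{p}$ is applied is causally closed, so that diagram-preservation may be invoked. This holds since $\mathcal{T}[\mathcal{D}]$ is already causally closed and composing with inferential processes cannot reopen any causal wire. Everything else reduces to the elementary fact, recorded in Theorem~\ref{Thm:PointAndAtomicSuffice}, that a substochastic map is fixed by its matrix of entries.
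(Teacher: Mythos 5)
Your proposal is correct and is essentially the argument the paper intends: the authors dispatch this lemma in one line ("This follows immediately from Definition~\ref{Def:InfEquiv} and Theorem~\ref{Thm:PointAndAtomicSuffice}"), and your write-up simply makes explicit the same two ingredients — specialization to fully-closing testers for the forward direction, and Theorem~\ref{Thm:PointAndAtomicSuffice} together with diagram-preservation of $\mathbf{p}$ (and $\mathbf{p}\circ\mathbf{i}=\mathds{1}_{\Inf}$) to recover the substochastic maps from their matrix entries for the converse. Your bookkeeping remarks about the testers $\mathcal{T}_{x,y}$ remaining causally closed are accurate and fill in exactly the details the paper leaves implicit.
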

\proof This follows immediately from Definition~\ref{Def:InfEquiv} and Theorem~\ref{Thm:PointAndAtomicSuffice}.\endproof

For processes that are causally closed, this condition greatly simplifies:
\begin{lemma}\label{lem:simpleEquivTest}
Two causally closed processes are inferentially equivalent if and only if they are equal as stochastic maps under the application of the prediction map $\mathbf{p}$:
\beq
\begin{tikzpicture}
	\begin{pgfonlayer}{nodelayer}
		\node [style=small box] (0) at (0, -0) {$\mathcal{D}$};
		\node [style=none] (1) at (1, -0) {};
		\node [style=none] (3) at (-1, -0) {};
	\end{pgfonlayer}
	\begin{pgfonlayer}{edgelayer}
		\draw [style=CcWire] (0) to (1.center);
		\draw [style=CcWire] (3.center) to (0);
	\end{pgfonlayer}
\end{tikzpicture}
}\sim_\mathbf{p}%
\begin{tikzpicture}
	\begin{pgfonlayer}{nodelayer}
		\node [style=small box] (0) at (0, -0) {$\mathcal{E}$};
		\node [style=none] (1) at (1, -0) {};
		\node [style=none] (3) at (-1, -0) {};
	\end{pgfonlayer}
	\begin{pgfonlayer}{edgelayer}
		\draw [style=CcWire] (0) to (1.center);
		\draw [style=CcWire] (3.center) to (0);
	\end{pgfonlayer}
\end{tikzpicture}
} \  \iff \ %
\InputIfFileExists{Diagrams/InfEquiv7.tikz}{}{\input{./figures/Diagrams/InfEquiv7.tikz}}=%
\InputIfFileExists{Diagrams/InfEquiv8.tikz}{}{\input{./figures/Diagrams/InfEquiv8.tikz}} \ .
\eeq
\end{lemma}
This is a much simpler condition to check, since one need not quantify over all possible testers.
\proof
By definition,
\beq
}\sim_\mathbf{p}%
}
\eeq
is equivalent to
\beq
\forall \mathcal{T}\quad %
\InputIfFileExists{Diagrams/simpleEquiv1.tikz}{}{\input{./figures/Diagrams/simpleEquiv1.tikz}}=%
\InputIfFileExists{Diagrams/simpleEquiv2.tikz}{}{\input{./figures/Diagrams/simpleEquiv2.tikz}},
\eeq
which, by diagram-preservation, is equivalent to
\beq
\forall \mathcal{T}\quad %
\InputIfFileExists{Diagrams/simpleEquiv3.tikz}{}{\input{./figures/Diagrams/simpleEquiv3.tikz}}=%
\InputIfFileExists{Diagrams/simpleEquiv4.tikz}{}{\input{./figures/Diagrams/simpleEquiv4.tikz}}.
\eeq
Finally, this is equivalent to
\beq
\InputIfFileExists{Diagrams/InfEquiv7.tikz}{}{\input{./figures/Diagrams/InfEquiv7.tikz}}=%
\InputIfFileExists{Diagrams/InfEquiv8.tikz}{}{\input{./figures/Diagrams/InfEquiv8.tikz}},
\eeq
where the $\Rightarrow$ direction follows from the special case where $\mathcal{T}$ is simply the identity on the two inferential systems, and where the $\Leftarrow$ direction follows from the fact that equality is preserved by composition (in this case, with $\mathbf{p}(\mathcal{T})$).
\endproof

For the still more restricted set of processes in the image of $\mathbf{i}$ the condition simplifies even further:

\begin{corollary}\label{cor:uniqueSubStoch}
Two processes in the image of \colorbox{green!30}{$\mathbf{i}:\Inf \to \PI$} are inferentially equivalent if and only if they are equal as substochastic maps in $\Inf$:
\beq
\begin{tikzpicture}
	\begin{pgfonlayer}{nodelayer}
		\node [style=small box] (0) at (0, -0) {$\sigma$};
		\node [style=none] (1) at (2, -0) {};
		\node [style=none] (3) at (-2, -0) {};
		\node [style=none] (5) at (-1, 1) {};
		\node [style=none] (6) at (1, 1) {};
		\node [style=none] (7) at (1, -1) {};
		\node [style=none] (8) at (-1, -1) {};
		\node [style=none] (9) at (-2, 2) {};
		\node [style=none] (10) at (2, 2) {};
		\node [style=none] (11) at (2, -2) {};
		\node [style=none] (12) at (-2, -2) {};
		\node [style=none] (13) at (0.75, -0.75) {\footnotesize $\mathbf{i}$};
	\end{pgfonlayer}
	\begin{pgfonlayer}{edgelayer}
		\filldraw [fill=green!30, draw=green!60] (5.center) to (6.center) to (7.center) to (8.center) to cycle;
		\draw [style=CcWire] (0) to (1.center);
		\draw [style=CcWire] (3.center) to (0);
	\end{pgfonlayer}
\end{tikzpicture}
\sim_\mathbf{p} \begin{tikzpicture}
	\begin{pgfonlayer}{nodelayer}
		\node [style=small box] (0) at (0, -0) {$\sigma'$};
		\node [style=none] (1) at (2, -0) {};
		\node [style=none] (3) at (-2, -0) {};
		\node [style=none] (5) at (-1, 1) {};
		\node [style=none] (6) at (1, 1) {};
		\node [style=none] (7) at (1, -1) {};
		\node [style=none] (8) at (-1, -1) {};
		\node [style=none] (9) at (-2, 2) {};
		\node [style=none] (10) at (2, 2) {};
		\node [style=none] (11) at (2, -2) {};
		\node [style=none] (12) at (-2, -2) {};
		\node [style=none] (13) at (0.75, -0.75) {\footnotesize $\mathbf{i}$};
	\end{pgfonlayer}
	\begin{pgfonlayer}{edgelayer}
		\filldraw [fill=green!30, draw=green!60] (5.center) to (6.center) to (7.center) to (8.center) to cycle;
		\draw [style=CcWire] (0) to (1.center);
		\draw [style=CcWire] (3.center) to (0);
	\end{pgfonlayer}
\end{tikzpicture}
 \ \iff \
 \begin{tikzpicture}
	\begin{pgfonlayer}{nodelayer}
		\node [style=small box] (0) at (0, -0) {$\sigma$};
		\node [style=none] (1) at (1, -0) {};
		\node [style=none] (3) at (-1, -0) {};
	\end{pgfonlayer}
	\begin{pgfonlayer}{edgelayer}
		\draw [style=CcWire] (0) to (1.center);
		\draw [style=CcWire] (3.center) to (0);
	\end{pgfonlayer}
\end{tikzpicture}
=
\begin{tikzpicture}
	\begin{pgfonlayer}{nodelayer}
		\node [style=small box] (0) at (0, -0) {$\sigma'$};
		\node [style=none] (1) at (1, -0) {};
		\node [style=none] (3) at (-1, -0) {};
	\end{pgfonlayer}
	\begin{pgfonlayer}{edgelayer}
		\draw [style=CcWire] (0) to (1.center);
		\draw [style=CcWire] (3.center) to (0);
	\end{pgfonlayer}
\end{tikzpicture} \ .
 \eeq
\end{corollary}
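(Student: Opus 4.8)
The plan is to derive this corollary directly from Lemma~\ref{lem:simpleEquivTest} together with the left-inverse property of the prediction map, $\mathbf{p}\circ\mathbf{i}=\mathds{1}_{\Inf}$, established in Eq.~\eqref{eq:constraintOnPredictions}. The first thing I would observe is that any process in the image of $\mathbf{i}$ is causally closed: since $\Inf$ contains only inferential systems, the embedding $\mathbf{i}$ produces a process in $\PI$ whose only open inputs and outputs are inferential, which is exactly the defining property of a causally closed diagram. In particular, such a process lies in the domain of $\mathbf{p}$, so both the simplified equivalence test of Lemma~\ref{lem:simpleEquivTest} and the consistency identity of Eq.~\eqref{eq:constraintOnPredictions} may legitimately be applied to $\mathbf{i}(\sigma)$ and $\mathbf{i}(\sigma')$.

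Applying Lemma~\ref{lem:simpleEquivTest}, the inferential equivalence $\mathbf{i}(\sigma)\sim_{\mathbf{p}}\mathbf{i}(\sigma')$ holds if and only if $\mathbf{p}(\mathbf{i}(\sigma))=\mathbf{p}(\mathbf{i}(\sigma'))$ as stochastic maps in $\Inf$; the quantification over all testers has already been discharged by that lemma. The final step is to collapse each side using the left-inverse property $\mathbf{p}\circ\mathbf{i}=\mathds{1}_{\Inf}$, which yields $\mathbf{p}(\mathbf{i}(\sigma))=\sigma$ and $\mathbf{p}(\mathbf{i}(\sigma'))=\sigma'$. Hence the equivalence reduces to $\sigma=\sigma'$ as substochastic maps in $\Inf$, which is precisely the claimed biconditional. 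Diagrammatically, the argument amounts to stacking the $\mathbf{i}$-box beneath the $\mathbf{p}$-box on each side and cancelling the pair via Eq.~\eqref{eq:constraintOnPredictions}, leaving only the bare maps $\sigma$ and $\sigma'$ to be compared.

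There is no substantive obstacle: the corollary is essentially a specialization of Lemma~\ref{lem:simpleEquivTest} to the subclass of processes in the image of $\mathbf{i}$. The only point requiring brief care is the verification that this image sits inside the domain of $\mathbf{p}$ (i.e.\ that such processes are causally closed), since otherwise neither the lemma nor the left-inverse identity would be available; once that is noted, the two directions follow without further calculation.
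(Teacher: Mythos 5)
Your argument is correct and is essentially identical to the paper's own proof: both apply Lemma~\ref{lem:simpleEquivTest} to reduce inferential equivalence of $\mathbf{i}(\sigma)$ and $\mathbf{i}(\sigma')$ to equality of their images under $\mathbf{p}$, and then cancel via $\mathbf{p}\circ\mathbf{i}=\mathds{1}_{\Inf}$ from Eq.~\eqref{eq:constraintOnPredictions}. Your additional remark that processes in the image of $\mathbf{i}$ are causally closed (hence in the domain of $\mathbf{p}$) is a small but legitimate point of care that the paper leaves implicit.
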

\proof
By Lemma~\ref{lem:simpleEquivTest}, we have that the LHS of the implication in the corollary is equivalent to the equality
\beq
\begin{tikzpicture}
	\begin{pgfonlayer}{nodelayer}
		\node [style=none] (0) at (0, 0) {$\sigma$};
		\node [style=none] (1) at (-0.5, 0.5) {};
		\node [style=none] (2) at (0.5, 0.5) {};
		\node [style=none] (3) at (0.5, -0.5) {};
		\node [style=none] (4) at (-0.5, -0.5) {};
		\node [style=none] (5) at (-0.5, 0) {};
		\node [style=none] (6) at (-2, 0) {};
		\node [style=none] (7) at (2, 0) {};
		\node [style=none] (8) at (0.5, 0) {};
		\node [style=none] (13) at (1, -1) {};
		\node [style=none] (14) at (1, 1) {};
		\node [style=none] (15) at (-1, 1) {};
		\node [style=none] (16) at (-1, -1) {};
		\node [style=none] (17) at (1.5, -1.5) {};
		\node [style=none] (18) at (1.5, 1.5) {};
		\node [style=none] (19) at (-1.5, 1.5) {};
		\node [style=none] (20) at (-1.5, -1.5) {};
		\node [style=none] (21) at (0.75, -0.75) {\footnotesize $\mathbf{i}$};
		\node [style=none] (22) at (1.25, -1.25) {\footnotesize $\mathbf{p}$};
	\end{pgfonlayer}
	\begin{pgfonlayer}{edgelayer}
			\filldraw [fill=red!30,draw=red!60] (17.center) to (18.center) to (19.center) to (20.center) to cycle;
		\filldraw [fill=green!30,draw=green!60] (13.center) to (14.center) to (15.center) to (16.center) to cycle;
		\filldraw[fill=white,draw] (1.center) to (2.center) to (3.center) to (4.center) to cycle;
		\draw [style=CcWire] (6.center) to (5.center);
		\draw [style=CcWire] (8.center) to (7.center);
	\end{pgfonlayer}
\end{tikzpicture}
=
\begin{tikzpicture}
	\begin{pgfonlayer}{nodelayer}
		\node [style=none] (0) at (0, 0) {$\sigma'$};
		\node [style=none] (1) at (-0.5, 0.5) {};
		\node [style=none] (2) at (0.5, 0.5) {};
		\node [style=none] (3) at (0.5, -0.5) {};
		\node [style=none] (4) at (-0.5, -0.5) {};
		\node [style=none] (5) at (-0.5, 0) {};
		\node [style=none] (6) at (-2, 0) {};
		\node [style=none] (7) at (2, 0) {};
		\node [style=none] (8) at (0.5, 0) {};
		\node [style=none] (13) at (1, -1) {};
		\node [style=none] (14) at (1, 1) {};
		\node [style=none] (15) at (-1, 1) {};
		\node [style=none] (16) at (-1, -1) {};
		\node [style=none] (17) at (1.5, -1.5) {};
		\node [style=none] (18) at (1.5, 1.5) {};
		\node [style=none] (19) at (-1.5, 1.5) {};
		\node [style=none] (20) at (-1.5, -1.5) {};
		\node [style=none] (21) at (0.75, -0.75) {\footnotesize $\mathbf{i}$};
		\node [style=none] (22) at (1.25, -1.25) {\footnotesize $\mathbf{p}$};
	\end{pgfonlayer}
	\begin{pgfonlayer}{edgelayer}
			\filldraw [fill=red!30,draw=red!60] (17.center) to (18.center) to (19.center) to (20.center) to cycle;
		\filldraw [fill=green!30,draw=green!60] (13.center) to (14.center) to (15.center) to (16.center) to cycle;
		\filldraw[fill=white,draw] (1.center) to (2.center) to (3.center) to (4.center) to cycle;
		\draw [style=CcWire] (6.center) to (5.center);
		\draw [style=CcWire] (8.center) to (7.center);
	\end{pgfonlayer}
\end{tikzpicture} \ ,
\eeq
which gives the RHS of the implication in the corollary by Eq.~\eqref{eq:constraintOnPredictions}, namely $\mathbf{p}\circ \mathbf{i}= \mathds{1}_{\Inf}$.
\endproof

These results imply that every causally closed process
is associated to a unique stochastic map.
\begin{lemma} \label{infimage}
Every causally closed process $\mathcal{D}\in\PI$ is inferentially equivalent to a unique process in the image of $\mathbf{i}$, namely,
\beq\label{imageofi}
\InputIfFileExists{Diagrams/CalculationTheory1.tikz}{}{\input{./figures/Diagrams/CalculationTheory1.tikz}}\quad \sim_{\mathbf{p}} \quad%
\InputIfFileExists{Diagrams/CalculationTheory2.tikz}{}{\input{./figures/Diagrams/CalculationTheory2.tikz}}.
\eeq
\end{lemma}
\begin{proof}
The constraint on the prediction map $\mathbf{p}$ of Eq.~\eqref{eq:constraintOnPredictions} immediately implies that
\beq
\begin{tikzpicture}
	\begin{pgfonlayer}{nodelayer}
		\node [style=none] (0) at (0, 0) {$\mathcal{D}$};
		\node [style=none] (1) at (-0.5, 0.5) {};
		\node [style=none] (2) at (0.5, 0.5) {};
		\node [style=none] (3) at (0.5, -0.5) {};
		\node [style=none] (4) at (-0.5, -0.5) {};
		\node [style=none] (5) at (-0.5, 0) {};
		\node [style=none] (6) at (-2, 0) {};
		\node [style=none] (7) at (2, 0) {};
		\node [style=none] (8) at (0.5, 0) {};
		\node [style=none] (13) at (1, -1) {};
		\node [style=none] (14) at (1, 1) {};
		\node [style=none] (15) at (-1, 1) {};
		\node [style=none] (16) at (-1, -1) {};
		\node [style=none] (17) at (1.5, -1.5) {};
		\node [style=none] (18) at (1.5, 1.5) {};
		\node [style=none] (19) at (-1.5, 1.5) {};
		\node [style=none] (20) at (-1.5, -1.5) {};
		\node [style=none] (21) at (0.75, -0.75) {\footnotesize $\mathbf{p}$};
	\end{pgfonlayer}
	\begin{pgfonlayer}{edgelayer}
		\filldraw [fill=red!30,draw=red!60] (13.center) to (14.center) to (15.center) to (16.center) to cycle;
		\filldraw[fill=white,draw] (1.center) to (2.center) to (3.center) to (4.center) to cycle;
		\draw [style=CcWire] (6.center) to (5.center);
		\draw [style=CcWire] (8.center) to (7.center);
	\end{pgfonlayer}
\end{tikzpicture}
=
\begin{tikzpicture}
	\begin{pgfonlayer}{nodelayer}
		\node [style=none] (0) at (0, 0) {$\mathcal{D}$};
		\node [style=none] (1) at (-0.5, 0.5) {};
		\node [style=none] (2) at (0.5, 0.5) {};
		\node [style=none] (3) at (0.5, -0.5) {};
		\node [style=none] (4) at (-0.5, -0.5) {};
		\node [style=none] (5) at (-0.5, 0) {};
		\node [style=none] (6) at (-2.5, -0) {};
		\node [style=none] (7) at (2.5, -0) {};
		\node [style=none] (8) at (0.5, 0) {};
		\node [style=none] (9) at (1, -1) {};
		\node [style=none] (10) at (1, 1) {};
		\node [style=none] (11) at (-1, 1) {};
		\node [style=none] (12) at (-1, -1) {};
		\node [style=none] (13) at (1.5, -1.5) {};
		\node [style=none] (14) at (1.5, 1.5) {};
		\node [style=none] (15) at (-1.5, 1.5) {};
		\node [style=none] (16) at (-1.5, -1.5) {};
		\node [style=none] (17) at (0.75, -0.75) {\footnotesize $\mathbf{p}$};
		\node [style=none] (18) at (1.25, -1.25) {\footnotesize $\mathbf{i}$};
		\node [style=none] (19) at (-2, 2) {};
		\node [style=none] (20) at (-2, -2) {};
		\node [style=none] (21) at (2, -2) {};
		\node [style=none] (22) at (2, 2) {};
		\node [style=none] (23) at (1.75, -1.75) {\footnotesize $\mathbf{p}$};
	\end{pgfonlayer}
	\begin{pgfonlayer}{edgelayer}
		\filldraw [fill=red!30,draw=red!60] (19.center) to (20.center) to (21.center) to (22.center) to cycle;
		\filldraw [fill=green!30,draw=green!60] (13.center) to (14.center) to (15.center) to (16.center) to cycle;
		\filldraw [fill=red!30,draw=red!60] (9.center) to (10.center) to (11.center) to (12.center) to cycle;
		\filldraw[fill=white,draw] (1.center) to (2.center) to (3.center) to (4.center) to cycle;
		\draw [style=CcWire] (6.center) to (5.center);
		\draw [style=CcWire] (8.center) to (7.center);
	\end{pgfonlayer}
\end{tikzpicture} \ ,
\eeq
after which Lemma~\ref{lem:simpleEquivTest} implies that
\beq
\InputIfFileExists{Diagrams/CalculationTheory1.tikz}{}{\input{./figures/Diagrams/CalculationTheory1.tikz}}\quad \sim_{\mathbf{p}} \quad%
\InputIfFileExists{Diagrams/CalculationTheory2.tikz}{}{\input{./figures/Diagrams/CalculationTheory2.tikz}} \,
\eeq
and then Corollary~\ref{cor:uniqueSubStoch} implies that this is the {\em unique} process in the image of $\mathbf{i}$ in the equivalence class of $\mathcal{D}$.
\end{proof}

\subsubsection{Quotiented operational CI theories}

In many cases, one is only interested in the inferential equivalence class of processes
in a causal-inferential theory.
In such cases, it is useful to define a new type of theory, wherein one has quotiented\footnote{ Notions of quotiented operational theories
 can be found in earlier works, notably including Ref.~\cite{chiribella2010probabilistic}. } with respect to inferential equivalence.
 We now show how this is done for operational CI theories.

First, we note that the relation $\sim_\mathbf{p}$ is preserved under composition: 

\begin{lemma} \label{congrelopLem}
If $\mathcal{D} \sim_\mathbf{p} \mathcal{D}'$, then
\beq \label{congrelation}
\InputIfFileExists{Diagrams/congruence1.tikz}{}{\input{./figures/Diagrams/congruence1.tikz}}\quad\sim_\mathbf{p}\quad%
\InputIfFileExists{Diagrams/congruence2.tikz}{}{\input{./figures/Diagrams/congruence2.tikz}}
\eeq
for all clamps $\mathcal{C}$ in $\PI$.
\end{lemma}
\begin{proof}
Consider, for the sake of contradiction, that there exists some $\mathcal{C}^*$ such that Eq.~\eqref{congrelation} fails. Then, there exists some tester $\mathcal{T}^*$ such that
\beq
\InputIfFileExists{Diagrams/congruence3.tikz}{}{\input{./figures/Diagrams/congruence3.tikz}}\quad \neq\quad %
\InputIfFileExists{Diagrams/congruence4.tikz}{}{\input{./figures/Diagrams/congruence4.tikz}}.
 \eeq
This, however, would imply that the tester
\beq
\InputIfFileExists{Diagrams/congruence5.tikz}{}{\input{./figures/Diagrams/congruence5.tikz}}
\eeq
generates different inferences for $\mathcal{D}$ and $\mathcal{D}'$,
in contradiction with our initial assumption that  $\mathcal{D} \sim_\mathbf{p} \mathcal{D}'$.
\end{proof}

It follows that $\sim_\mathbf{p}$ is a process-theory congruence relation for $\PI$. That is, if $\mathcal{D}\sim_\mathbf{p}\mathcal{D}'$ and $\mathcal{E}\sim_\mathbf{p} \mathcal{E}'$ then any valid composite of $\mathcal{D}$ and $\mathcal{E}$ will be inferentially equivalent to the same composite of $\mathcal{D}'$ and $\mathcal{E}'$. 
\begin{lemma}\label{congrelop} 
The inferential equivalence relation $\sim_\mathbf{p}$ defines a process theory congruence relation on $\PI$.
\end{lemma}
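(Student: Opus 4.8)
The plan is to verify the two ingredients that together constitute a process-theory congruence: that $\sim_\mathbf{p}$ is an equivalence relation on the processes of $\PI$, and that it is compatible with sequential and parallel composition.

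First I would dispatch the equivalence-relation axioms, which are immediate from Definition~\ref{Def:InfEquiv}. Since $\mathcal{D}\sim_\mathbf{p}\mathcal{E}$ is defined as the equality $\mathbf{p}(\mathcal{T}\circ\mathcal{D})=\mathbf{p}(\mathcal{T}\circ\mathcal{E})$ in $\Inf$ holding for every tester $\mathcal{T}$, reflexivity, symmetry and transitivity are inherited directly from the corresponding properties of equality of processes in $\Inf$; quantifying universally over $\mathcal{T}$ does not disturb any of them.

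The substantive part is the congruence property: if $\mathcal{D}\sim_\mathbf{p}\mathcal{D}'$ and $\mathcal{E}\sim_\mathbf{p}\mathcal{E}'$, then any legal composite built from $\mathcal{D}$ and $\mathcal{E}$ is inferentially equivalent to the same composite built from $\mathcal{D}'$ and $\mathcal{E}'$. My strategy is to reduce the simultaneous replacement to two successive single replacements and invoke Lemma~\ref{congrelopLem}. The key observation is that, in any diagram containing a distinguished occurrence of $\mathcal{D}$, everything else in the diagram constitutes a one-hole context around that occurrence, and any such one-hole context in a symmetric monoidal category can be brought into the canonical clamp form of Section~\ref{sec:PTs} (a lower box $x_\tau$, an upper box $y_\tau$, and an auxiliary wire $W_\tau$ carrying the side connections). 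Thus the surrounding diagram is a clamp $\mathcal{C}$, and Lemma~\ref{congrelopLem} gives inferential equivalence of $\mathcal{C}$ applied to $\mathcal{D}$ and $\mathcal{C}$ applied to $\mathcal{D}'$. Applying this first with $\mathcal{E}$ absorbed into the clamp yields equivalence of the $(\mathcal{D},\mathcal{E})$-composite with the $(\mathcal{D}',\mathcal{E})$-composite; applying it a second time, now with $\mathcal{D}'$ absorbed into the clamp and $\mathcal{E}$ sitting in the hole, yields equivalence of the $(\mathcal{D}',\mathcal{E})$-composite with the $(\mathcal{D}',\mathcal{E}')$-composite. Transitivity then chains these two equivalences to deliver the desired statement, and the specialization to pure sequential and pure parallel composition follows as particular cases.

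I expect the main obstacle to be the claim that an arbitrary one-hole context coincides with a clamp, that is, that Lemma~\ref{congrelopLem}, which is phrased for clamps, genuinely covers every way a single process can be embedded in a larger diagram---including cases where wires run both above and below the hole and where the process is composed in parallel with others. Justifying this cleanly amounts to the normalization fact for symmetric monoidal categories that every context with a single hole can be deformed into the clamp shape. I would either cite this normalization or supply a short diagrammatic argument: slide all surrounding processes either below the hole (forming $x_\tau$) or above it (forming $y_\tau$), and bundle the remaining parallel wires into the auxiliary system $W_\tau$, using only the defining diagrammatic moves of a process theory. Once this identification is granted, the remainder is the routine two-step substitution plus transitivity sketched above.
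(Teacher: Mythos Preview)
Your proposal is correct and follows essentially the same approach as the paper: the paper also performs the two-step substitution (first replace $\mathcal{D}$ by $\mathcal{D}'$ with $\mathcal{E}$ absorbed into the clamp, then replace $\mathcal{E}$ by $\mathcal{E}'$ with $\mathcal{D}'$ absorbed into the clamp) via Lemma~\ref{congrelopLem} and then invokes transitivity. You are slightly more explicit than the paper in checking the equivalence-relation axioms and in flagging the identification of one-hole contexts with clamps, but the core argument is the same.
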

\proof
Take $\mathcal{D}\sim_\mathbf{p}\mathcal{D}'$ and $\mathcal{E}\sim_\mathbf{p} \mathcal{E}'$, and consider some arbitrary composition of the non-primed versions. As a particular illustrative example, take
\beq
\InputIfFileExists{Diagrams/congExtra1.tikz}{}{\input{./figures/Diagrams/congExtra1.tikz}}.
\eeq
Using Lemma~\ref{congrelopLem}, the fact that $\mathcal{D}\sim_\mathbf{p}\mathcal{D}'$, and the fact that the (inferentially) serial composition of $\mathcal{D}$ with $\mathcal{E}$ is a special case of the clamp $\mathcal{C}$ from Eq.~\eqref{congrelation}, implies that 
\beq
\InputIfFileExists{Diagrams/congExtra2.tikz}{}{\input{./figures/Diagrams/congExtra2.tikz}}\quad \sim_\mathbf{p}\quad %
\InputIfFileExists{Diagrams/congExtra3.tikz}{}{\input{./figures/Diagrams/congExtra3.tikz}} \ .
\eeq
Then, by the same lemma, but now viewing $\mathcal{D}'$ as the clamp and using the fact that $\mathcal{E}\sim_\mathbf{p} \mathcal{E}'$, we have:
\beq
\InputIfFileExists{Diagrams/congExtra4.tikz}{}{\input{./figures/Diagrams/congExtra4.tikz}}\quad \sim_\mathbf{p}\quad %
\InputIfFileExists{Diagrams/congExtra5.tikz}{}{\input{./figures/Diagrams/congExtra5.tikz}} \ .
\eeq
Putting these two together (by transitivity of $\sim_\mathbf{p}$) we immediately have:
\beq
\InputIfFileExists{Diagrams/congExtra1.tikz}{}{\input{./figures/Diagrams/congExtra1.tikz}} \quad \sim_\mathbf{p} \quad %
\InputIfFileExists{Diagrams/congExtra6.tikz}{}{\input{./figures/Diagrams/congExtra6.tikz}}  \ .
\eeq
as we require. It is easy to see that identically structured proofs hold for any other way of composing $\mathcal{D}$ and $\mathcal{E}$.
\endproof


 This lemma is important because it is necessary that inferential equivalence defines a congruence relation in order for quotienting with respect to it to yield a valid process theory.

\begin{definition} We define a quotiented operational CI theory $\widetilde{\PI}$ as the process theory $\PI$ quotiented by the congruence relation $\sim_\mathbf{p}$. That is, it has the same systems as $\PI$, but its processes correspond to equivalence classes of processes in $\PI$, that is, to maximal sets of inferentially equivalent processes. We can moreover define a diagram-preserving map  \colorbox{orange!30}{$\sim_\mathbf{p}:\PI \to \widetilde{\PI}$}, as
\beq
\begin{tikzpicture}
	\begin{pgfonlayer}{nodelayer}
		\node [style=none] (0) at (-0.5, 0.5) {};
		\node [style=none] (1) at (-0.5, -0.5) {};
		\node [style=none] (2) at (0.5, -0.5) {};
		\node [style=none] (3) at (0.5, 0.5) {};
		\node [style=none] (4) at (0, 0) {$\mathcal{D}$};
		\node [style=none] (5) at (2, -0) {};
		\node [style=none] (6) at (0.5, 0) {};
		\node [style=none] (7) at (-0.5, 0) {};
		\node [style=none] (8) at (-2, -0) {};
		\node [style=none] (9) at (-1.25, 1.25) {};
		\node [style=none] (10) at (-1.25, -1.25) {};
		\node [style=none] (11) at (1.25, -1.25) {};
		\node [style=none] (12) at (1.25, 1.25) {};
		\node [style=none] (13) at (1, -1) {\footnotesize $\sim_\mathbf{p}$};
		\node [style=none] (14) at (0, 0.5000001) {};
		\node [style=none] (15) at (0, 2) {};
		\node [style=none] (16) at (0, -0.5000001) {};
		\node [style=none] (17) at (0, -2) {};
	\end{pgfonlayer}
	\begin{pgfonlayer}{edgelayer}
\filldraw [fill=orange!30, draw = orange!60] (9.center) to (10.center) to (11.center) to (12.center) to cycle;
\filldraw [fill=white,draw] (0.center) to (1.center) to (2.center) to (3.center) to cycle;
		\draw [CcWire] (5.center) to (6.center);
		\draw [CcWire] (7.center) to (8.center);
		\draw[qWire] (14.center) to (15.center);
		\draw[qWire] (17.center) to (16.center);
	\end{pgfonlayer}
\end{tikzpicture}
\quad :=
\quad
\begin{tikzpicture}
	\begin{pgfonlayer}{nodelayer}
		\node [style=none] (0) at (-0.5, 0.5) {};
		\node [style=none] (1) at (-0.5, -0.5) {};
		\node [style=none] (2) at (0.5, -0.5) {};
		\node [style=none] (3) at (0.5, 0.5) {};
		\node [style=none] (4) at (0, 0) {$\tilde{\mathcal{D}}$};
		\node [style=none] (5) at (1, -0) {};
		\node [style=none] (6) at (0.5, 0) {};
		\node [style=none] (7) at (-0.5, 0) {};
		\node [style=none] (8) at (-1, -0) {};
		\node [style=none] (9) at (-1.25, 1.25) {};
		\node [style=none] (10) at (-1.25, -1.25) {};
		\node [style=none] (11) at (1.25, -1.25) {};
		\node [style=none] (12) at (1.25, 1.25) {};
		\node [style=none] (14) at (0, 0.5000001) {};
		\node [style=none] (15) at (0, 1) {};
		\node [style=none] (16) at (0, -0.5000001) {};
		\node [style=none] (17) at (0, -1) {};
	\end{pgfonlayer}
	\begin{pgfonlayer}{edgelayer}
\filldraw [fill=white,draw] (0.center) to (1.center) to (2.center) to (3.center) to cycle;
		\draw [CcWire] (5.center) to (6.center);
		\draw [CcWire] (7.center) to (8.center);
		\draw[qWire] (14.center) to (15.center);
		\draw[qWire] (17.center) to (16.center);
	\end{pgfonlayer}
\end{tikzpicture},
\eeq
 where $\widetilde{\mathcal{D}}$ is the equivalence class which contains $\mathcal{D}$.  Composition of equivalence classes is defined by the equivalence class of the composite of an arbitrary choice of representative element for each. 
\end{definition}
That this notion of composition is well defined (i.e., independent of the choice of representative elements) follows from Lemma~\ref{congrelop}.
It then follows that the quotienting map \colorbox{orange!30}{$\sim_\mathbf{p}:\PI \to \widetilde{\PI}$} is indeed diagram-preserving.

From the above definition, one clearly has that
\beq
\begin{tikzpicture}
	\begin{pgfonlayer}{nodelayer}
		\node [style=none] (0) at (-0.5, 0.5) {};
		\node [style=none] (1) at (-0.5, -0.5) {};
		\node [style=none] (2) at (0.5, -0.5) {};
		\node [style=none] (3) at (0.5, 0.5) {};
		\node [style=none] (4) at (0, 0) {$\mathcal{D}$};
		\node [style=none] (5) at (1, -0) {};
		\node [style=none] (6) at (0.5, 0) {};
		\node [style=none] (7) at (-0.5, 0) {};
		\node [style=none] (8) at (-1, -0) {};
		\node [style=none] (9) at (-1.25, 1.25) {};
		\node [style=none] (10) at (-1.25, -1.25) {};
		\node [style=none] (11) at (1.25, -1.25) {};
		\node [style=none] (12) at (1.25, 1.25) {};
		\node [style=none] (14) at (0, 0.5000001) {};
		\node [style=none] (15) at (0, 1) {};
		\node [style=none] (16) at (0, -0.5000001) {};
		\node [style=none] (17) at (0, -1) {};
	\end{pgfonlayer}
	\begin{pgfonlayer}{edgelayer}
\filldraw [fill=white,draw] (0.center) to (1.center) to (2.center) to (3.center) to cycle;
		\draw [CcWire] (5.center) to (6.center);
		\draw [CcWire] (7.center) to (8.center);
		\draw[qWire] (14.center) to (15.center);
		\draw[qWire] (17.center) to (16.center);
	\end{pgfonlayer}
\end{tikzpicture}
\sim_{\mathbf{p}}
\begin{tikzpicture}
	\begin{pgfonlayer}{nodelayer}
		\node [style=none] (0) at (-0.5, 0.5) {};
		\node [style=none] (1) at (-0.5, -0.5) {};
		\node [style=none] (2) at (0.5, -0.5) {};
		\node [style=none] (3) at (0.5, 0.5) {};
		\node [style=none] (4) at (0, 0) {$\mathcal{E}$};
		\node [style=none] (5) at (1, -0) {};
		\node [style=none] (6) at (0.5, 0) {};
		\node [style=none] (7) at (-0.5, 0) {};
		\node [style=none] (8) at (-1, -0) {};
		\node [style=none] (9) at (-1.25, 1.25) {};
		\node [style=none] (10) at (-1.25, -1.25) {};
		\node [style=none] (11) at (1.25, -1.25) {};
		\node [style=none] (12) at (1.25, 1.25) {};
		\node [style=none] (14) at (0, 0.5000001) {};
		\node [style=none] (15) at (0, 1) {};
		\node [style=none] (16) at (0, -0.5000001) {};
		\node [style=none] (17) at (0, -1) {};
	\end{pgfonlayer}
	\begin{pgfonlayer}{edgelayer}
\filldraw [fill=white,draw] (0.center) to (1.center) to (2.center) to (3.center) to cycle;
		\draw [CcWire] (5.center) to (6.center);
		\draw [CcWire] (7.center) to (8.center);
		\draw[qWire] (14.center) to (15.center);
		\draw[qWire] (17.center) to (16.center);
	\end{pgfonlayer}
\end{tikzpicture}
\ \iff \
\begin{tikzpicture}
	\begin{pgfonlayer}{nodelayer}
		\node [style=none] (0) at (-0.5, 0.5) {};
		\node [style=none] (1) at (-0.5, -0.5) {};
		\node [style=none] (2) at (0.5, -0.5) {};
		\node [style=none] (3) at (0.5, 0.5) {};
		\node [style=none] (4) at (0, 0) {$\mathcal{D}$};
		\node [style=none] (5) at (2, -0) {};
		\node [style=none] (6) at (0.5, 0) {};
		\node [style=none] (7) at (-0.5, 0) {};
		\node [style=none] (8) at (-2, -0) {};
		\node [style=none] (9) at (-1.25, 1.25) {};
		\node [style=none] (10) at (-1.25, -1.25) {};
		\node [style=none] (11) at (1.25, -1.25) {};
		\node [style=none] (12) at (1.25, 1.25) {};
		\node [style=none] (13) at (1, -1) {\footnotesize $\sim_{\mathbf{p}}$};
		\node [style=none] (14) at (0, 0.5000001) {};
		\node [style=none] (15) at (0, 2) {};
		\node [style=none] (16) at (0, -0.5000001) {};
		\node [style=none] (17) at (0, -2) {};
	\end{pgfonlayer}
	\begin{pgfonlayer}{edgelayer}
\filldraw [fill=orange!30, draw = orange!60] (9.center) to (10.center) to (11.center) to (12.center) to cycle;
\filldraw [fill=white,draw] (0.center) to (1.center) to (2.center) to (3.center) to cycle;
		\draw [CcWire] (5.center) to (6.center);
		\draw [CcWire] (7.center) to (8.center);
		\draw[qWire] (14.center) to (15.center);
		\draw[qWire] (17.center) to (16.center);
	\end{pgfonlayer}
\end{tikzpicture}
=
\begin{tikzpicture}
	\begin{pgfonlayer}{nodelayer}
		\node [style=none] (0) at (-0.5, 0.5) {};
		\node [style=none] (1) at (-0.5, -0.5) {};
		\node [style=none] (2) at (0.5, -0.5) {};
		\node [style=none] (3) at (0.5, 0.5) {};
		\node [style=none] (4) at (0, 0) {$\mathcal{E}$};
		\node [style=none] (5) at (2, -0) {};
		\node [style=none] (6) at (0.5, 0) {};
		\node [style=none] (7) at (-0.5, 0) {};
		\node [style=none] (8) at (-2, -0) {};
		\node [style=none] (9) at (-1.25, 1.25) {};
		\node [style=none] (10) at (-1.25, -1.25) {};
		\node [style=none] (11) at (1.25, -1.25) {};
		\node [style=none] (12) at (1.25, 1.25) {};
		\node [style=none] (13) at (1, -1) {\footnotesize $\sim_{\mathbf{p}}$};
		\node [style=none] (14) at (0, 0.5000001) {};
		\node [style=none] (15) at (0, 2) {};
		\node [style=none] (16) at (0, -0.5000001) {};
		\node [style=none] (17) at (0, -2) {};
	\end{pgfonlayer}
	\begin{pgfonlayer}{edgelayer}
\filldraw [fill=orange!30, draw = orange!60] (9.center) to (10.center) to (11.center) to (12.center) to cycle;
\filldraw [fill=white,draw] (0.center) to (1.center) to (2.center) to (3.center) to cycle;
		\draw [CcWire] (5.center) to (6.center);
		\draw [CcWire] (7.center) to (8.center);
		\draw[qWire] (14.center) to (15.center);
		\draw[qWire] (17.center) to (16.center);
	\end{pgfonlayer}
\end{tikzpicture}.
\eeq

It is worth noting that in our framework, a quotiented operational CI theory is {\em not} an example of an operational CI theory (unless they are both trivial).
This is because if the operational theory is nontrivial, then the quotienting operation {\em necessarily} loses information: equivalence classes of states of knowledge about procedures are not themselves states of knowledge about procedures. 
 The following example proves the claim.  Recall that 
the closed diagrams  in the quotiented theory are (isomorphic to) probabilities, while in the operational CI theory, they 
 constitute a complete description of what one knows and what one is asking in the scenario under consideration. 
In other words, in any quotiented operational  CI theory, the only way for two closed diagrams to be distinct is if they are inferentially inequivalent whereas in any nontrivial unquotiented operational  CI theory, 
there will exist pairs of closed diagrams that are inferentially equivalent and yet still distinct.

Using the quotienting map, the probability associated with a closed diagram can
always be decomposed into a sequence of stochastic maps representing one's inferences, by grouping together processes into diagrams which are causally closed, e.g.
\beq
\InputIfFileExists{Diagrams/infGrouping.tikz}{}{\input{./figures/Diagrams/infGrouping.tikz}}.
\eeq

Clearly, composition with the quotienting map  $\sim_\mathbf{p}$ can be used to define two new DP maps.   The map $\tilde{\mathbf{e}}: \Proc \to \widetilde{\PI}$ is defined as $\tilde{\mathbf{e}} = \sim_\mathbf{p} \circ \mathbf{e}$, and the map $\tilde{\mathbf{i}}: \Inf \to \widetilde{\PI}$ is defined as $\tilde{\mathbf{i}} =  \sim_\mathbf{p} \circ \mathbf{i}$. 
We can also introduce a partial diagram-preserving prediction map $\tilde{\mathbf{p}}$ for the quotiented operational CI theory, whose action is given by mapping each process in $\widetilde{\PI}$ to an element (any element) of $\PI$ in its equivalence class, and then mapping that element to $\Inf$ via $\mathbf{p}$.
All of this can be concisely represented in the following commuting diagram:
\beq
\InputIfFileExists{Diagrams/FullOpDiag.tikz}{}{\input{./figures/Diagrams/FullOpDiag.tikz}} \ \ .
 \eeq

\subsubsection{Subsuming the framework of generalized probabilistic theories} \label{subsumeGPT}

At this point, one can see the relationship between our framework and another well-known framework for operational theories, namely, that of generalized probabilistic theories (GPTs).

A GPT is a minimal framework in which processes are wired together to form circuits which describe an operational scenario and predict the probabilities of the outcomes that one might observe.
Perhaps the key feature of a GPT is {\em tomographic completeness}, which implies that processes within a GPT are taken to represent {\em equivalence classes} of procedures or events with respect to the operational predictions. That is, two transformations in a GPT are represented distinctly if and only if there exists a circuit in which they can be embedded to give different probabilities for the outcome of some measurement in that circuit. The set of processes is also assumed to be convex and representable in a (typically finite dimensional) real vector space, to have a unique deterministic effect, and for composition of processes to be bilinear.

In forthcoming work~\cite{GPTsAsCITheories}, we prove that these properties are satisfied for a natural subset of processes in any quotiented operational CI theory (i.e., the inferentially closed processes), and that consequently the latter can be identified with GPT processes in the traditional sense.  Tomographic completeness follows naturally from the quotienting which defines $\widetilde{\PI}$, and convexity of these processes is inherited from convexity of the inferential theory.

 However, the quotiented operational CI theories in our framework are not equivalent to GPTs.
 There remain important formal and conceptual differences between the two.
For example, a quotiented operational CI theory contains both causal and inferential systems and processes, while GPTs contain only a single type of system.  It is also worth noting that GPT processes are conventionally viewed as representing equivalence classes of laboratory procedures, while processes in a quotiented operational CI theory have a different interpretation---they represent equivalence classes of {\em states of knowledge} about laboratory procedures.  

\subsection{Inferential equivalence in classical realist CI theories} \label{ontquotinf}

Analogously, two elements of $\FI$ are inferentially equivalent if and only if they lead to exactly the same predictions, no matter what causal diagram they are embedded in.
\begin{definition}[Inferential equivalence for classical realist CI theories]\label{Def:InfEquivOnt}
Two general elements of $\FI$, $\mathcal{D}$ and $\mathcal{E}$, are inferentially equivalent with respect to the prediction map $\mathbf{p}^*$, denoted $\mathcal{D}\sim_{\mathbf{p}^*}\mathcal{E}$, if and only if
\beq
\InputIfFileExists{Diagrams/InfEquiv5Ont.tikz}{}{\input{./figures/Diagrams/InfEquiv5Ont.tikz}} \quad = \quad %
\InputIfFileExists{Diagrams/InfEquiv6Ont.tikz}{}{\input{./figures/Diagrams/InfEquiv6Ont.tikz}}\quad \forall \mathcal{T} \in \FI.
\eeq
\end{definition}

 The fact that this is a nontrivial relationship may be somewhat surprising. Once it is recognized, however, it is not difficult to come up with examples to illustrate it; we give a simple example below.

Every process in $\FI$ can be associated with a stochastic map, via
\beq \label{assocstoc}
\begin{tikzpicture}
	\begin{pgfonlayer}{nodelayer}
		\node [style=none] (0) at (-0.5, 0.5) {};
		\node [style=none] (1) at (-0.5, -0.5) {};
		\node [style=none] (2) at (0.5, -0.5) {};
		\node [style=none] (3) at (0.5, 0.5) {};
		\node [style=none] (4) at (0, 0) {$\mathcal{D}$};
		\node [style=none] (5) at (1, -0) {};
		\node [style=none] (6) at (0.5, 0) {};
		\node [style=none] (7) at (-0.5, 0) {};
		\node [style=none] (8) at (-1, -0) {};
		\node [style=none] (9) at (-1.25, 1.25) {};
		\node [style=none] (10) at (-1.25, -1.25) {};
		\node [style=none] (11) at (1.25, -1.25) {};
		\node [style=none] (12) at (1.25, 1.25) {};
		\node [style=none] (14) at (0, 0.5000001) {};
		\node [style=none] (15) at (0, 1) {};
		\node [style=none] (16) at (0, -0.5000001) {};
		\node [style=none] (17) at (0, -1) {};
	\end{pgfonlayer}
	\begin{pgfonlayer}{edgelayer}
\filldraw [fill=white,draw] (0.center) to (1.center) to (2.center) to (3.center) to cycle;
		\draw [CcWire] (5.center) to (6.center);
		\draw [CcWire] (7.center) to (8.center);
		\draw[oWire] (14.center) to (15.center);
		\draw[oWire] (17.center) to (16.center);
	\end{pgfonlayer}
\end{tikzpicture}
\quad\mapsto\quad
\InputIfFileExists{Diagrams/sufficientTester1.tikz}{}{\input{./figures/Diagrams/sufficientTester1.tikz}}.
\eeq
Using Lemma~\ref{eq:newrewrite} (stated and proven in Appendix~\ref{prflemnewrewrite}), we can prove the following result, which
is an analogue of Lemma~\ref{lem:simpleEquivTest}, but strengthened to include processes with open causal systems.
\begin{lemma} \label{prfinfstochastic}
Two processes in $\FI$ are inferentially equivalent if and only if they are associated with the same substochastic map. That is,
\beq
\begin{tikzpicture}\label{infstochimp}
	\begin{pgfonlayer}{nodelayer}
		\node [style=none] (0) at (-0.5, 0.5) {};
		\node [style=none] (1) at (-0.5, -0.5) {};
		\node [style=none] (2) at (0.5, -0.5) {};
		\node [style=none] (3) at (0.5, 0.5) {};
		\node [style=none] (4) at (0, 0) {$\mathcal{D}$};
		\node [style=none] (5) at (1, -0) {};
		\node [style=none] (6) at (0.5, 0) {};
		\node [style=none] (7) at (-0.5, 0) {};
		\node [style=none] (8) at (-1, -0) {};
		\node [style=none] (9) at (-1.25, 1.25) {};
		\node [style=none] (10) at (-1.25, -1.25) {};
		\node [style=none] (11) at (1.25, -1.25) {};
		\node [style=none] (12) at (1.25, 1.25) {};
		\node [style=none] (14) at (0, 0.5000001) {};
		\node [style=none] (15) at (0, 1) {};
		\node [style=none] (16) at (0, -0.5000001) {};
		\node [style=none] (17) at (0, -1) {};
	\end{pgfonlayer}
	\begin{pgfonlayer}{edgelayer}
\filldraw [fill=white,draw] (0.center) to (1.center) to (2.center) to (3.center) to cycle;
		\draw [CcWire] (5.center) to (6.center);
		\draw [CcWire] (7.center) to (8.center);
		\draw[oWire] (14.center) to (15.center);
		\draw[oWire] (17.center) to (16.center);
	\end{pgfonlayer}
\end{tikzpicture}
\sim_{\mathbf{p^*}}
\begin{tikzpicture}
	\begin{pgfonlayer}{nodelayer}
		\node [style=none] (0) at (-0.5, 0.5) {};
		\node [style=none] (1) at (-0.5, -0.5) {};
		\node [style=none] (2) at (0.5, -0.5) {};
		\node [style=none] (3) at (0.5, 0.5) {};
		\node [style=none] (4) at (0, 0) {$\mathcal{E}$};
		\node [style=none] (5) at (1, -0) {};
		\node [style=none] (6) at (0.5, 0) {};
		\node [style=none] (7) at (-0.5, 0) {};
		\node [style=none] (8) at (-1, -0) {};
		\node [style=none] (9) at (-1.25, 1.25) {};
		\node [style=none] (10) at (-1.25, -1.25) {};
		\node [style=none] (11) at (1.25, -1.25) {};
		\node [style=none] (12) at (1.25, 1.25) {};
		\node [style=none] (14) at (0, 0.5000001) {};
		\node [style=none] (15) at (0, 1) {};
		\node [style=none] (16) at (0, -0.5000001) {};
		\node [style=none] (17) at (0, -1) {};
	\end{pgfonlayer}
	\begin{pgfonlayer}{edgelayer}
\filldraw [fill=white,draw] (0.center) to (1.center) to (2.center) to (3.center) to cycle;
		\draw [CcWire] (5.center) to (6.center);
		\draw [CcWire] (7.center) to (8.center);
		\draw[oWire] (14.center) to (15.center);
		\draw[oWire] (17.center) to (16.center);
	\end{pgfonlayer}
\end{tikzpicture}
\quad \iff \quad %
\InputIfFileExists{Diagrams/sufficientTester1.tikz}{}{\input{./figures/Diagrams/sufficientTester1.tikz}}=%
\InputIfFileExists{Diagrams/sufficientTester2.tikz}{}{\input{./figures/Diagrams/sufficientTester2.tikz}}.
\eeq
\end{lemma}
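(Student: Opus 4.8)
The plan is to prove the two directions of the biconditional separately, leveraging the map of Eq.~\eqref{assocstoc} that sends a process $\mathcal{D}\in\FI$ to its associated substochastic map---obtained by attaching the knowledge-gaining (proposition) generators to the open causal wires of $\mathcal{D}$ and applying $\mathbf{p}^*$ to the resulting causally closed diagram. Write $\widehat{\mathcal{D}}$ for this substochastic map. The statement is then exactly $\mathcal{D}\sim_{\mathbf{p}^*}\mathcal{E}\iff \widehat{\mathcal{D}}=\widehat{\mathcal{E}}$, and it strengthens Lemma~\ref{lem:simpleEquivTest} by allowing $\mathcal{D}$ and $\mathcal{E}$ to carry open causal systems rather than being causally closed. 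The crucial extra ingredient relative to the operational case is that in $\FI$ one may attach a proposition to \emph{any} system (not only classical ones), so that an open ontic wire can always be fully probed inferentially; this is precisely what the rewrite Lemma~\ref{eq:newrewrite} formalizes.

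For the ($\Rightarrow$) direction I would argue that inferential equivalence already forces $\widehat{\mathcal{D}}=\widehat{\mathcal{E}}$ by evaluating on a sufficiently rich family of testers. By the analogue of Theorem~\ref{Thm:PointAndAtomicSuffice} for $\mathbf{p}^*$ (noted immediately after Theorem~\ref{uniquerule}), the substochastic map $\widehat{\mathcal{D}}$ is completely determined by its matrix entries, i.e.\ by the scalars obtained when point distributions are fed into each of its inferential inputs and atomic propositions are read off each of its inferential outputs. Each such scalar is the value of $\mathbf{p}^*$ on a particular causally closed tester: namely the one that attaches propositions to the open causal wires of $\mathcal{D}$ (turning them inferential) and then closes every inferential wire with the appropriate point state or atomic proposition. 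Since $\mathcal{D}\sim_{\mathbf{p}^*}\mathcal{E}$ asserts agreement of predictions for \emph{all} testers, every such matrix entry of $\widehat{\mathcal{D}}$ and $\widehat{\mathcal{E}}$ coincides, whence $\widehat{\mathcal{D}}=\widehat{\mathcal{E}}$.

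For the ($\Leftarrow$) direction I would fix an arbitrary tester $\mathcal{T}\in\FI$ and show that the prediction $\mathbf{p}^*$ of the composite depends on $\mathcal{D}$ only through $\widehat{\mathcal{D}}$. The idea is to insert, on each open causal wire connecting $\mathcal{D}$ to $\mathcal{T}$, a complete knowledge-gaining interaction---which is the identity by the $\top$/marginalisation rewrite together with the fact that propositions may be attached to any system in $\FI$---and then use Lemma~\ref{eq:newrewrite} to slide these interactions so that $\mathcal{D}$'s entire causal interface is converted into the inferential interface of $\widehat{\mathcal{D}}$, with the remainder absorbed into a modified (tester-dependent) substochastic map acting on $\widehat{\mathcal{D}}$. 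Once the composite is written as this modified map applied to $\widehat{\mathcal{D}}$, the hypothesis $\widehat{\mathcal{D}}=\widehat{\mathcal{E}}$ immediately yields equal predictions for every $\mathcal{T}$, i.e.\ $\mathcal{D}\sim_{\mathbf{p}^*}\mathcal{E}$.

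I expect the ($\Leftarrow$) direction to be the main obstacle, and within it the application of Lemma~\ref{eq:newrewrite} to a \emph{completely general} tester---one whose causal structure may connect to the open wires of $\mathcal{D}$ in arbitrarily complicated ways. The content of the argument is that no tester can extract more inferential information from the open causal wires than is already recorded in $\widehat{\mathcal{D}}$; verifying this rigorously requires pushing the proposition-attaching generators through all of $\mathcal{T}$'s causal structure using the axiom of Eq.~\eqref{Axiom:PropKnowGenerators} and the normal-form result of Theorem~\ref{thmnormalform}, which is exactly the manipulation encapsulated by Lemma~\ref{eq:newrewrite}. The remaining steps---the $\top$/marginalisation identities and the determination of a substochastic map by point-and-atomic data---are routine.
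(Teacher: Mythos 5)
Your proposal is correct and takes essentially the same route as the paper's proof: the forward direction follows because attaching the knowledge-gaining generators to the open causal wires is itself a tester whose prediction is exactly the associated substochastic map (the paper notes this directly, whereas you take a slight detour through the point-and-atomic matrix entries, but both work), and the backward direction rests, exactly as you say, on Lemma~\ref{eq:newrewrite} to trade each open causal wire for a measure-and-reprepare process so that every tester's prediction factors through the associated substochastic map. Your only loose phrasing is calling the inserted interaction ``the identity'' outright---Lemma~\ref{eq:newrewrite} gives an \emph{inferential equivalence} to the causal identity, not an equality---but since you invoke that lemma explicitly this is not a substantive gap.
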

The proof is given in Appendix~\ref{prfinfstoch}.

 As an explicit example,
 consider the four bit-to-bit functions $\{f_0,f_1,f_{\text{id}},f_{\text{flip}}\}$. They are defined by their action on a bit $a \in \{0,1\}$, namely, $f_0(a) = 0$, $f_1(a)=1$, $f_{\text{id}}(a)=a$, and $f_{\text{flip}}(a)=a\oplus 1$, where $\oplus$ denotes summation modulo 2.
Then, the states of knowledge 
\beq \label{prerandommat}
 \sigma_c = \frac{1}{2}[f_0]+\frac{1}{2}[f_1]\; \textrm{ and}\; \sigma_d=\frac{1}{2}[f_{\text{id}}]+\frac{1}{2}[f_{\text{flip}}] 
\eeq
  are distinct but inferentially equivalent. This is easily seen by the fact that both states of knowledge correspond to the same stochastic map, namely,
 the completely randomizing bit-to-bit channel: 
\beq \label{randommat}
\InputIfFileExists{Diagrams/scrambling1.tikz}{}{\input{./figures/Diagrams/scrambling1.tikz}}\quad =\ \left(
\begin{array}{cc}
 \frac{1}{2} & \frac{1}{2}   \\
 \frac{1}{2} & \frac{1}{2}
\end{array}
\right)\ =\quad %
\InputIfFileExists{Diagrams/scrambling2.tikz}{}{\input{./figures/Diagrams/scrambling2.tikz}}.
\eeq

\subsubsection{Quotiented classical realist CI theories}

In direct analogy with Lemma~\ref{congrelop} and its proof, one can show that $\sim_{\mathbf{p}^*}$ defines a congruence relation, and it follows that one can 
quotient the \crealist CI theory with respect to this relation.

\begin{definition} 
We define a quotiented \crealist CI theory $\widetilde{\FI}$ as the process theory $\FI$ quotiented by the congruence $\sim_{\mathbf{p}^*}$. That is, it has the same systems as $\FI$, but its processes correspond to equivalence classes of processes in $\FI$, that is, to maximal sets of inferentially equivalent processes. We can moreover define a diagram-preserving map \colorbox{darkorange!30}{$\sim_{\mathbf{p}^*}:\FI \to \widetilde{\FI}$}, as
\beq
\begin{tikzpicture}
	\begin{pgfonlayer}{nodelayer}
		\node [style=none] (0) at (-0.5, 0.5) {};
		\node [style=none] (1) at (-0.5, -0.5) {};
		\node [style=none] (2) at (0.5, -0.5) {};
		\node [style=none] (3) at (0.5, 0.5) {};
		\node [style=none] (4) at (0, 0) {$\mathcal{D}$};
		\node [style=none] (5) at (2, 0) {};
		\node [style=none] (6) at (0.5, 0) {};
		\node [style=none] (7) at (-0.5, 0) {};
		\node [style=none] (8) at (-2, 0) {};
		\node [style=none] (9) at (-1.25, 1.25) {};
		\node [style=none] (10) at (-1.25, -1.25) {};
		\node [style=none] (11) at (1.5, -1.25) {};
		\node [style=none] (12) at (1.5, 1.25) {};
		\node [style=none] (13) at (1, -1) {\footnotesize $\sim_{\mathbf{p}^*}$};
		\node [style=none] (14) at (0, 0.5) {};
		\node [style=none] (15) at (0, 2) {};
		\node [style=none] (16) at (0, -0.5) {};
		\node [style=none] (17) at (0, -2) {};
	\end{pgfonlayer}
	\begin{pgfonlayer}{edgelayer}
		\draw [fill={darkorange!30}, draw={darkorange!60}] (9.center)
			 to (10.center)
			 to (11.center)
			 to (12.center)
			 to cycle;
		\draw [fill=white, draw] (0.center)
			 to (1.center)
			 to (2.center)
			 to (3.center)
			 to cycle;
		\draw [CcWire] (5.center) to (6.center);
		\draw [CcWire] (7.center) to (8.center);
		\draw [oWire] (14.center) to (15.center);
		\draw [oWire] (17.center) to (16.center);
	\end{pgfonlayer}
\end{tikzpicture}
\quad =
\quad
\begin{tikzpicture}
	\begin{pgfonlayer}{nodelayer}
		\node [style=none] (0) at (-0.5, 0.5) {};
		\node [style=none] (1) at (-0.5, -0.5) {};
		\node [style=none] (2) at (0.5, -0.5) {};
		\node [style=none] (3) at (0.5, 0.5) {};
		\node [style=none] (4) at (0, 0) {$\tilde{\mathcal{D}}$};
		\node [style=none] (5) at (1, -0) {};
		\node [style=none] (6) at (0.5, 0) {};
		\node [style=none] (7) at (-0.5, 0) {};
		\node [style=none] (8) at (-1, -0) {};
		\node [style=none] (9) at (-1.25, 1.25) {};
		\node [style=none] (10) at (-1.25, -1.25) {};
		\node [style=none] (11) at (1.25, -1.25) {};
		\node [style=none] (12) at (1.25, 1.25) {};
		\node [style=none] (14) at (0, 0.5000001) {};
		\node [style=none] (15) at (0, 1) {};
		\node [style=none] (16) at (0, -0.5000001) {};
		\node [style=none] (17) at (0, -1) {};
	\end{pgfonlayer}
	\begin{pgfonlayer}{edgelayer}
\filldraw [fill=white,draw] (0.center) to (1.center) to (2.center) to (3.center) to cycle;
		\draw [CcWire] (5.center) to (6.center);
		\draw [CcWire] (7.center) to (8.center);
		\draw[oWire] (14.center) to (15.center);
		\draw[oWire] (17.center) to (16.center);
	\end{pgfonlayer}
\end{tikzpicture} \ ,
\eeq
 where $\widetilde{\mathcal{D}}$ is the equivalence class which contains $\mathcal{D}$.
Composition of equivalence classes is defined by the equivalence class of the composite of an arbitrary choice of representative element for each.
\end{definition}
That this notion of composition is well defined (i.e. independent of the choice of representative elements) follows from the natural analogue of Lemma~\ref{congrelop}. 
It then follows that the map  \colorbox{darkorange!30}{$\sim_{\mathbf{p}^*}:\FI \to \widetilde{\FI}$} is indeed diagram-preserving.

From the above definition, it clearly follows that 
\beq
\begin{tikzpicture}
	\begin{pgfonlayer}{nodelayer}
		\node [style=none] (0) at (-0.5, 0.5) {};
		\node [style=none] (1) at (-0.5, -0.5) {};
		\node [style=none] (2) at (0.5, -0.5) {};
		\node [style=none] (3) at (0.5, 0.5) {};
		\node [style=none] (4) at (0, 0) {$\mathcal{D}$};
		\node [style=none] (5) at (1, -0) {};
		\node [style=none] (6) at (0.5, 0) {};
		\node [style=none] (7) at (-0.5, 0) {};
		\node [style=none] (8) at (-1, -0) {};
		\node [style=none] (9) at (-1.25, 1.25) {};
		\node [style=none] (10) at (-1.25, -1.25) {};
		\node [style=none] (11) at (1.25, -1.25) {};
		\node [style=none] (12) at (1.25, 1.25) {};
		\node [style=none] (14) at (0, 0.5000001) {};
		\node [style=none] (15) at (0, 1) {};
		\node [style=none] (16) at (0, -0.5000001) {};
		\node [style=none] (17) at (0, -1) {};
	\end{pgfonlayer}
	\begin{pgfonlayer}{edgelayer}
\filldraw [fill=white,draw] (0.center) to (1.center) to (2.center) to (3.center) to cycle;
		\draw [CcWire] (5.center) to (6.center);
		\draw [CcWire] (7.center) to (8.center);
		\draw[oWire] (14.center) to (15.center);
		\draw[oWire] (17.center) to (16.center);
	\end{pgfonlayer}
\end{tikzpicture}
\sim_{\mathbf{p^*}}
\begin{tikzpicture}
	\begin{pgfonlayer}{nodelayer}
		\node [style=none] (0) at (-0.5, 0.5) {};
		\node [style=none] (1) at (-0.5, -0.5) {};
		\node [style=none] (2) at (0.5, -0.5) {};
		\node [style=none] (3) at (0.5, 0.5) {};
		\node [style=none] (4) at (0, 0) {$\mathcal{E}$};
		\node [style=none] (5) at (1, -0) {};
		\node [style=none] (6) at (0.5, 0) {};
		\node [style=none] (7) at (-0.5, 0) {};
		\node [style=none] (8) at (-1, -0) {};
		\node [style=none] (9) at (-1.25, 1.25) {};
		\node [style=none] (10) at (-1.25, -1.25) {};
		\node [style=none] (11) at (1.25, -1.25) {};
		\node [style=none] (12) at (1.25, 1.25) {};
		\node [style=none] (14) at (0, 0.5000001) {};
		\node [style=none] (15) at (0, 1) {};
		\node [style=none] (16) at (0, -0.5000001) {};
		\node [style=none] (17) at (0, -1) {};
	\end{pgfonlayer}
	\begin{pgfonlayer}{edgelayer}
\filldraw [fill=white,draw] (0.center) to (1.center) to (2.center) to (3.center) to cycle;
		\draw [CcWire] (5.center) to (6.center);
		\draw [CcWire] (7.center) to (8.center);
		\draw[oWire] (14.center) to (15.center);
		\draw[oWire] (17.center) to (16.center);
	\end{pgfonlayer}
\end{tikzpicture}
\quad \iff \quad
\begin{tikzpicture}
	\begin{pgfonlayer}{nodelayer}
		\node [style=none] (0) at (-0.5, 0.5) {};
		\node [style=none] (1) at (-0.5, -0.5) {};
		\node [style=none] (2) at (0.5, -0.5) {};
		\node [style=none] (3) at (0.5, 0.5) {};
		\node [style=none] (4) at (0, 0) {$\mathcal{D}$};
		\node [style=none] (5) at (2, 0) {};
		\node [style=none] (6) at (0.5, 0) {};
		\node [style=none] (7) at (-0.5, 0) {};
		\node [style=none] (8) at (-2, 0) {};
		\node [style=none] (9) at (-1.25, 1.25) {};
		\node [style=none] (10) at (-1.25, -1.25) {};
		\node [style=none] (11) at (1.5, -1.25) {};
		\node [style=none] (12) at (1.5, 1.25) {};
		\node [style=none] (13) at (1, -1) {\footnotesize$\sim_{\mathbf{p}^*}$};
		\node [style=none] (14) at (0, 0.5) {};
		\node [style=none] (15) at (0, 2) {};
		\node [style=none] (16) at (0, -0.5) {};
		\node [style=none] (17) at (0, -2) {};
	\end{pgfonlayer}
	\begin{pgfonlayer}{edgelayer}
		\draw [fill={darkorange!30}, draw={darkorange!60}] (9.center)
			 to (10.center)
			 to (11.center)
			 to (12.center)
			 to cycle;
		\draw [fill=white, draw] (0.center)
			 to (1.center)
			 to (2.center)
			 to (3.center)
			 to cycle;
		\draw [CcWire] (5.center) to (6.center);
		\draw [CcWire] (7.center) to (8.center);
		\draw [oWire] (14.center) to (15.center);
		\draw [oWire] (17.center) to (16.center);
	\end{pgfonlayer}
\end{tikzpicture}
=
\begin{tikzpicture}
	\begin{pgfonlayer}{nodelayer}
		\node [style=none] (0) at (-0.5, 0.5) {};
		\node [style=none] (1) at (-0.5, -0.5) {};
		\node [style=none] (2) at (0.5, -0.5) {};
		\node [style=none] (3) at (0.5, 0.5) {};
		\node [style=none] (4) at (0, 0) {$\mathcal{E}$};
		\node [style=none] (5) at (2, 0) {};
		\node [style=none] (6) at (0.5, 0) {};
		\node [style=none] (7) at (-0.5, 0) {};
		\node [style=none] (8) at (-2, 0) {};
		\node [style=none] (9) at (-1.25, 1.25) {};
		\node [style=none] (10) at (-1.25, -1.25) {};
		\node [style=none] (11) at (1.5, -1.25) {};
		\node [style=none] (12) at (1.5, 1.25) {};
		\node [style=none] (13) at (1, -1) {\footnotesize$\sim_{\mathbf{p}^*}$};
		\node [style=none] (14) at (0, 0.5) {};
		\node [style=none] (15) at (0, 2) {};
		\node [style=none] (16) at (0, -0.5) {};
		\node [style=none] (17) at (0, -2) {};
	\end{pgfonlayer}
	\begin{pgfonlayer}{edgelayer}
		\draw [fill={darkorange!30}, draw={darkorange!60}] (9.center)
			 to (10.center)
			 to (11.center)
			 to (12.center)
			 to cycle;
		\draw [fill=white, draw] (0.center)
			 to (1.center)
			 to (2.center)
			 to (3.center)
			 to cycle;
		\draw [CcWire] (5.center) to (6.center);
		\draw [CcWire] (7.center) to (8.center);
		\draw [oWire] (14.center) to (15.center);
		\draw [oWire] (17.center) to (16.center);
	\end{pgfonlayer}
\end{tikzpicture}
 \ .
\eeq

Clearly $\sim_{\mathbf{p}^*}$ can be used to define two new DP maps $\tilde{\mathbf{e}}': \Func \to \widetilde{\FI}$ and $\tilde{\mathbf{i}}': \Inf \to \widetilde{\FI}$, where $\tilde{\mathbf{e}}' = \sim_{\mathbf{p}^*} \circ \mathbf{e}' $ and $\tilde{\mathbf{i}}' =  \sim_{\mathbf{p}^*} \circ \mathbf{i}'$.
We can also introduce a prediction map $\tilde{\mathbf{p^*}}$ for the quotiented \crealist CI theory, whose action is given by mapping each process in $\widetilde{\FI}$ to an element (any element) of $\FI$ in its equivalence class, and then mapping that element to $\Inf$ via $\mathbf{p}^*$.
All of this can be concisely represented in the following commuting diagram:
\beq
\InputIfFileExists{Diagrams/FullOntDiag.tikz}{}{\input{./figures/Diagrams/FullOntDiag.tikz}} \ \ .
\eeq

Finally, we derive a simplified normal form for $\widetilde{\FI}$. First, we show (in Appendix~\ref{appqnf}, using some useful identities proven in Appendix~\ref{usefulidentitiesapp}) that
\begin{theorem}\label{thm:QNF}
Any diagram in $\FI$ is always inferentially equivalent to one of the form
\beq%
\InputIfFileExists{Diagrams/QNF25.tikz}{}{\input{./figures/Diagrams/QNF25.tikz}} \ ,\eeq
where $\Sigma$ is a stochastic map and $\Pi$ is a propositional map.
\end{theorem}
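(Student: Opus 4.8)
The plan is to reach the quotiented normal form in two stages: first pass to the unquotiented normal form of Theorem~\ref{thmnormalform}, and then exploit the freedom afforded by inferential equivalence to re-express the resulting substochastic map as a genuinely stochastic map together with a propositional map. By Theorem~\ref{thmnormalform}, an arbitrary diagram $\mathcal{D}\in\FI$ is \emph{equal} (via the rewrite rules of $\FI$) to a diagram in which a single substochastic map $S\in\Inf$ is plugged into the open inferential wires, with the open causal wires carried by the three generators. Since strict equality implies inferential equivalence, it suffices to rewrite this substochastic normal form, up to $\sim_{\mathbf{p}^*}$, into the asserted shape with a stochastic $\Sigma$ and a propositional $\Pi$.

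For the second stage I would invoke Lemma~\ref{prfinfstochastic}, which tells us that a process in $\FI$ is determined, up to inferential equivalence, solely by the substochastic map it induces. It is therefore enough to exhibit a diagram of the target shape whose associated substochastic map is precisely $S$. Here I would use the factorization of substochastic maps established in Section~\ref{inftheory}: every substochastic map splits into a genuinely stochastic part, specifying the normalized action of each column, and a propositional (partial-function) part, accounting for the normalization defect of each column. The stochastic part supplies $\Sigma$ (a state of knowledge about the functional dynamics and the input states), while the defect part supplies the propositional map $\Pi$. Crucially, because in $\FI$ a proposition may be attached to \emph{any} system (Eq.~\eqref{propont}), $\Pi$ can be placed directly on the causal output wires rather than being buried inside the inferential theory.

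The engine that lets one migrate this inferential content onto the causal systems is the interaction axiom Eq.~\eqref{Axiom:PropKnowGenerators}, which equates specifying knowledge about an output directly with propagating knowledge about the dynamics and the input through the functional-application map. Applying this identity repeatedly—together with the ignorability and copy constraints and the auxiliary identities collected in Appendix~\ref{usefulidentitiesapp}—lets me commute the stochastic state of knowledge $\Sigma$ through the functional structure until all residual propositional content is gathered into a single $\Pi$ acting on the causal outputs. Each such move is either a strict equality in $\FI$ or, where a genuine rebracketing of inferential weight is needed, a step that preserves the induced substochastic map and is thus justified by Lemma~\ref{prfinfstochastic}.

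I expect the main obstacle to be tracking the normalization defect correctly. In general $S$ has columns summing strictly between $0$ and $1$, whereas a single propositional map $\Pi$, being a partial function, induces a substochastic map with $0/1$ column sums and so cannot by itself reproduce fractional normalizations. The resolution is that the fractional weight is supplied by the \emph{stochastic} map $\Sigma$ spreading probability over the functions and states feeding the causal structure, with $\Pi$ only selecting the surviving support; checking that these two ingredients recombine—through Eq.~\eqref{Axiom:PropKnowGenerators}—to give exactly $S$, and that no causal wire is left in a form outside the permitted $\Sigma$/$\Pi$ template, is the delicate part of the argument, and is precisely where the explicit identities of Appendix~\ref{appqnf} are required.
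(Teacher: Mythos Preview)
Your proposal is correct and follows essentially the same route as the paper: start from the unquotiented normal form of Theorem~\ref{thmnormalform}, then reshape the resulting substochastic $S$ into a stochastic $\Sigma$ and propositional $\Pi$ using the interaction axiom Eq.~\eqref{Axiom:PropKnowGenerators} together with the identities in Appendix~\ref{usefulidentitiesapp}. The one noteworthy difference is the justification for the single genuinely non-equality step: you invoke Lemma~\ref{prfinfstochastic} (matching associated substochastic maps suffices), whereas the paper applies the more specific Lemma~\ref{eq:newrewrite} (a causal identity wire is $\sim_{\mathbf{p}^*}$ to factoring through an inferential system) directly at the point where the causal wires are ``broken''; since Lemma~\ref{prfinfstochastic} is itself proved via Lemma~\ref{eq:newrewrite}, the two are equivalent here, and your framing simply packages the inferential-equivalence bookkeeping at a slightly higher level of abstraction.
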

Applying the quotienting map (and recalling that it is diagram-preserving and that it leaves processes in $\Inf$ invariant), this implies that
\begin{corollary} \label{normalformFI}
Any diagram in $\widetilde{\FI}$ can be rewritten into the following normal form:
\beq%
\InputIfFileExists{Diagrams/QNF26.tikz}{}{\input{./figures/Diagrams/QNF26.tikz}} \ . \eeq
\end{corollary}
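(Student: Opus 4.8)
The plan is to derive Corollary~\ref{normalformFI} directly from Theorem~\ref{thm:QNF} by pushing the $\FI$ normal form through the quotienting map $\sim_{\mathbf{p}^*}:\FI\to\widetilde{\FI}$. First I would fix an arbitrary diagram in $\widetilde{\FI}$; since $\widetilde{\FI}$ is by definition $\FI$ quotiented by the congruence $\sim_{\mathbf{p}^*}$, such a diagram is an equivalence class, and I may pick any representative $\mathcal{D}\in\FI$ so that the chosen $\widetilde{\FI}$-diagram equals $\sim_{\mathbf{p}^*}(\mathcal{D})$.

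Next, Theorem~\ref{thm:QNF} supplies a diagram $N$ of the form shown in the QNF25 diagram, built from a stochastic map $\Sigma$ and a propositional map $\Pi$, with $\mathcal{D}\sim_{\mathbf{p}^*}N$. Because $\sim_{\mathbf{p}^*}$ is precisely the congruence being quotiented, $\mathcal{D}$ and $N$ lie in the same equivalence class and therefore have the same image, i.e.\ $\sim_{\mathbf{p}^*}(\mathcal{D})=\sim_{\mathbf{p}^*}(N)$. It remains only to compute this image, for which I would invoke the two structural facts recorded just before the corollary: the quotienting map is diagram-preserving, and it fixes processes in the image of $\mathbf{i'}$ (the genuinely inferential processes). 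Diagram preservation lets me distribute $\sim_{\mathbf{p}^*}$ across the compositional skeleton of $N$, while invariance on inferential processes carries the building blocks $\Sigma$ and $\Pi$ to themselves, yielding exactly the QNF26 diagram in $\widetilde{\FI}$ with the same $\Sigma$ and $\Pi$.

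The only step warranting care---and hence the main (though minor) obstacle---is justifying that $\sim_{\mathbf{p}^*}$ acts as the identity on the inferential pieces $\Sigma$ and $\Pi$. This is the $\FI$-analogue of Corollary~\ref{cor:uniqueSubStoch}: by Lemma~\ref{prfinfstochastic}, two processes of $\FI$ are inferentially equivalent iff they are associated to the same substochastic map, and since any process in the image of $\mathbf{i'}$ is associated to itself (using $\mathbf{p}^*\circ\mathbf{i'}=\mathds{1}_{\Inf}$), each such inferential process is alone in its equivalence class and is thus fixed by quotienting. With this confirmed, the remaining diagrammatic rewriting is routine and requires no quantification over testers beyond what Theorem~\ref{thm:QNF} already handles.
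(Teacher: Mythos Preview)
Your proposal is correct and follows exactly the paper's approach: the corollary is stated immediately after Theorem~\ref{thm:QNF} with the one-line justification ``applying the quotienting map (and recalling that it is diagram-preserving and that it leaves processes in $\Inf$ invariant).'' Your only imprecision is the phrase ``alone in its equivalence class''---what your argument actually establishes (and all that is needed) is that $\tilde{\mathbf{i}}'=\ \sim_{\mathbf{p}^*}\!\circ\,\mathbf{i'}$ is injective, so that $\Sigma$ and $\Pi$ are carried to canonically identified copies of themselves; other non-inferential processes may well share their equivalence classes, but this is irrelevant.
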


\subsubsection{Subsuming the traditional notion of an ontological theory}\label{sec:subsumingOntological}

We can now point out a connection between the notion of a quotiented classical realist CI theory 
 and the traditional notion of an ontological model~\cite{Spe05, Harrigan}.
 Specifically, the notion of a quotiented classical reality CI theory subsumes the type of ontological theory that is presumed as the codomain of the traditional ontological modelling map.  In particular, the stochastic processes in the codomain of this map 
 (such as probabilisty distributions and response functions over the ontic state space) are recovered as the substochastic maps defined by our Eq.~\eqref{assocstoc}.

 Note, however,  that the notion of a quotiented \crealist CI theory contains both causal and inferential systems,
   while traditional ontological models
 concern only a single type of system. We will say much more about representing operational theories in Section~\ref{MAINsec:reps} and onward.

\subsection{To make an omelette...} \label{makingomelette}

A causal-inferential theory allows one to describe a physical scenario while maintaining the distinction between the causal and inferential components of the theory.   
On the classical realist side, highlighting this distinction helps to identify the root of the puzzlement associated to phenomena such as  `Simpson's paradox'~\cite{simpsonsparadox} and `Berkson's paradox'~\cite{berksonparadox},  by making plain the sense in which previous frameworks have scrambled causal and inferential notions.  Moreover, the conceptual and formal tools it provides can help to break the habits of mind that lead to such confusions, namely, the tendency to slide from statements about conditional probabilities to claims about cause-effect relationships. 

On the operational side, the framework we have developed here helps to highlight a type of scrambling of causal and inferential concepts that seems intrinsic to any operational theory and which is not so apparent in the conventional frameworks.
 It concerns the nature of a process in $\Proc$, the causal component of an operational theory. Recall that these processes are descriptions of laboratory procedures.  Now note that although specifying a laboratory procedure may serve to completely specify
 {\em some} degrees of freedom of the devices (usually macroscopic ones), it also generally involves expressing {\em incomplete}  knowledge of the vast majority of its degrees of freedom (the microscopic ones).\footnote{For operational CI theories that admit of a classical realist representation, this fact is reflected in our assumptions about the realist representation map $\xi$, namely, that a point distribution over causal processes in the operational CI theory need not be mapped by $\xi$ to a point distribution over causal processes in the realist CI theory.}
As such, a process in $\Proc$ generally does not stipulate an {\em actual} causal relation between its inputs and outputs, given that it is consistent with many possibilities for this causal relation. 
It is for this reason that we have stipulated that a process in $\Proc$ describes only {\em potential} causal influences.  The reason that the processes in the causal component of an {\em operational} CI theory---unlike those in the causal component of a {\em realist} CI theory---must in part stipulate an agent's uncertainty is because {\em such uncertainty is inherent in all descriptions of phenomena at the operational level.}   At the operational level, therefore, 
the best one can hope to do is to unscramble the notion of {\em potential causal influence} from that of inference.  
  


In the rest of this section, we describe a different---and much more significant---example of how our framework reveals a  type of scrambling of causal and inferential notions that previously went unnoticed.  Specifically, we argue that quotienting with respect to inferential equivalence necessarily erases information about causal relations, so that quotiented theories inevitably incorporate a scrambling of causal and inferential notions.

We hope to make this point in more detail 
in a subsequent paper~\cite{GPTsAsCITheories} whose purpose is to describe 
 the sense in which GPTs are recovered from quotientied operational CI theories. 
 Here, 
 For now, we will a concrete example involving processes in $\FI$ in order to clarify how causation and inference get scrambled under quotienting.

It is based on the example introduced in Section~\ref{ontquotinf}, involving the two states of knowledge about bit-to-bit functional dynamics described in Eq.~\eqref{prerandommat} and which can be recast diagramatically as follows: 
\beq\label{examb1}
\sigma_c:= \frac{1}{2}\left[%
\InputIfFileExists{Diagrams/bitIdentity.tikz}{}{\input{./figures/Diagrams/bitIdentity.tikz}} \right] +\frac{1}{2}\left[%
\InputIfFileExists{Diagrams/bitNot.tikz}{}{\input{./figures/Diagrams/bitNot.tikz}}\right]\quad \text{and}\quad \sigma_d:= \frac{1}{2}\left[%
\InputIfFileExists{Diagrams/bitReset0.tikz}{}{\input{./figures/Diagrams/bitReset0.tikz}} \right] +\frac{1}{2}\left[%
\InputIfFileExists{Diagrams/bitReset1.tikz}{}{\input{./figures/Diagrams/bitReset1.tikz}}\right]
\eeq
 These two states of knowledge refer to completely distinct causal
  relations between the input and output bit: 
$\sigma_c$  has support only on processes with a causal connection between the input and the output, 
 while $\sigma_d$ 
  has support only on processes that are causally disconnected.
   Nonetheless, because the stochastic map associated with each of these states of knowledge is given by the completely randomizing bit-to-bit channel, as noted in Eq.~\eqref{randommat}, it follows that they are inferentially equivalent:
\beq\label{examb2}
\InputIfFileExists{Diagrams/omeletteConnected1.tikz}{}{\input{./figures/Diagrams/omeletteConnected1.tikz}} \ \ \sim_{\mathbf{p^*}} \ \ \begin{tikzpicture}
	\begin{pgfonlayer}{nodelayer}
		\node [style=none] (0) at (0, -1.5) {};
		\node [style=none] (1) at (0, -0) {};
		\node [style=epiBox] (2) at (0, -0) {};
		\node [style=none] (3) at (0, 1.5) {};
		\node [style={right label}] (4) at (0, -1.25) {$B$};
		\node [style={right label}] (5) at (0, 1.25) {$B$};
		\node [style=infpoint] (6) at (-1, -0) {$\sigma_d$};
	\end{pgfonlayer}
	\begin{pgfonlayer}{edgelayer}
		\draw [oWire] (0.center) to (1.center);
		\draw [oWire] (3.center) to (2);
		\draw [cWire] (6) to (2);
	\end{pgfonlayer}
\end{tikzpicture}.
\eeq

We see, therefore, that two radically different causal structures (causal connection and causal disconnection)  lead one to make all of the same inferences about the relevant systems. Similar examples can be constructed in $\PI$.
 It is in this sense that the processes in quotiented causal-inferential theories, such as $\widetilde{\FI}$ and $\widetilde{\PI}$,  exhibit a scrambling of causation and inference.

\section{\Crealist representations}  \label{MAINsec:reps}

\subsection{\Crealist representations of operational CI theories}\label{sec:reps}

A \crealist representation of an operational theory is an attempt to provide an underlying realist explanation of the operational statistics. It posits that each system is characterized by an ontic state, which constitutes a complete characterization of its physical attributes and mediates causal influences between the laboratory procedures. 
 Something akin to such a representation in earlier work is the notion of an {\em ontological model of an operational theory}.  
(Strictly speaking, the closest analogue to the notion of an ontological model  in our framework is not the notion of a classical realist representation that we consider in this section, but the variant thereof corresponding to the $\zeta$ map given below in Definition~\ref{repnsthatmaybeNC}.  The notion of ontological modelling is mentioned here only to help the reader broadly situate the notion of a classical realist representation.)\footnote{Note that a classical realist representation of an operational scenario describes both its causal and inferential aspects, hence both ontological and epistemological aspects thereof.  In this sense, it is clear that the term `ontological modelling' would not be ideal for this sort of representation because the term suggests that one is concerned with modelling {\em only} the ontological aspects.  Insofar as the sort of representation that was referred to as `an ontological model of an operational theory' in prior work~\cite{Harrigan} {\em also} described both ontological and epistemological aspects (even if these were scrambled somewhat), the term was not ideal for those representations either.  In retrospect, a better terminology would have been one that signaled that both ontological {\em and} epistemological aspects were being described therein, just as the term `causal-inferential' signals a description incorporating both causal and inferential elements.}

\begin{definition} \label{ontrepndefn}
A {\em \crealist representation} of an operational CI theory $\PI$, by a classical realist CI theory $\FI$, is a diagram-preserving map \colorbox{gray!30}{$\mathbf{\xi}:\PI \to \FI$}, depicted as
\beq
\InputIfFileExists{Diagrams/OntRep1.tikz}{}{\input{./figures/Diagrams/OntRep1.tikz}},
\eeq
satisfying {\bf (i)} the \emph{preservation of predictions},  namely that the diagram 
\beq\label{eq:EmpiricalAdequacy}
\InputIfFileExists{Diagrams/EmpiricalAdequacy.tikz}{}{\input{./figures/Diagrams/EmpiricalAdequacy.tikz}}
\eeq
 commutes,
where the double line between the two copies of $\Inf$ is an extended equals sign, and {\bf(ii)} the {\em preservation of  ignorability}
\beq\label{eq:CausalOntRep}
\InputIfFileExists{Diagrams/OntRep2.tikz}{}{\input{./figures/Diagrams/OntRep2.tikz}}\quad
=\quad
\InputIfFileExists{Diagrams/OntRep3.tikz}{}{\input{./figures/Diagrams/OntRep3.tikz}} \ .
\eeq
We will sometimes refer to the classical realist representation map $\mathbf{\xi}$ as an $\FI$-representation of $\PI$.
\end{definition}

Note that \colorbox{gray!30}{$\mathbf{\xi}:\PI \to \FI$} leaves inferential systems invariant. This can be derived from preservation of predictions, as follows.
 Start with some inferential system $X$ in the top left of the commuting square (Eq.~\eqref{eq:EmpiricalAdequacy}). There are two paths to the bottom right: in one direction, we map via the prediction map $\mathbf{p}$ to the same system $X$ in $\Inf$ and then we map via the equality to the same system $X$ in the other copy of $\Inf$; in the other direction, we map using the \crealist representation $\xi$ to $\xi_X$ in $\FI$ and then map to $\Inf$ via the prediction map $\mathbf{p^*}$ which leaves us with $\xi_X$. 
We then see that the only way that the diagram can commute is if $\xi_X=X$.

The fact that we take a \crealist representation to be diagram-preserving is an immediate consequence of our choice to take diagrams in an operational CI theory to represent one's hypothesis about the fundamental causal and inferential structure in the given scenario. Since an ontological representation is meant to be the most fundamental description of one's scenario, it should respect this hypothesis, with the only difference being that it will generally be a more fine-grained description (e.g., where laboratory procedures are replaced by functional dynamics).
We will leave to Appendix~\ref{manifestorcausal2} the reason behind our choice to have operational CI diagrams represent fundamental structure, and we also show therein that this choice does not limit the scope of possible classical realist representations in our framework.

A particularly natural class of classical realist representations are those which can be thought of simply as representing every state of knowledge about a procedure by a corresponding state of knowledge about the function that underlies it.\footnote{One might wonder whether this is sufficiently general given that for a procedure mapping system $A$ to system $B$, the variable $\Lambda_B$ might not be a function of $\Lambda_A$ alone but of $\Lambda_A$ together with some local auxiliary variable $\Lambda$ (whose value is drawn from some probability distribution).  Such worries are unfounded, however, since every value of $\Lambda$ defines a function from $\Lambda_A$ to $\Lambda_B$, and a probability distribution over this value induces a probability distribution over the latter function.}
Diagrammatically, these are represented as
\beq \label{natontrepn}
\InputIfFileExists{Diagrams/OntRep1.tikz}{}{\input{./figures/Diagrams/OntRep1.tikz}}\quad =\quad %
\InputIfFileExists{Diagrams/OntRep4.tikz}{}{\input{./figures/Diagrams/OntRep4.tikz}},
\eeq
where $\Xi_A^B$ is stochastic and satisfies a set of compositionality constraints in order for $\xi$ to be diagram-preserving, e.g.,
\beq
\InputIfFileExists{Diagrams/XiConstraints.tikz}{}{\input{./figures/Diagrams/XiConstraints.tikz}}\ = %
\InputIfFileExists{Diagrams/XiConstraints2.tikz}{}{\input{./figures/Diagrams/XiConstraints2.tikz}}.
\eeq
This class of classical realist representations is so natural, in fact, that one might even wish to demand that a classical realist representation be {\em defined} by such a constraint, although we have not done so here.

 Whether it is part of the definition or not, it is often sufficient to focus on this class alone because it turns out that every \crealist representation is inferentially equivalent to one in this class.

\begin{theorem}\label{thm:OntRepNF}
 Any \crealist representation $\xi$ satisfies
\beq
\InputIfFileExists{Diagrams/OntRep1.tikz}{}{\input{./figures/Diagrams/OntRep1.tikz}}\quad \sim_{{\bf p}^*}\quad %
\InputIfFileExists{Diagrams/OntRep4.tikz}{}{\input{./figures/Diagrams/OntRep4.tikz}} \ ,
\eeq
where $\Xi_A^B$ is a stochastic map taking states of knowledge about operational procedures to states of knowledge about functional dynamics.
\end{theorem}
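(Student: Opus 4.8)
The plan is to exploit that a \crealist representation $\xi\colon\PI\to\FI$ is diagram-preserving, and hence fixed by its action on the generators of $\PI$ (the first generator of Eq.~\eqref{keygen1}, the propositional generator of Eq.~\eqref{genattachprop}, the ignoring generator, and the inclusion $\mathbf{i}$ of $\Inf$), and to build the natural representation $\xi'$ of the form Eq.~\eqref{natontrepn} generator by generator. Since the inferential-equivalence relation $\sim_{\mathbf{p}^*}$ is a process-theory congruence on $\FI$ (the $\FI$-analogue of Lemma~\ref{congrelop}) and both $\xi$ and $\xi'$ are diagram-preserving, it suffices to show $\xi(g)\sim_{\mathbf{p}^*}\xi'(g)$ on each generator $g$: writing an arbitrary diagram $\mathcal{D}$ as a composite of generators, $\xi(\mathcal{D})$ and $\xi'(\mathcal{D})$ are the corresponding composites of the $\xi(g)$ and $\xi'(g)$, so congruence propagates the equivalence to all of $\PI$. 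Throughout, I would verify equivalence of two processes in $\FI$ by means of Lemma~\ref{prfinfstochastic}, i.e. by checking that they are associated with the same substochastic map.

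First I would define the maps $\Xi_A^B$. Recall that $\xi$ leaves inferential systems invariant, so $\xi\big(G_1^{\op{A},\op{B}}\big)$ is a process in $\FI$ with inferential input $\morph{\op{A}}{\op{B}}$ and causal type $\Lambda_A\to\Lambda_B$, where $\Lambda_A:=\xi_{\op A}$ and $\Lambda_B:=\xi_{\op B}$ are ontic state spaces. Applying the normal-form theorem (Theorem~\ref{thmnormalform}) to this process rewrites it, within $\FI$, as the $\FI$ first generator $\mathbf{e'}$ precomposed on its inferential leg with a substochastic map $\Xi_A^B\colon\morph{\op{A}}{\op{B}}\to\morph{\Lambda_A}{\Lambda_B}$; this is exactly the form appearing on the RHS of Eq.~\eqref{natontrepn}, so $\xi(G_1^{\op A,\op B})=\xi'(G_1^{\op A,\op B})$ once $\Xi_A^B$ is taken as the definition of $\xi'$ on first generators, and in particular $\xi(G_1^{\op A,\op B})\sim_{\mathbf{p}^*}\xi'(G_1^{\op A,\op B})$. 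To see that $\Xi_A^B$ is \emph{stochastic} rather than merely substochastic, I would close up the diagram: compose a normalized state of knowledge about a preparation of $\op A$ with the first generator and then ignore $\op B$. By ignorability (Eq.~\eqref{constraint6}) this closed diagram has prediction $1$, and preservation of predictions forces the same value for its $\xi$-image; carrying out the $\FI$ ignorability rewrite on that image then shows that the mass of $\Xi_A^B(\sigma)$ equals that of $\sigma$, i.e. $\Xi_A^B$ preserves normalization.

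The remaining generators are controlled by the two extra hypotheses on $\xi$. For the inclusion $\mathbf{i}(s)$ of an inferential process $s\in\Inf$, the image $\xi(\mathbf{i}(s))$ is causally closed, and preservation of predictions gives $\mathbf{p}^*\big(\xi(\mathbf{i}(s))\big)=\mathbf{p}(\mathbf{i}(s))=s=\mathbf{p}^*(\mathbf{i'}(s))$; by the causally-closed case of inferential equivalence this yields $\xi(\mathbf{i}(s))\sim_{\mathbf{p}^*}\mathbf{i'}(s)=\xi'(\mathbf{i}(s))$, using $\mathbf{p}^*\circ\mathbf{i'}=\mathds{1}_{\Inf}$. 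For the ignoring generator, preservation of ignorability, Eq.~\eqref{eq:CausalOntRep}, states directly that $\xi$ sends ignoring to ignoring, matching the canonical action of $\xi'$. For the propositional generator $G_2^{\op X}$ on a classical system $\op X$, I would again invoke Lemma~\ref{prfinfstochastic}: preservation of predictions fixes the probabilities assigned to every proposition about $\op X$, and comparing the substochastic map associated with $\xi(G_2^{\op X})$ to that of the canonical proposition-attachment used by $\xi'$ shows that they coincide.

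The main obstacle is precisely this last step, the propositional generator together with the exact extraction of $\Xi$ from the normal form. A priori $\xi$ need not send the classical causal system $\op X$ to the set $X$ itself, so one must check that, after closing with arbitrary testers and applying $\mathbf{p}^*$, the way $\xi$ reads out and attaches propositions to $\xi_{\op X}$ induces the same substochastic data as attaching the corresponding propositions to $\op X$ in the canonical representation. This is where prediction preservation, the structure of $\FI$ (in particular the propagation axiom Eq.~\eqref{Axiom:PropKnowGenerators} underlying the normal form), and Lemma~\ref{prfinfstochastic} must be combined carefully. Once the generator-wise equivalences are established, congruence of $\sim_{\mathbf{p}^*}$ delivers the claimed equivalence $\xi\sim_{\mathbf{p}^*}\xi'$ for an arbitrary diagram $\mathcal{D}$.
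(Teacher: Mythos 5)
Your overall strategy is sound in outline, but it both overshoots and undershoots the actual statement. The theorem only concerns the image under $\xi$ of the first generator for a pair $(\op{A},\op{B})$, so the congruence machinery and the generator-by-generator treatment of the inclusion, ignoring, and propositional generators is not needed here; correspondingly, what you flag as the ``main obstacle''---the propositional generator on classical systems---does not enter this theorem at all. The real difficulty sits precisely in the step you pass over in one clause: the claim that applying the normal form of Theorem~\ref{thmnormalform} to $\xi\big(G_1^{\op{A},\op{B}}\big)$ yields ``exactly the form appearing on the RHS of Eq.~\eqref{natontrepn}.'' It does not. That normal form terminates every causal \emph{input} by attaching a propositional generator and feeding the resulting inferential wire into a single substochastic map; it does not pass the causal input $\Lambda_A$ untouched into the $\FI$ first generator. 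So what the normal form hands you is a substochastic map of type $\morph{\op{A}}{\op{B}}\times\Lambda_A\to(\cdots)$ together with a propositional effect acting on the causal input, not a map $\morph{\op{A}}{\op{B}}\to\morph{\Lambda_A}{\Lambda_B}$ precomposed on the inferential leg. Bridging these two shapes is the substantive content of the theorem.

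The paper closes this gap as follows: it first uses preservation of ignorability (Eq.~\eqref{eq:CausalOntRep}) together with diagram preservation and Eq.~\eqref{constraint6} to derive a constraint on $\xi(G_1^{\op{A},\op{B}})$ composed with the ignoring map; it then invokes the \emph{quotiented} normal form, Theorem~\ref{thm:QNF} (not Theorem~\ref{thmnormalform}), writing $\xi(G_1^{\op{A},\op{B}})\sim_{\mathbf{p}^*}$ a diagram built from a stochastic map $\Sigma$ and a propositional map $\Pi$; and it then applies Lemma~\ref{lem:CausalProposition} to the ignorability constraint to conclude that the propositional part factorizes as $\chi_\Pi\otimes\top$, i.e.\ that the effective propositional effect on the causal input is the tautology (no post-selection on $\Lambda_A$ occurs), and simultaneously that the residual inferential map $\Xi_A^B$ is stochastic. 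Your use of ignorability addresses only the second of these points (normalization of $\Xi_A^B$), not the first (the shape of the diagram), and without the first the decomposition of Eq.~\eqref{natontrepn} is not established. If you replace the one-line appeal to Theorem~\ref{thmnormalform} with this three-step argument, the first-generator case goes through, and that is all the theorem asks for.
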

\proof
The proof is given in Appendix~\ref{finalproof}.
\endproof

Nonetheless, it is not clear whether or not all \crealist representations, as defined in Definition~\ref{ontrepndefn}, are of the form of Eq.~\eqref{natontrepn}. For example, \crealist representations of the form
\beq
%
\InputIfFileExists{Diagrams/OntRep1.tikz}{}{\input{./figures/Diagrams/OntRep1.tikz}}\ = \ %
\InputIfFileExists{Diagrams/weirdOntModel3.tikz}{}{\input{./figures/Diagrams/weirdOntModel3.tikz}}
\eeq
may be consistent with Definition~\ref{ontrepndefn}, and appear to be more general than those of the form of Eq.~\eqref{natontrepn}.
 Such models, however, seem to fail to satisfy an assumption
 of autonomy---that the fundamental dynamics are independent of their inputs---and this may be grounds for dismissing them as candidates for a \crealist representation.
   It remains to be seen whether these can indeed be ruled out from our definition, or ruled out as a consequence of some formal notion of autonomy (which one might consider adding to Definition~\ref{ontrepndefn}).

 The question of the existence of a classical realist representation of an operational CI theory is closely connected to the pre-existing question of whether a given operational theory violates Bell-like inequalities.  We explore the connection in Section~\ref{causcompatineq}.

\subsection{\Crealist representations of quotiented operational CI theories}

It is also useful to define \crealist representations of quotiented operational CI theories.
\begin{definition} \label{quotontrepndefn}
 A {\em \crealist representation} of a quotiented operational CI theory, $\widetilde{\PI}$, by a quotiented \crealist CI theory, $\widetilde{\FI}$, is a diagram-preserving map \colorbox{gray!60}{$\mathbf{\widetilde{\xi}}:\widetilde{\PI} \to \widetilde{\FI}$}, depicted as 
\beq
\InputIfFileExists{Diagrams/QOntRep1.tikz}{}{\input{./figures/Diagrams/QOntRep1.tikz}} \ ,
\eeq
satisfying {\bf (i)} the \emph{preservation of predictions}, namely that the diagram
\beq\label{eq:EmpiricalAdequacy2}
\InputIfFileExists{Diagrams/EmpiricalAdequacy2.tikz}{}{\input{./figures/Diagrams/EmpiricalAdequacy2.tikz}},
\eeq
commutes,
where the double line between the two copies of $\Inf$ is an extended equals sign, and {\bf (ii)} the {\em preservation of  ignorability}
\beq\label{eq:CausalOntRep2}
\InputIfFileExists{Diagrams/QOntRep2.tikz}{}{\input{./figures/Diagrams/QOntRep2.tikz}}\quad
=\quad
\InputIfFileExists{Diagrams/QuotOntRep3.tikz}{}{\input{./figures/Diagrams/QuotOntRep3.tikz}} \ \ .
\eeq
 We will sometime refer to the classical realist representation $\mathbf{\widetilde{\xi}}:\widetilde{\PI} \to \widetilde{\FI}$ as an $\widetilde{\FI}$-representation of $\widetilde{\PI}$. 
\end{definition}

Representations of this sort are analogous to the simplex embedding maps introduced in Ref.~\cite{SchmidGPT} in the context of prepare-measure scenarios. (These mapped the states and effects of the GPT to probability distributions and response functions over the ontic state space in an ontological theory.) 

\begin{proposition}
The \crealist representation map $\widetilde{\xi}$ can be written as
\beq
\InputIfFileExists{Diagrams/QOntRep1New.tikz}{}{\input{./figures/Diagrams/QOntRep1New.tikz}}\ =  %
\InputIfFileExists{Diagrams/QOntRep4New.tikz}{}{\input{./figures/Diagrams/QOntRep4New.tikz}} \ ,
\eeq
where $\Xi_A^B$ is a stochastic map taking states of knowledge about operational procedures to states of knowledge about functional dynamics.
\end{proposition}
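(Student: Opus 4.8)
The plan is to reduce the claim to the unquotiented result, Theorem~\ref{thm:OntRepNF}, by exploiting the fact that passing to $\widetilde{\FI}$ turns the inferential equivalence $\sim_{\mathbf{p}^*}$ appearing there into a genuine equality. First I would establish that the given quotiented representation $\widetilde{\xi}:\widetilde{\PI}\to\widetilde{\FI}$ is induced by an unquotiented \crealist representation $\xi:\PI\to\FI$, in the sense that $\widetilde{\xi}\circ{\sim_{\mathbf{p}}}={\sim_{\mathbf{p}^*}}\circ\xi$. Concretely, for each process $\mathcal{D}$ of $\PI$ one selects a representative in $\FI$ of the equivalence class $\widetilde{\xi}(\widetilde{\mathcal{D}})$, using the associated substochastic map of Lemma~\ref{prfinfstochastic} (which is a complete invariant of an inferential-equivalence class) to pin down the choice. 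That the resulting $\xi$ is diagram-preserving follows because composites of normal forms again have normal forms (Theorem~\ref{thmnormalform}), and that $\xi$ inherits preservation of predictions and of ignorability follows by chasing the two commuting squares of Definition~\ref{quotontrepndefn} through the quotient maps $\sim_{\mathbf{p}}$ and $\sim_{\mathbf{p}^*}$ together with the defining relations of $\tilde{\mathbf{p}}$ and $\tilde{\mathbf{p}^*}$ (so that $\mathbf{p}=\tilde{\mathbf{p}}\circ{\sim_{\mathbf{p}}}$ and likewise for the starred maps).

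With such a $\xi$ in hand, I would apply Theorem~\ref{thm:OntRepNF} to obtain $\xi\sim_{\mathbf{p}^*}\Xi$, where $\Xi$ denotes the natural $\Xi_A^B$-form of Eq.~\eqref{natontrepn} and $\Xi_A^B$ is the stochastic map sending states of knowledge about procedures $\op{A}\to\op{B}$ to states of knowledge about functions $\Lambda_A\to\Lambda_B$. Post-composing both sides with the quotienting map $\sim_{\mathbf{p}^*}:\FI\to\widetilde{\FI}$, and using that $\sim_{\mathbf{p}^*}$ sends any two inferentially equivalent processes to the same process of $\widetilde{\FI}$, the relation $\xi\sim_{\mathbf{p}^*}\Xi$ upgrades to the strict equality ${\sim_{\mathbf{p}^*}}\circ\xi={\sim_{\mathbf{p}^*}}\circ\Xi$ in $\widetilde{\FI}$. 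Combining this with the lifting identity yields $\widetilde{\xi}\circ{\sim_{\mathbf{p}}}={\sim_{\mathbf{p}^*}}\circ\Xi$ as maps out of $\PI$; since the quotient map $\sim_{\mathbf{p}}$ is surjective on processes, $\widetilde{\xi}$ is forced to equal the $\Xi_A^B$-form of the displayed equation, which is the desired conclusion.

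The main obstacle is the lifting step, and in particular making the choice of representatives diagram-preserving without invoking the uniqueness of the normal form, which is only conjectured after Theorem~\ref{thmnormalform}. I would sidestep this by not lifting $\widetilde{\xi}$ wholesale but instead defining $\Xi_A^B$ directly: set $\Xi_A^B([t])$, for a delta-function state of knowledge $[t]$ about a procedure $t:\op{A}\to\op{B}$, to be the substochastic data (unique by Lemma~\ref{prfinfstochastic}) of the class $\widetilde{\xi}(\tilde{\mathbf{e}}(t))$, and extend by convex-linearity. One then checks that the $\Xi_A^B$-form agrees with $\widetilde{\xi}$ on the three generators of $\widetilde{\PI}$ and on the image of $\tilde{\mathbf{i}}$, whence---both maps being diagram-preserving---they agree on all of $\widetilde{\PI}$. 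The compositionality constraints on $\Xi_A^B$ needed for the $\Xi$-form to be diagram-preserving (as in the display following Eq.~\eqref{natontrepn}) are then exactly the images under quotienting of the constraints \eqref{constraint1FIXED} and \eqref{constraint2} that $\widetilde{\xi}$ already respects, so no separate verification of those is required.
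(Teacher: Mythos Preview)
Your initial strategy of lifting $\widetilde{\xi}$ to an unquotiented $\xi:\PI\to\FI$ making the square commute is precisely the content of Conjecture~\ref{conjecture1}, which the paper explicitly leaves open. You recognise this difficulty, but your stated obstacle (non-uniqueness of the normal form) is not quite the point: the real issue is that constructing a diagram-preserving $\xi$ into the \emph{unquotiented} $\FI$ from data that only determines things up to $\sim_{\mathbf{p}^*}$ is exactly the hard direction the paper does not know how to prove.

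Your workaround also has a gap. You propose to set $\Xi_A^B([t])$ equal to ``the substochastic data'' of $\widetilde{\xi}(\tilde{\mathbf{e}}(t))$, but that data is a stochastic map $\Lambda_A\to\Lambda_B$, whereas $\Xi_A^B([t])$ must be a probability distribution over $\morph{\Lambda_A}{\Lambda_B}$. Many distinct such distributions yield the same stochastic map, so your definition is underspecified; one must pick representatives and then argue that, in the quotient, the resulting $\Xi$-form coincides with $\widetilde{\xi}$ on the first generator. Carrying that out forces you back into the normal-form analysis anyway.

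The paper's route is much more direct and avoids both issues: it simply reruns the proof of Theorem~\ref{thm:OntRepNF} \emph{inside} $\widetilde{\FI}$. Because Corollary~\ref{normalformFI} gives a normal form that is an equality (not merely $\sim_{\mathbf{p}^*}$), and because Lemma~\ref{lem:CausalProposition} is stated for equalities as well as for inferential equivalences, every step of the Appendix~\ref{finalproof} argument goes through with ``$=$'' in place of ``$\sim_{\mathbf{p}^*}$''. No lifting, no separate construction of $\Xi_A^B$ from invariants, no generator-by-generator check is needed; $\Xi_A^B$ emerges from the normal-form decomposition exactly as in the unquotiented case.
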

\proof
The proof is a direct adaptation of the proof of Theorem~\ref{thm:OntRepNF}, but where the starting point, Eq.~\eqref{normalformtoontrepn}, is modified
by replacing the inferential equivalence with equality and using the normal form for $\widetilde{\FI}$ as given by Corollary~\ref{normalformFI} , 
and where one uses the form of Lemma~\ref{lem:CausalProposition} which involves equality rather than inferential equivalence.
\endproof

\subsection{Leibnizianity (formalized)} \label{ncsec}

 A natural methodological principle to impose on candidate realist explanations of operational facts is the following~\cite{SpekLeibniz19}:
\begin{quote}
If an ontological theory implies the existence of two scenarios that are empirically indistinguishable in principle but ontologically distinct (where both indistinguishability and distinctness are evaluated by the lights of the theory in question), then the ontological theory should be rejected and replaced with one relative to which the two scenarios are ontologically identical.
\end{quote}
In Ref.~\cite{SpekLeibniz19}, it is argued that this methodological principle was proposed by Leibniz as a version of his principle of the identity of indiscernibles and that it was strongly endorsed (at least implicitly) by Einstein.
We shall refer to it here as {\em Leibniz's methodological principle.}

 Although this statement of Leibniz's principle has been argued to motivate the standard definition of generalized noncontextuality, it is necessary to consider an epistemological {\em generalization} of Leibniz's principle in order to motivate generalized noncontextuality in the rehabilitated form that we will endorse below.  
Specifically, instead of concerning pairs of scenarios that are empirically indistinguishable, the generalized principle concerns pairs of processes in a causal-inferential theory (such as states of knowledge) that are equivalent in the sense of allowing an agent to make precisely the same inferences.  
 We formalize the new version of the principle as a constraint on realist representations, which we term {\em Leibnizianity.}

\begin{definition}[Leibnizianity of a classical realist representation]\label{Leibnizianity}
A \crealist representation map \colorbox{gray!30}{$\mathbf{\xi}:\PI \to \FI$} is said to be {\em Leibnizian} if it preserves inferential equivalence relations. Otherwise, it is said to be {\em nonLeibnizian}. 
\end{definition}

\noindent More formally, a \crealist representation map $\xi$ is Leibnizian if, for any pair of inferentially equivalent processes $\mathcal{D},\mathcal{E}\in \PI$, one has
\beq\label{NCdiag1}
\InputIfFileExists{Diagrams/contextuality1.tikz}{}{\input{./figures/Diagrams/contextuality1.tikz}}\sim_\mathbf{p}%
\InputIfFileExists{Diagrams/contextuality2.tikz}{}{\input{./figures/Diagrams/contextuality2.tikz}} \ \ \implies \ \ %
\InputIfFileExists{Diagrams/contextuality4.tikz}{}{\input{./figures/Diagrams/contextuality4.tikz}}\sim_{\mathbf{p}^*}%
\InputIfFileExists{Diagrams/contextuality3.tikz}{}{\input{./figures/Diagrams/contextuality3.tikz}}.
\eeq

This means that if two processes lead one to make the same inferences when embedded into any diagram within the operational CI theory, then their representations within the \crealist CI theory must be such that they lead one to make all the same inferences when embedded into any diagram within the \crealist CI theory.
 As an example of an application of the principle, Leibnizianity stipulates that 
 {\em inferentially equivalent states of knowledge about experimental procedures must be represented by inferentially equivalent states of knowledge about functional dynamics. }

It is straightforward to verify from the definitions that an equivalent (process-theoretic) characterization of Leibnizianity is the following.
\begin{proposition} \label{equivcharLeib}
A classical realist representation $\xi:~\PI~\to~\FI$ of the unquotiented operational CI theory, is Leibnizian if and only if there exists a \crealist representation $\widetilde{\xi}:~\widetilde{\PI}~\to~\widetilde{\FI}$ of the quotiented operational CI theory, such that the following diagram commutes:
\beq\label{ncdiagcomm}
\begin{tikzpicture}
	\begin{pgfonlayer}{nodelayer}
		\node [style=none] (0) at (-1, 1.25) {$\PI$};
		\node [style=none] (2) at (1, 3) {$\widetilde{\PI}$};
		\node [style=none] (4) at (-1, -2) {$\FI$};
		\node [style=none] (5) at (1.25, 0) {$\widetilde{\FI}$};
		\node [style=none] (6) at (-1, 1.25) {};
		\node [style=none] (7) at (-1, -1.5) {};
		\node [style=none] (8) at (1.25, 0.5) {};
		\node [style=none] (9) at (1.25, 2.75) {};
		\node [style=none] (18) at (-0.75, 1.75) {};
		\node [style=none] (19) at (0.75, 2.75) {};
		\node [style=none] (20) at (0.75, -0.5) {};
		\node [style=none] (21) at (-0.75, -1.5) {};
		\node [style=none] (22) at (0.5, 2.75) {};
		\node [style=none] (23) at (0.5, -0.5) {};
		\node [style=none] (24) at (-1, -1.5) {};
		\node [style=none] (26) at (1, 2.75) {};
		\node [style=none] (27) at (1, -0.5) {};
		\node [style=right label] (33) at (-1, 0) {$\xi$};
		\node [style=right label] (34) at (1.25, 1.5) {$\widetilde{\xi}$};
		\node [style=up label] (35) at (0, 2.25) {$\sim_{\mathbf{p}}$};
		\node [style=up label] (36) at (0, -1) {$\sim_{\mathbf{p}^*}$};
	\end{pgfonlayer}
	\begin{pgfonlayer}{edgelayer}
		\draw [style=arrow plain] (6.center) to (7.center);
		\draw [style=arrow plain] (9.center) to (8.center);
		\draw [style=arrow plain] (18.center) to (19.center);
		\draw [style=arrow plain] (21.center) to (20.center);
	\end{pgfonlayer}
\end{tikzpicture} \ .
\eeq
\end{proposition}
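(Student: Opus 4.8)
The plan is to recognize this biconditional as an instance of the universal property of a quotient. The quotienting map $\sim_\mathbf{p}:\PI\to\widetilde{\PI}$ is surjective on processes and identifies precisely the $\sim_\mathbf{p}$-equivalent ones, so specifying a diagram-preserving map out of $\widetilde{\PI}$ is exactly the same data as specifying a diagram-preserving map out of $\PI$ that is constant on $\sim_\mathbf{p}$-equivalence classes. Both directions of the proof will pivot on this single observation, with Leibnizianity playing the role of the hypothesis that makes $\sim_{\mathbf{p}^*}\circ\xi$ constant on classes.

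For the ($\Leftarrow$) direction I would assume $\widetilde{\xi}$ exists with $\widetilde{\xi}\circ\sim_\mathbf{p}=\sim_{\mathbf{p}^*}\circ\xi$ and simply chase the square. Given any $\mathcal{D},\mathcal{E}\in\PI$ with $\mathcal{D}\sim_\mathbf{p}\mathcal{E}$, the two are sent to the common equivalence class, so $\sim_\mathbf{p}(\mathcal{D})=\sim_\mathbf{p}(\mathcal{E})$; applying $\widetilde{\xi}$ and then using commutativity gives $\sim_{\mathbf{p}^*}(\xi(\mathcal{D}))=\sim_{\mathbf{p}^*}(\xi(\mathcal{E}))$, which is exactly $\xi(\mathcal{D})\sim_{\mathbf{p}^*}\xi(\mathcal{E})$. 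Hence $\xi$ preserves inferential equivalences, i.e. it is Leibnizian. This direction is essentially immediate.

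For the ($\Rightarrow$) direction I would define $\widetilde{\xi}$ on an equivalence class $\widetilde{\mathcal{D}}$ by choosing any representative $\mathcal{D}$ and setting $\widetilde{\xi}(\widetilde{\mathcal{D}}):=\sim_{\mathbf{p}^*}(\xi(\mathcal{D}))$, i.e. the $\sim_{\mathbf{p}^*}$-class of $\xi(\mathcal{D})$. Well-definedness (independence of the representative) is precisely the content of Leibnizianity, via the formal statement in Eq.~\eqref{NCdiag1}: if $\mathcal{D}\sim_\mathbf{p}\mathcal{E}$ then $\xi(\mathcal{D})\sim_{\mathbf{p}^*}\xi(\mathcal{E})$, so the two candidate images coincide. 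The square then commutes by construction. It remains to check that $\widetilde{\xi}$ is a genuine \crealist representation in the sense of Definition~\ref{quotontrepndefn}, which means verifying three structural requirements: that $\widetilde{\xi}$ is diagram-preserving, that it preserves predictions, and that it preserves ignorability.

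Each of these I would establish by a diagram chase through the commuting relations, using that the quotiented prediction maps are defined on representatives, so that $\tilde{\mathbf{p}}\circ\sim_\mathbf{p}=\mathbf{p}$ and $\tilde{\mathbf{p}^*}\circ\sim_{\mathbf{p}^*}=\mathbf{p}^*$. Diagram preservation of $\widetilde{\xi}$ follows because it is the factorization of the diagram-preserving map $\sim_{\mathbf{p}^*}\circ\xi$ through the surjective diagram-preserving map $\sim_\mathbf{p}$; preservation of predictions for $\widetilde{\xi}$ reduces, after precomposing with the surjection $\sim_\mathbf{p}$, to the preservation of predictions for $\xi$ of Eq.~\eqref{eq:EmpiricalAdequacy}; and preservation of ignorability reduces similarly to Eq.~\eqref{eq:CausalOntRep}. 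I expect the main obstacle to be purely bookkeeping in this last step—confirming that the two quotient structures and the four prediction maps $\mathbf{p},\mathbf{p}^*,\tilde{\mathbf{p}},\tilde{\mathbf{p}^*}$ interlock so that the quotiented analogues of conditions (i) and (ii) hold—rather than any conceptual difficulty, since the conceptual heart of the equivalence is the single well-definedness step powered by Leibnizianity.
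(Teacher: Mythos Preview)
Your proposal is correct and follows the natural quotient-universal-property argument. The paper itself does not give a detailed proof of this proposition; it simply asserts, immediately before the statement, that ``it is straightforward to verify from the definitions,'' so your write-up is in fact more detailed than what the paper provides, and the approach you describe is the one the authors evidently had in mind.
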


We will ultimately be interested in contemplating the possibility of realist CI theories that are nonclassical alternatives to $\FI$ (see Sec.~\ref{outofnc}), so we will find it useful to sometimes refer to a representation in terms of $\FI$ as simply an $\FI$-representation. A given operational CI theory may admit of both Leibnizian and nonLeibnizian $\FI$-representations.  It will be termed {\em $\FI$-Leibnizian-representable} if it admits of at least {\em one} Leibnizian $\FI$-representation.

We have just described the implication $\exists\ \text{Leibnizian } \xi \implies  \exists \ \widetilde{\xi} $.
Whether the implication $\exists\ \text{Leibnizian } \xi \impliedby \exists\ \widetilde{\xi}$ holds remains an open question, but we conjecture that it does: 
\begin{conj}\label{conjecture1}
If a quotiented operational CI theory admits of a \crealist representation  (as a quotiented \crealist CI theory), then the unquotiented operational CI theory admits of a Leibnizian \crealist representation (as an unquotiented \crealist CI theory).  More formally, if there exists  a map \colorbox{gray!60}{$\mathbf{\widetilde{\xi}}:\widetilde{\PI} \to \widetilde{\FI}$} satisfying Definition~\ref{quotontrepndefn}, then there exists a map \colorbox{gray!30}{$\mathbf{\xi}:\PI \to \FI$} satisfying Definition~\ref{ontrepndefn} and which makes the diagram of Eq.~\eqref{ncdiagcomm} commute.
\end{conj}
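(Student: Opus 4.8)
The forward implication $\exists\,\text{Leibnizian }\xi \implies \exists\,\widetilde{\xi}$ is already established by Proposition~\ref{equivcharLeib}, so the task is the reverse lifting: given $\widetilde{\xi}:\widetilde{\PI}\to\widetilde{\FI}$ satisfying Definition~\ref{quotontrepndefn}, produce a Leibnizian $\xi:\PI\to\FI$ satisfying Definition~\ref{ontrepndefn} and making the square of Eq.~\eqref{ncdiagcomm} commute. The plan is to reuse the stochastic data that already presents $\widetilde{\xi}$. By the Proposition immediately following Definition~\ref{quotontrepndefn}, $\widetilde{\xi}$ can be written in terms of a family of stochastic maps $\Xi_A^B$ sending states of knowledge about operational procedures from $\op{A}$ to $\op{B}$ to states of knowledge about functional dynamics. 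I would take this same family $\Xi_A^B$ and use it to define an unquotiented representation $\xi$ via the natural-class formula of Eq.~\eqref{natontrepn}. Since $\Xi_A^B$ is a process purely between inferential (classical) systems, it is a common process of $\Inf$, $\FI$ and $\widetilde{\FI}$, so this definition makes sense directly at the unquotiented level.

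Granting for the moment that this $\xi$ is a well-defined diagram-preserving map, the remaining conditions follow cleanly. First, the square of Eq.~\eqref{ncdiagcomm}, i.e.\ $\sim_{\mathbf{p}^*}\!\circ\,\xi = \widetilde{\xi}\circ\sim_{\mathbf{p}}$, can be checked on the generators of $\PI$: both composites are diagram-preserving maps $\PI\to\widetilde{\FI}$, and on the first generator evaluated at a point distribution $[t]$ each side yields the equivalence class of the realist generator fed by $\Xi_A^B([t])$, while on the propositional and ignoring generators and on the image of $\mathbf{i}$ both sides act as the same inclusion, so they agree. Given commutativity of the square, Leibnizianity of $\xi$ is then immediate from Proposition~\ref{equivcharLeib}. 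Preservation of predictions for $\xi$ follows by writing $\mathbf{p}^* = \tilde{\mathbf{p}}^*\circ\sim_{\mathbf{p}^*}$ on causally closed processes and computing, for causally closed $\mathcal{D}$, that $\mathbf{p}^*(\xi(\mathcal{D})) = \tilde{\mathbf{p}}^*(\widetilde{\xi}(\widetilde{\mathcal{D}})) = \tilde{\mathbf{p}}(\widetilde{\mathcal{D}}) = \mathbf{p}(\mathcal{D})$, using the commuting square and the prediction-preservation of $\widetilde{\xi}$ (Eq.~\eqref{eq:EmpiricalAdequacy2}); this is exactly the commuting of Eq.~\eqref{eq:EmpiricalAdequacy}.

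The hard part, and the reason the statement is only conjectured, is the parenthetical assumption just made: that $\xi$ defined from $\Xi_A^B$ is genuinely diagram-preserving and strictly preserves ignorability (Eq.~\eqref{eq:CausalOntRep}) as equalities in the unquotiented theory $\FI$. All of the structure inherited from $\widetilde{\xi}$ --- the compositionality constraints on $\Xi_A^B$ and the ignorability condition Eq.~\eqref{eq:CausalOntRep2} --- is guaranteed only up to the inferential-equivalence congruence, i.e.\ as equalities in $\widetilde{\FI}$, whereas diagram-preservation of $\xi$ into $\FI$ and condition (ii) of Definition~\ref{ontrepndefn} demand strict equalities in $\FI$. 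Promoting ``equal up to inferential equivalence'' to ``strictly equal'' is precisely the gap: by Lemma~\ref{prfinfstochastic} two $\FI$ processes are inferentially equivalent iff they carry the same associated substochastic map, so the strict lift exists exactly when each relevant equivalence class contains a canonical representative that is stable under composition. This is in turn controlled by the conjectured uniqueness of the $\FI$ normal form of Theorem~\ref{thmnormalform} (and the propagation identity Eq.~\eqref{Axiom:PropKnowGenerators} underlying Theorem~\ref{thm:QNF}).

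Equivalently, the whole problem can be phrased as lifting the diagram-preserving map $\widetilde{\xi}\circ\sim_{\mathbf{p}}:\PI\to\widetilde{\FI}$ through the quotient $\sim_{\mathbf{p}^*}:\FI\to\widetilde{\FI}$. Because $\PI$ is not free, any such lift must send the rewrite rules of $\PI$ (Eqs.~\eqref{constraint1FIXED}--\eqref{constraint6}) to strict equalities in $\FI$, not merely to inferential equivalences; constructing a diagram-preserving section of $\sim_{\mathbf{p}^*}$ compatible with these relations is the obstacle I expect to dominate. I would attack it by first settling the uniqueness of the normal form and then defining the section as ``take the canonical normal form,'' verifying that canonical forms compose to canonical forms, which is the step where the argument is most likely to succeed or fail.
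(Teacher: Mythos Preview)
The paper does not prove this statement: it is explicitly left as an open conjecture. The text following it marks the reverse implication with a question mark (Eq.~\eqref{variouslemmas1}), the abstract-adjacent discussion says ``The question of whether the reverse implication holds remains open,'' and the future-directions section lists settling the status of Conjecture~\ref{conjecture1} as an open problem. So there is no proof in the paper to compare your proposal against.

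Your proposal is not a proof either, and you are candid about this: you correctly locate the obstruction in lifting equalities that hold only in $\widetilde{\FI}$ (i.e., up to $\sim_{\mathbf{p}^*}$) to strict equalities in $\FI$, which is exactly what diagram-preservation of a map $\xi:\PI\to\FI$ would require on the rewrite rules of $\PI$. Your suggested attack---prove uniqueness of the normal form of Theorem~\ref{thmnormalform} and use the canonical normal form as a section of $\sim_{\mathbf{p}^*}$---is coherent and aligns with the paper's own remark that uniqueness of that normal form is itself only conjectured. The point that the problem reduces to constructing a diagram-preserving section of the quotient $\sim_{\mathbf{p}^*}$ compatible with the relations of $\PI$ is a clean reformulation that the paper does not make explicit. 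In short: your write-up is an honest research outline for an open problem, not a proof, and the paper offers nothing stronger.
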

Hence, we have
\beq \label{variouslemmas1}
\exists\ \text{Leibnizian }\xi\quad \begin{tikzpicture}
	\begin{pgfonlayer}{nodelayer}
		\node [style=none] (0) at (0, 0.5) {$\stackrel{?}{\impliedby}$};
		\node [style=none] (1) at (0, -0.25) {$\implies$};
	\end{pgfonlayer}
\end{tikzpicture}\quad \exists\ \widetilde{\xi}.
\eeq

Note that a \crealist representation $\widetilde{\xi}$ of a {\em quotiented} operational CI theory cannot itself be said to be either Leibnizian or nonLeibnizian. This is because in quotiented CI theories, there are no distinct but inferentially equivalent processes, for which one could ask whether their representations are inferentially equivalent or not.

In Refs.~\cite{SpekLeibniz19,Spe05}, it was argued that it is the plausibility of the Leibnizian methodological principle that accounts for the plausibility of the principle of generalized noncontextuality defined in Ref.~\cite{Spe05}.
Indeed, our notion of {Leibnizianity} turns out to have a close connection to the notion of generalized noncontextuality.
We discuss this connection in Section~\ref{sec:NCRehab}.

\subsection{Summary of the basic framework}

The full set of process theories and DP maps that we have introduced can be summarized by the following diagram (which is color-coded to match the intuitive schematic given further down):
\beq \label{fullfig}
\InputIfFileExists{Diagrams/CompleteDiagramColor.tikz}{}{\input{./figures/Diagrams/CompleteDiagramColor.tikz}} \quad.
\eeq
The top slice of this describes an operational CI theory (in blue) and its quotienting (in purple), while the bottom slice describes a \crealist CI theory (in red) and its quotienting (in magenta).
The {\em unquotiented} \crealist and operational  CI theories are constructed out of their respective causal theory ($\Func$ or $\Proc$) together with the classical theory of inference ($\Inf$), which is common to both.
The map $\xi$ (if it exists) constitutes a representation of an operational CI theory $\PI$ by a \crealist CI theory $\FI$, while the map $\widetilde{\xi}$ (if it exists) constitutes a representation of a quotiented operational CI theory $\widetilde{\PI}$ by a quotiented \crealist CI theory $\widetilde{\FI}$.
If both $\xi$ and $\widetilde{\xi}$ exist and the square Eq.~\eqref{ncdiagcomm} commutes, then we say that $\xi$ is a Leibnizian \crealist representation.

Fig.~\eqref{fullfig} is summarized by the following schematic:

\includegraphics[width=0.499\textwidth]{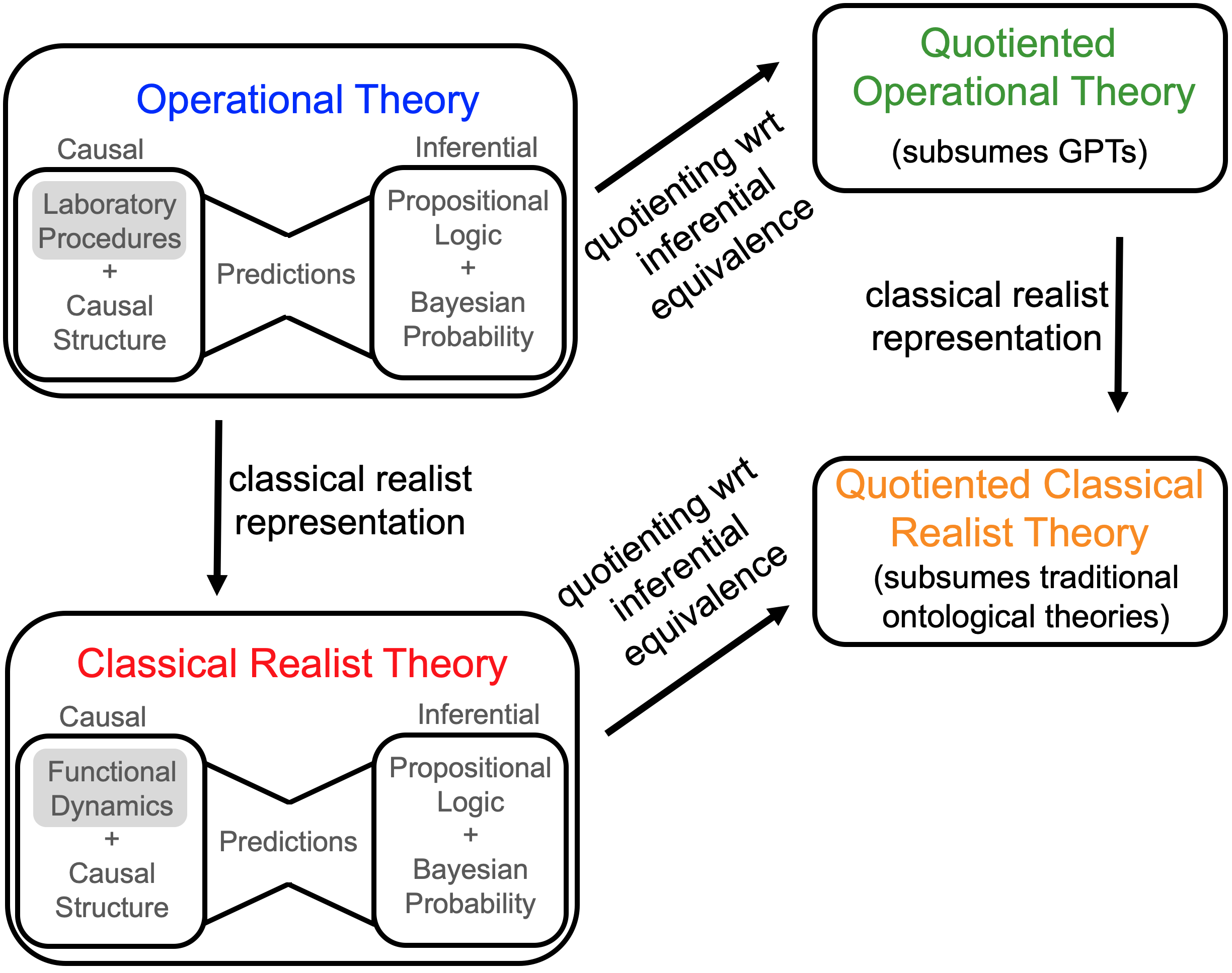}

 The form  of Diagram~\eqref{fullfig} might lead one to wonder about whether there ought to be a diagram-preserving map from $\Proc$ to $\Func$. One could define such a map, but it would state that every procedure could be associated to a unique function acting on the ontic states. As noted in Sec.~\ref{makingomelette}, this is not what we expect of a \crealist model of an operational theory, as laboratory procedures typically constitute a {\em coarse-grained} description, and as such are not associated with a
 unique function, but rather a distribution over these. 

We refer the reader to Appendix~\ref{relatedwork} for a discussion of prior work that is related to (or provided inspiration for) our framework.

\section{Bell-like no-go theorems}

\subsection{Bell-like inequalities as a consequence of assuming the existence of a classical realist representation
} \label{causcompatineq}

Consider a bipartite Bell experiment where $X$ and $Y$ denote the setting variables, and $A$ and $B$ denote the outcome variables.  Suppose one takes the causal structure of a Bell experiment to be given by the following  directed acyclic graph, or DAG,  where the triangle depicts an unobserved common cause:
\beq\label{Belldag}
\InputIfFileExists{Diagrams/bellDAG1.tikz}{}{\input{./figures/Diagrams/bellDAG1.tikz}}.
\eeq
This corresponds to the diagrams
\beq\label{Bellcircuit}
\InputIfFileExists{Diagrams/bellDAG2Rob.tikz}{}{\input{./figures/Diagrams/bellDAG2Rob.tikz}} \quad\text{and}\quad  %
\InputIfFileExists{Diagrams/bellDAG2RobFI.tikz}{}{\input{./figures/Diagrams/bellDAG2RobFI.tikz}},
\eeq
in $\Proc$ and $\Func$ respectively.

 A priori, this assumption about causal structure of the Bell experiment is the natural one.
 It is motivated by the idea that relativity implies no superluminal causation (not just a prohibition on superluminal signals).  We refer to it as the {\em common-cause hypothesis}~\cite{allen2017quantum,Wolfe2020quantifyingbell,2020arXiv200409194S} regarding the causal structure. In Section~\ref{waysoutofBell}, we will consider alternatives to it.

What about the full causal-inferential structure? In $\PI$ and $\FI$ these are
\beq \label{bellDiagr}
\InputIfFileExists{Diagrams/BellDAG3NewRob.tikz}{}{\input{./figures/Diagrams/BellDAG3NewRob.tikz}}\quad\text{and}\quad  %
\InputIfFileExists{Diagrams/BellDAG3NewRob2.tikz}{}{\input{./figures/Diagrams/BellDAG3NewRob2.tikz}},
\eeq
respectively, where we have allowed for arbitrary states of knowledge $\mu_A$, $\mu_B$, $\nu_A$ and $\nu_B$,  and $\sigma$ about the procedures  (respectively $\mu_A'$, $\mu_B'$, $\nu_A'$ and $\nu_B'$, and $\sigma'$ about the functions),
but we have not allowed for any statistical dependencies between the identities of the
procedures (nor, therefore, between the identities of the functions).
For instance, the factorization of $\nu_A$ and $\sigma$ (and hence of $\nu'_A$ and $\sigma'$) is motivated by the implausibility of nature conspiring to ensure that the mechanism that sets the value of the setting variable is related to the mechanism that fixes the value of the common cause. (This is related to superdeterminism, discussed further in the next section.)

We now discuss how the causal-inferential hypotheses embodied in Eq.~\eqref{bellDiagr} constrains the possible observations that can be made within the classical realist theory, as well as those that can be made within a given operational theory.

First, we consider the question of what joint distributions over $X, Y, A$ and $B$ can be generated in the causal-inferential structure of Eq.~\eqref{bellDiagr} within the classical realist CI theory $\FI$. These are given by the diagram
\beq \label{thecorrel}
\InputIfFileExists{Diagrams/BellDAG4NewRob.tikz}{}{\input{./figures/Diagrams/BellDAG4NewRob.tikz}} \ \ ,
\eeq
where one ranges over arbitrary sets $\Lambda$ and $\Lambda'$ and probability distributions $\sigma'$, $\mu_A'$, $\mu_B'$, $\nu_A'$, and $\nu_B'$.
We refer to any distribution that arises in this way as
{\em $\FI$-realizable}.
The constraints that pick out the set of $\FI$-realizable distributions generally come in the form of both equalities and inequalities, and we will term these {\em $\FI$-compatibility constraints}.  In particular, inequality constraints will be termed {\em $\FI$-compatibility inequalities}.

Within our framework, the Bell inequalities (e.g., the Clauser-Horne-Shimony-Holte inequalities for the case where $X,Y,A$ and $B$ are binary) are examples of $\FI$-compatibility inequalities for the causal-inferential structure of Eq.~\eqref{bellDiagr} (which, as noted earlier, is the natural choice for the Bell experiment). 

Next, we turn to the question of what joint distributions over $X, Y, A$ and $B$ can be generated in the causal-inferential structure of Eq.~\eqref{bellDiagr} within an operational CI theory $\PI$. These are given by the diagram
\beq \label{thecorrel2}
\InputIfFileExists{Diagrams/BellDAG4New.tikz}{}{\input{./figures/Diagrams/BellDAG4New.tikz}} \ \ ,
\eeq
where one ranges over arbitrary systems $S$ and $S'$ in $\PI$ and probability distributions $\sigma$, $\mu_A$, $\mu_B$, $\nu_A$, and $\nu_B$.
Note that the set of distributions that can be obtained in this way depends on the prediction map ${\bf p}$ of $\PI$, and so will vary from one operational CI theory to the next.  We refer to any distribution that can arise in this way within an operational CI theory $\PI$ as {\em $\PI$-realizable}.  The constraints that pick out the set of $\PI$-realizable distributions will be termed $\PI$-compatibility constraints.  In particular, any inequality constraints will be termed {\em $\PI$-compatibility inequalities}.

For the case of quantum theory, considered as the operational CI theory $\PQI$ introduced in Section~\ref{QTquaopCI}, the well-known Tsirelson inequalities~\cite{Tsirelson} are examples of $\PQI$-compatibility inequalities for the causal-inferential structure of Eq.~\eqref{bellDiagr}. 

If a classical realist representation as in Theorem~\ref{thm:OntRepNF} exists for a given operational theory $\PI$, it follows that every distribution that is $\PI$-realizable is also $\FI$-realizable.
Thus, for a given operational CI theory $\PI$ to admit of a classical realist representation, it must be the case that the set of $\PI$-realizable distributions for any possible causal-inferential structure is included in the $\FI$-realizable distributions for the same causal-inferential structure.


 For the case of quantum theory, there exist distributions~\cite{Bellreview} that satisfy the Tsirelson inequalities but violate the Bell inequalities, i.e., that are $\PQI$-compatible  but not $\FI$-compatible with the causal-inferential structure of Eq.~\eqref{bellDiagr}.  It follows that the set of $\PQI$-realizable distributions is not included in the set of $\FI$-realizable distributions, and consequently that $\PQI$ does not admit of a classical realist representation.

This is how Bell's theorem is conceptualized in our framework.

When Bell's theorem is conceptualized in this way, it is found to have counterparts in causal
 structures distinct from that of the Bell scenario.
   For instance, recent work has demonstrated that for the triangle scenario~\cite{fritz2012beyond,inflation,triangle,TriangleGisin}, whose DAG and associated diagram in $\Proc$ are
\beq\label{triangledag}
\InputIfFileExists{Diagrams/triangleDag.tikz}{}{\input{./figures/Diagrams/triangleDag.tikz}}\qquad %
\InputIfFileExists{Diagrams/triangleCircuit.tikz}{}{\input{./figures/Diagrams/triangleCircuit.tikz}} \ ,
\eeq
 there is a gap between what is realizable in a causal model where some common causes can be quantum and what is realizable in a causal model where all common causes are classical.
  A similar result has been shown for the instrumental scenario~\cite{Chaves2018,VanHimbeeck2019quantumviolationsin} , whose DAG and associated diagram in $\Proc$ are
\beq\label{instrumentaldag}
\InputIfFileExists{Diagrams/instrumentalDagNew.tikz}{}{\input{./figures/Diagrams/instrumentalDagNew.tikz}}
\qquad
\InputIfFileExists{Diagrams/instrumentalCircuit.tikz}{}{\input{./figures/Diagrams/instrumentalCircuit.tikz}}.
\eeq

These new Bell-like no-go theorems are subsumed in our framework as proofs of the impossibility of a classical realist CI representation
 based on the correlations predicted by quantum theory for these causal scenarios.
The counterpart within our framework to realizability by a {\em quantum} causal model~\cite{henson2014theory,fritz2012beyond,inflation,allen2017quantum,wolfe2019quantum,Barrett2019}
is $\PQI$-realizability,
while the counterpart within our framework to realizability by a {\em classical} causal model  is $\FI$-realizability.  Thus, the counterpart to these no-go theorems is that for each of these causal structures, one can find distributions over the observed variables that are $\PQI$-realizable,
  but not $\FI$-realizable.

\subsection{The conventional ways out of Bell-like no-go theorems}\label{waysoutofBell}
We now describe the standard responses to Bell-like no-go theorems, focusing on the specific case of Bell's theorem (rather than those based on, e.g., the instrumental or triangle scenarios).

A common attitude towards Bell's theorem is that it demonstrates that realism must be abandoned (at least in the quantum sphere) hence vindicating an operationalist philosophy of science. 
As discussed in the introduction, we take the key distinction between a realist and a purely operational account of statistical correlations to be whether or not these accounts provide a causal explanation of the correlations.  
We believe that any philosophy of science which is antirealist in this sense---namely, which denies the possibility of a causal explanation of statistical correlations---is unsatisfactory.  Furthermore, given the possibility (which we will introduce in Section~\ref{ncrealrepns}) for nonclassical realist representations that modify the notions of causation and inference used in one's causal explanation, we are certainly not persuaded by any claim that the standard no-go theorems {\em necessitate} a retreat from realism in the quantum sphere.  For these reasons, we are here interested in reviewing the standard ways of maintaining realism in the face of Bell's theorem, in order to contrast them with the way we propose to do so.

For those unwilling to compromise on realism, the conventional way out
 is  to deny the  natural 
  causal-inferential hypothesis of Eq.~\eqref{bellDiagr}.
 They assume therefore that a {\em radical} causal-inferential hypothesis underpins the correlations observed in a Bell scenario. In this case, the existence of a classical realist representation does not imply satisfaction of the Bell inequalities, and thus violations of Bell inequalities no longer imply a challenge to the possibility of such a representation. We now describe  the two most common positions about the nature of the radical causal hypotheses.

{\bf The hypothesis that a proponent of superluminal causation endorses.}
Those who see superluminal\footnote{Formally describing the distinction between sub- versus superluminality motivates a minor extension of our framework in which systems come equipped with spatiotemporal labels. We leave this for future work.} causation as the way out of Bell's theorem take the causal structure of a Bell experiment to be one that allows for causal influences between the wings, even when these are space-like separated.   As an example, this influence might be between the setting on the left wing and the outcome of the right, a causal hypothesis that is depicted by the following DAG, or equivalently, by the following circuit diagram:
\beq\label{superluminaldag}
\InputIfFileExists{Diagrams/superluminalDag.tikz}{}{\input{./figures/Diagrams/superluminalDag.tikz}}\qquad %
\InputIfFileExists{Diagrams/superluminalCircuit.tikz}{}{\input{./figures/Diagrams/superluminalCircuit.tikz}}.
\eeq

{\bf The hypothesis that a proponent of superdeterminism endorses.}
Those who see superdeterminism as the way out of Bell's theorem take there to be some statistical dependence between a setting variable, say $X$, and the common cause of the outcomes.
 This assumption could be encoded in either the causal or the inferential structure. We here opt to encode it as the assumption that $X$ and the common cause of the outcomes are not causally disconnected.  This assumption can be  depicted by the following DAG, or equivalently, by the following circuit diagram:
\beq\label{superdeterminismdag}
\InputIfFileExists{Diagrams/superdeterminismDag.tikz}{}{\input{./figures/Diagrams/superdeterminismDag.tikz}} \qquad %
\InputIfFileExists{Diagrams/superdeterminismCircuit.tikz}{}{\input{./figures/Diagrams/superdeterminismCircuit.tikz}}.
\eeq

 At this stage, we would like to head off a possible confusion.  Because it is customary within pre-existing operational  frameworks to use the standard quantum circuit of Eq.~\eqref{Bellcircuit} as the diagram representing the Bell experiment, it might seem that a proponent of a radical causal hypothesis must be contemplating a realist representation that fails to be diagram-preserving, and thus it might seem that our framework, which assumes diagram preservation, cannot do justice to their view.
However, as noted previously and elaborated on in Appendix~\ref{manifestorcausal2}, our framework stipulates that the diagram representing a given scenario in the causal subtheory of an operational CI theory is a representation of one's hypothesis about the fundamental causal structure, and hence {\em need not} correspond to the standard quantum circuit of Eq.~\eqref{Bellcircuit}.
As such, researchers with different realist worldviews, faced with the same experimental scenario and observed statistics, might model these {\em differently} within the framework of operational causal-inferential theories.
E.g., the representation of a Bell scenario within a quantum operational CI theory will be constrained by the causal structure of Eq.~\eqref{superluminaldag} by a proponent of superluminal causation, but will be constrained by the causal structure of Eq.~\eqref{superdeterminismdag} by a proponent of superdeterminism. (Furthermore, the manner by which such researchers would formalize quantum theory as an operational CI theory will not be the one 
 described in Section~\ref{QTquaopCI}.)

\subsubsection{Criticisms of the conventional ways out of Bell-like no-go theorems}
\label{criticismsbell}

We now discuss various reasons why we view these conventional responses to Bell-like no-go theorems as  unsatisfactory.

The following are grounds for rejecting the causal-inferential hypothesis of the proponent of superluminal causation:
\begin{itemize}
\item superluminal causal influences are in tension with the spirit of relativity theory (even if these influences are constrained in such a way as to be consistent with the impossibility of superluminal signals)
\item superluminal causal influences which are constrained in such a way as to be consistent with the impossibility of superluminal signals violate the principle of no fine-tuning (defined in Section~\ref{causmode}).~\cite{wood2015lesson}.
\end{itemize}

Meanwhile, the following are grounds for rejecting the causal-inferential hypothesis of the proponent of superdeterminism:
\begin{itemize}
\item for any of the various mechanisms that could determine the value of the setting variables (an agent's free choice, a random number generator, a hash of the day's stock prices, etcetera), it is implausible that it would be related to the mechanism which determines
the common cause of the outcome variables insofar as this would require a kind of conspiracy of causal determinations
\item the fact that any such nontrivial statistical associations between a setting variable and the common cause of the outcome variables must be constrained to be consistent with the observed {\em lack} of any statistical assocation between that setting variable and the outcome variable at the opposite wing of the experiment implies violation of the principle of no fine-tuning~\cite{wood2015lesson}.
\end{itemize}

In fact, in Ref.~\cite{wood2015lesson} it was shown that {\em every} causal hypothesis that can realize the distributions predicted by quantum theory (i.e., Bell-inequality-violating distributions) using a classical causal model
 (i.e., via $\FI$) implies a violation  of the principle of no fine-tuning.

We take these arguments to be good grounds for rejecting the idea of explaining correlations in Bell scenarios via a radical causal-inferential hypothesis together with a classical realist representation.

Even if one rejects the principle of no fine-tuning, standard tools of model selection (which are sensitive not just to underfitting but overfitting as well)
 can adjudicate between various hypotheses regarding the right way to operationally model the Bell experiment, and these also rule against a radical causal hypothesis~\cite{DaleySpekkensToBePublished}.

The no-fine-tuning arguments just given apply equally well to no-go theorems based on causal structures beyond Bell scenarios, e.g., the instrumental and triangle scenarios.  That is, one can attempt to resolve the contradiction in each of these cases by hypothesizing that the causal structure is in fact distinct from that depicted in Eq.~\eqref{triangledag} and Eq.~\eqref{instrumentaldag}. Such resolutions, however, also generally suffer from a fine-tuning objection insofar as the set of distributions that are $\PQI$-realizable for the original causal structure often have measure zero within the set of distributions that are $\FI$-realizable for the radical causal structure, and they will in some instances also require superluminal causation.

Insofar as these conventional ways out of Bell's theorem require radical causal-inferential assumptions with the aforementioned undesirable features, the natural question becomes:
{\em does there remain any recourse for achieving a realist causal-inferential representation of quantum theory without these unappealing features?}  In Section~\ref{ncrealrepns}, we outline a research program for achieving realism {\em while preserving the conservative causal-inferential hypothesis}, by allowing for intrinsically nonclassical notions of influence and inference.

Before coming to this, however, we discuss how no-go theorems based on the principle of noncontextuality appear within our framework.

\section{Noncontextuality no-go theorems}

\subsection{Generalized noncontextuality (rehabilitated)} \label{sec:NCRehab}

The pre-existing notion of generalized noncontextuality~\cite{Spe05} is framed as a constraint on ontological models of operational theories
  and can be summarized as ``{\em operationally equivalent procedures must be represented identically in the ontological theory}''. However, our framework has refined the traditional notions of ontological theories 
   and of operational theories, and the notion of operational equivalence of procedures has been replaced by inferential equivalence of states of knowledge about procedures.  The notion of generalized noncontextuality must therefore be refined accordingly.

Generalized noncontextuality is a principle that constrains {\em a} classical realist representation map, but it is {\em not} the map $\xi$ that we have been focused on so far. Rather, it is a constraint on a map  \colorbox{gray!60}{$\mathbf{\zeta}:\PI \to \widetilde{\FI}$} from the {\em unquotiented} operational CI theory $\PI$ to the {\em quotiented}  \crealist  CI theory $\widetilde{\FI}$. The fact that this is a map across the unquotiented-quotiented divide, means that this specific sort of \crealist representation is less fundamental than the map $\xi$. Nonetheless, it is useful to introduce this map as a formal tool, e.g., in order to make connections to existing literature.
\noindent Paralleling Definitions~\ref{ontrepndefn} and \ref{quotontrepndefn}, we define $\zeta$ as follows:
\begin{definition} \label{repnsthatmaybeNC}
  A classical realist representation of an unquotiented operational CI theory, $\PI$, by a quotiented classical realist CI theory, $\widetilde{\FI}$,
  is a diagram-preserving map \colorbox{gray!45}{$\mathbf{\zeta}:\PI \to \widetilde{\FI}$} 
satisfying {\bf (i)} the \emph{preservation of predictions}, namely that the diagram
\beq\label{eq:EmpiricalAdequacy2}
\InputIfFileExists{Diagrams/EmpiricalAdequacy3.tikz}{}{\input{./figures/Diagrams/EmpiricalAdequacy3.tikz}},
\eeq
commutes, 
where the double line between the two copies of $\Inf$ is an extended equals sign, and {\bf (ii)} the {\em preservation of  ignorability}
\beq\label{eq:CausalOntRep5}
\InputIfFileExists{Diagrams/QOntRep3.tikz}{}{\input{./figures/Diagrams/QOntRep3.tikz}}\quad
=\quad
\InputIfFileExists{Diagrams/QuotOntRep3.tikz}{}{\input{./figures/Diagrams/QuotOntRep3.tikz}}.
\eeq
 We will sometimes refer to the classical realist representation $\zeta:\PI\to\widetilde{\FI}$ as an $\widetilde{\FI}$-representation of $\PI$. 
\end{definition}

Adding the $\zeta$ map to the diagram relating $\PI$, $\FI$,  and their quotiented counterparts yields 
\beq \label{trianglecommutenc}
\begin{tikzpicture}
	\begin{pgfonlayer}{nodelayer}
		\node [style=none] (0) at (-1, 1.25) {$\PI$};
		\node [style=none] (2) at (1, 3) {$\widetilde{\PI}$};
		\node [style=none] (4) at (-1, -2) {$\FI$};
		\node [style=none] (5) at (1.25, 0) {$\widetilde{\FI}$};
		\node [style=none] (6) at (-1, 1.25) {};
		\node [style=none] (7) at (-1, -1.5) {};
		\node [style=none] (8) at (1.25, 0.5) {};
		\node [style=none] (9) at (1.25, 2.75) {};
		\node [style=none] (18) at (-0.75, 1.75) {};
		\node [style=none] (19) at (0.75, 2.75) {};
		\node [style=none] (20) at (0.75, -0.5) {};
		\node [style=none] (21) at (-0.75, -1.5) {};
		\node [style=none] (22) at (0.5, 2.75) {};
		\node [style=none] (23) at (0.5, -0.5) {};
		\node [style=none] (24) at (-1, -1.5) {};
		\node [style=none] (26) at (1, 2.75) {};
		\node [style=none] (27) at (1, -0.5) {};
		\node [style=right label] (33) at (-1, 0) {$\xi$};
		\node [style=right label] (34) at (1.25, 1.5) {$\widetilde{\xi}$};
		\node [style=up label] (35) at (0, 2.25) {$\sim_{\mathbf{p}}$};
		\node [style=up label] (36) at (0, -1) {$\sim_{\mathbf{p}^*}$};
		\node [style=none] (37) at (0.75, 0.25) {};
		\node [style=right label] (38) at (0.25, 0.75) {$\zeta$};
		\node [style=none] (39) at (-0.75, 1) {};
	\end{pgfonlayer}
	\begin{pgfonlayer}{edgelayer}
		\draw [style=arrow plain] (9.center) to (8.center);
		\draw [style=arrow plain] (39.center) to (37.center);
		\draw [style=arrow plain] (18.center) to (19.center);
		\draw [style=arrow plain] (21.center) to (20.center);
		\draw [style=arrow plain] (6.center) to (7.center);
	\end{pgfonlayer}
\end{tikzpicture}
 \quad .
\eeq

 Note that the $\zeta$ map is the closest counterpart in our framework to the pre-existing notion of an {\em ontological model of an operational theory}~\cite{Harrigan}.  

 It follows that the closest counterpart in our framework to the pre-existing notion of a {\em generalized-noncontextual} ontological model  is that of a generalized-noncontextual classical realist representation of this sort.  This can be defined in terms of the formal notion of Leibnizianity introduced in Definition~\ref{Leibnizianity}, but applied to $\zeta$ (rather than $\xi$):

\begin{definition}[Generalized-noncontextual classical realist representation]
 The classical realist representation map \colorbox{gray!45}{$\mathbf{\zeta}:\PI \to \widetilde{\FI}$}  is generalized-noncontextual if it preserves inferential equivalence relations.
\end{definition}
\noindent
 In the case of a single causally closed process, this can be summarized as
 ``{\em inferentially equivalent states of knowledge about experimental procedures must be represented by the same stochastic map.}''

We now give an equivalent (process-theoretic) characterization of a generalized-noncontextual classical realist representation, in analogy to Proposition~\ref{equivcharLeib}.
\begin{proposition}
A classical realist representation map  \colorbox{gray!45}{$\mathbf{\zeta}:\PI \to \widetilde{\FI}$} is generalized-noncontextual if and only if there exists a map $\widetilde{\xi}$ as defined in Definition~\ref{quotontrepndefn} such that the upper right triangle in Eq.~\eqref{trianglecommutenc} commutes, implying that the map $\zeta$ can be factored as $ \widetilde{\xi}\circ \sim_{\mathbf{p}}$.
\end{proposition}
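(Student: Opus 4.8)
The plan is to recognize this as a direct instance of the universal property of the quotient of $\PI$ by the inferential-equivalence congruence ${\sim_{\mathbf{p}}}$, which is a genuine congruence by Lemma~\ref{congrelop}. The one preliminary observation I would record is that, as noted in the discussion following Conjecture~\ref{conjecture1}, the quotiented theory $\widetilde{\FI}$ contains no distinct but inferentially equivalent processes; hence inferential equivalence in $\widetilde{\FI}$ collapses to strict equality. Consequently, unpacking Definition~\ref{repnsthatmaybeNC}, the condition that a generalized-noncontextual $\zeta$ ``preserves inferential equivalence relations'' is simply the statement that $\mathcal{D}\sim_{\mathbf{p}}\mathcal{E}$ implies $\zeta(\mathcal{D})=\zeta(\mathcal{E})$ for all $\mathcal{D},\mathcal{E}\in\PI$.

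For the easy direction I would assume $\zeta=\widetilde{\xi}\circ{\sim_{\mathbf{p}}}$ for some $\widetilde{\xi}$ satisfying Definition~\ref{quotontrepndefn}. Since ${\sim_{\mathbf{p}}}$ sends each process to its equivalence class, $\mathcal{D}\sim_{\mathbf{p}}\mathcal{E}$ forces ${\sim_{\mathbf{p}}}(\mathcal{D})={\sim_{\mathbf{p}}}(\mathcal{E})=\widetilde{\mathcal{D}}$, whence $\zeta(\mathcal{D})=\widetilde{\xi}(\widetilde{\mathcal{D}})=\zeta(\mathcal{E})$. Thus $\zeta$ maps inferentially equivalent processes to equal (\emph{a fortiori} inferentially equivalent) processes, so $\zeta$ is generalized-noncontextual.

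For the substantive direction I would construct $\widetilde{\xi}$ explicitly, setting $\widetilde{\xi}(\widetilde{\mathcal{D}}):=\zeta(\mathcal{D})$ for an arbitrary representative $\mathcal{D}$ of the class $\widetilde{\mathcal{D}}$. Generalized-noncontextuality is exactly what guarantees independence of the chosen representative, so $\widetilde{\xi}$ is well defined and $\zeta=\widetilde{\xi}\circ{\sim_{\mathbf{p}}}$ holds by construction, making the upper-right triangle of Eq.~\eqref{trianglecommutenc} commute. It then remains to check that $\widetilde{\xi}$ is a legitimate quotiented representation. Diagram preservation is inherited: since ${\sim_{\mathbf{p}}}$ is a surjective DP map, composition in $\widetilde{\PI}$ is computed on representatives (Lemma~\ref{congrelop}), and $\zeta$ is DP, so $\widetilde{\xi}$ preserves all diagrams. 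Preservation of predictions for $\widetilde{\xi}$, namely $\widetilde{\mathbf{p}}^{*}\circ\widetilde{\xi}=\widetilde{\mathbf{p}}$, follows by evaluating on a representative: for causally closed $\widetilde{\mathcal{D}}$ one has $\widetilde{\mathbf{p}}^{*}(\widetilde{\xi}(\widetilde{\mathcal{D}}))=\widetilde{\mathbf{p}}^{*}(\zeta(\mathcal{D}))=\mathbf{p}(\mathcal{D})=\widetilde{\mathbf{p}}(\widetilde{\mathcal{D}})$, using preservation of predictions for $\zeta$ (Definition~\ref{repnsthatmaybeNC}) and the defining relation $\widetilde{\mathbf{p}}\circ{\sim_{\mathbf{p}}}=\mathbf{p}$. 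Preservation of ignorability, Eq.~\eqref{eq:CausalOntRep2}, transfers similarly: the ignoring and embedding generators appearing there are images under ${\sim_{\mathbf{p}}}$ of the corresponding generators of $\PI$, so applying $\widetilde{\xi}$ reduces the required equality to Eq.~\eqref{eq:CausalOntRep5}, which $\zeta$ satisfies.

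The only place demanding care, and hence what I regard as the main obstacle, is the bookkeeping around the partial prediction maps: one must confirm that $\widetilde{\mathbf{p}}$, $\widetilde{\mathbf{p}}^{*}$, and $\mathbf{p}$ have matching (causally closed) domains along the chain of substitutions above, and that ${\sim_{\mathbf{p}}}$ indeed carries the ignorability generator of $\PI$ to that of $\widetilde{\PI}$ rather than to some inequivalent composite. Both are guaranteed by diagram preservation of ${\sim_{\mathbf{p}}}$ together with the fact that it leaves processes in the inferential subtheory $\Inf$ invariant, but they should be stated explicitly. Everything else is the universal property of the quotient, so no genuinely new computation beyond that appearing in Proposition~\ref{equivcharLeib} is required.
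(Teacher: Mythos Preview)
Your proof is correct and is precisely the universal-property-of-quotient argument that the paper has in mind; indeed, the paper does not spell out a proof for this proposition, instead introducing it ``in analogy to Proposition~\ref{equivcharLeib}'' and treating both as straightforward to verify from the definitions. Your write-up is more thorough than anything the paper provides, but the underlying reasoning---that preservation of inferential equivalence by $\zeta$ is exactly the well-definedness condition for a descent of $\zeta$ along the surjective DP quotient map $\sim_{\mathbf{p}}$, and that the structural axioms of Definition~\ref{quotontrepndefn} transfer by evaluating on representatives---is the same.
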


 It follows that if there exists a map $\widetilde{\xi}$, then there exists a generalized-noncontextual map $\zeta$ (namely $\zeta = \widetilde{\xi} \circ  \sim_{\mathbf{p}}$).   In fact, the opposite implication holds as well because the claim that $\zeta$ is generalized-noncontextual means {\em by definition} that there exists a map $\widetilde{\xi}$ such that the upper right triangle in Eq.~\eqref{trianglecommutenc} commutes. We can therefore summarize the relationship as follows:
\beq \label{GNCzetaifftildexi}
\exists\ \text{generalized-noncontextual } \zeta\ \ \iff \ \ \exists\ \widetilde{\xi}.
\eeq
Note that insofar as a quotiented operational CI theory subsumes the notion of a GPT, this fact is the analogue, for the rehabilitated version of generalized noncontextuality 
of Proposition 1 of Ref.~\cite{schmid2020structure}, which asserts that an operational theory admits of a generalized-noncontextual ontological model if and only if the GPT defined by it admits of an ontological model.

To re-emphasize a point made in Ref.~\cite{schmid2020structure}, a {\em quotiented} operational CI theory is not the sort of thing that can be either generalized-noncontextual or generalized-contextual because information about context is precisely what is eliminated by the quotienting operation.  In other words, it is a category mistake to ask whether $\widetilde{\xi}$ is generalized-noncontextual or not.  Thus, for any experiment that does not support a Bell-like no-go theorem but does support a noncontextuality no-go theorem, its model within $\PI$ is {\em always} consistent with {\em some} classical realist representation $\xi$ given the representational freedom that is afforded by context-dependence. Its model within $\widetilde{\PI}$, however, might not admit of {\em any} classical realist representation.

Old proofs of the failure of generalized noncontextuality will imply proofs of the failure of the rehabilitated version of generalized noncontextuality, since all that has changed is how one conceptualizes the mathematics.  This is particularly clear given Eq.~\eqref{GNCzetaifftildexi} and the close connection between the question of the existence of a $\widetilde{\xi}$ map in our framework and the question of whether a given GPT model of an experiment admits of an ontological model.

 In previous work, generalized noncontextuality was defined case-by-case for various types of procedures (e.g. preparations, measurements, transformations~\cite{Spe05}, and instruments~\cite{PP2,AWV,schmid2020structure}), and it was then stipulated that the natural assumption of generalized noncontextuality is the {\em universal} version of this assumption, meaning for {\em all} types of procedures.  In contrast, our process-theoretic characterization of generalized noncontextuality 
  applies to {\em all} types of experimental procedures including more exotic processes such as combs and circuit fragments with arbitrary causal structure, 
 and therefore is a {\em universal} notion from the get-go.

\subsection{Failures of generalized noncontextuality imply failures of Leibnizianity}

We now discuss the relation between the rehabilitated notion of generalized noncontextuality and the notion of Leibnizianity.

While the rehabilitated notion of generalized noncontextuality is a principle that constrains the representation map $\zeta$, the notion of Leibnizianity is a principle that constrains the representation map $\xi$.  It is like
generalized noncontextuality insofar as it can be defined in terms of the commutation of the diagram, but it {\em cannot} be understood as a notion of independence on context, as we now explain.

We begin by defining the notion of context that is at play in the notion of generalized noncontextuality.  
For any process within a CI theory, its \emph{context}  
 is the information which determines which element of an inferential equivalence class of processes it is---that is, it is information about a process which is irrelevant for making predictions.
Note that both operational CI theories {\em and \crealist CI theories} have nontrivial contexts: in either case, a full specification of a state of knowledge over the relevant causal processes describes both the equivalence class and the context of a procedure. 
An explicit example of two processes in $\PI$ which are in the same inferential equivalence class but which differ by context is given by the two states of knowledge about operational procedures described in Eq.~\eqref{twoinfstatesofknow}. 
An explicit example of two processes in $\FI$  which are in the same inferential equivalence class but which differ by context is given by the two states of knowledge about functions described in  Eq.~\eqref{randommat}. 
 In this language, what Leibnizianity demands is that an operational process's context
can only determine the context of its image under the realist representation map, 
not the inferential equivalence class of its image under the realist representation map. 
Even when a given representation map $\xi$ satisfies this condition, the context of an operational process's image under $\xi$ \emph{can} depend on the context of the operational process, and hence, it would be inappropriate to call the map `context-independent' or `noncontextual'.
Thus, the inapplicability of the term `noncontextual'  for describing the relevant constraint on the representation map $\xi$ is seen to be a consequence of the fact 
 that the fundamental notion of a realist CI theory is an {\em unquotiented} one, which has contexts.

 In contrast, the realist representation map $\zeta$ has contexts in its domain but not its co-domain, so that what Leibnizianity demands in this case is that an operational process's context cannot determine anything about its image under $\zeta$. Thus, a map $\zeta$ can be either generalized-noncontextual or generalized-contextual, depending on whether or not the image of an operational process under this map is independent of the context of the operational process.

We turn now to the formal relationships between the two notions.
Combining Eq.~\eqref{variouslemmas1} and Eq.~\eqref{GNCzetaifftildexi}, we infer that
\beq \label{variouslemmas}
\exists\ \text{Leibnizian }\xi
\
 \begin{tikzpicture}
	\begin{pgfonlayer}{nodelayer}
		\node [style=none] (0) at (0, 0.35) {$\stackrel{?}{\impliedby}$};
		\node [style=none] (1) at (0, -0.25) {$\implies$};
	\end{pgfonlayer}
\end{tikzpicture}
\
 \exists\
\text{gen.-noncontextual }
 \zeta
  \iff
   \exists\ \widetilde{\xi}.
\eeq
\noindent Accordingly, Conjecture~\ref{conjecture1} can be reformulated as follows:
\begin{conjbis}{conjecture1}\label{conjecture2}
If an operational CI theory admits of a generalized-noncontextual \crealist representation as a quotiented \crealist CI theory,
  then it admits of a Leibnizian \crealist representation as an unquotiented \crealist CI theory.  More formally, if there exists  a map \colorbox{gray!45}{$\mathbf{\zeta}:\PI \to \widetilde{\FI}$} satisfying Definition~\ref{repnsthatmaybeNC} and which makes the upper right triangle in the diagram of~\eqref{trianglecommutenc} commute, then there exists a map \colorbox{gray!30}{$\mathbf{\xi}:\PI \to \FI$} satisfying Definition~\ref{ontrepndefn} and which makes the square in the diagram of\eqref{trianglecommutenc} commute.
\end{conjbis}

The existence of a Leibnizian \crealist representation $\xi$ implies the existence of a generalized-noncontextual \crealist representation $\zeta$. Contrapositively, the nonexistence of such a $\zeta$ implies the nonexistence of such a $\xi$, and therefore every no-go theorem for generalized-noncontextual classical realist representations yields a no-go theorem for Leibnizian classical realist representations.

What about 
 implications  in the other direction?  If Conjecture~\ref{conjecture2} is false, then there may be proofs of the impossibility of a Leibnizian classical realist representation that are not proofs of the impossibility of a generalized-noncontextual classical realist representation.  In this case, there may be novel no-go theorems for classical realist representations of operational quantum theory.  By contrast, if the conjecture is true, then every no-go theorem based on Leibnizianity yields a no-go theorem based on generalized noncontextuality.
 
  \subsection{Reframing the standard no-go theorem for generalized noncontextuality}

To illustrate how the notion of noncontextuality appears in our framework, we consider a simple prepare-measure scenario.  Let the setting variable for the measurement be denoted by $Y$ and the outcome be denoted  by $B$.  In many discussions of noncontextuality, the preparation device is imagined to have just a setting variable.  In order to achieve more symmetry between the preparation device and the measurement device, however, it is convenient to consider a preparation device that has not only a setting variable, but an outcome as well~\cite{kunjwal2018statistical}.  We denote the setting by $X$ and the outcome by $A$.  The causal structure of a prepare-measure is presumed to be the following
\beq\label{PMcircuitCaus}
\InputIfFileExists{Diagrams/conScen5.tikz}{}{\input{./figures/Diagrams/conScen5.tikz}} \quad \text{and} \quad %
\InputIfFileExists{Diagrams/conScen6.tikz}{}{\input{./figures/Diagrams/conScen6.tikz}}
\eeq
where we have shown the representations in $\Proc$ and $\Func$ respectively.

The full causal-inferential structure is represented as follows in $\PI$ and $\FI$ respectively:
\beq\label{PMcircuitCI}
\InputIfFileExists{Diagrams/conScen3.tikz}{}{\input{./figures/Diagrams/conScen3.tikz}} \quad \text{and} \quad %
\InputIfFileExists{Diagrams/conScen4.tikz}{}{\input{./figures/Diagrams/conScen4.tikz}}
\eeq
where we have allowed for arbitrary states of knowledge $\mu_P$, $\mu_M$, $\nu_P$, and $\nu_M$ about the procedures  (respectively $\mu_P'$, $\mu_M'$, $\nu_P'$, and $\nu_M'$ about the functions),

The $\PI$-realizable joint distributions over $X, Y, A$ and $B$ for the causal-inferential structure of Eq.~\eqref{PMcircuitCI} are those given by the diagram
\beq\label{PMCircuitPSRealizable}
\InputIfFileExists{Diagrams/conScen1.tikz}{}{\input{./figures/Diagrams/conScen1.tikz}}
\eeq
where one ranges over arbitrary system $S$ in $\PI$ and probability distributions $\mu_P$, $\mu_M$, $\nu_P$, and $\nu_M$.

The $\FI$-realizable distributions over $X, Y, A$ and $B$ are those given by
\beq\label{PMCircuitFSRealizable}
\InputIfFileExists{Diagrams/conScen2.tikz}{}{\input{./figures/Diagrams/conScen2.tikz}}
\eeq
where one ranges over an arbitrary set $\Lambda$ and probability distributions $\mu_P'$, $\mu_M'$, $\nu_P'$, and $\nu_M'$.

It turns out that for this causal-inferential structure, the only restrictions on $\FI$-realizable distributions over $X, Y, A$ and $B$ are that $Y$ must be independent of $A$ and $X$.   So there is no opportunity for a Bell-like no-go result for this causal-inferential structure. 

Nonetheless, one {\em can} prove a noncontextuality no-go result in such structures.  The reason is that proving a noncontextuality no-go is not about proving the impossibility of an $\FI$-representation of $\PI$, as  is the case for proving a  Bell no-go result.  Rather, it is about proving the impossibility of finding an $\FI$-representation of $\PI$ {\em that is Leibnizian}, as given in Definition~\ref{Leibnizianity}.

In the case of quantum theory, that is, $\PQI$, one can prove such a result using the causal-inferential structure of \eqref{PMCircuitFSRealizable}, by leveraging no-go theorems for generalized noncontextuality in prepare-measure scenarios.  
One can of course also consider the consequences of Leibnizianity for scenarios beyond the prepare-measure variety.

Whether an operational theory $\PI$ admits of a generalized-noncontextual $\widetilde{\FI}$-representation 
coincides with several pre-existing notions of classical explainability.  This follows from the fact that the existence of a generalized-noncontextual $\widetilde{\FI}$-representation
 of $\PI$ implies the existence of an $\widetilde{\FI}$-representation of $\widetilde{\PI}$ and the latter coincides with the following two notions of classical explainability: simplex-embeddability of the GPT describing prepare-measure experiments~\cite{SchmidGPT,shahandeh2019contextuality} and the existence of a positive quasiprobability representation of the GPT~\cite{schmid2020structure,negativity,ferrie2008frame}.  
 
 Additionally, an operational theory $\PI$ that fails to admit of a generalized-noncontextual classical realist representation provides advantages for information processing relative to those that do admit of such a representation~\cite{POM,RAC,RAC2,Saha_2019,saha2019preparation,schmid2018contextual,Lostaglio2020contextualadvantage,schmid2021only,magic,comp1,comp2,PhysRevLett.125.230603}.  This bolsters the notion that {\em failing} to admit of a generalized-noncontextual $\widetilde{\FI}$-representation is a good notion of {\em non}classicality.
 
 In light of the relationships proven above between Leibnizianity and generalized noncontextuality, each of these results concerning generalized noncontextuality can be repurposed as a motivation for assuming that the existence of a Leibnizian $\FI$-representation is 
 a good notion of classicality for an operational CI theory $\PI$.

To summarize, our framework yields a new perspective on the relationship between Bell-like and noncontextuality (or Leibnizianity) no-go theorems. Both types of no-go theorems concern the representability of an operational CI theory $\PI$ in terms of a classical realist CI theory $\FI$. A Bell-like no-go theorem is a demonstration that there does not exist {\em any} classical realist representation map $\xi : \PI \to \FI$ as in Definition~\ref{ontrepndefn}. A noncontextuality no-go theorem, on the other hand, is a demonstration that there does not exist such a map  {\em which is Leibnizian}---that is, one wherein the inferential equivalence relations are preserved, as in Definition~\ref{Leibnizianity}. Hence we see that Bell-like no-go theorems are based on a weaker assumption.  Nonetheless, in our view, the stronger assumption of Leibnizianity is just as plausible.  Furthermore, noncontextuality no-go theorems have greater breadth of applicability than 
their Bell-like counterparts since they can be proven for a broader set of causal-inferential structures---even those involving just a single causal system.

\subsection{The conventional way out of noncontextuality no-go theorems} \label{criticismsnc}

What is the conventional response to the lack of the existence of a generalized-noncontextual classical realist representation $\zeta$ of quantum theory, considered as the operational CI theory $\PQI$? (Or equivalently, to the lack of the existence of a classical realist representation $\widetilde{\xi}$ of quantum theory, considered as the quotiented operational CI theory $\widetilde{\PQI}$?)  For those who are unwilling to compromise on the standard notion of a realist representation, the typical response is to endorse a failure of generalized noncontextuality.

Endorsing such a failure requires a renouncement of Leibnizianity.  Thus, to anyone who is committed to 
 Leibniz's principle, this `way out' of the no-go results will be unappealing.

There is, however, the possibility of
  an alternative to the conventional response, one that aims to salvage Leibnizianity within a realist representation.
The idea is the one already noted at the end of Section~\ref{criticismsbell}, namely, to underlie operational quantum theory with a realist causal-inferential theory wherein the causal and inferential components are intrinsically nonclassical.  The next sections take up this research program.

\section{Beyond classical realism} \label{ncrealrepns}

In the conclusions of the last two sections, we criticized the conventional ways out of Bell-like and noncontextuality no-go theorems on the grounds that the price they must pay to salvage the standard notion of realism---violating the principle of no-superluminal causation, violating the principle of no fine-tuning, and abandoning the Leibnizian methodological principle---is too high.
 We also noted that this motivates a new type of research program, wherein one seeks to salvage these principles by considering novel notions of realist representations wherein the causal and inferential components thereof become intrinsically nonclassical.
The hope is that a realist causal-inferential theory of this type will have enough in common with its classical counterpart that a representation in terms of it can nonetheless be judged to provide satisfactory {\em explanations} of the operational phenomena.
(In previous work, this idea has been described as `achieving realism while going beyond the standard ontological models framework'~\cite{spekkens2016quasi}.) Up until now, the constraints that such a representation must satisfy have been articulated only vaguely, if at all.  The framework of causal-inferential theories allows us to say much more about the nature of such a representation and hence about how to further this research program.

Suppose that a nonclassical analogue of $\Func$ is denoted $\textsc{X}\Func$ and a nonclassical analogue of $\Inf$ is denoted $\textsc{X}\Inf$ and that the nonclassical realist causal-inferential theory defined by the interaction of these is denoted $\XFXI$. 
We have:
\beq
\begin{tikzpicture}
	\begin{pgfonlayer}{nodelayer}
		\node [style=none] (0) at (5.25, -0) {$\textsc{X}\Inf$};
		\node [style=none] (1) at (0.25, -0) {$\XFXI$};
		\node [style=none] (2) at (3.25, -0) {};
		\node [style=none] (3) at (1.25, -0) {};
		\node [style=none] (4) at (3.25, -0.25) {};
		\node [style=none] (5) at (1.25, -0.25) {};
		\node [style=none] (6) at (-0.75, -0) {};
		\node [style=none] (7) at (-2.75, -0) {};
		\node [style=none] (8) at (-3.75, -0) {$\textsc{X}\Func$};
		\node [style={up label}] (9) at (-1.75, -0) {$\mathbf{e'}$};
		\node [style={up label}] (10) at (2.25, -0) {$\mathbf{i'}$};
		\node [style={up label}] (11) at (2.25, -1) {$\mathbf{p^*}$};
	\end{pgfonlayer}
	\begin{pgfonlayer}{edgelayer}
		\draw [style={arrow plain}] (2.center) to (3.center);
		\draw [style={arrow dashed}] (5.center) to (4.center);
		\draw [style={arrow plain}] (7.center) to (6.center);
	\end{pgfonlayer}
\end{tikzpicture} \ ,
\eeq
where the maps $\mathbf{e'}$,  $\mathbf{i'}$ and $\mathbf{p^*}$ play the same roles that they do in $\FI$.

 The question becomes: what properties must the causal-inferential theory $\XFXI$ have in order to be considered realist, that is, such that representability in terms of such a theory can be considered to provide a {\em causal explanation} of the observed correlations? 
These properties will help to identify what alternatives there might be  to representing physical systems by sets, propositions about these by subsets, states of knowledge by distributions over a set and causal determination by functions from one set to another.


 Here, some readers might worry that all such alternatives should be ruled out by the very meanings of the terms in question: might it be that one {\em cannot even speak} about systems, logical propositions and states of knowledge about systems, and causal influences among systems, {\em unless} systems are described by sets, propositions by subsets, states of knowledge by distributions, and causal determination by functions?
 In short, some might argue that the definitions of notions of inference and causation are {\em analytic}, and therefore immune to revision. But this is not the case, a point we make by an analogy with nonEuclidean geometry.

The idea is that a putative nonclassical realist causal-inferential theory $\XFXI$ will stand to the classical realist causal-inferential theory $\FI$ as a nonEuclidean geometry stands to Euclidean geometry.\footnote{Hilary Putnam famously used this analogy to describe how a quantum logic ought to be conceptualized relative to classical logic~\cite{Putnam1969}.  We are simply extending the analogy to describe how probabilistic inference and causal influence ought to be conceptualized as well.} Just as the meanings of the terms `point' and `line'  in a nonEuclidean geometry are determined from the axioms of that geometry rather than corresponding to the common-sense notions, so too will the meaning of various causal and inferential concepts within a given nonclassical realist causal-inferential theory $\XFXI$ be determined by the specific axioms of that process theory (i.e., the diagrammatic rewrite rules)  rather than corresponding to the conventional ones.   In this sense, we are embracing the attitude towards mathematical structure that is characteristic of category theory  and that contrasts with the attitude of set theory wherein everything is built up from concepts concerning sets.   In particular, the fact that systems in a nonclassical realist causal-inferential theory are not associated with sets does not imply that such systems cannot be the locus of causal influences or the subject of propositions and states of knowledge.  Note that the attitude towards scientific realism that such a research program presumes is cognate with the philosophical position of structural realism~\cite{Ladyman2007}.

 Any attempt to provide a nonclassical generalization of the notions of causation and inference, however, is highly constrained insofar as it will need to preserve those features of these notions which 
one judges to be essential. 
  These constraints are the analogue of constraints on nonEuclidean geometries that arise from what one takes to be essential to the notions of `point' and `line'.   They are constraints on the process theory which, if violated, might well lead one to question whether the theory really is describing causal influences and inferences after all. Note that there is no way to be certain about the appropriateness of such constraints a priori, because without 
 concrete modification of the classical theory having been proposed and shown to be coherent and useful, it is difficult to know which of the features of the classical theory are essential.

\subsection{Constraints a causal-inferential theory must satisfy to be considered realist} \label{realconstr}
Given that we have distinguished the notions of operational and realist causal-inferential theories,
  it is clear that (in our view) a generic causal-inferential theory---in particular an {\em operational} CI theory---does not contain enough structure to be deemed worthy of the title `realist'.   While an  operational CI theory can predict observations, it does not itself provide a {\em realist explanation} of those predictions. Thus, we do not consider $\PI$ to be an instance of $\XFXI$. 
  In this section, we highlight the additional structure possessed by a classical realist CI theory $\FI$ over and above that possessed by an operational CI theory $\PI$. This structure helps to identify the properties one should demand of a causal-inferential theory $\textsc{X}\Func$ in order that it be deemed `realist'.

Because $\PI$ and $\FI$ are built out of the same inferential subtheory, $\Inf$, the contrast between them reduces to the contrast between the causal subtheories out of which they are built, $\Func$ and $\Proc$ respectively, and to differences in how these interact with $\Inf$.

In $\Func$, aspects of the causal structure are encoded not just in the shape of the circuit but also in the identities of the functions.   For example, if a process corresponds to a function that is independent of its argument, then there is no causal connection between the input and the output of that process.  In other words, the function associated to some process specifies the causal structure {\em internal} to the process.

In $\Proc$, on the other hand, the internal structure of a process is not specified. A given process does not necessarily even have a causal influence from its inputs to its outputs---it is only that there is {\em potential} for such a causal influence.
Some information about the internal causal structure may be inferred from the image of this process under the prediction map, but this does not generally provide a full specification of the internal causal structure of the processes (e.g., we saw in Section~\ref{opquotinf} that inferentially equivalent processes can correspond to different causal structures).

To summarize, $\Proc$ encodes {\em potential} causal influences, while $\Func$ encodes {\em actual} causal influences.  What we have termed a {\em causal theory} is meant as an umbrella for these two notions.\footnote{It is worth noting that the assumption of diagram-preservation for the classical realist representation $\xi$ ensures the preservation of the structure of {\em potential causal influences}, not that of actual causal influences. That is, if there is no potential causal influence between a pair of systems in some given diagram of $\Proc$, then the image of this diagram under $\xi$ must be such that there is no actual causal influence between these in $\Func$.  If, however, there {\em is} a potential causal influence between a pair of systems in some given diagram of $\Proc$, then there may or may not be an actual causal influence between these in  $\Func$.}

Formally, the issue is that the interpretation of processes in a process theory is derived primarily from their interactions with other processes---through the nontrivial equalities that involve them.
But $\Proc$ is a free process theory, with no nontrivial equalities; hence, processes in $\Proc$ have an interpretation that is impoverished relative to those in $\Func$.

Thus a first criterion for a CI theory to be deemed realist is the following:
\begin{itemize}
\item[1.]
The causal subtheory of the CI theory must have enough nontrivial equalities such that its processes represent actual causal influences rather than potential causal influences.
\end{itemize}
\noindent The exact formalization of this remains to be determined, but we give constraints on how to do so later in this section. A minimal requirement is that $\textsc{X}\Func$ is not a free process theory.

We now turn to a comparison of the interaction between $\Proc$ and $\Inf$ and the interaction between $\Func$ and $\Inf$.

$\Func$ and $\Inf$  exhibit strong forms of interaction. For example, one can define propositions about (or equivalently, directly gain information about) {\em any} causal system in $\FI$, since the generator in Eq.~\eqref{propont} is defined for all systems in a classical realist CI theory.
In contrast, the interaction between $\Proc$ and $\Inf$ is very limited, insofar as one cannot define propositions about (or directly learn information about) any systems which are nonclassical.
This leads to our next criterion for a CI theory to be deemed realist:
\begin{itemize}
\item[2.] \label{attachpropn} Propositions
 must be able to attach to all systems in the CI theory.
\end{itemize}

Formally, the fact that $\FI$ satisfies this criterion has important consequences.
 Chiefly, it is a prerequisite for one to introduce the equality  Eq.~\eqref{Axiom:PropKnowGenerators},
   which allows the translation of a proposition about the output of a causal mechanism into a proposition about its input and the identity of the mechanism.

By contrast, there is no equality analogous to Eq.~\eqref{Axiom:PropKnowGenerators} in operational CI theories. Specifying states of knowledge about the causal processes in an operational CI theory does not yield statistical predictions until one specifies a prediction map. Indeed, nearly all of the non-generic features of an operational CI theory are buried within the choice of prediction map.\footnote{Note that if an operational CI theory $\PI$ {\em does} admit of a classical realist representation in terms of $\FI$, then this representation serves to provide an explanation for the prediction map of $\PI$ in terms of the unique prediction map of $\FI$ and the
representation map $\xi$.}

We therefore elevate this feature into a criterion of its own:
\begin{itemize}
\item[3.]  It should be possible to propagate knowledge claims through any causal mechanism.  Formally, there must exist an analogue of Eq.~\eqref{Axiom:PropKnowGenerators}.
\end{itemize}

This last criterion is central to the idea of a realist causal-inferential theory.
A commitment to realism means that the systems can mediate causal influences,
 and that one can understand every valid inference as a consequence of knowledge propagation {\em through} these causal mediaries.  In particular, if it is the existence of a causal pathway between two variables that accounts for the inferences that can be made between these, then it must be possible to understand these inferences as decomposable into a sequence of inferences, stepping through systems along the causal pathway.
 For example, in a Bell scenario (as in Diagram~\eqref{thecorrel2}), updating one's knowledge of the outcome at the left wing, $A$ (which depends on background knowledge about $X$), leads to an updating of one's knowledge of the outcome on the right wing, $B$ (which depends on background knowledge about $Y$),  via the mediary of systems $S$ and $S'$.  Specifically, updating one's knowledge of $A$
 leads to an updating of one's knowledge of $S$, which
 in turn leads to an updating of one's knowledge of $S'$,
  which in turn leads to an updating of one's knowledge of $B$.
    The ability of systems to encode information and to be mediaries in a sequence of refinements of knowledge is key, we argue, for a given theory to be described as realist.

The equality in Eq.~\eqref{Axiom:PropKnowGenerators} leads to a great deal of the structure of $\FI$ and ultimately to the uniqueness of the prediction map, as in Theorem~\ref{uniquerule}.
 It seems an essential part of any {\em fundamental} theory of nature that the predictions one makes should be uniquely determined by a complete causal-inferential description of one's scenario within that fundamental theory.

Hence, we have another criterion for a CI theory to be deemed realist:
\begin{itemize}
\item[4.] The CI theory must have a unique prediction map.
\end{itemize}

Our final criterion for a CI theory to be deemed realist also relies on the fact that one can attach propositions to any system, as per Criterion~2. 
Recall that two processes are inferentially equivalent if no matter what causally-closed circuit they are embedded in, they lead one to make the same predictions concerning the classical variables that are the (inferential) inputs and outputs of that circuit.
 This is an `external' characterization of inferential equivalence. Within a classical realist theory, Lemma~\ref{prfinfstochastic} showed that there is a second equivalent characterization of inferential equivalence that is `intrinsically' defined. This intrinsic characterization hinges on a particular mapping, Eq.~\eqref{assocstoc}, from arbitrary processes in $\FI$ to processes in $\Inf$. This allowed us to characterize the inferential equivalence of processes in $\FI$ via the stochastic maps that naturally describe the intrinsic relationship between the causal and inferential inputs and outputs of the process, without reference to external scenarios involving the process.  

If Criterion~2 is satisfied within a nonclassical realist theory, then one can also define a mapping analogous to that in Eq.~\eqref{assocstoc}, where the mapping takes every process $\mathcal{D}$ in $\XFXI$ to a related process in $X\Inf$, such that the latter provides an intrinsic description of inferences from the causal and inferential inputs of $\mathcal{D}$ to the causal and inferential outputs of $\mathcal{D}$. These are analogous to the stochastic maps just described in the case of classical realist theories.
 The criterion, then, is that one can define this intrinsic characterization of inferential equivalence, and that it furthermore be equivalent to the external characterization: 
\begin{itemize}
\item[5.] Two arbitrary processes in a CI theory are inferentially equivalent if and only if the external and intrinsic characterizations of inferential equivalence coincide. Formally, there must exist an analogue of Eq.~\eqref{assocstoc} and Lemma~\ref{prfinfstochastic}. 
\end{itemize}

The challenge moving forward, then, is to find mathematical structures $\textsc{X}\Func$ and $\textsc{X}\Inf$ that respect all of the desiderata required for a CI theory to be deemed realist.

{\bf Concepts that should have analogues in any realist theory---}

One expects any putative nonclassical theory of causation to contain analogues of most, if not all, of the standard notions that arise in the framework of classical causal models: common causes, causal mediaries, d-separation, evaluation of counterfactuals, etcetera.
Preliminary work towards establishing how the evaluation of counterfactuals is formally achieved within $\FI$ is provided in Section~\ref{causmode}.

One also expects that any putative nonclassical theory of inference should contain analogues of most, if not all, of the standard notions that arise in Boolean logic and Bayesian probability theory: logical connectives, implication, conditional independence, sufficient statistics, etcetera.~
\footnote{Note that we have not yet incorporated all of these notions in $\Inf$ at a diagrammatic level, although it is clear how all of these can be determined nondiagrammatically. Incorporating these into the diagrammatic formalism (or demonstrating that they are dispensable without compromising the usefulness of the theory of inference) is therefore a topic for further research. Incorporating them explicitly would enable the study of how these inferential features interact with causal processes.}

\subsection{Nonclassical realist representations} \label{nonclassrealrepns}

 Having described what it means for a causal inferential theory to embody a satisfactory notion of realism, we can now describe the notion of a {\em nonclassical realist representation}: namely, a representation of an operational CI theory in terms of a nonclassical realist CI theory.
The definition is the analogue of Definition~\ref{ontrepndefn}, but where the image of the map is a nonclassical realist CI theory rather than a classical one. This definition (given below) is the sense in which we have now formalized the idea of `achieving realism while going beyond the standard ontological models framework'.

For an arbitrary operational CI theory $\PI$, we can seek to find a nonclassical realist CI theory $\XFXI$ in terms of which $\PI$ can be represented. That is, we can ask if there is a realist representation map \colorbox{gray!30}{$\mathbf{\xi}:\PI \to \XFXI$} which can be defined in an analogous way to Def.~\ref{ontrepndefn}.

Note, however, that whereas the inferential subtheories of $\FI$ and of $\PI$ were identical (namely, $\Inf$), the inferential subtheory of $\XFXI$ is allowed to be something more general than that of $\PI$, namely, what we have denoted $\textsc{X}\Inf$. Consequently, one  needs to modify the condition of preservation of empirical predictions (the commutation of Eq.~\eqref{eq:EmpiricalAdequacy}). Rather than the two inferential subtheories being equal, they will be related by some sort of map \colorbox{purple!30}{$\phi:\Inf \to \textsc{X}\Inf$} whose defining properties is a subject for future work \footnote{In the cases of primary interest to us, we expect that $\phi$ will be an inclusion map. Hence, it will still be possible and meaningful 
to make classical inferences within $\textsc{X}\Inf$. 
 }. Given a satisfactory definition of $\phi$, one can define:

\begin{definition} \label{defnnonclassrealrepn}
A {\em nonclassical realist representation} of an unquotiented operational CI theory $\PI$ by an unquotiented nonclassical realist CI theory $\XFXI$ is a diagram-preserving map \colorbox{gray!30}{$\mathbf{\xi}:\PI \to \XFXI$}
satisfying {\bf (i)} the \emph{preservation of predictions},  namely that the diagram
\beq
\begin{tikzpicture}
	\begin{pgfonlayer}{nodelayer}
		\node [style=none] (0) at (-2, 2) {$\PI$};
		\node [style=none] (1) at (3, 2) {$\Inf$};
		\node [style=none] (2) at (3, -2) {$\textsc{X}\Inf$};
		\node [style=none] (3) at (-2, -2) {$\XFXI$};
		\node [style=none] (4) at (-2, 1.5) {};
		\node [style=none] (5) at (-2, -1.5) {};
		\node [style=none] (6) at (3, -1.5) {};
		\node [style=none] (7) at (3, 1.5) {};
		\node [style=none] (12) at (-1.75, -1.5) {};
		\node [style=none] (13) at (-2, -1.5) {};
		\node [style=none] (14) at (2.75, -1.5) {};
		\node [style=none] (15) at (1, 2) {};
		\node [style=none] (16) at (-1, 2) {};
		\node [style=none] (17) at (1, -2) {};
		\node [style=none] (18) at (-1, -2) {};
		\node [style=right label] (19) at (-2, 0) {$\mathbf{\xi}$};
		\node [style=up label] (21) at (0, 2) {$\mathbf{p}$};
		\node [style=up label] (23) at (0, -2) {$\mathbf{p'}$};
		\node [style=right label] (24) at (3, 0) {$\phi$};
	\end{pgfonlayer}
	\begin{pgfonlayer}{edgelayer}
		\draw [style=arrow plain] (4.center) to (5.center);
		\draw [style=arrow plain] (7.center) to (6.center);
		\draw [style=arrow dashed] (16.center) to (15.center);
		\draw [style=arrow dashed] (18.center) to (17.center);
	\end{pgfonlayer}
\end{tikzpicture}
 \ .
\eeq
 commutes, and {\bf(ii)} the {\em preservation of  ignorability}
\beq\label{eq:CausalOntRep6}
\InputIfFileExists{Diagrams/OntRep2p.tikz}{}{\input{./figures/Diagrams/OntRep2p.tikz}}\quad
=\quad
\InputIfFileExists{Diagrams/OntRep3p.tikz}{}{\input{./figures/Diagrams/OntRep3p.tikz}}.
\eeq
We will sometimes refer to the nonclassical realist representation map $\mathbf{\xi}$ as an $\XFXI$-representation of $\PI$.
\end{definition}

Analogously, one can also extend the notion of a representation of a {\em quotiented} operational CI theory $\widetilde{\PI}$ in terms of a quotiented classical realist CI theory $\widetilde{\FI}$ to that of a representation in terms of a quotiented {\em nonclassical} CI theory $\widetilde{\XFXI}$. 
Again, the only non-trivial aspect of this generalization is in defining the map \colorbox{purple!30}{$\phi:\Inf \to \textsc{X}\Inf$}.

One can summarize the notions of nonclassical realist representations via the analogue of Eq.~\eqref{fullfig}: 
\beq \label{CompleteDiagramArbitraryThrys}
\InputIfFileExists{Diagrams/CompleteDiagramNew.tikz}{}{\input{./figures/Diagrams/CompleteDiagramNew.tikz}} \quad.
\eeq

\subsection{A new way out of Bell-like no-go theorems} \label{outofbell}

Recall that a Bell-like no-go theorem arises whenever one finds a causal structure in which the set of $\PI$-realizable probability distributions is not contained in the set of $\FI$-realizable probability distributions---that is, not contained among those that can be generated by a {\em classical} realist representation.

The possibility of {\em nonclassical} realist representations provides a novel way out of such no-go theorems.
Rather than asking if the observed experimental statistics are $\FI$-realizable, one can instead ask if they are $\XFXI$-realizable, that is, representable using a map $\mathbf{\xi}:\PI \to \XFXI$ into some {\em nonclassical} realist CI theory $\XFXI$.

If this can be done, it seems appropriate to claim that such a realist representation has salvaged locality.  More precisely, such a representation has provided a means of being conservative with respect to causal structure---in particular, not requiring superluminal influences---by being radical with respect to the nature of the realist CI theory.

We have not yet provided an explicit proposal for a realist CI theory $\XFXI$ which can reproduce the quantum predictions while providing a satisfactory realist explanation of them. However, our work in formalizing the notion of a nonclassical realist theory, e.g., via the formal criteria given in Section~\ref{realconstr}, constitutes a first concrete step in this direction.

\subsection{A new way out of noncontextuality no-go theorems} \label{outofnc}

A nonclassical realist CI theory $\XFXI$ necessarily includes a notion of inferential equivalence---because $\XFXI$ is assumed to provide a unique prediction map, one simply evaluates equivalences relative to it.
It follows that one can define Leibnizianity for nonclassical realist representations much as it was defined for classical realist representations (in Section~\ref{ncsec}):
\begin{definition}[Leibnizianity of a nonclassical realist representation]\label{Leibnizianity2}
A nonclassical realist representation map \colorbox{gray!30}{$\mathbf{\xi}:\PI \to \XFXI$} is said to be {\em Leibnizian} if it preserves inferential equivalence relations.
\end{definition}
Consequently, it makes just as much sense to ask whether a given operational CI theory admits of a {\em nonclassical} realist representation that is Leibnizian as it did to ask that question of a classical realist representation. This is a key benefit of our new process-theoretic definition of Leibnizianity.

As in the case of classical realist representations, we can give an equivalent characterization in terms of a commuting square. A nonclassical realist representation map $\xi: \PI \to \XFXI$ is Leibnizian if and only if there exists a map $\widetilde{\xi}: \widetilde{\PI} \to \widetilde{\XFXI}$ such that the following diagram commutes:
\beq\label{ncdiagcomm2}
\begin{tikzpicture}
	\begin{pgfonlayer}{nodelayer}
		\node [style=none] (0) at (-1, 1.25) {$\PI$};
		\node [style=none] (2) at (1, 3) {$\widetilde{\PI}$};
		\node [style=none] (4) at (-1, -2) {$\XFXI$};
		\node [style=none] (5) at (1.25, 0) {$\widetilde{\XFXI}$};
		\node [style=none] (6) at (-1, 1.25) {};
		\node [style=none] (7) at (-1, -1.5) {};
		\node [style=none] (8) at (1.25, 0.5) {};
		\node [style=none] (9) at (1.25, 2.75) {};
		\node [style=none] (18) at (-0.75, 1.75) {};
		\node [style=none] (19) at (0.75, 2.75) {};
		\node [style=none] (20) at (0.75, -0.5) {};
		\node [style=none] (21) at (-0.75, -1.5) {};
		\node [style=none] (22) at (0.5, 2.75) {};
		\node [style=none] (23) at (0.5, -0.5) {};
		\node [style=none] (24) at (-1, -1.5) {};
		\node [style=none] (26) at (1, 2.75) {};
		\node [style=none] (27) at (1, -0.5) {};
		\node [style=right label] (33) at (-1, 0) {$\xi$};
		\node [style=right label] (34) at (1.25, 1.5) {$\widetilde{\xi}$};
		\node [style=up label] (35) at (0, 2.25) {$\sim_{\mathbf{p}}$};
		\node [style=up label] (36) at (0, -1) {$\sim_{\mathbf{p}^{*}}$};
	\end{pgfonlayer}
	\begin{pgfonlayer}{edgelayer}
		\draw [style=arrow plain] (6.center) to (7.center);
		\draw [style=arrow plain] (9.center) to (8.center);
		\draw [style=arrow plain] (18.center) to (19.center);
		\draw [style=arrow plain] (21.center) to (20.center);
	\end{pgfonlayer}
\end{tikzpicture} \ \ .
\eeq
 where $\mathbf{p}^*$ is the unique prediction map in $\XFXI$.

One can extend the notion of generalized-noncontextuality to nonclassical realist representations in a similar fashion.  Specifically, the map $\zeta: \PI \to \widetilde{\XFXI}$ is defined to be generalized-noncontextual if the triangle in upper right of the following diagram commutes:
\beq
\begin{tikzpicture}
	\begin{pgfonlayer}{nodelayer}
		\node [style=none] (0) at (-1, 1.25) {$\PI$};
		\node [style=none] (2) at (1, 3) {$\widetilde{\PI}$};
		\node [style=none] (4) at (-1, -2) {$\XFXI$};
		\node [style=none] (5) at (1.25, 0) {$\widetilde{\XFXI}$};
		\node [style=none] (6) at (-1, 1.25) {};
		\node [style=none] (7) at (-1, -1.5) {};
		\node [style=none] (8) at (1.25, 0.5) {};
		\node [style=none] (9) at (1.25, 2.75) {};
		\node [style=none] (18) at (-0.75, 1.75) {};
		\node [style=none] (19) at (0.75, 2.75) {};
		\node [style=none] (20) at (0.75, -0.5) {};
		\node [style=none] (21) at (-0.75, -1.5) {};
		\node [style=none] (22) at (0.5, 2.75) {};
		\node [style=none] (23) at (0.5, -0.5) {};
		\node [style=none] (24) at (-1, -1.5) {};
		\node [style=none] (26) at (1, 2.75) {};
		\node [style=none] (27) at (1, -0.5) {};
		\node [style=right label] (33) at (-1, 0) {$\xi$};
		\node [style=right label] (34) at (1.25, 1.5) {$\widetilde{\xi}$};
		\node [style=up label] (35) at (0, 2.25) {$\sim_{\mathbf{p}}$};
		\node [style=up label] (36) at (0, -1) {$\sim_{\mathbf{p}^{*}}$};
		\node [style=none] (37) at (0.75, 0.25) {};
		\node [style=right label] (38) at (0.25, 0.75) {$\zeta$};
		\node [style=none] (39) at (-0.75, 1) {};
	\end{pgfonlayer}
	\begin{pgfonlayer}{edgelayer}
		\draw [style=arrow plain] (9.center) to (8.center);
		\draw [style=arrow plain] (39.center) to (37.center);
		\draw [style=arrow plain] (18.center) to (19.center);
		\draw [style=arrow plain] (21.center) to (20.center);
		\draw [style=arrow plain] (6.center) to (7.center);
	\end{pgfonlayer}
\end{tikzpicture}
 \ \  .
\eeq
The fact that the definitions are all process-theoretic implies that we have abstract notions of generalized noncontextuality and Leibnizianity that apply to nonclassical realist representations and that satisfy analogues of Eq.~\eqref{variouslemmas}, that is, 
\beq \label{variouslemmas2}
\exists\ \text{Leibnizian }\xi
\
 \begin{tikzpicture}
	\begin{pgfonlayer}{nodelayer}
		\node [style=none] (0) at (0, 0.35) {$\stackrel{?}{\impliedby}$};
		\node [style=none] (1) at (0, -0.25) {$\implies$};
	\end{pgfonlayer}
\end{tikzpicture}
\
 \exists\
\text{gen.-noncontextual }
 \zeta
  \iff
   \exists\ \widetilde{\xi}.
\eeq

The possibility of nonclassical realist representations therefore holds the potential for a novel way out of noncontextuality no-go theorems, a way out that does not compromise on the Leibnizian methodological principle.

\section{Discussion}

\subsection{Finding a satisfactory ontology and epistemology {\em for quantum theory}}\label{sec:QuantumOmelette}

The long-term aim of this work is to generate a compelling interpretation of quantum theory---one which satisfies the spirit of locality and Leibnizianity. We turn, therefore, to the special case of nonclassical realist representations {\em of quantum theory}. 

Recall from Section~\ref{QTquaopCI} that quantum mechanics can be cast as an operational CI theory $\PQI$ having the structure
\beq
\begin{tikzpicture}
	\begin{pgfonlayer}{nodelayer}
		\node [style=none] (0) at (0, 0) {$\PQI$};
		\node [style=none] (1) at (5, 0) {$\Inf$};
		\node [style=none] (6) at (0, -0.5) {};
		\node [style=none] (9) at (4, -0.5) {};
		\node [style=none] (10) at (3, 0) {};
		\node [style=none] (11) at (1, 0) {};
		\node [style=none] (23) at (3, -0.25) {};
		\node [style=none] (24) at (1, -0.25) {};
		\node [style=none] (33) at (4.25, 0.25) {};
		\node [style=none] (37) at (-1, 0) {};
		\node [style=none] (38) at (-3, 0) {};
		\node [style=none] (39) at (-4, 0) {$\Proc_Q$};
		\node [style=up label] (45) at (-2, 0) {$\mathbf{e}$};
		\node [style=up label] (46) at (2, 0) {$\mathbf{i}$};
		\node [style=up label] (47) at (2, -1) {$\mathbf{p}_Q$};
		\node [style=none] (59) at (-4, -0.5) {};
	\end{pgfonlayer}
	\begin{pgfonlayer}{edgelayer}
		\draw [style=arrow plain] (10.center) to (11.center);
		\draw [style=arrow dashed] (24.center) to (23.center);
		\draw [style=arrow plain] (38.center) to (37.center);
	\end{pgfonlayer}
\end{tikzpicture} \ .
\eeq
Our ultimate objective is to identify a {\em quantum} realist CI theory\footnote{Naturally, the `$Q$' in the notation where we previously put `X' refers to the fact that we are aiming specifically for a {\em quantum} generalization of $\Func$, $\Inf$, and $\FI$.} 
\beq
\begin{tikzpicture}
	\begin{pgfonlayer}{nodelayer}
		\node [style=none] (0) at (5.25, -0) {$\textsc{Q}\Inf$};
		\node [style=none] (1) at (0.25, -0) {$\QFQI$};
		\node [style=none] (2) at (3.25, -0) {};
		\node [style=none] (3) at (1.25, -0) {};
		\node [style=none] (4) at (3.25, -0.25) {};
		\node [style=none] (5) at (1.25, -0.25) {};
		\node [style=none] (6) at (-0.75, -0) {};
		\node [style=none] (7) at (-2.75, -0) {};
		\node [style=none] (8) at (-3.75, -0) {$\textsc{Q}\Func$};
		\node [style={up label}] (9) at (-1.75, -0) {$\mathbf{e'}$};
		\node [style={up label}] (10) at (2.25, -0) {$\mathbf{i'}$};
		\node [style={up label}] (11) at (2.25, -1) {$\mathbf{p^*}$};
	\end{pgfonlayer}
	\begin{pgfonlayer}{edgelayer}
		\draw [style={arrow plain}] (2.center) to (3.center);
		\draw [style={arrow dashed}] (5.center) to (4.center);
		\draw [style={arrow plain}] (7.center) to (6.center);
	\end{pgfonlayer}
\end{tikzpicture} \ ,
\eeq
which satisfies the constraints articulated in Section~\ref{realconstr} (so that it can meaningfully be said to be a realist theory).

It should also subsume the classical realist CI theory. As noted above, in a nonclassical realist CI theory, the systems need not be classical variables (i.e., they need not be associated with sets) and the states of knowledge of these systems need not be associated with probability distributions over these sets because such possibilities are what open up space for evading Bell-like and noncontextuality no-go theorems.   Nonetheless,  $\textsc{Q}\Func$ will need to {\em include} classical variables as an allowed type of system and functions between these as an allowed type of causal process, e.g., in order to describe the setting and outcome variables associated with experimental procedures.   Similarly, $\textsc{Q}\Inf$   needs to include $\Inf$ as a subtheory, so that it can make the correct statistical predictions for diagrams consisting entirely of such classical variables.  (We leave to future work the problem of articulating formally the constraint of subsuming the classical realist CI theory.)

Most importantly, the quantum realist CI theory $\QFQI$ should provide a Leibnizian representation of $\PQI$. The relation between $\QFQI$ and $\PQI$ that is implied by the existence of such a representation map can be summarized by the following diagram:
\beq \label{CompleteDiagramArbitraryThrys}
\InputIfFileExists{Diagrams/CompleteDiagramArbitraryThrysNew.tikz}{}{\input{./figures/Diagrams/CompleteDiagramArbitraryThrysNew.tikz}} \quad.
\eeq

If such a quantum realist representation map $\xi: \PQI \to \QFQI$ is found, then it follows that for any given causal-inferential structure, the set of $\QFQI$-realizable distributions includes the $\PQI$-realizable distributions.
Hence, one obtains a way out of Bell-like no-go theorems that is more satisfactory than the conventional ways out insofar as it need not involve any superluminal influences (thereby salvaging the spirit of locality), and insofar as it need not avail itself of any fine-tuning of parameters.
Since we have further required that the realist representation map $\xi: \PQI \to \QFQI$ must be
{\em Leibnizian} (as implied by the existence of the map $\tilde{\xi}$), one also obtains a way out of the noncontextuality no-go theorems that is more satisfactory than the conventional way out, insofar as it salvages the Leibnizian methodological principle (and thereby the spirit of generalized noncontextuality).

Although there has been some work on interpreting some of the formalism of quantum theory as a nonclassical generalization of Bayesian inference, the kind of classical theory that served as the target of this generalization was  an ontological model.
 As noted in Section~\ref{sec:subsumingOntological}, however, the notion of an ontological model corresponds to a {\em quotiented} classical realist CI theory.  And, as argued in Section~\ref{makingomelette}, such a theory involves a partial scrambling of causal and inferential concepts. This is problematic because it is likely that the constraints on putative quantum generalizations of classical theories are {\em only} clear if causal and inferential notions are cleanly separated in the latter, and hence only if these quantum generalizations are pursued at the level of the {\em unquotiented} theory.

 As an example, consider how the project of finding a nonclassical generalization of Bayesian inference was pursued in Ref.~\cite{LeiferSpekkens}, which built upon ideas proposed in Refs.~\cite{leifer2008quantum,leifer2007conditional}.  The focus was on
finding intrinsically quantum counterparts to the notions of joint, marginal and  conditional probability distributions, as well as counterparts to the relations that hold between these, such as the counterpart of marginalization, the law of total probability, and the formula for Bayesian inversion.   However, a conditional probability distribution, or equivalently, a stochastic map, represents an {\em inferential equivalence class} of states of knowledge about functional dynamics, and often involves a partial scrambling of causal and inferential concepts (as illustrated in Section~\ref{makingomelette}).
The fact that the focus of this earlier work was on a mathematical object that scrambled causal and inferential concepts may explain why there are outstanding problems with the approach, such as those described in Ref.~\cite{LeiferSpekkens} and in Ref.~\cite{horsman2017can}.
A state of knowledge about functional dynamics---unlike the inferential equivalence class of such objects---involves no such scrambling. It is consequently {\em this} object that is more appropriate to focus on and for which to seek an intrinsically quantum counterpart.

It is possible that the proposals for quantum generalizations of propositional logic which were pursued under the banner of `quantum logic'~\cite{hooker2012logico,hooker1979logico} also suffer from having mistaken inferential equivalence for identity.  Certainly, we believe that conventional approaches, such as the one that takes the counterpart of a Boolean lattice to be an orthomodular lattice, are unlikely to yield success in the research program described here.  This is because such approaches are informed solely by the structure of projectors on Hilbert space and this may well merely be describing aspects of the {\em quotiented} quantum realist CI theory, while it is only the {\em unquotiented} theory $\QFQI$ that one can hope to decompose into a causal subtheory $\textsc{Q}\Func$ and an inferential subtheory $\textsc{Q}\Inf$ (where the structure concerning propositional logic lives).

We now highlight some prior work that is likely to be useful in developing an intrinsically quantum notion of a realist CI theory.

On the causal side, recent work on developing an intrinsically quantum notion of a causal model~\cite{allen2017quantum,costa2016quantum,Barrett2019} is likely to provide a good starting point for finding the
correct quantum generalization of $\Func$.  In particular, the notion of decomposing a unitary gate into a more refined circuit that includes `dots' (isomorphisms wherein a Hilbert space is decomposed into a direct sum of tensor products), introduced in Ref.~\cite{allen2017quantum} and studied in depth in Ref.~\cite{Lorenz2020}, is likely to be incorporated in some way into $\textsc{Q}\Func$.

In pursuing the correct quantum generalization of $\Inf$, recent work developing a synthetic approach to probability theory (formalized as `Markov categories')~\cite{cho2017disintegration,jacobs2019causal,fritz2020synthetic,furber2013towards}
 is likely to be useful.  This is because if the $\textsc{Bayes}$ subtheory of $\Inf$ can be characterized more abstractly, the possibilities for quantum generalization should become more evident.  In particular, the work of Ref.~\cite{coecke2012picturing}, which is in the same spirit as Refs.~\cite{cho2017disintegration,fritz2020synthetic}, may provide an important piece of the puzzle (in spite of not having the benefit of a proper unscrambling of causal and inferential notions).  Specifically, the {\em logical broadcasting} map described therein may be the counterpart in $\textsc{Q}\Inf$ of the copy operation in $\Inf$.

Similar comments may well apply to prior work in the field of quantum logic, namely, that in spite of suffering from some causal-inferential scrambling, specific insights from that research program could prove useful in finding the counterpart within $\textsc{Q}\Inf$ to various notions within the subtheory $\textsc{Boole}$ of $\Inf$.

To close, we note that there has been a great deal of interest in whether certain mathematical objects in the quantum formalism---most notably quantum states---have an ontological or an epistemological status~\cite{caves2002quantum,fuchs2013quantum,fuchs2010qbism,Emersonthesis,spekkens2007evidence,Harrigan,pusey2012reality,leifer2014quantum,leifer2006quantum,leifer2007conditional}.
Although disentangling ontology and epistemology is certainly critical to the project of unscrambling Jaynes' omelette, it is worth noting that in some cases this question presumes a false dichotomy.
 To see this, note that even in a {\em classical} realist CI theory, certain mathematical objects play multiple roles---for example, functions appearing in $\Func$ describe the causal influence that one variable has on another, while the same functions in $\Inf$ (now represented as deterministic stochastic maps)
  describe how learning about one variable leads to updating one's knowledge of another.   It seems likely, therefore, that certain mathematical objects in a quantum realist CI theory  will also have counterparts in both the causal and inferential subtheories.  Indeed,
 a single-system unitary is likely to be such an object, sometimes describing the nature of  a causal influence in the causal subtheory and sometimes the nature of how one updates one's knowledge in the inferential subtheory.
The question about whether a given mathematical object in the quantum formalism  has an ontological or epistemological status, therefore, must sometimes be refined to take into account the context in which the mathematical object appears.

\subsection{Subsuming the framework of classical causal modeling} \label{causmode}

We have considered two distinct classes of causal theories, namely, $\Proc$ and \Func. The primary technical distinction between these two is that $\Func$ has equalities, while $\Proc$ does not. We have seen various consequences of this extra structure on $\Func$, e.g., the uniqueness of the prediction map 
 in $\FI$. Conceptually, the primitive type of process in $\Proc$ (a list of lab instructions) constitutes an extremely minimal description of the causal mechanism relating its inputs to its outputs. Meanwhile, the primitive type of process in $\Func$ (a functional dependence) constitutes a much more informative description.

In fact, causal dependences in a classical theory can be {\em defined} in terms of functional dependence of one variable on another. This is done, for instance, in {\em structural equation  models}~\cite{pearl2009causality}, and it is the notion of classical causation that we endorse here. Hence, one might expect that structural equation models could be subsumed in our framework within $\FI$, which allows for both a description of the functional (hence causal)
dependences among variables, as well as a specification of one's knowledge about exogenous variables.
Similarly, the notion of a probabilistic causal model (or `causal Bayesian network')~\cite{pearl2009causality}, wherein the functional dependences and the states of knowledge of the exogenous variables are not specified individually, but are folded together into a conditional probability distribution, is likely to be subsumed in our framework within the quotiented theory $\widetilde{\FI}$.
 In future work, we hope to explore the relationship between our framework and various notions of classical causal models, and to argue that in some regards, our framework is more general than the standard one.

 To accommodate all of the purposes to which classical causal models are put (in particular, considering the consequences of interventions and evaluating counterfactuals), it will
 be useful to introduce
a distinct type of causal theory of functional dynamics, embedded within $\Func$, which we will term $\textsc{PreFunc}$.
The systems and processes in $\textsc{PreFunc}$ are the same as in $\Func$, but the process theory is defined without equalities. In particular, the composition of two functions $f(\cdot)$ and $g(\cdot)$ in sequence in $\textsc{PreFunc}$ is {\em not} strictly equal to the function $f(g(\cdot))$. One can then define a DP map from $\textsc{PreFunc}$ to $\Func$ which induces an equivalence relation on $\textsc{PreFunc}$, namely, two diagrams in $\textsc{PreFunc}$ are equivalent if they define the same function when the component functions in the diagram are composed.

\begin{remark}
The transition from $\Func$ to $\textsc{PreFunc}$ can be viewed as an example of a very general construction on process theories. First, one defines a forgetful functor from the category $\textsc{ProcessTheory}$ to a new category (which we will call $\textsc{ProcessSet}$) where a particular process theory is mapped to its underlying set of processes thereby forgetting the compositional structure of the process theory.
 We can then define a free functor which is left adjoint to the forgetful functor. The composition of these two functors then defines a comonad on $\textsc{ProcTheory}$ which, in particular, takes $\Func$ to $\textsc{PreFunc}$. This is closely related to \cite[Ex. 4.2.2]{perrone2019notes}.
\end{remark}

To better understand the differences between $\textsc{PreFunc}$ and $\Func$, consider the following  
 pair of 
 diagrams, where the gate \resizebox{!}{3mm}{$%
\InputIfFileExists{Diagrams/cnot.tikz}{}{\input{./figures/Diagrams/cnot.tikz}}$} represents a classical controlled NOT operation:
\beq
a)\ %
\InputIfFileExists{Diagrams/cnotTriple1.tikz}{}{\input{./figures/Diagrams/cnotTriple1.tikz}}\qquad b)\ %
\InputIfFileExists{Diagrams/cnotTriple2.tikz}{}{\input{./figures/Diagrams/cnotTriple2.tikz}}.
\eeq

In $\Func$, the process described by diagram  (a) and that described by diagram  (b) are strictly equal. The two diagrams are merely distinct manners of specifying the overall input-output functionality of the effective function from $A_1$ and $A_2$ to $D_1$ and $D_2$.
In $\textsc{PreFunc}$, however, diagrams represent `histories' of processes, rather than merely representing input-output functionalities. These two diagrams viewed within $\textsc{PreFunc}$ are therefore not equal to one another, but rather are merely equivalent in the sense defined just above.

 Despite the fact that (a) and (b) are equal within $\Func$, it is clear that the interventions possible on each of them are distinct. 
To formally describe all possible interventions in a given scenario, it is essential that one works within $\textsc{PreFunc}$, wherein (a) and (b) are merely equivalent; e.g., this allows one to consider an intervention on $B_1$, $B_2$, $C_1$ or $C_2$.

In order to provide a fully formal diagrammatic treatment of the interventional aspects of the framework of classical causal models~\cite{pearl2009causality}, it will likely be useful to take the causal theory to be $\textsc{PreFunc}$ rather than $\Func$. We will address this project more explicitly in future work.

Insofar as our work reveals that stochastic matrices (equivalently, conditional probability distributions) relating a cause to its effect generically scramble together causal and inferential concepts, this is true even for the notion of a {\em do-conditional}, which is defined as the conditional probability distribution of an effect variable given a cause variable when the value of the cause is intervened upon, rather than being determined by its natural causal parents.
  From the perspective of our work, the only object which does not lose any information about what is known about the causal influence of one variable on another  is the probability distribution over the function that relates the one variable to the other,
   while the do-conditional merely describes an inferential equivalence class of such objects.
This and related ideas will also be explored in future work.

\subsection{More future directions}

There are many natural extensions of our work, and many ways in which it is likely to shed light on other research programs.
We now discuss some of these research directions, beyond those highlighted in the discussion sections or introduced throughout the paper.

We first note a straightforward supplementation to the notion of an operational CI theory.
Recall that an operational CI theory shares the inferential subtheory in common with the classical realist CI theory, but the causal subtheory $\Proc$ is distinct from $\Func$.  Nonetheless, because there is a distinction within $\Proc$ between classical systems (the settings and outcomes of procedures) and general systems, one can imagine a supplementation of $\Proc$ wherein the classical systems and all processes thereon have all of the structure of $\Func$.  Although the inferential consequences of this structure can in principle be obtained by encoding it in the prediction map, the framework is more useful if the additional equalities are present within $\Proc$ itself.  This supplementation is likely to be particularly useful for the study of computational complexity in general operational theories \cite{lee2015computation,barrett2019computational,barnum2018oracles,garner2018interferometric,krumm2019quantum}.

It should also be straightforward to formulate our framework using the language of category theory; category-theoretic tools might then provide guidance on which extensions of our framework are most easily formalized next, and might provide technical tools (e.g., for going beyond the finiteness assumptions that we have made). Making connections to the string diagrammatic representation of double-categories \cite{myers2016string} may be a useful first step.

The process-theoretic formulation of a CI theory makes it easy to incorporate extra structure into the systems. Of particular interest would be to equip systems with the action of particular groups in order to be able to represent symmetries explicitly in our formalism. This is essential for an understanding of unspeakable information~\cite{peres2002unspeakable,bartlett2007reference} and for leveraging this to prove new no-go theorems and find new types of nonclassicality. Tools from Refs.~\cite{selby2017leaks,selby2019compositional} provide a useful starting point for this project.

Additionally, it would be useful to complete the project begun in Section~\ref{causmode}, namely, that of determining how various results in the framework of classical causal models~\cite{pearl2009causality} can be recast using the formalism of classical realist causal-inferential theories, and to explore to what extent the additional causal-inferential unscrambling that our framework provides may be beneficial to the field of causal inference.
It will also be interesting to consider how the notion of an {\em operational} CI theory compares to the notion of a causal model with latent systems that can be quantum or GPT~\cite{henson2014theory,fritz2012beyond,fritz2016beyond} and whether
our framework offers some advantages relative to these.

At present, our framework describes only the reasoning of a single agent. It would be interesting to incorporate the reasoning of multiple agents. This project will require integrating insights such as pooling of states of knowledge~\cite{pooling2007,Leifer_2014}. Relatedly, it would be interesting to consider what insight our framework can add to puzzles regarding the fact that agents can themselves be considered as physical systems.  Such puzzles include the scenario of   Wigner's friend~\cite{Wigner1995} and variants thereof~\cite{Frauchiger2018}.
On a related note, it would be interesting to study how the available causal mechanisms in a CI theory determine the precise manner in which any agent, considered as a physical system, can gather information about its environment---and hence, what sort of theory of inference is most adaptive for it. One might expect that such considerations will constrain the interplay between the causal and inferential subtheories of any realist CI theory.

One could also seek to attack the problem of reconstructing quantum theory from novel axioms using our framework as an alternative to existing frameworks
for reconstructing quantum theory~\cite{Hardy,dakicbrukner,Masanes_2011,Chiribella2011,hardy2011reformulating,barnum2014higher,selby2018reconstructing,van2018effect,tull2018categorical,Chiribella2015QuantumTI,goyal2008information,budiyono2017quantum,hohn2017quantum,clifton2003characterizing}.
This would be particularly interesting if one could reconstruct quantum theory as a {\em realist} CI theory
rather than a generic operational CI theory.  Our framework may also may provide new insights into axioms that single out quantum correlations~\cite{brassardetal,Linden2007,Pawlowski2009,Fritz2013a,Henson2015,Gonda2018almostquantum} (including, e.g., the constraints articulated in Section~\ref{sec:QuantumOmelette}).

One can naturally define postquantumness~\cite{zyczkowski2008quartic,Cabello2012,sainz2015postquantum,barnum2016composites,hoban2018channel,lee2018no,schmid2020postquantum,hefford2020hyper} of correlations in our framework.
For a given causal structure, any distribution that is $\PI$-realizable by an operational theory $\PI$, but which is not $\PQI$-realizable, is said to be postquantum.
Our framework may also allow for new ways of studying postquantumness; e.g., if one were to develop a notion of a {\em quantum} realist causal-inferential theory, then, for a given operational CI theory, one could seek to determine which experimental scenarios manifest postquantumness in the sense of failing to admit of a quantum realist representation or failing to admit of a Leibnizian quantum realist representation. 

Epistemically restricted classical statistical theories, such as those described in Refs.~\cite{spekkens2007evidence,spekkens2016quasi}, if conceptualized as operational CI theories, are 
 theories that admit of a Leibnizian classical realist representation. In this sense, if the world were governed by such a theory, there would be no problem to providing satisfactory realist explanations of observations, and one would have no need to consider any departure from the classical notion of realism $\FI$.  Nonetheless, it might be interesting to try and cast such theories as examples of {\em nonclassical} realist CI theories themselves, that is, as defining a triple ($\textsc{X}\Func$, $\textsc{X}\Inf$, $\XFXI$) that differs from the classical triple ($\Func$, $\Inf$, $\FI$). Ideally, this would be done such that the epistemic restriction emerges as a consequence of assumptions about the underlying reality, as opposed to being a supplementary assumption.
  Even though such theories are classical insofar as they also admit a Leibnizian classical realist representation, this project might nonetheless constitute a useful warm-up for the project of characterizing $\QFQI$. On the one hand, by exploring other realist CI theories we will gain insight into how the causal and inferential subtheories constrain one another, and, on the other hand, there are many formal similarities between epistemically restricted statistical theories and quantum theory (indeed, they often constitute subtheories of quantum theory).

We also discussed in Section~\ref{ncsec} how,  if Conjecture~\ref{conjecture1} is false then there are no-go theorems for Leibnizian classical realist representations of operational quantum theory beyond the no-go theorems based on generalized noncontextuality.  It is therefore important to settle the question of the status of Conjecture~\ref{conjecture1}.

Although one should not demand (or even expect) Leibnizianity to be a strong enough principle to single out a unique nonclassical realist representation for operational quantum theory, results in Ref.~\cite{schmid2021only} are suggestive that this may in fact be possible. (In particular, Ref.~\cite{schmid2021only} proves that there is a unique classical realist representation of any odd-dimensional stabilizer subtheory, namely, that given by Refs.~\cite{spekkens2007evidence,gross2006hudson}.)

We have presented a partial development of a graphical calculus for Boolean propositional logic. We leave for future work the problems of developing this into a complete graphical calculus, extending it to incorporate predicate logic, and generalizing it to nonclassical logics.
Similarly, there are additional tools from Bayesian probability theory which would be useful to incorporate into $\Inf$ such as postselection and Bayesian inversion.
Both of these projects are likely to help with the eventual development of $\textsc{QSubStoch}$.

Another research direction concerns the development of a resource theory~\cite{coecke2016mathematical} of nonclassicality.  We have here argued that the distinction between classical and nonclassical is best understood as a distinction concerning the sort of realism required to provide an explanation of operational predictions.  Within any proposal for a nonclassical realist CI theory $\XFXI$ which subsumes the classical realist CI theory $\FI$, therefore, one can hope to formulate a resource theory of nonclassicality of processes.  In this way, the research program described here could clarify the notion of nonclassicality inherent in `common-cause boxes' (i.e., Bell scenarios), studied in Refs.~\cite{Wolfe2020quantifyingbell,Schmid2020typeindependent,rosset2019characterizing}, and the nonclassicality inherent in contextuality scenarios (i.e., scenarios that imply a noncontextuality no-go theorem, but not a Bell-like no-go theorem).

\section*{Acknowledgements}
We thank Tobias Fritz, Ana Bel\'en Sainz, Giulio Chiribella, Aleks Kissenger, Debashis Saha, and the participants of Quantum Physics and Logic 2020 for helpful discussions.  JHS thanks Nitica Sakharwade and Lucien Hardy for interesting discussions and in particular for suggesting the possibility for a unique prediction map arising from some quantum ontology. RWS acknowledges Bob Coecke for early discussions concerning how to provide a category-theoretic characterization of generalized noncontextuality. All of the diagrams were created using TikZit.
D.S. is supported by a Vanier Canada Graduate Scholarship. JHS acknowledges support by the Foundation for Polish Science (IRAP project, ICTQT, contract no.2018/MAB/5, co-financed by EU within Smart
Growth Operational Programme).
This research was supported by Perimeter Institute for Theoretical Physics. Research at Perimeter Institute is supported in part by the Government of Canada through the Department of Innovation, Science and Economic Development Canada and by the Province of Ontario through the Ministry of Colleges and Universities.

\bibliographystyle{apsrev4-1}
\bibliography{bibliography}

\appendix

\section{Related work}\label{relatedwork}

A number of previous works either inspired parts of our work, or would be interesting to relate to our work.

The basic diagrammatic notation underpinning this work can be traced back to the work of, for example, \cite{kelly1972many,joyal1991geometry}, which used string diagrams to represent particular types of categories. See \cite{selinger2010survey} for a clear survey of these notations, and see \cite{penrose1971applications} for a graphical representation of tensors. The two-directional diagrams which we used here were inspired by Hardy's duotensor notation \cite{hardy2011reformulating}. A seemingly related notation has also appeared in the context of double categories \cite{myers2016string}, and it would be interesting to see if there is a formal connection between these. The work of \cite{fritz2018bimonoidal} (which was itself based on \cite{mellies2006functorial}) first  introduced us to the graphical representation of diagram-preserving maps which we used in this work.

Diagrammatic notation was first used in the context of quantum theory within the research program of categorical quantum mechanics, which began in \cite{coecke2004logic}, was axiomatised in \cite{abramsky2004categorical}, and is now the basis of the textbook \cite{coecke2018picturing}. This sparked the quantum picturalism revolution \cite{coecke2010quantum,hardy2010formalism}, as well as use of similar notation for GPTs~\cite{hardy2011reformulating} and the operational probabilistic theories of the Pavia group \cite{chiribella2010probabilistic,Chiribella2011,d2017quantum}. Stronger connections between these notations have been developed in, for example, \cite{gogioso2017categorical,wilce2018shortcut,selby2018reconstructing}. Moreover, more categorical approaches to generalized theories have been studied extensively using diagrammatic notation, in particular by Gogioso in Refs.~\cite{gogioso2017fantastic,gogioso2017categorical,hefford2020hyper}, which also contains a formal treatment of the infinite dimensional case~\cite{gogioso2016infinite,gogioso2018quantum}.

There are many connections to the framework of operational probabilistic theories~\cite{chiribella2010probabilistic,Chiribella2011,d2017quantum}, which served as inspiration for multiple aspects of our framework. Developing a full understanding of the relations between the two is left for future work. Of particular note is the idea of a prediction map being used to define a notion of equivalence, with respect to which one can quotient.
This notion of quotienting also appeared in \cite{hardy2011reformulating} and \cite{schmid2020structure}. Moreover, the causality axiom of Ref.~\cite{chiribella2010probabilistic} is closely related to our ignorability assumption, and both of these are closely related to the notion of terminality of Refs.~\cite{coecke2014terminality,kissinger2017equivalence}.

Moreover, the rough idea of structure preservation in ontological models has appeared in various forms (e.g. as a diagram-preserving map, or equivalently as a functor between categories)
in Refs.~\cite{schmid2020structure,gheorghiu2019ontological,catani2020mathematical,mansfield2018quantum,karvonen2018categories,abramsky2019comonadic}.

\section{On the meaning of diagrams in our operational CI theories} \label{manifestorcausal2}

In Section~\ref{sec:reps}, we noted that diagram preservation is an immediate consequence of our choice to take diagrams in an operational CI theory to represent one's hypothesis about the fundamental causal and inferential structure in the given scenario. We now contrast this with the usual approach to operational theories, wherein one typically takes operational diagrams to be a representation of some kind of structure that is independent of one's interpretation.

For example, in quantum theory, any given scenario can be described as a circuit of completely-positive trace-preserving maps. The circuit assigned to a particular experiment (or to the idealized conception thereof) is essentially unique, and is a fact on which physicists of virtually all interpretational camps will agree upon. At a minimum, these camps agree on this circuit as the `correct' one in the sense of having maximal pragmatic utility as a mathematical representation of one's experiment. In the usual approach to operational theories, it is this circuit depicting the calculational structure that is typically taken as the diagram representing one's scenario.

To provide a realist representation of one's scenario, however, requires one (in our view)  to furthermore commit to an underlying causal structure. In general (depending on one's interpretational camp), this causal structure will not correspond to the calculational circuit just described. Hence, in such an approach, one's realist representation map would not be diagram-preserving, but must somehow map from the calculational circuit to one's hypothesized causal structure.

In contrast, in our framework, we do not represent the calculational circuit at all. Rather, we stipulate that the diagram one draws to describe a given scenario in an operational CI theory must be chosen to respect one's hypothesis about the fundamental causal-inferential structure. Hence, the classical realist representation map is diagram-preserving.

 The only real novelty here is that in our framework, the term `operational theory' no longer describes a description which is so bare-bones that all users of the framework will agree on it.

 We now note a key consequence of our choice to take operational diagrams to represent one's hypothesis about the fundamental causal-inferential structure: namely, that diagram-preservation does {\em not} constitute a limitation on the scope of realist representations within our framework.

To demonstrate this, consider the \crealist representation of a pair of independent causal systems in our framework. Diagram preservation implies that these are represented by a pair of independent systems in the \crealist CI theory. Since system composition in a \crealist CI theory is given by the cartesian product of the corresponding ontic state spaces, it appears as though our framework commits one to represent every pair of operational systems by a cartesian product of the corresponding state spaces, an assumption sometimes termed {\em ontic separability}~\cite{Harrigan}. If one is committed to the idea that the two systems in question fundamentally exhibit some holistic properties, then this assumption (and hence our assumption of diagram preservation) might appear overly restrictive. Such an impression is mistaken, however. In our framework, to posit such holistic properties is to grant
 that the actual causal situation is one in which the relevant degrees of freedom fundamentally {\em cannot} be divided into two independent subsystems---even if they are represented by a tensor product in the calculational diagram.
 Rather, they fundamentally behave as a single monolithic causal system. With this causal hypothesis, then, our framework demands that one represent the operational scenario using a single system rather than a pair of systems, and the \crealist representation of this single system is thereby allowed to be an arbitrary ontic state space, not necessarily a Cartesian product of ontic state spaces of two components. So we see that our framework does not limit the scope of \crealist representations.

Of course, given a commitment to a {\em particular} causal-inferential hypothesis, the assumption of diagram-preservation provides strong constraints on the scope of possible \crealist representations. These constraints take the form of causal compatibility constraints, as discussed in Section~\ref{causcompatineq}. Indeed, one can subsume a number of assumptions made in deriving no-go theorems on ontological representations (including those needed to derive Bell's theorem, a version of the preparation-independence postulate~\cite{pusey2012reality}, the Markovianity assumption used in Ref.~\cite{montina2008exponential}, lambda-screening~\cite{SpePIRSA}, and the assumptions used in Ref.~\cite{schmid2020structure}) under the assumption that the fundamental causal-inferential structure respects the standard (calculational) quantum circuit.

\section{Discussion of the first generator}\label{app:firstgenerator}

There is a somewhat awkward feature of our current rewrite rules for this generator, namely, that the way in which the causal structure can be encoded within the inferential structure is non-unique. For example, if we are given a situation such as
\beq
\InputIfFileExists{Diagrams/nonUniqueCausalToInferential.tikz}{}{\input{./figures/Diagrams/nonUniqueCausalToInferential.tikz}} 
\eeq
then there are many ways to use our rewrite rules to encode the causal structure of this diagram into the inferential structure, for example:
\beq
\InputIfFileExists{Diagrams/nonUniqueCausalToInferential.tikz}{}{\input{./figures/Diagrams/nonUniqueCausalToInferential.tikz}}\quad=\quad%
\InputIfFileExists{Diagrams/nonUniqueCausalToInferential1.tikz}{}{\input{./figures/Diagrams/nonUniqueCausalToInferential1.tikz}}\quad=\quad%
\InputIfFileExists{Diagrams/nonUniqueCausalToInferential2.tikz}{}{\input{./figures/Diagrams/nonUniqueCausalToInferential2.tikz}},
\eeq
or
\beq
\InputIfFileExists{Diagrams/nonUniqueCausalToInferential.tikz}{}{\input{./figures/Diagrams/nonUniqueCausalToInferential.tikz}}\quad=\quad%
\InputIfFileExists{Diagrams/nonUniqueCausalToInferential3.tikz}{}{\input{./figures/Diagrams/nonUniqueCausalToInferential3.tikz}}\quad=\quad%
\InputIfFileExists{Diagrams/nonUniqueCausalToInferential4.tikz}{}{\input{./figures/Diagrams/nonUniqueCausalToInferential4.tikz}}.
\eeq
This is not a problem, as the definitions of these two  stochastic maps ensure that they satisfy:
\beq\label{eq:bifunct}
\InputIfFileExists{Diagrams/bifunctorialToDistributivity.tikz}{}{\input{./figures/Diagrams/bifunctorialToDistributivity.tikz}} \quad = \quad %
\InputIfFileExists{Diagrams/bifunctorialToDistributivity1.tikz}{}{\input{./figures/Diagrams/bifunctorialToDistributivity1.tikz}}.
\eeq
This follows immediately by considering the action of the two of these on delta-function states of knowledge and then using their defintions together with bifunctoriality within the causal theory. This means that the important properties of the causal theory, which ensure that the diagrammatic representation is a faithful one, are naturally encoded within these stochastic maps which appear in our rewrite rules.

Whilst this may ensure consistency, one of the nice features of working with the diagrammatic representation of a process theory, is that we do not have to worry about  such conditions as they are made tautological by the diagrammatic representation. Clearly, by privileging sequential and parallel composition within our rewrite rules we have, to some extent, lost this nice feature. A solution to this comes from the work of \cite{patterson2021wiring} (see \cite{fong2018seven} for a clear conceptual introduction to operads and their algebras) which allows us to reconceptualise our causal process theory as an algebra for an acyclic wiring operad, $\mathcal{W}$. Then, to every diagram in the causal theory, such as,
\beq%
\InputIfFileExists{Diagrams/diagramSmall.tikz}{}{\input{./figures/Diagrams/diagramSmall.tikz}},\eeq
there is an associated acyclic wiring operation $w\in \mathcal{W}$. This tells us only how the processes in the diagram are wired together but nothing about the nature of the processes. We can then define an operad functor from $\mathcal{W} \to \Inf$ such that, for example:
\beq
\InputIfFileExists{Diagrams/operad1.tikz}{}{\input{./figures/Diagrams/operad1.tikz}} \quad = \quad %
\InputIfFileExists{Diagrams/operad2.tikz}{}{\input{./figures/Diagrams/operad2.tikz}}.
\eeq
where the $w$ in the white dot is the image of $w\in \mathcal{W}$ under the action of the operad functor.

The two stochastic maps that we have so far been using within our rewrite rules can then be seen simply as special cases of these $w$-dots. Moreover, the condition that we show in eq.~\eqref{eq:bifunct} is then an immediate consequence of operad-functoriality.

Exploring this operadic formulation further is another interesting research direction. On the one hand, it may give insight into the structure of causal-inferential theories themselves, and, on the other hand, allow for the study of theories with a less stringent notion of causation. For example, by following the  forthcoming work of \cite{timeneutral} we could explore causal-inferential theories without an intrinsic preferred time direction.

\section{Useful results in $\Inf$}\label{app:Inf}


We now list a number of useful equalities, some of which we will need for proofs in the next section.
Each can be verified immediately by composing the partial functions defining the relevant processes.

\beq
\InputIfFileExists{Diagrams/topQuestion1.tikz}{}{\input{./figures/Diagrams/topQuestion1.tikz}} \ = \ %
\InputIfFileExists{Diagrams/topQuestion2.tikz}{}{\input{./figures/Diagrams/topQuestion2.tikz}} \quad,\quad
\InputIfFileExists{Diagrams/bottomQuestion1.tikz}{}{\input{./figures/Diagrams/bottomQuestion1.tikz}} \ = \ %
\InputIfFileExists{Diagrams/bottomQuestion2.tikz}{}{\input{./figures/Diagrams/bottomQuestion2.tikz}}.
\eeq

\beq
\begin{tikzpicture}
	\begin{pgfonlayer}{nodelayer}
		\node [style=infcopoint] (0) at (0, -0) {$\top$};
		\node [style=none] (1) at (-1.25, -0) {};
		\node [style={up label}] (2) at (-1.25, -0) {$X$};
	\end{pgfonlayer}
	\begin{pgfonlayer}{edgelayer}
		\draw [cWire] (1.center) to (0);
	\end{pgfonlayer}
\end{tikzpicture}
\ \ = \
\begin{tikzpicture}
	\begin{pgfonlayer}{nodelayer}
		\node [style=infcopoint] (0) at (1.75, -0) {$\top$};
		\node [style={up label}] (1) at (1, -0) {$\textsc{b}$};
		\node [style={small box}] (2) at (0, -0) {$\pi$};
		\node [style=none] (3) at (-1.5, -0) {};
		\node [style={up label}] (4) at (-1, -0) {$X$};
	\end{pgfonlayer}
	\begin{pgfonlayer}{edgelayer}
		\draw [cWire] (3) to (2);
		\draw [cWire] (2) to (0);
	\end{pgfonlayer}
\end{tikzpicture}
\quad , \quad
\begin{tikzpicture}
	\begin{pgfonlayer}{nodelayer}
		\node [style=infcopoint] (0) at (0, -0) {$\perp$};
		\node [style=none] (1) at (-1.25, -0) {};
		\node [style={up label}] (2) at (-1.25, -0) {$X$};
	\end{pgfonlayer}
	\begin{pgfonlayer}{edgelayer}
		\draw [cWire] (1.center) to (0);
	\end{pgfonlayer}
\end{tikzpicture}
\ \ = \
\begin{tikzpicture}
	\begin{pgfonlayer}{nodelayer}
		\node [style=infcopoint] (0) at (1.75, -0) {$\perp$};
		\node [style={up label}] (1) at (1, -0) {$\textsc{b}$};
		\node [style={small box}] (2) at (0, -0) {$\pi$};
		\node [style=none] (3) at (-1.5, -0) {};
		\node [style={up label}] (4) at (-1, -0) {$X$};
	\end{pgfonlayer}
	\begin{pgfonlayer}{edgelayer}
		\draw [cWire] (3) to (2);
		\draw [cWire] (2) to (0);
	\end{pgfonlayer}
\end{tikzpicture}
\ \
\eeq

\beq
\InputIfFileExists{Diagrams/InfIdentity1.tikz}{}{\input{./figures/Diagrams/InfIdentity1.tikz}} \ = \ %
\begin{tikzpicture}
	\begin{pgfonlayer}{nodelayer}
		\node [style=none] (7) at (-7, 0) {};
		\node [style=none] (8) at (-4.75, 0) {};
	\end{pgfonlayer}
	\begin{pgfonlayer}{edgelayer}
		\draw [cWire] (7.center) to (8.center);
	\end{pgfonlayer}
\end{tikzpicture}
} \ = \
\InputIfFileExists{Diagrams/InfIdentity3.tikz}{}{\input{./figures/Diagrams/InfIdentity3.tikz}}
\eeq

\beq%
\InputIfFileExists{Diagrams/InfIdentity4New.tikz}{}{\input{./figures/Diagrams/InfIdentity4New.tikz}}\quad=\quad%
\InputIfFileExists{Diagrams/InfIdentity5New.tikz}{}{\input{./figures/Diagrams/InfIdentity5New.tikz}}\eeq

\begin{align}
\InputIfFileExists{Diagrams/tt1.tikz}{}{\input{./figures/Diagrams/tt1.tikz}}\ =\ %
\begin{tikzpicture}
	\begin{pgfonlayer}{nodelayer}
		\node [style=none] (0) at (0.7500002, -0) {};
		\node [style=none] (1) at (-0.5000001, -0) {};
		\node [style={up label}] (2) at (0, -0) {$\textsc{b}$};
	\end{pgfonlayer}
	\begin{pgfonlayer}{edgelayer}
		\draw [cWire] (1.center) to (0.center);
	\end{pgfonlayer}
\end{tikzpicture}} \quad &,\quad %
\InputIfFileExists{Diagrams/tt2.tikz}{}{\input{./figures/Diagrams/tt2.tikz}}\ =\ %
\InputIfFileExists{Diagrams/tt6.tikz}{}{\input{./figures/Diagrams/tt6.tikz}}\\
\InputIfFileExists{Diagrams/tt3.tikz}{}{\input{./figures/Diagrams/tt3.tikz}}\ =\ %
} \quad &, \quad %
\InputIfFileExists{Diagrams/tt4.tikz}{}{\input{./figures/Diagrams/tt4.tikz}}\ =\ %
\InputIfFileExists{Diagrams/tt7.tikz}{}{\input{./figures/Diagrams/tt7.tikz}}
\end{align}
\beq
\InputIfFileExists{Diagrams/copyY1.tikz}{}{\input{./figures/Diagrams/copyY1.tikz}}\ =\ %
\InputIfFileExists{Diagrams/copyY2.tikz}{}{\input{./figures/Diagrams/copyY2.tikz}}\quad,\quad
\InputIfFileExists{Diagrams/copyN1.tikz}{}{\input{./figures/Diagrams/copyN1.tikz}}\ =\ %
\InputIfFileExists{Diagrams/copyN2.tikz}{}{\input{./figures/Diagrams/copyN2.tikz}}
\eeq

\beq%
\InputIfFileExists{Diagrams/ii8New.tikz}{}{\input{./figures/Diagrams/ii8New.tikz}}\ =\ %
\InputIfFileExists{Diagrams/ii9New.tikz}{}{\input{./figures/Diagrams/ii9New.tikz}}\quad,\quad%
\InputIfFileExists{Diagrams/ii10New.tikz}{}{\input{./figures/Diagrams/ii10New.tikz}}\ =\ %
\InputIfFileExists{Diagrams/ii11New.tikz}{}{\input{./figures/Diagrams/ii11New.tikz}}\eeq


\begin{align}
\InputIfFileExists{Diagrams/copyCoUnit.tikz}{}{\input{./figures/Diagrams/copyCoUnit.tikz}}\quad =\quad %
\InputIfFileExists{Diagrams/copyCoUnitnew.tikz}{}{\input{./figures/Diagrams/copyCoUnitnew.tikz}}\quad \label{copycounitinf} &=\quad %
\begin{tikzpicture}
	\begin{pgfonlayer}{nodelayer}
		\node [style=none] (0) at (-1, -0) {};
		\node [style=none] (1) at (1, -0) {};
		\node [style={up label}] (2) at (0.7500001, -0) {$X$};
	\end{pgfonlayer}
	\begin{pgfonlayer}{edgelayer}
		\draw [cWire] (0.center) to (1.center);
	\end{pgfonlayer}
\end{tikzpicture}}\
\end{align}
\begin{align}
\InputIfFileExists{Diagrams/copyCoUnit4.tikz}{}{\input{./figures/Diagrams/copyCoUnit4.tikz}}\quad &=\quad %
\InputIfFileExists{Diagrams/copyCoUnit3.tikz}{}{\input{./figures/Diagrams/copyCoUnit3.tikz}}
\end{align}

\beq \label{usedforabsorp}
\InputIfFileExists{Diagrams/absorption2.tikz}{}{\input{./figures/Diagrams/absorption2.tikz}}\quad=\quad%
\InputIfFileExists{Diagrams/absorption3.tikz}{}{\input{./figures/Diagrams/absorption3.tikz}};
\eeq

\beq
\label{eq:tripleCopy}
\InputIfFileExists{Diagrams/ii12New.tikz}{}{\input{./figures/Diagrams/ii12New.tikz}} \quad = \quad %
\InputIfFileExists{Diagrams/ii13New.tikz}{}{\input{./figures/Diagrams/ii13New.tikz}}\eeq

\beq  \label{eq:distributivity1}
\InputIfFileExists{Diagrams/ii14New.tikz}{}{\input{./figures/Diagrams/ii14New.tikz}} \quad = \quad %
\InputIfFileExists{Diagrams/ii15New.tikz}{}{\input{./figures/Diagrams/ii15New.tikz}},\eeq

\beq  \label{eq:distributivity2}
\InputIfFileExists{Diagrams/distNew1.tikz}{}{\input{./figures/Diagrams/distNew1.tikz}} \quad = \quad %
\InputIfFileExists{Diagrams/distNew2.tikz}{}{\input{./figures/Diagrams/distNew2.tikz}},\eeq


\subsection{Properties of a Boolean algebra (proved diagrammatically)} \label{Boolpropproof}

A Boolean algebra satisfies the following properties (which are not all independent). For simplicity, we will here use $\alpha$, $\beta$, and $\gamma$ to denote propositions.
\begin{itemize}
\item associativity:
$\alpha \lor (\beta \lor \gamma) = (\alpha \lor \beta) \lor \gamma$ and $\alpha \land  (\beta \land \gamma) = (\alpha \land \beta) \land \gamma$
\item commutativity:
$\alpha \lor \beta = \beta \lor \alpha$ and $\alpha \land \beta = \beta \land \alpha$
\item identity:
$\alpha \lor \perp = \alpha$ and $\alpha \land \top = \alpha$
\item complements:
$\alpha \lor \neg \alpha = \top$ and $\alpha \land \neg \alpha = \perp$
\item distributivity: $\alpha \lor (\beta \land \gamma) = (\alpha \lor \beta) \land (\alpha \lor \gamma)$ and $\alpha \land (\beta \lor \gamma) = (\alpha \land \beta) \lor (\alpha \land \gamma)$
\item idempotence:
$\alpha \lor \alpha = \alpha$ and $\alpha \land \alpha = \alpha$
\item annihilation:
$\alpha \lor \top = \top$ and $\alpha \land \perp = \perp$
\item absorption:
$\alpha \lor (\alpha \land \beta) = \alpha$ and $\alpha \land (\alpha \lor \beta) = \alpha$.
\end{itemize}

 We now prove that each of these expressions holds in our diagrammatic representations. We prove only the first of each of these expressions; in each case, the proof of the second is similar. The associativity and commutativity axiom follow immediately by the symmetry of Eq.~\eqref{logicalconnect1}. The identity axiom follows simply from
 \beq
\InputIfFileExists{Diagrams/identityAxiomProof1.tikz}{}{\input{./figures/Diagrams/identityAxiomProof1.tikz}}\ = \ %
\InputIfFileExists{Diagrams/identityAxiomProof2.tikz}{}{\input{./figures/Diagrams/identityAxiomProof2.tikz}} \ = \ %
\InputIfFileExists{Diagrams/identityAxiomProof3.tikz}{}{\input{./figures/Diagrams/identityAxiomProof3.tikz}}.
 \eeq
To prove the complements axiom, one has
\begin{align}
\InputIfFileExists{Diagrams/complementsAxiomProof1.tikz}{}{\input{./figures/Diagrams/complementsAxiomProof1.tikz}}\ &= \ %
\InputIfFileExists{Diagrams/complementsAxiomProof2.tikz}{}{\input{./figures/Diagrams/complementsAxiomProof2.tikz}} \\
 &= \ %
\InputIfFileExists{Diagrams/complementsAxiomProof3.tikz}{}{\input{./figures/Diagrams/complementsAxiomProof3.tikz}} \\
 &= \ %
\InputIfFileExists{Diagrams/complementsAxiomProof4.tikz}{}{\input{./figures/Diagrams/complementsAxiomProof4.tikz}}\\
 &=\ %
\InputIfFileExists{Diagrams/complementsAxiomProof5.tikz}{}{\input{./figures/Diagrams/complementsAxiomProof5.tikz}}
 \end{align}

\noindent The proof of distributivity is as follows:
\begin{align}\label{eq:idempotence}
\InputIfFileExists{Diagrams/ii16.tikz}{}{\input{./figures/Diagrams/ii16.tikz}} \quad &= \quad %
\InputIfFileExists{Diagrams/ii17.tikz}{}{\input{./figures/Diagrams/ii17.tikz}} \\
&= %
\InputIfFileExists{Diagrams/ii18.tikz}{}{\input{./figures/Diagrams/ii18.tikz}} \\
&= %
\InputIfFileExists{Diagrams/ii19.tikz}{}{\input{./figures/Diagrams/ii19.tikz}} \label{ii19} \\
&= %
\InputIfFileExists{Diagrams/ii20.tikz}{}{\input{./figures/Diagrams/ii20.tikz}}
\end{align}
where Eq.~\eqref{ii19} follows from Eq.~\eqref{eq:tripleCopy} and Eq.~\eqref{eq:idempotence} follows from Eq.~\eqref{eq:distributivity1}.

\noindent The proof of idempotence is as follows:
\beq%
\InputIfFileExists{Diagrams/ii21.tikz}{}{\input{./figures/Diagrams/ii21.tikz}}=%
\InputIfFileExists{Diagrams/ii22.tikz}{}{\input{./figures/Diagrams/ii22.tikz}}=
\begin{tikzpicture}
	\begin{pgfonlayer}{nodelayer}
		\node [style=none] (0) at (-1.25, 0) {};
		\node [style=small box] (1) at (-2.25, 0) {$\alpha$};
		\node [style=none] (6) at (-3.25, 0) {};
	\end{pgfonlayer}
	\begin{pgfonlayer}{edgelayer}
		\draw [cWire](1) to (0.center);
		\draw [cWire](6.center) to (1);
	\end{pgfonlayer}
\end{tikzpicture}
}\eeq

\noindent The proof of annihilation is as follows:
\begin{align}
\InputIfFileExists{Diagrams/ii24.tikz}{}{\input{./figures/Diagrams/ii24.tikz}}&=%
\InputIfFileExists{Diagrams/ii25.tikz}{}{\input{./figures/Diagrams/ii25.tikz}}\\
&=%
\InputIfFileExists{Diagrams/ii26.tikz}{}{\input{./figures/Diagrams/ii26.tikz}}=%
\begin{tikzpicture}
	\begin{pgfonlayer}{nodelayer}
		\node [style=small box] (0) at (0, 0) {$\top$};
		\node [style=none] (1) at (-1, 0) {};
		\node [style=none] (2) at (1, 0) {};
	\end{pgfonlayer}
	\begin{pgfonlayer}{edgelayer}
		\draw [cWire](1.center) to (0);
		\draw [cWire](0) to (2.center);
	\end{pgfonlayer}
\end{tikzpicture}
}.
\end{align}

\noindent The proof of absorption is as follows:
\begin{align}
\InputIfFileExists{Diagrams/absorption1.tikz}{}{\input{./figures/Diagrams/absorption1.tikz}} &= %
\InputIfFileExists{Diagrams/absorption4.tikz}{}{\input{./figures/Diagrams/absorption4.tikz}}\\
&= %
\InputIfFileExists{Diagrams/absorption6.tikz}{}{\input{./figures/Diagrams/absorption6.tikz}}\\
&= %
\InputIfFileExists{Diagrams/absorption7.tikz}{}{\input{./figures/Diagrams/absorption7.tikz}}\label{absorption7}\\
&= %
\begin{tikzpicture}
	\begin{pgfonlayer}{nodelayer}
		\node [style=none] (0) at (-1, -0) {};
		\node [style={small box}] (1) at (0, -0) {$\alpha$};
		\node [style=none] (2) at (1, -0) {};
	\end{pgfonlayer}
	\begin{pgfonlayer}{edgelayer}
		\draw [cWire] (1) to (2.center);
		\draw [cWire] (0.center) to (1);
	\end{pgfonlayer}
\end{tikzpicture}},
\end{align}
where Eq.~\eqref{absorption7} follows from Eq.~\eqref{usedforabsorp}.

\subsection{Partial functions in $\textsc{Boole}$ } \label{partfuncapp}

Recall from Eq.~\eqref{anypartfunc1} that any partial function $\hat{f}$ can be written as
\beq
\begin{tikzpicture}
	\begin{pgfonlayer}{nodelayer}
		\node [style=small box] (0) at (0, 0) {$\hat{f}$};
		\node [style=none] (1) at (-1, 0) {};
		\node [style=none] (2) at (1, 0) {};
	\end{pgfonlayer}
	\begin{pgfonlayer}{edgelayer}
		\draw [cWire] (1.center) to (0);
		\draw [cWire] (0) to (2.center);
	\end{pgfonlayer}
\end{tikzpicture}
} = %
\InputIfFileExists{Diagrams/arbitraryPartial2.tikz}{}{\input{./figures/Diagrams/arbitraryPartial2.tikz}},
\eeq
where $\chi_{\hat{f}}$ specifies the domain of $\hat{f}$ and $F$ is a propositional map.

Consider now the action of $\hat{f}$ on an arbitrary propositional effect defined by $\pi \in \mathcal{B}(Y)$, namely
\beq
\begin{tikzpicture}
	\begin{pgfonlayer}{nodelayer}
		\node [style=infcopoint] (0) at (2.25, -0) {$\pi$};
		\node [style={up label}] (1) at (1, -0) {$Y$};
		\node [style={small box}] (2) at (0, -0) {$\hat{f}$};
		\node [style=none] (3) at (-2, -0) {};
		\node [style={up label}] (4) at (-1, -0) {$X$};
	\end{pgfonlayer}
	\begin{pgfonlayer}{edgelayer}
		\draw [cWire] (3) to (2);
		\draw [cWire] (2) to (0);
	\end{pgfonlayer}
\end{tikzpicture}\ \ =:\
\begin{tikzpicture}
	\begin{pgfonlayer}{nodelayer}
		\node [style=infcopoint] (0) at (0, -0) {$\hat{f}(\pi)$};
		\node [style=none] (1) at (-1.25, -0) {};
		\node [style={up label}] (2) at (-1.25, -0) {$X$};
	\end{pgfonlayer}
	\begin{pgfonlayer}{edgelayer}
		\draw [cWire] (1.center) to (0);
	\end{pgfonlayer}
\end{tikzpicture}.
\eeq
This defines a map from $\mathcal{B}(Y)$ to $\mathcal{B}(X)$, but it remains to see what structure of the Boolean algebra of propositional effects this map preserves. We now show that the action of partial functions on propositional effects preserves $\perp$, $\lor$ and $\land$, but not $\top$ and $\neg$.

First, note that we can reexpress $\hat{f}(\pi)$ in terms of the propositional effect $\chi_{\hat{f}} \in \mathcal{B}(X)$ and the total function $F$ as follows:
\begin{align}
\begin{tikzpicture}
	\begin{pgfonlayer}{nodelayer}
		\node [style=infcopoint] (0) at (2.25, -0) {$\pi$};
		\node [style={up label}] (1) at (1, -0) {$Y$};
		\node [style={small box}] (2) at (0, -0) {$\hat{f}$};
		\node [style=none] (3) at (-2, -0) {};
		\node [style={up label}] (4) at (-1, -0) {$X$};
	\end{pgfonlayer}
	\begin{pgfonlayer}{edgelayer}
		\draw [cWire] (3) to (2);
		\draw [cWire] (2) to (0);
	\end{pgfonlayer}
\end{tikzpicture}  &= %
\InputIfFileExists{Diagrams/partBool1.tikz}{}{\input{./figures/Diagrams/partBool1.tikz}} \\
 &= %
\InputIfFileExists{Diagrams/partBool2.tikz}{}{\input{./figures/Diagrams/partBool2.tikz}} \\
 &= %
\InputIfFileExists{Diagrams/partBool3.tikz}{}{\input{./figures/Diagrams/partBool3.tikz}} \\
 &= %
\InputIfFileExists{Diagrams/partBool4.tikz}{}{\input{./figures/Diagrams/partBool4.tikz}} \\
 &= %
\InputIfFileExists{Diagrams/partBool5.tikz}{}{\input{./figures/Diagrams/partBool5.tikz}} \\
 &= %
\InputIfFileExists{Diagrams/partBool6.tikz}{}{\input{./figures/Diagrams/partBool6.tikz}}.
\end{align}
This is a very natural expression, stating that $\hat{f}(\pi)$ is equivalent to a propositional effect defined by the subset of $X$ which is both in the domain of $\hat{f}$ and in the image of $\pi$ under $F$.

At this point it is easy to verify that $\perp$ is preserved:
\begin{align}
\InputIfFileExists{Diagrams/parBool8.tikz}{}{\input{./figures/Diagrams/parBool8.tikz}} &= %
\InputIfFileExists{Diagrams/parBool7.tikz}{}{\input{./figures/Diagrams/parBool7.tikz}}\\
&= %
\InputIfFileExists{Diagrams/parBool9.tikz}{}{\input{./figures/Diagrams/parBool9.tikz}}\\
&= %
\begin{tikzpicture}
	\begin{pgfonlayer}{nodelayer}
		\node [style=none] (0) at (-0.25, -0) {};
		\node [style=none] (1) at (-2, -0) {};
		\node [style={up label}] (2) at (-2, -0) {$X$};
		\node [style=infcopoint] (3) at (0, -0) {$\perp$};
	\end{pgfonlayer}
	\begin{pgfonlayer}{edgelayer}
		\draw [CcWire] (1.center) to (0.center);
	\end{pgfonlayer}
\end{tikzpicture}},
\end{align}
but that $\top$ is not preserved if $\chi_f\subsetneq X$:
\begin{align}
\InputIfFileExists{Diagrams/parBool11.tikz}{}{\input{./figures/Diagrams/parBool11.tikz}} &= %
\InputIfFileExists{Diagrams/parBool12.tikz}{}{\input{./figures/Diagrams/parBool12.tikz}}\\
&= %
\InputIfFileExists{Diagrams/parBool13.tikz}{}{\input{./figures/Diagrams/parBool13.tikz}}\\
&= %
\begin{tikzpicture}
	\begin{pgfonlayer}{nodelayer}
		\node [style=none] (0) at (-0.25, -0) {};
		\node [style=none] (1) at (-2, -0) {};
		\node [style={up label}] (2) at (-2, -0) {$X$};
		\node [style=infcopoint] (3) at (0, -0) {$\chi_{\hat{f}}$};
	\end{pgfonlayer}
	\begin{pgfonlayer}{edgelayer}
		\draw [CcWire] (1.center) to (0.center);
	\end{pgfonlayer}
\end{tikzpicture}}.
\end{align}
Hence, $\hat{f}$ does not define a Boolean algebra homomorphism (as these preserve $\top$).

However, it does preserve $\lor$ and $\land$, as we now show. Preservation of $\lor$ can be derived as
\begin{align}
\InputIfFileExists{Diagrams/parBool15.tikz}{}{\input{./figures/Diagrams/parBool15.tikz}} &= %
\InputIfFileExists{Diagrams/parBool16.tikz}{}{\input{./figures/Diagrams/parBool16.tikz}}\\
&= %
\InputIfFileExists{Diagrams/parBool17.tikz}{}{\input{./figures/Diagrams/parBool17.tikz}}\\
&= %
\InputIfFileExists{Diagrams/parBool18.tikz}{}{\input{./figures/Diagrams/parBool18.tikz}}\\
&= %
\InputIfFileExists{Diagrams/parBool19.tikz}{}{\input{./figures/Diagrams/parBool19.tikz}} \\
&= %
\InputIfFileExists{Diagrams/parBool20.tikz}{}{\input{./figures/Diagrams/parBool20.tikz}}\label{parbool20}\\
&= %
\InputIfFileExists{Diagrams/parBool21.tikz}{}{\input{./figures/Diagrams/parBool21.tikz}}\\
&= %
\InputIfFileExists{Diagrams/parBool22.tikz}{}{\input{./figures/Diagrams/parBool22.tikz}}\label{parbool22}\\
&= %
\InputIfFileExists{Diagrams/parBool23.tikz}{}{\input{./figures/Diagrams/parBool23.tikz}}\\
&= %
\InputIfFileExists{Diagrams/parBool24.tikz}{}{\input{./figures/Diagrams/parBool24.tikz}},
\end{align}
 where Eq.~\eqref{parbool20} follows from Eq.~\eqref{eq:distributivity2} and Eq.~\eqref{parbool22} follows from Eq.~\eqref{eq:tripleCopy}.
Preservation of $\land$ can be derived as
\begin{align}
\InputIfFileExists{Diagrams/parBool25.tikz}{}{\input{./figures/Diagrams/parBool25.tikz}} &= %
\InputIfFileExists{Diagrams/parBool26.tikz}{}{\input{./figures/Diagrams/parBool26.tikz}}\\
&= %
\InputIfFileExists{Diagrams/parBool27.tikz}{}{\input{./figures/Diagrams/parBool27.tikz}}\\
&= %
\InputIfFileExists{Diagrams/parBool28.tikz}{}{\input{./figures/Diagrams/parBool28.tikz}}\label{parbool28}\\
&= %
\InputIfFileExists{Diagrams/parBool29.tikz}{}{\input{./figures/Diagrams/parBool29.tikz}}\label{parbool29}\\
&= %
\InputIfFileExists{Diagrams/parBool30.tikz}{}{\input{./figures/Diagrams/parBool30.tikz}}\\
&= %
\InputIfFileExists{Diagrams/parBool31.tikz}{}{\input{./figures/Diagrams/parBool31.tikz}},
\end{align}
where Eq.~\eqref{parbool28} follows from Eq.~\eqref{eq:idempotence} and Eq.~\eqref{parbool29} follows from Eq.~\eqref{eq:tripleCopy}.

In summary, we see that $\hat{f}$ is a Boolean algebra homomorphism from $\mathcal{B}(Y)$ to $\mathcal{B}(\chi_{\hat{f}})$.

\subsection{Propositions about composite systems} \label{compositepropns}

In Eq.~\eqref{compprop}, we noted that one can express propositional questions about composite systems, e.g. as
\beq
\InputIfFileExists{Diagrams/compositePropositionalQuestion.tikz}{}{\input{./figures/Diagrams/compositePropositionalQuestion.tikz}}
\eeq
where $\pi \in \mathcal{B}(X\times Y)$.
However, suppose that we have some propositional question $\pi$ about $X$ and some propositional question $\pi'$ about $Y$; then, how should these be composed to give a propositional question about $X\times Y$? One's first guess might be to simply compose these in parallel within $\textsc{Boole}$; however, this would give
\beq
\InputIfFileExists{Diagrams/composingQuestions1.tikz}{}{\input{./figures/Diagrams/composingQuestions1.tikz}},
\eeq
which is not a propositional question as its a function to $\textsc{b}\times \textsc{b}$ rather than simply $\textsc{b}$. The resolution comes from examining how we expect these to compose as Boolean Algebras. Note that the sets compose via the cartesian product, hence Boolean Algebras compose via the following rule: $\mathcal{B}(X)\otimes \mathcal{B}(Y) := \mathcal{B}(X\times Y)$ and moreover that $\pi\otimes \pi' \in \mathcal{B}(X)\otimes\mathcal{B}(Y)$ can be defined as a subset of $X\times Y$ by $\pi\times \pi'$ viewed as subsets of $X$ and $Y$ respectively. In our diagrammatic language this is represented by:
\beq
\InputIfFileExists{Diagrams/composingQuestions3.tikz}{}{\input{./figures/Diagrams/composingQuestions3.tikz}}
\quad\quad=
\InputIfFileExists{Diagrams/composingQuestions2.tikz}{}{\input{./figures/Diagrams/composingQuestions2.tikz}}
\eeq
This composite therefore can be clearly interpreted as the situation in which we are interested in both $\pi$ and $\pi'$ being true about their respective systems.

Note, however, that we then clearly have other ways that we could compose these, for instance via:
\beq
\InputIfFileExists{Diagrams/composingQuestions4.tikz}{}{\input{./figures/Diagrams/composingQuestions4.tikz}}
\eeq
to see what this means in terms of the Boolean Algebra consider the following rewrites:
\begin{align}
\InputIfFileExists{Diagrams/composingQuestions4.tikz}{}{\input{./figures/Diagrams/composingQuestions4.tikz}} \quad &= \quad
\InputIfFileExists{Diagrams/composingQuestions5.tikz}{}{\input{./figures/Diagrams/composingQuestions5.tikz}} \\
&= \quad
\InputIfFileExists{Diagrams/composingQuestions6.tikz}{}{\input{./figures/Diagrams/composingQuestions6.tikz}} \\
&= \quad
\InputIfFileExists{Diagrams/composingQuestions7.tikz}{}{\input{./figures/Diagrams/composingQuestions7.tikz}}
\end{align}
(Here we have included or omitted subscripts labeling the systems about which propositions are being made, as convenient.)

This corresponds, intuitively, to what we would mean for the logical disjunction of two propositions about distinct systems.

\subsection{Useful relations between stochastic maps}

There are various relationships between the stochastic maps which we have defined which are useful in the proofs in this paper. We list them here for reference.
\beq
\InputIfFileExists{Diagrams/starIso1.tikz}{}{\input{./figures/Diagrams/starIso1.tikz}}\quad=\quad%
\begin{tikzpicture}
	\begin{pgfonlayer}{nodelayer}
		\node [style=none] (0) at (1.25, -0) {};
		\node [style=none] (1) at (-0.25, -0) {};
		\node [style={up label}] (2) at (0, -0) {$\Lambda$};
	\end{pgfonlayer}
	\begin{pgfonlayer}{edgelayer}
		\draw[cWire] (1.center) to (0.center);
	\end{pgfonlayer}
\end{tikzpicture}
}
\eeq
\beq
\InputIfFileExists{Diagrams/starIso3.tikz}{}{\input{./figures/Diagrams/starIso3.tikz}}\quad=\quad%
\begin{tikzpicture}
	\begin{pgfonlayer}{nodelayer}
		\node [style=none] (0) at (1, -0) {};
		\node [style=none] (1) at (-0.5, -0) {};
		\node [style={up label}] (2) at (.25, -0) {$\morph{\star}{\Lambda}$};
	\end{pgfonlayer}
	\begin{pgfonlayer}{edgelayer}
		\draw[cWire] (1.center) to (0.center);
	\end{pgfonlayer}
\end{tikzpicture}
}
\eeq

\beq
\InputIfFileExists{Diagrams/grayTriangleIso1.tikz}{}{\input{./figures/Diagrams/grayTriangleIso1.tikz}}\quad=\quad%
\InputIfFileExists{Diagrams/grayTriangleIso2.tikz}{}{\input{./figures/Diagrams/grayTriangleIso2.tikz}}
\eeq
\beq
\InputIfFileExists{Diagrams/grayTriangleIso3.tikz}{}{\input{./figures/Diagrams/grayTriangleIso3.tikz}}\quad=\quad%
\begin{tikzpicture}
	\begin{pgfonlayer}{nodelayer}
		\node [style=none] (0) at (-1, -0) {};
		\node [style={up label}] (1) at (0, -0) {$X\times Y$};
		\node [style=none] (2) at (1.5, -0) {};
	\end{pgfonlayer}
	\begin{pgfonlayer}{edgelayer}
		\draw [cWire] (2.center) to (0.center);
	\end{pgfonlayer}
\end{tikzpicture}}
\eeq

\beq \label{ruleinf1}
\InputIfFileExists{Diagrams/ii28New.tikz}{}{\input{./figures/Diagrams/ii28New.tikz}}\quad=\quad%
\InputIfFileExists{Diagrams/ii29New.tikz}{}{\input{./figures/Diagrams/ii29New.tikz}}.
\eeq

\beq\label{ruleinf2}
\InputIfFileExists{Diagrams/dotdotblackdot.tikz}{}{\input{./figures/Diagrams/dotdotblackdot.tikz}}\quad=\quad%
\InputIfFileExists{Diagrams/dotdotblackdot2.tikz}{}{\input{./figures/Diagrams/dotdotblackdot2.tikz}}.
\eeq

\section{Useful results in FI} \label{usefulidentitiesapp}

\subsection{A useful lemma about (sub)stochastic maps} \label{lemmaaboutpropns}

We now state and prove a lemma which was needed to justify Eq.~\eqref{usesproplemma}. 
\begin{lemma}\label{lem:CausalProposition}
\begin{align*}
\InputIfFileExists{Diagrams/CausalPropositionMap.tikz}{}{\input{./figures/Diagrams/CausalPropositionMap.tikz}}\ \  =\ \ %
\InputIfFileExists{Diagrams/CausalPropositionMap1.tikz}{}{\input{./figures/Diagrams/CausalPropositionMap1.tikz}}\\ \qquad  \iff \\
\qquad \qquad %
\InputIfFileExists{Diagrams/CausalPropositionMap3.tikz}{}{\input{./figures/Diagrams/CausalPropositionMap3.tikz}}\ \ =\ \ %
\InputIfFileExists{Diagrams/CausalPropositionMap2.tikz}{}{\input{./figures/Diagrams/CausalPropositionMap2.tikz}}
\end{align*}
where $\sigma$ is an arbitrary substochastic map. This result and proof also hold if one replaces all of the equalities with inferential equivalences.
\end{lemma}

\proof
The $\Leftarrow$ direction trivially follows from 
Eq.~\eqref{trueprop}. To prove the $\Rightarrow$ direction, we begin by assuming that
\beq \label{causpropmap1}
\InputIfFileExists{Diagrams/CausalPropositionMap.tikz}{}{\input{./figures/Diagrams/CausalPropositionMap.tikz}}\quad =\quad %
\InputIfFileExists{Diagrams/CausalPropositionMap1.tikz}{}{\input{./figures/Diagrams/CausalPropositionMap1.tikz}}
\eeq
Composing this with a state preparation generator and the star isomorphism gives
\beq \label{causpropmap2}
\InputIfFileExists{Diagrams/CausalPropProof1.tikz}{}{\input{./figures/Diagrams/CausalPropProof1.tikz}}\quad =\quad %
\InputIfFileExists{Diagrams/CausalPropProof4.tikz}{}{\input{./figures/Diagrams/CausalPropProof4.tikz}}
\eeq
which can be rewritten using Eq.~\eqref{blackdotstar} to
\beq
\InputIfFileExists{Diagrams/CausalPropProof2.tikz}{}{\input{./figures/Diagrams/CausalPropProof2.tikz}}\quad =\quad %
\InputIfFileExists{Diagrams/CausalPropProof4.tikz}{}{\input{./figures/Diagrams/CausalPropProof4.tikz}}
\eeq
and then, using Eq.~\eqref{eq:CausalOntRep} and Eq.~\eqref{copycounitinf}, to%
\beq\label{eq:CPPKeyStep}
\InputIfFileExists{Diagrams/CausalPropProof3.tikz}{}{\input{./figures/Diagrams/CausalPropProof3.tikz}}\quad =\quad %
\InputIfFileExists{Diagrams/CausalPropProof5.tikz}{}{\input{./figures/Diagrams/CausalPropProof5.tikz}}
\eeq
Finally, we use (on the LHS) the fact that the star is an isomorphism and (on the RHS) the fact that the star is stochastic to obtain the result:
\beq
\InputIfFileExists{Diagrams/CausalPropositionMap3.tikz}{}{\input{./figures/Diagrams/CausalPropositionMap3.tikz}}
 \quad = \quad  %
\InputIfFileExists{Diagrams/CausalPropositionMap2.tikz}{}{\input{./figures/Diagrams/CausalPropositionMap2.tikz}}.
\eeq

The proof proceeds in the same way if one replaces all equalities with inferential equivalences. If one assumes Eq.~\eqref{causpropmap1} but with the equality replaced by inferential equivalence, then Eq.~\eqref{causpropmap2} follows by the fact that inferential equivalence is preserved by composition. The remainder of the proof then follows by the same rewrite rules.

\endproof

\subsection{Copy function and complete common causes}
First, let us define the {\em copy function} in $\Func$, denoted
\beq \label{copyFn}
\InputIfFileExists{Diagrams/copyFunctionNew.tikz}{}{\input{./figures/Diagrams/copyFunctionNew.tikz}},
\eeq
by $\bullet(\lambda)=(\lambda,\lambda)$ for all $\lambda\in\Lambda$.

We now show how some useful properties of the copy function can be lifted to define a corresponding process in $\FI$, via
\beq \label{copyfuncincl}
\InputIfFileExists{Diagrams/copyFunctionInclusion.tikz}{}{\input{./figures/Diagrams/copyFunctionInclusion.tikz}}.
\eeq
Specifically, that this acts as a suitable copy operation for $\FI$, that is, it is symmetric
\beq
\InputIfFileExists{Diagrams/CopySymmetry.tikz}{}{\input{./figures/Diagrams/CopySymmetry.tikz}} = %
\InputIfFileExists{Diagrams/CopySymmetry1.tikz}{}{\input{./figures/Diagrams/CopySymmetry1.tikz}} = %
\InputIfFileExists{Diagrams/CopySymmetry2.tikz}{}{\input{./figures/Diagrams/CopySymmetry2.tikz}}
\eeq
and associative,
\beq
\InputIfFileExists{Diagrams/CopyAssociativity1.tikz}{}{\input{./figures/Diagrams/CopyAssociativity1.tikz}}=%
\InputIfFileExists{Diagrams/CopyAssociativity2.tikz}{}{\input{./figures/Diagrams/CopyAssociativity2.tikz}}=%
\InputIfFileExists{Diagrams/CopyAssociativity3.tikz}{}{\input{./figures/Diagrams/CopyAssociativity3.tikz}}=%
\InputIfFileExists{Diagrams/CopyAssociativity4.tikz}{}{\input{./figures/Diagrams/CopyAssociativity4.tikz}},
\eeq
as follows immediately from diagram preservation of $\mathbf{e'}$ and the associativity of the underlying function.
By Eq.~\eqref{ignoreempty},
the embedding of the unique function $u$ to the trivial system $\star$ is equal to the ignoring map:
\beq \label{eqcopyunit2}
\InputIfFileExists{Diagrams/CopyUnit1.tikz}{}{\input{./figures/Diagrams/CopyUnit1.tikz}}\ = \ %
\begin{tikzpicture}
	\begin{pgfonlayer}{nodelayer}
		\node [style=epiCopoint] (0) at (0, -0) {};
		\node [style=none] (1) at (0, -1.75) {};
		\node [style=infpoint] (2) at (-1, -0) {$[u]$};
	\end{pgfonlayer}
	\begin{pgfonlayer}{edgelayer}
		\draw [CcWire] (2) to (0);
		\draw [oWire] (1.center) to (0);
	\end{pgfonlayer}
\end{tikzpicture}} \ = \ %
\InputIfFileExists{Diagrams/ignoreConstraintNewFI2.tikz}{}{\input{./figures/Diagrams/ignoreConstraintNewFI2.tikz}}\ = \ %
\begin{tikzpicture}
	\begin{pgfonlayer}{nodelayer}
		\node [style=ignore] (0) at (0, -0) {};
		\node [style={small black dot}] (1) at (0, -0) {};
		\node [style=none] (2) at (0, -1) {};
		\node [style=none] (3) at (0.5, -0) {};
	\end{pgfonlayer}
	\begin{pgfonlayer}{edgelayer}
		\draw [oWire] (1) to (2.center);
	\end{pgfonlayer}
\end{tikzpicture}
},
\eeq
one can also see that it is a counit for the copy:
\beq
\InputIfFileExists{Diagrams/CopyUnit8.tikz}{}{\input{./figures/Diagrams/CopyUnit8.tikz}}=%
\InputIfFileExists{Diagrams/CopyUnit3.tikz}{}{\input{./figures/Diagrams/CopyUnit3.tikz}}=%
\InputIfFileExists{Diagrams/CopyUnit6.tikz}{}{\input{./figures/Diagrams/CopyUnit6.tikz}}=%
\begin{tikzpicture}
	\begin{pgfonlayer}{nodelayer}
		\node [style=none] (0) at (0, 2.5) {};
		\node [style=none] (1) at (0, -1.5) {};
	\end{pgfonlayer}
	\begin{pgfonlayer}{edgelayer}
		\draw [oWire] (0.center) to (1.center);
	\end{pgfonlayer}
\end{tikzpicture}
}=%
\InputIfFileExists{Diagrams/CopyUnit5.tikz}{}{\input{./figures/Diagrams/CopyUnit5.tikz}}=%
\InputIfFileExists{Diagrams/CopyUnit4.tikz}{}{\input{./figures/Diagrams/CopyUnit4.tikz}}=%
\InputIfFileExists{Diagrams/CopyUnit9.tikz}{}{\input{./figures/Diagrams/CopyUnit9.tikz}}.
\eeq

We now show that processes of the form
\beq
\InputIfFileExists{Diagrams/CCC.tikz}{}{\input{./figures/Diagrams/CCC.tikz}}
\eeq
describe situations in which $\Lambda$ is the {\em complete} common cause~\cite{allen2017quantum} of $\Lambda'$ and $\Lambda''$.
This is a consequence of the fact that the outputs of a process in $\Func$ are (by construction) deterministic functions of the inputs, and hence in this diagram, $\Lambda$ is the only possible cause of $\Lambda'$ and $\Lambda''$. (To represent a scenario in which the two have more than one common cause, one would represent these explicitly as inputs to the process. Note that our framework does not incorporate a diagrammatic distinction between latent and observed variables, although introducing such a distinction might be useful in future work.)
\begin{lemma} \label{lem:CCC}
\beq%
\InputIfFileExists{Diagrams/CCC.tikz}{}{\input{./figures/Diagrams/CCC.tikz}} \quad = \quad %
\InputIfFileExists{Diagrams/CCC1.tikz}{}{\input{./figures/Diagrams/CCC1.tikz}}\eeq
where the black triangle is a stochastic map defined by linearity and
\beq
\InputIfFileExists{Diagrams/BlackTriangleStochasticMap.tikz}{}{\input{./figures/Diagrams/BlackTriangleStochasticMap.tikz}} :: \left[ %
\InputIfFileExists{Diagrams/FunctionOneTwo.tikz}{}{\input{./figures/Diagrams/FunctionOneTwo.tikz}}\right] \mapsto \left[%
\InputIfFileExists{Diagrams/FunctionOneTwoDecomposition2.tikz}{}{\input{./figures/Diagrams/FunctionOneTwoDecomposition2.tikz}} \right]\otimes \left[%
\InputIfFileExists{Diagrams/FunctionOneTwoDecomposition4.tikz}{}{\input{./figures/Diagrams/FunctionOneTwoDecomposition4.tikz}}\right]
\eeq
where $u$ is the unique function to $\star$.
\end{lemma}
\proof
First, note that we have a similar decomposition in $\Func$, that is for all functions $F$ there exist $f_l$ and $f_r$ such that
\beq \label{eqfunctiononetwo}
\InputIfFileExists{Diagrams/FunctionOneTwo.tikz}{}{\input{./figures/Diagrams/FunctionOneTwo.tikz}} \quad = \quad %
\InputIfFileExists{Diagrams/FunctionOneTwoDecomposition.tikz}{}{\input{./figures/Diagrams/FunctionOneTwoDecomposition.tikz}}
\eeq
where
\beq
\InputIfFileExists{Diagrams/FunctionOneTwoDecomposition1.tikz}{}{\input{./figures/Diagrams/FunctionOneTwoDecomposition1.tikz}} \ = \ %
\InputIfFileExists{Diagrams/FunctionOneTwoDecomposition2.tikz}{}{\input{./figures/Diagrams/FunctionOneTwoDecomposition2.tikz}}
\quad\text{and}\quad %
\InputIfFileExists{Diagrams/FunctionOneTwoDecomposition3.tikz}{}{\input{./figures/Diagrams/FunctionOneTwoDecomposition3.tikz}} \ = \ %
\InputIfFileExists{Diagrams/FunctionOneTwoDecomposition4.tikz}{}{\input{./figures/Diagrams/FunctionOneTwoDecomposition4.tikz}}.
\eeq
We will now show that this result can be lifted to $\FI$.
Note that the definition of the black-triangle stochastic map, together with the above choices for $f_r$ and $f_l$, imply that
\beq
\InputIfFileExists{Diagrams/BlackTriangleStochasticMap1.tikz}{}{\input{./figures/Diagrams/BlackTriangleStochasticMap1.tikz}}\quad = \quad %
\InputIfFileExists{Diagrams/BlackTriangleStochasticMap2.tikz}{}{\input{./figures/Diagrams/BlackTriangleStochasticMap2.tikz}}.
\eeq
Now consider the following set of rewrites, where $[\bullet]$ is the state of certain knowledge that the copy operation, $\bullet:\Lambda\to\Lambda\times\Lambda$, of Eq.~\eqref{copyFn} has occurred.

\begin{align}
\InputIfFileExists{Diagrams/CCC2.tikz}{}{\input{./figures/Diagrams/CCC2.tikz}} \quad &= \quad %
\InputIfFileExists{Diagrams/CCC3.tikz}{}{\input{./figures/Diagrams/CCC3.tikz}} \\
&= \quad %
\InputIfFileExists{Diagrams/CCC4.tikz}{}{\input{./figures/Diagrams/CCC4.tikz}} \\
&= \quad %
\InputIfFileExists{Diagrams/CCC5.tikz}{}{\input{./figures/Diagrams/CCC5.tikz}}. \label{eqhalfoffunny}
\end{align}
Next, note that
\beq \label{weirdidentity}
\InputIfFileExists{Diagrams/BlackTriangleInverse.tikz}{}{\input{./figures/Diagrams/BlackTriangleInverse.tikz}} \quad = \quad %
\InputIfFileExists{Diagrams/BlackTriangleInverse2.tikz}{}{\input{./figures/Diagrams/BlackTriangleInverse2.tikz}},
\eeq
as can be verified by computing its action on an arbitrary delta function state of knowledge $[F]$, namely
\begin{align}
\left[%
\InputIfFileExists{Diagrams/FunctionOneTwo.tikz}{}{\input{./figures/Diagrams/FunctionOneTwo.tikz}} \right] & \mapsto \left[%
\InputIfFileExists{Diagrams/FunctionOneTwoDecomposition2.tikz}{}{\input{./figures/Diagrams/FunctionOneTwoDecomposition2.tikz}} \right]\otimes \left[%
\InputIfFileExists{Diagrams/FunctionOneTwoDecomposition4.tikz}{}{\input{./figures/Diagrams/FunctionOneTwoDecomposition4.tikz}}\right] \\
& \mapsto \left[%
\InputIfFileExists{Diagrams/FunctionOneTwoDecomposition2.tikz}{}{\input{./figures/Diagrams/FunctionOneTwoDecomposition2.tikz}}%
\InputIfFileExists{Diagrams/FunctionOneTwoDecomposition4.tikz}{}{\input{./figures/Diagrams/FunctionOneTwoDecomposition4.tikz}}\right] \\
& \mapsto \left[%
\InputIfFileExists{Diagrams/FunctionOneTwoDecomposition.tikz}{}{\input{./figures/Diagrams/FunctionOneTwoDecomposition.tikz}} \right].
\end{align}
Hence, by Eq.~\eqref{eqfunctiononetwo}, we see that
\beq
\InputIfFileExists{Diagrams/BlackTriangleInverse.tikz}{}{\input{./figures/Diagrams/BlackTriangleInverse.tikz}} ::  \left[%
\InputIfFileExists{Diagrams/FunctionOneTwo.tikz}{}{\input{./figures/Diagrams/FunctionOneTwo.tikz}} \right]  \mapsto \left[%
\InputIfFileExists{Diagrams/FunctionOneTwo.tikz}{}{\input{./figures/Diagrams/FunctionOneTwo.tikz}} \right],
\eeq
justifying Eq.~\eqref{weirdidentity}.
The conjunction of Eq.~\eqref{weirdidentity} and Eq.~\eqref{eqhalfoffunny} immediately establishes the lemma.
\endproof

Next, we show that learning about an ontological system is the same as first copying that system and then learning about the copy:
\beq \label{eqB16}
\InputIfFileExists{Diagrams/learnLambda.tikz}{}{\input{./figures/Diagrams/learnLambda.tikz}}\quad=\quad%
\InputIfFileExists{Diagrams/learnCopy.tikz}{}{\input{./figures/Diagrams/learnCopy.tikz}}.
\eeq
\proof
We start with the RHS and will rewrite it into the LHS. In the following equalities, Eq.~\eqref{copyproof3} follows from Eq.~\eqref{Axiom:PropKnowGenerators}, Eq.~\eqref{copyproof4} follows from the fact that $[\bullet]$ is a point distribution, Eq.~\eqref{copyproof5} follows from Eq.~\eqref{copycounitinf}, and Eq.~\eqref{copyproof6} follows from Eq.~\eqref{eqcopyunit2}.
\begin{align}
\InputIfFileExists{Diagrams/learnCopy.tikz}{}{\input{./figures/Diagrams/learnCopy.tikz}}\quad &=\quad %
\InputIfFileExists{Diagrams/learnCopyProof.tikz}{}{\input{./figures/Diagrams/learnCopyProof.tikz}}
\end{align}
\begin{align}
\qquad &=\quad%
\InputIfFileExists{Diagrams/learnCopyProof2.tikz}{}{\input{./figures/Diagrams/learnCopyProof2.tikz}}\\
&=\quad%
\InputIfFileExists{Diagrams/learnCopyProof3.tikz}{}{\input{./figures/Diagrams/learnCopyProof3.tikz}}\label{copyproof3} \\
&=\quad%
\InputIfFileExists{Diagrams/learnCopyProof4.tikz}{}{\input{./figures/Diagrams/learnCopyProof4.tikz}} \label{copyproof4} \\
&=\quad %
\InputIfFileExists{Diagrams/learnCopyProof45.tikz}{}{\input{./figures/Diagrams/learnCopyProof45.tikz}}\\
&=\quad%
\InputIfFileExists{Diagrams/learnCopyProof5.tikz}{}{\input{./figures/Diagrams/learnCopyProof5.tikz}} \label{copyproof5} \\
&=\quad%
\InputIfFileExists{Diagrams/learnCopyProof6.tikz}{}{\input{./figures/Diagrams/learnCopyProof6.tikz}} \label{copyproof6} \\
&=\quad%
\InputIfFileExists{Diagrams/learnCopyProof7.tikz}{}{\input{./figures/Diagrams/learnCopyProof7.tikz}}\\
&=\quad%
\InputIfFileExists{Diagrams/learnCopyProof8.tikz}{}{\input{./figures/Diagrams/learnCopyProof8.tikz}}\\
&=\quad %
\begin{tikzpicture}
	\begin{pgfonlayer}{nodelayer}
		\node [style=none] (1) at (0, -1) {};
		\node [style=none] (10) at (0, 1) {};
		\node [style=clear dot] (12) at (0, 0) {};
		\node [style=none] (13) at (1, 0) {};
	\end{pgfonlayer}
	\begin{pgfonlayer}{edgelayer}
		\draw [oWire] (10.center) to (1.center);
		\draw [cWire] (12) to (13.center);
	\end{pgfonlayer}
\end{tikzpicture}
}
\end{align}
\endproof

\subsection{Other equalities}

A special case of Eq.~\eqref{Axiom:PropKnowGenerators} is
\beq \label{infidentcausal}
\InputIfFileExists{Diagrams/inferentialIdentityNew.tikz}{}{\input{./figures/Diagrams/inferentialIdentityNew.tikz}}\quad = \quad %
\InputIfFileExists{Diagrams/inferentialIdentityCausal.tikz}{}{\input{./figures/Diagrams/inferentialIdentityCausal.tikz}}
\eeq
since
\begin{align}
\InputIfFileExists{Diagrams/inferentialIdentityCausal.tikz}{}{\input{./figures/Diagrams/inferentialIdentityCausal.tikz}}\ &= \ %
\InputIfFileExists{Diagrams/infIdentCaus1.tikz}{}{\input{./figures/Diagrams/infIdentCaus1.tikz}} \\
 \ &= \ %
\InputIfFileExists{Diagrams/infIdentCaus2.tikz}{}{\input{./figures/Diagrams/infIdentCaus2.tikz}} \\
  \ &= \ %
\InputIfFileExists{Diagrams/inferentialIdentityNew.tikz}{}{\input{./figures/Diagrams/inferentialIdentityNew.tikz}}.
\end{align}

First, we show that one can always find at least one possible causal explanation in FI for any inference.
A simple example of this is
\beq \label{infidentcausal}
\begin{tikzpicture}
	\begin{pgfonlayer}{nodelayer}
		\node [style=none] (0) at (-0.75, 0) {};
		\node [style=none] (1) at (0.75, 0) {};
	\end{pgfonlayer}
	\begin{pgfonlayer}{edgelayer}
		\draw [cWire] (0.center) to (1.center);
	\end{pgfonlayer}
\end{tikzpicture}
}\quad = \quad %
\InputIfFileExists{Diagrams/inferentialIdentityCausalNew.tikz}{}{\input{./figures/Diagrams/inferentialIdentityCausalNew.tikz}};
\eeq
here, the inference described by the identity function is seen to have a possible causal explanation as the statement that a causal system has not evolved.
As another simple example, inferences described by functions can always arise from a causal system evolving under that function as its dynamics, e.g. as
\beq
\begin{tikzpicture}
	\begin{pgfonlayer}{nodelayer}
		\node [style=small box] (0) at (0, 0) {$f$};
		\node [style=none] (1) at (-1.5, 0) {};
		\node [style=none] (2) at (1.5, 0) {};
	\end{pgfonlayer}
	\begin{pgfonlayer}{edgelayer}
		\draw [cWire] (1.center) to (0);
		\draw [cWire] (0) to (2.center);
	\end{pgfonlayer}
\end{tikzpicture}
}\quad = \quad %
\InputIfFileExists{Diagrams/funcInferenceCaus.tikz}{}{\input{./figures/Diagrams/funcInferenceCaus.tikz}}.
\eeq
Most generally, an inference described by a general substochastic map can always be viewed as having partial knowledge about the input to some functional causal dynamics and considering a proposition about part of the output of the dynamics. That is, an arbitrary process $S\in \Inf$ satisfies
\begin{align}
\InputIfFileExists{Diagrams/substochInference.tikz}{}{\input{./figures/Diagrams/substochInference.tikz}}\quad &= \quad  %
\InputIfFileExists{Diagrams/substochDilation.tikz}{}{\input{./figures/Diagrams/substochDilation.tikz}}\\ \label{eqb8}
&= \quad  %
\InputIfFileExists{Diagrams/substochDilationCaus.tikz}{}{\input{./figures/Diagrams/substochDilationCaus.tikz}}
\end{align}
where $\sigma$ is a probability distribution, $f$ is a function, and $\pi$ is a propositional effect.

\proof
The proof is as follows, where  Eq.~\eqref{trickybit1} follows from Eq.~\eqref{ruleinf1} and Eq.~\eqref{trickybit2} follows from Eq.~\eqref{ruleinf2} (and the remaining equalities follow from the rewrite rules in $\FI$ that we have introduced):
\begin{align}
\InputIfFileExists{Diagrams/substochDilationCaus.tikz}{}{\input{./figures/Diagrams/substochDilationCaus.tikz}}\quad &= \quad %
\InputIfFileExists{Diagrams/substochDilationCaus1.tikz}{}{\input{./figures/Diagrams/substochDilationCaus1.tikz}}
\end{align}
\begin{align}
\qquad &= \quad %
\InputIfFileExists{Diagrams/substochDilationCaus2.tikz}{}{\input{./figures/Diagrams/substochDilationCaus2.tikz}} \\
&= \quad %
\InputIfFileExists{Diagrams/substochDilationCaus3.tikz}{}{\input{./figures/Diagrams/substochDilationCaus3.tikz}} \\
&= \quad %
\InputIfFileExists{Diagrams/substochDilationCaus4.tikz}{}{\input{./figures/Diagrams/substochDilationCaus4.tikz}} \\
&= \quad %
\InputIfFileExists{Diagrams/substochDilationCaus5.tikz}{}{\input{./figures/Diagrams/substochDilationCaus5.tikz}} \\
&= \quad %
\InputIfFileExists{Diagrams/substochDilationCaus6.tikz}{}{\input{./figures/Diagrams/substochDilationCaus6.tikz}} \\
&= \quad %
\InputIfFileExists{Diagrams/substochDilationCaus7.tikz}{}{\input{./figures/Diagrams/substochDilationCaus7.tikz}} \label{trickybit1}\\
&= \quad %
\InputIfFileExists{Diagrams/substochDilationCaus8.tikz}{}{\input{./figures/Diagrams/substochDilationCaus8.tikz}} \label{trickybit2}\\
&= \quad %
\InputIfFileExists{Diagrams/substochDilationCaus9.tikz}{}{\input{./figures/Diagrams/substochDilationCaus9.tikz}} \\
&= \quad %
\InputIfFileExists{Diagrams/substochDilation.tikz}{}{\input{./figures/Diagrams/substochDilation.tikz}}
\end{align}
 \endproof

Next, we show that one can always replace what we know about a transformation with what we know about a variable that controls the transformation. First let us describe a ``universal control'' function $%
\begin{tikzpicture}
	\begin{pgfonlayer}{nodelayer}
		\node [style=uControl] (1000) at (0, -0) {};
	\end{pgfonlayer}
\end{tikzpicture}
}\in \textsc{Func}$, as follows:
\beq
\forall f\in \morph{\Lambda}{\Lambda'} \qquad %
\InputIfFileExists{Diagrams/universalControl.tikz}{}{\input{./figures/Diagrams/universalControl.tikz}}\quad=\quad%
\InputIfFileExists{Diagrams/universalControl1.tikz}{}{\input{./figures/Diagrams/universalControl1.tikz}}
\eeq
where the black diamond represents the universal control transformation $%
}$, and where we have introduced a causal system which ranges over the (finite) set of functions from $\Lambda$ to $\Lambda'$.
Then, one can show
\beq\label{eq:uniControl}
\InputIfFileExists{Diagrams/uniControl1.tikz}{}{\input{./figures/Diagrams/uniControl1.tikz}}\quad=\quad%
\InputIfFileExists{Diagrams/uniControl.tikz}{}{\input{./figures/Diagrams/uniControl.tikz}}
\eeq
\proof One can rewrite the RHS into the LHS, as follows:
\begin{align}
\InputIfFileExists{Diagrams/uniControl.tikz}{}{\input{./figures/Diagrams/uniControl.tikz}}\quad &=\quad %
\InputIfFileExists{Diagrams/uniControl2.tikz}{}{\input{./figures/Diagrams/uniControl2.tikz}}  \\
&=\quad%
\InputIfFileExists{Diagrams/uniControl3.tikz}{}{\input{./figures/Diagrams/uniControl3.tikz}}  \\
&=\quad%
\InputIfFileExists{Diagrams/uniControl4.tikz}{}{\input{./figures/Diagrams/uniControl4.tikz}}  \\
&=\quad%
\InputIfFileExists{Diagrams/uniControl1.tikz}{}{\input{./figures/Diagrams/uniControl1.tikz}}  \\
\end{align}
where the last step follows from the fact that
\beq
\begin{tikzpicture}
	\begin{pgfonlayer}{nodelayer}
		\node [style=parComp] (0) at (0.25, -0.75) {};
		\node [style=infpoint] (1) at (0.25, 1.25) {$[%
}]$};
		\node [style=infpoint] (2) at (-1, -1.5) {$[\mathds{1}]$};
		\node [style=none] (3) at (-1.25, 0) {};
		\node [style=none] (4) at (2.25, 0.25) {};
		\node [style=seqComp] (5) at (1.5, 0.25) {};
		\node [style=none] (6) at (-2, 0) {};
		\node [style=addStar] (7) at (-1.25, 0) {};
	\end{pgfonlayer}
	\begin{pgfonlayer}{edgelayer}
		\draw [cWire, in=135, out=0, looseness=1.00] (3.center) to (0);
		\draw [cWire, in=0, out=-135, looseness=1.00] (0) to (2);
		\draw [cWire, in=120, out=0, looseness=1.00] (1) to (5);
		\draw [cWire, in=-120, out=0, looseness=1.00] (0) to (5);
		\draw [cWire] (5) to (4.center);
		\draw [cWire] (6.center) to (3.center);
	\end{pgfonlayer}
\end{tikzpicture}
\quad =\quad
\InputIfFileExists{Diagrams/Identity3.tikz}{}{\input{./figures/Diagrams/Identity3.tikz}}
\eeq
as can be verified by computing its action on an arbitrary delta function state of knowledge $[f]$. Namely,
\begin{align}
\left[%
\InputIfFileExists{Diagrams/stochCalcProof1.tikz}{}{\input{./figures/Diagrams/stochCalcProof1.tikz}}\right]\quad &\mapsto\quad \left[%
\begin{tikzpicture}
	\begin{pgfonlayer}{nodelayer}
		\node [style=point] (0) at (0, -0.25) {$f$};
		\node [style=none] (1) at (0, .75) {};
		\node [style={right label}] (2) at (0, .75) {$\morph{\Lambda}{\Lambda'}$};
	\end{pgfonlayer}
	\begin{pgfonlayer}{edgelayer}
		\draw [oWire] (1.center) to (0);
	\end{pgfonlayer}
\end{tikzpicture}
}\right] \\
&\mapsto\quad \left[%
\InputIfFileExists{Diagrams/stochCalcProof3.tikz}{}{\input{./figures/Diagrams/stochCalcProof3.tikz}}\right] \\
&\mapsto\quad \left[%
\InputIfFileExists{Diagrams/stochCalcProof4.tikz}{}{\input{./figures/Diagrams/stochCalcProof4.tikz}}\right] \\
&=\quad \left[%
\InputIfFileExists{Diagrams/stochCalcProof1.tikz}{}{\input{./figures/Diagrams/stochCalcProof1.tikz}}\right],
\end{align}
where the final equality is given by Eq.~\eqref{eq:uniControl}.
\endproof

\onecolumngrid
\subsection{Proof of normal form for $\FI$} \label{App:NF}

We now prove Theorem~\ref{thmnormalform}; namely, the normal form
\beq
\InputIfFileExists{Diagrams/NF1.tikz}{}{\input{./figures/Diagrams/NF1.tikz}}
\eeq
for $\FI$, where $S$ is a substochastic map.

\proof

We will prove this by induction. First, we show (Step i) that every generator can be written into normal form. Second, we prove (Step ii) that the composite of two normal form diagrams can be rewritten into normal form. Given these, it is clear that one can write any diagram into normal form by first rewriting all of the generators involved into normal form using Step i, composing these according to Step ii, and iterating until the entire diagram is in normal form.

{\bf Step i}---The fact that each generator is in normal form can be seen by inspection. For example, stochastic maps are generators in our theory, and are already in normal form---namely, the special case that arises when one takes all the causal systems in the normal form to be trivial. The other three generators (describing interactions between the causal and inferential systems) can be written in normal form as follows:
\beq
\begin{tikzpicture}
	\begin{pgfonlayer}{nodelayer}
		\node [style=none] (0) at (0, -0.5000001) {};
		\node [style=small black dot] (1) at (0, 0.7500002) {};
		\node [style=ignore] (3) at (0, 0.7500002) {};
	\end{pgfonlayer}
	\begin{pgfonlayer}{edgelayer}
		\draw [oWire] (0.center) to (1.center);
	\end{pgfonlayer}
\end{tikzpicture}
}\ =\ %
\InputIfFileExists{Diagrams/NFG3.tikz}{}{\input{./figures/Diagrams/NFG3.tikz}}\ ,
\quad
\InputIfFileExists{Diagrams/interactionEpistemicFunctional.tikz}{}{\input{./figures/Diagrams/interactionEpistemicFunctional.tikz}}\ =\ %
\InputIfFileExists{Diagrams/NFG1.tikz}{}{\input{./figures/Diagrams/NFG1.tikz}}
\ \text{and}\quad
\InputIfFileExists{Diagrams/FuncPropInteractionNew.tikz}{}{\input{./figures/Diagrams/FuncPropInteractionNew.tikz}}\ =\ %
\InputIfFileExists{Diagrams/NFG2.tikz}{}{\input{./figures/Diagrams/NFG2.tikz}}.
\eeq

{\bf Step ii}---First, we write down the most general way to compose two diagrams and then expand each of these in terms of the conjectured normal form:
\beq
\begin{tikzpicture}
	\begin{pgfonlayer}{nodelayer}
		\node [style=none] (0) at (-0.5, 0.5) {};
		\node [style=none] (1) at (-0.5, -0.5) {};
		\node [style=none] (2) at (0.5, -0.5) {};
		\node [style=none] (3) at (0.5, 0.5) {};
		\node [style=none] (4) at (0, 0) {$\mathcal{D}_1$};
		\node [style=none] (5) at (0.25, 0.5) {};
		\node [style=none] (6) at (2.25, 1.75) {};
		\node [style=none] (7) at (0, -0.5) {};
		\node [style=none] (8) at (0, -0.75) {};
		\node [style=none] (9) at (0.5, -0.25) {};
		\node [style=none] (10) at (3.25, -0.25) {};
		\node [style=none] (15) at (-0.5, 0) {};
		\node [style=none] (16) at (-0.75, 0) {};
		\node [style=none] (18) at (2.5, 3) {};
		\node [style=none] (19) at (2, 1.75) {};
		\node [style=none] (20) at (2.5, 2.25) {$\mathcal{D}_2$};
		\node [style=none] (21) at (3, 2.25) {};
		\node [style=none] (23) at (3, 2.75) {};
		\node [style=none] (24) at (2, 2.5) {};
		\node [style=none] (25) at (3.25, 2.25) {};
		\node [style=none] (26) at (2, 2.75) {};
		\node [style=none] (27) at (3, 1.75) {};
		\node [style=none] (28) at (-0.75, 2.5) {};
		\node [style=none] (29) at (2.5, 2.75) {};
		\node [style=none] (30) at (-0.25, 2.75) {};
		\node [style=none] (31) at (-0.25, 0.5) {};
		\node [style=none] (32) at (2.75, 1.75) {};
		\node [style=none] (33) at (2.75, -0.5) {};
		\node [style=none] (34) at (2, 2) {};
		\node [style=none] (35) at (0.5, 0.25) {};
		\node [style=none] (37) at (0.5, -0.25) {};
		\node [style=none] (42) at (-2, 0.25) {};
		\node [style=none] (43) at (-0.75, 2.5) {};
		\node [style=none] (45) at (-2, 0) {};
		\node [style=none] (47) at (-0.75, 0) {};
		\node [style=none] (48) at (4.5, 2.25) {};
		\node [style=none] (50) at (3.25, 2.25) {};
		\node [style=none] (52) at (3.25, -0.25) {};
		\node [style=none] (53) at (4.5, 2) {};
		\node [style=none] (56) at (0, -0.75) {};
		\node [style=none] (57) at (0, -2.25) {};
		\node [style=none] (58) at (2.75, -0.5) {};
		\node [style=none] (59) at (0.25, -2.25) {};
		\node [style=none] (60) at (2.5, 4.5) {};
		\node [style=none] (61) at (2.5, 3) {};
		\node [style=none] (62) at (2.25, 4.5) {};
		\node [style=none] (63) at (-0.25, 2.75) {};
	\end{pgfonlayer}
	\begin{pgfonlayer}{edgelayer}
		\draw [oWire, in=-90, out=90] (5.center) to (6.center);
		\draw [oWire] (7.center) to (8.center);
		\draw [RcWire] (9.center) to (10.center);
		\draw [RcWire] (15.center) to (16.center);
		\draw (0.center) to (3.center);
		\draw (3.center) to (2.center);
		\draw (2.center) to (1.center);
		\draw (1.center) to (0.center);
		\draw [oWire] (29.center) to (18.center);
		\draw [RcWire] (21.center) to (25.center);
		\draw [RcWire] (24.center) to (28.center);
		\draw (26.center) to (23.center);
		\draw (23.center) to (27.center);
		\draw (27.center) to (19.center);
		\draw (19.center) to (26.center);
		\draw [oWire, in=-90, out=90] (31.center) to (30.center);
		\draw [oWire, in=-90, out=90] (33.center) to (32.center);
		\draw [RcWire, in=180, out=0] (35.center) to (34.center);
		\draw [RcWire, in=0, out=180, looseness=0.75] (43.center) to (42.center);
		\draw [RcWire] (47.center) to (45.center);
		\draw [RcWire] (50.center) to (48.center);
		\draw [RcWire, in=180, out=0, looseness=0.75] (52.center) to (53.center);
		\draw [oWire, in=-90, out=90] (57.center) to (56.center);
		\draw [oWire, in=-90, out=90] (59.center) to (58.center);
		\draw [oWire] (61.center) to (60.center);
		\draw [oWire, in=-90, out=90, looseness=0.75] (63.center) to (62.center);
	\end{pgfonlayer}
\end{tikzpicture}.
\quad=\quad %
\InputIfFileExists{Diagrams/NF2.tikz}{}{\input{./figures/Diagrams/NF2.tikz}}.
\eeq
Removing the dashed gray lines and simply moving the wires around gives
\beq
=\quad %
\InputIfFileExists{Diagrams/NF3.tikz}{}{\input{./figures/Diagrams/NF3.tikz}}
\eeq
Next, we use Eq.~\eqref{trueprop} and Eq.~\eqref{identityembedding} to add in extra processes to obtain
\beq
=\quad %
\InputIfFileExists{Diagrams/NF4.tikz}{}{\input{./figures/Diagrams/NF4.tikz}}
\eeq
Merging some of these together, one obtains
\beq
=\quad %
\InputIfFileExists{Diagrams/NF5.tikz}{}{\input{./figures/Diagrams/NF5.tikz}}
\eeq
Next, simply moving wires around yields
\beq
=\quad %
\InputIfFileExists{Diagrams/NF6.tikz}{}{\input{./figures/Diagrams/NF6.tikz}},
\eeq
at which point one can identify the two gray dashed boxes as stochastic maps (since these contain only normalized inferential processes). Denoting these $S_1'$ and $S_2'$ one obtains
\beq
=:\quad %
\InputIfFileExists{Diagrams/NF7.tikz}{}{\input{./figures/Diagrams/NF7.tikz}},
\eeq
We can use Eq.~\eqref{Axiom:PropKnowGenerators} to rewrite this as
\beq
=\quad %
\InputIfFileExists{Diagrams/NF11.tikz}{}{\input{./figures/Diagrams/NF11.tikz}} =\quad %
\InputIfFileExists{Diagrams/NF12.tikz}{}{\input{./figures/Diagrams/NF12.tikz}}.
\eeq
Rewriting to express compositional structure within one's inferences, one gets
\beq
=\quad %
\InputIfFileExists{Diagrams/NF14.tikz}{}{\input{./figures/Diagrams/NF14.tikz}}\quad =: \quad %
\InputIfFileExists{Diagrams/NF15.tikz}{}{\input{./figures/Diagrams/NF15.tikz}}
\eeq
where one has identified the process in the dashed box as a stochastic map $S$.
Noting then that each pair of wires can be considered as a single composite wire and that $S$ is in the image of $\mathbf{i'}$, this is indeed seen to be in the claimed normal form.
\endproof

\twocolumngrid

\section{Useful results in $\widetilde{\FI}$}
\subsection{Statement and Proof of Lemma~\ref{eq:newrewrite}} \label{prflemnewrewrite}

We now state and prove a lemma used in the main text.
\begin{lemma}\label{eq:newrewrite}
A causal identity is inferentially equivalent to a process which factors through an inferential system as
\beq \label{identityOnticInf}
\begin{tikzpicture}
	\begin{pgfonlayer}{nodelayer}
		\node [style=none] (0) at (0, 1) {};
		\node [style=none] (1) at (0, -1) {};
	\end{pgfonlayer}
	\begin{pgfonlayer}{edgelayer}
		\draw [oWire] (0.center) to (1.center);
	\end{pgfonlayer}
\end{tikzpicture}}\quad \sim_{\mathbf{p^*}} \quad %
\InputIfFileExists{Diagrams/identityOnticInf.tikz}{}{\input{./figures/Diagrams/identityOnticInf.tikz}}.
\eeq
 \end{lemma}

 \proof

By Lemma~\ref{simplerinfeq}, we can establish the inferential equivalence by showing the following:
\beq
  \forall \tau \quad
 \begin{tikzpicture}
	\begin{pgfonlayer}{nodelayer}
		\node [style=none] (0) at (-0.5, -0.5) {};
		\node [style=none] (1) at (0.5, -0.5) {};
		\node [style=none] (2) at (0.5, 0.5) {};
		\node [style=none] (3) at (-1, -1.25) {};
		\node [style=none] (4) at (-1, -2.25) {};
		\node [style=none] (5) at (1.5, -2.25) {};
		\node [style=none] (6) at (1.5, 1.5) {};
		\node [style=none] (7) at (-1.5, 1.5) {};
		\node [style=none] (8) at (-1.5, -0.5) {};
		\node [style=none] (9) at (-1.25, -0.5) {};
		\node [style=none] (10) at (-1.25, 1.25) {};
		\node [style=none] (11) at (1.25, 1.25) {};
		\node [style=none] (12) at (1.25, -1.25) {};
		\node [style=none] (13) at (-2.25, 2.5) {};
		\node [style=none] (14) at (2.25, 2.5) {};
		\node [style=none] (15) at (2.25, -3) {};
		\node [style=none] (16) at (-2.25, -3) {};
		\node [style=none] (17) at (0, -1.75) {$\tau$};
		\node [style=none] (18) at (0, 0.5) {};
		\node [style=none] (19) at (0, 1.25) {};
		\node [style=none] (20) at (0, 0.5) {};
		\node [style=none] (21) at (0, -1.25) {};
		\node [style=none] (22) at (1.25, 0) {};
		\node [style=none] (23) at (3, 0) {};
		\node [style=none] (24) at (1.5, 0) {};
		\node [style=none] (25) at (-3, 0) {};
		\node [style=none] (26) at (-1.5, 0) {};
		\node [style=none] (27) at (2, -2.75) {\footnotesize $\mathbf{p^*}$};
	\end{pgfonlayer}
	\begin{pgfonlayer}{edgelayer}
\filldraw [fill=darkred!30,draw=darkred!60] (14.center) to (15.center) to (16.center) to (13.center) to cycle;
		\filldraw [fill=white,draw] (4.center) to (5.center) to (6.center) to (7.center) to (8.center)to (9.center)to (10.center)to (11.center)to (12.center)to (3.center) to cycle;
		\draw [oWire] (18.center) to (19.center);
		\draw [oWire] (20.center) to (21.center);
	\end{pgfonlayer}
\end{tikzpicture}
\quad = \quad
\begin{tikzpicture}
	\begin{pgfonlayer}{nodelayer}
		\node [style=none] (0) at (-0.5, -1) {};
		\node [style=none] (1) at (0.75, -0.5) {};
		\node [style=none] (2) at (0.75, 1) {};
		\node [style=none] (3) at (-1, -1.25) {};
		\node [style=none] (4) at (-1, -2.25) {};
		\node [style=none] (5) at (1.5, -2.25) {};
		\node [style=none] (6) at (1.5, 1.5) {};
		\node [style=none] (7) at (-1.5, 1.5) {};
		\node [style=none] (8) at (-1.5, -0.5) {};
		\node [style=none] (9) at (-1.25, -0.5) {};
		\node [style=none] (10) at (-1.25, 1.25) {};
		\node [style=none] (11) at (1.25, 1.25) {};
		\node [style=none] (12) at (1.25, -1.25) {};
		\node [style=none] (13) at (-2.25, 2.5) {};
		\node [style=none] (14) at (2.25, 2.5) {};
		\node [style=none] (15) at (2.25, -3) {};
		\node [style=none] (16) at (-2.25, -3) {};
		\node [style=none] (17) at (0, -1.75) {$\tau$};
		\node [style=none] (18) at (0.75, 0.5) {};
		\node [style=none] (19) at (0, 1.25) {};
		\node [style=none] (20) at (-0.75, -0.5) {};
		\node [style=none] (21) at (0, -1.25) {};
		\node [style=none] (22) at (1.25, 0) {};
		\node [style=none] (23) at (3, 0) {};
		\node [style=none] (24) at (1.5, 0) {};
		\node [style=none] (25) at (-3, 0) {};
		\node [style=none] (26) at (-1.5, 0) {};
		\node [style=none] (27) at (2, -2.75) {\footnotesize $\mathbf{p^*}$};
		\node [style=none] (28) at (-0.75, 0.5) {};
		\node [style={small black dot}] (29) at (-0.75, 0.5) {};
		\node [style=ignore] (30) at (-0.75, 0.5) {};
		\node [style=epiPoint] (31) at (0.75, -0) {};
		\node [style={clear dot}] (32) at (-0.75, -0) {};
		\node [style=addStar] (33) at (0, 0) {};
	\end{pgfonlayer}
	\begin{pgfonlayer}{edgelayer}
\filldraw [fill=darkred!30,draw=darkred!60] (14.center) to (15.center) to (16.center) to (13.center) to cycle;
		\filldraw [fill=white,draw] (4.center) to (5.center) to (6.center) to (7.center) to (8.center)to (9.center)to (10.center)to (11.center)to (12.center)to (3.center) to cycle;
		\draw [oWire, in=-90, out=90, looseness=1.00] (18.center) to (19.center);
		\draw [oWire, in=90, out=-90, looseness=1.00] (20.center) to (21.center);
		\draw [oWire] (20.center) to (28.center);
		\draw [oWire] (18.center) to (31);
		\draw [cWire] (32) to (31);
	\end{pgfonlayer}
\end{tikzpicture}
\eeq
 Now, as follows from Section~\ref{sec:PTs}, these testers are shorthand notation for a diagram of the form:
 \beq
 \begin{tikzpicture}
	\begin{pgfonlayer}{nodelayer}
		\node [style=none] (0) at (-0.5, -1) {};
		\node [style=none] (1) at (0.5, -0.5) {};
		\node [style=none] (2) at (0.5, 1) {};
		\node [style=none] (3) at (-1, -1.25) {};
		\node [style=none] (4) at (-1, -2.25) {};
		\node [style=none] (5) at (1.5, -2.25) {};
		\node [style=none] (6) at (1.5, 1.5) {};
		\node [style=none] (7) at (-1.5, 1.5) {};
		\node [style=none] (8) at (-1.5, -0.5) {};
		\node [style=none] (9) at (-1.25, -0.5) {};
		\node [style=none] (10) at (-1.25, 1.25) {};
		\node [style=none] (11) at (1.25, 1.25) {};
		\node [style=none] (12) at (1.25, -1.25) {};
		\node [style=none] (13) at (0, -1.75) {$\tau$};
		\node [style=none] (14) at (0, 0.5) {};
		\node [style=none] (15) at (0, 1.25) {};
		\node [style=none] (16) at (0, -0.5) {};
		\node [style=none] (17) at (0, -1.25) {};
		\node [style=none] (18) at (1.25, 0) {};
	\end{pgfonlayer}
	\begin{pgfonlayer}{edgelayer}
		\draw [oWire, in=-90, out=90, looseness=1.00] (14.center) to (15.center);
		\draw [oWire, in=90, out=-90, looseness=1.00] (16.center) to (17.center);
		\draw (8.center) to (9.center);
		\draw (9.center) to (10.center);
		\draw (10.center) to (11.center);
		\draw (11.center) to (12.center);
		\draw (12.center) to (3.center);
		\draw (3.center) to (4.center);
		\draw (4.center) to (5.center);
		\draw (5.center) to (6.center);
		\draw (6.center) to (7.center);
		\draw (7.center) to (8.center);
	\end{pgfonlayer}
\end{tikzpicture}\quad=\quad
\begin{tikzpicture}
	\begin{pgfonlayer}{nodelayer}
		\node [style=none] (0) at (-3.5, -1.25) {};
		\node [style=none] (1) at (-3.5, -2.25) {};
		\node [style=none] (2) at (-1.5, -2.25) {};
		\node [style=none] (3) at (-1.5, -1.25) {};
		\node [style=none] (4) at (-1.5, -1.25) {};
		\node [style=none] (5) at (-2.5, -1.75) {$x_\tau$};
		\node [style=none] (6) at (-3, 0.25) {};
		\node [style=none] (7) at (0, 1.25) {};
		\node [style=none] (8) at (-3, -0.75) {};
		\node [style=none] (9) at (-3, -1.25) {};
		\node [style=none] (10) at (0.5, 1.75) {$y_\tau$};
		\node [style=none] (11) at (1.5, 1.25) {};
		\node [style=none] (12) at (1.5, 2.25) {};
		\node [style=none] (13) at (1.5, 2.25) {};
		\node [style=none] (14) at (-0.5, 2.25) {};
		\node [style=none] (15) at (-0.5, 1.25) {};
		\node [style=none] (16) at (-1.5, -1.75) {};
		\node [style=none] (17) at (-0.5, 1.75) {};
		\node [style=none] (18) at (-2, -1.25) {};
		\node [style=none] (19) at (1, 1.25) {};
	\end{pgfonlayer}
	\begin{pgfonlayer}{edgelayer}
		\draw [oWire, in=-90, out=90, looseness=1.00] (6.center) to (7.center);
		\draw [oWire, in=90, out=-90, looseness=0.50] (8.center) to (9.center);
		\draw (4.center) to (0.center);
		\draw (0.center) to (1.center);
		\draw (1.center) to (2.center);
		\draw (2.center) to (3.center);
		\draw (13.center) to (14.center);
		\draw (14.center) to (15.center);
		\draw (15.center) to (11.center);
		\draw (11.center) to (12.center);
		\draw [cWire, in=180, out=0, looseness=0.50] (16.center) to (17.center);
		\draw [oWire, in=-90, out=90, looseness=0.75] (18.center) to (19.center);
	\end{pgfonlayer}
\end{tikzpicture}
\eeq
By applying the normal form of Theorem~\ref{thmnormalform} to the special case of processes of the form of $x_{\tau}$ and of the form of $y_{\tau}$, we can write this tester explicitly as
   \beq
\InputIfFileExists{Diagrams/causalFITester.tikz}{}{\input{./figures/Diagrams/causalFITester.tikz}},
  \eeq
  and hence our goal is to prove the following equality:
 \beq
 \begin{tikzpicture}
	\begin{pgfonlayer}{nodelayer}
		\node [style=none] (0) at (-1.5, 1) {};
		\node [style=none] (1) at (-2.75, -0.5) {};
		\node [style=none] (2) at (-1.5, -2) {};
		\node [style=none] (3) at (-2, -0.5) {$s_{\tau}$};
		\node [style=none] (4) at (1.5, 2.5) {};
		\node [style=none] (5) at (2.75, 1.25) {};
		\node [style=none] (6) at (1.5, -0) {};
		\node [style=none] (7) at (2, 1.25) {$e_{\tau}$};
		\node [style=none] (8) at (-1.5, 0.5) {};
		\node [style=none] (9) at (1.5, 2) {};
		\node [style=epiPointWide] (10) at (0.25, -2.75) {};
		\node [style=none] (11) at (-1.5, -1.5) {};
		\node [style=none] (12) at (0, -2.5) {};
		\node [style=none] (13) at (0, -0.75) {};
		\node [style=none] (14) at (0, -0.75) {};
		\node [style=none] (15) at (0, 0.5) {};
		\node [style=none] (16) at (1, -0.75) {};
		\node [style=none] (17) at (0.5, -0) {};
		\node [style={wide clear dot}] (18) at (0.25, -0) {};
		\node [style=none] (19) at (1.5, 0.5) {};
		\node [style={small black dot}] (20) at (0, 0.5) {};
		\node [style={small black dot}] (21) at (0.5, 1) {};
		\node [style=ignore] (22) at (0, 0.5) {};
		\node [style=ignore] (23) at (0.5, 1) {};
		\node [style=none] (24) at (-0.5, -0.75) {};
		\node [style=none] (25) at (0.5, -0.75) {};
		\node [style=none] (26) at (-0.5, -1.75) {};
		\node [style=none] (27) at (1, -1.75) {};
		\node [style=none] (28) at (0.5, -2.5) {};
		\node [style=none] (29) at (0.5, -1.75) {};
		\node [style=none] (30) at (0.5, 1) {};
		\node [style=none] (31) at (-3, 1.25) {};
		\node [style=none] (32) at (-1, 1.25) {};
		\node [style=none] (33) at (-1, -2.5) {};
		\node [style=none] (34) at (-3, -2.5) {};
		\node [style=none] (35) at (1.25, 2.75) {};
		\node [style=none] (36) at (1.25, -0.25) {};
		\node [style=none] (37) at (3, -0.25) {};
		\node [style=none] (38) at (3, 2.75) {};
		\node [style=none] (39) at (-1.25, -2.25) {\footnotesize$\mathbf{i'}$};
		\node [style=none] (40) at (2.75, -0) {\footnotesize $\mathbf{i'}$};
		\node [style=none] (41) at (-3.25, 3) {};
		\node [style=none] (42) at (3.25, 3) {};
		\node [style=none] (43) at (3.25, -3.25) {};
		\node [style=none] (44) at (-3.25, -3.25) {};
		\node [style=none] (45) at (3, -3) {\footnotesize$\mathbf{p^*}$};
	\end{pgfonlayer}
	\begin{pgfonlayer}{edgelayer}
		\filldraw [fill=darkred!30,draw=darkred!60] (44.center) to (43.center) to (42.center) to (41.center) to cycle;
		\filldraw [fill=darkgreen!30,draw=darkgreen!60] (38.center) to (37.center) to (36.center) to (35.center) to cycle;
		\filldraw [fill=darkgreen!30,draw=darkgreen!60] (34.center) to (33.center) to (32.center) to (31.center) to cycle;
		\filldraw[fill=white,draw=black] (1.center) to (0.center) to (2.center) to cycle;
		\filldraw[fill=white,draw=black] (5.center) to (4.center) to (6.center) to cycle;
		\draw [cWire, in=180, out=0, looseness=1.00] (8.center) to (9.center);
		\draw [cWire, in=180, out=0, looseness=1.00] (11.center) to (10);
		\draw [oWire] (13.center) to (12.center);
		\draw [oWire] (15.center) to (14.center);
		\draw [oWire, in=90, out=-90, looseness=1.00] (17.center) to (16.center);
		\draw [style=cWire, in=180, out=0, looseness=1.00] (18) to (19.center);
		\draw [oWire] (16.center) to (27.center);
		\draw [oWire, in=90, out=-90, looseness=1.00] (27.center) to (28.center);
		\draw [thick gray dashed edge] (24.center) to (26.center);
		\draw [thick gray dashed edge] (26.center) to (29.center);
		\draw [thick gray dashed edge] (29.center) to (25.center);
		\draw [thick gray dashed edge] (25.center) to (24.center);
		\draw [oWire] (30.center) to (17.center);
	\end{pgfonlayer}
\end{tikzpicture}
\quad=\quad
\begin{tikzpicture}
	\begin{pgfonlayer}{nodelayer}
		\node [style=none] (0) at (-1.5, 1) {};
		\node [style=none] (1) at (-2.75, -0.5) {};
		\node [style=none] (2) at (-1.5, -2) {};
		\node [style=none] (3) at (-2, -0.5) {$s_{\tau}$};
		\node [style=none] (4) at (1.5, 2.5) {};
		\node [style=none] (5) at (2.75, 1.25) {};
		\node [style=none] (6) at (1.5, -0) {};
		\node [style=none] (7) at (2, 1.25) {$e_{\tau}$};
		\node [style=none] (8) at (-1.5, 0.5) {};
		\node [style=none] (9) at (1.5, 2) {};
		\node [style=epiPointWide] (10) at (0.7499998, -2.75) {};
		\node [style=none] (11) at (-0.25, -2.75) {};
		\node [style=none] (12) at (0.5000001, -2.75) {};
		\node [style=none] (13) at (-0.5000001, -2) {};
		\node [style=none] (14) at (0, -0.5) {};
		\node [style=none] (15) at (0, 0.75) {};
		\node [style=none] (16) at (1.5, -0.5) {};
		\node [style=none] (17) at (0.5, 0.25) {};
		\node [style={wide clear dot}] (18) at (0.25, 0.25) {};
		\node [style=none] (19) at (1.5, 0.5) {};
		\node [style={small black dot}] (20) at (0, 0.75) {};
		\node [style={small black dot}] (21) at (0.5, 1.25) {};
		\node [style=ignore] (22) at (0, 0.75) {};
		\node [style=ignore] (23) at (0.5, 1.25) {};
		\node [style=none] (24) at (-0.75, -0.5) {};
		\node [style=none] (25) at (1.25, -0.5) {};
		\node [style=none] (26) at (-0.75, -2) {};
		\node [style=none] (27) at (1.5, -2) {};
		\node [style=none] (28) at (1, -2.75) {};
		\node [style=none] (29) at (1.25, -2) {};
		\node [style=none] (30) at (0.5, 1.25) {};
		\node [style=none] (31) at (-3, 1.25) {};
		\node [style=none] (32) at (-1, 1.25) {};
		\node [style=none] (33) at (-1, -2.5) {};
		\node [style=none] (34) at (-3, -2.5) {};
		\node [style=none] (35) at (1.25, 2.75) {};
		\node [style=none] (36) at (1.25, -0.25) {};
		\node [style=none] (37) at (3, -0.25) {};
		\node [style=none] (38) at (3, 2.75) {};
		\node [style=none] (39) at (-1.25, -2.25) {\footnotesize$\mathbf{i'}$};
		\node [style=none] (40) at (2.75, -0) {\footnotesize $\mathbf{i'}$};
		\node [style=none] (41) at (-3.25, 3) {};
		\node [style=none] (42) at (3.25, 3) {};
		\node [style=none] (43) at (3.25, -3.25) {};
		\node [style=none] (44) at (-3.25, -3.25) {};
		\node [style=none] (45) at (3, -3) {\footnotesize$\mathbf{p^*}$};
		\node [style=none] (46) at (-0.5, -1) {};
		\node [style={small black dot}] (47) at (-0.5, -1) {};
		\node [style=ignore] (48) at (-0.5, -1) {};
		\node [style=epiPoint] (49) at (1, -1.5) {};
		\node [style={clear dot}] (50) at (-0.5000001, -1.5) {};
		\node [style=addStar] (51) at (0.2499997, -1.5) {};
		\node [style=none] (52) at (-1.5, -1.5) {};
	\end{pgfonlayer}
	\begin{pgfonlayer}{edgelayer}
		\filldraw [fill=darkred!30,draw=darkred!60] (44.center) to (43.center) to (42.center) to (41.center) to cycle;
		\filldraw [fill=darkgreen!30,draw=darkgreen!60] (38.center) to (37.center) to (36.center) to (35.center) to cycle;
		\filldraw [fill=darkgreen!30,draw=darkgreen!60] (34.center) to (33.center) to (32.center) to (31.center) to cycle;
		\filldraw[fill=white,draw=black] (1.center) to (0.center) to (2.center) to cycle;
		\filldraw[fill=white,draw=black] (5.center) to (4.center) to (6.center) to cycle;
		\draw [cWire, in=180, out=0, looseness=1.00] (8.center) to (9.center);
		\draw [cWire, in=180, out=0, looseness=1.00] (11.center) to (10);
		\draw [oWire, in=90, out=-90, looseness=1.00] (13.center) to (12.center);
		\draw [oWire] (15.center) to (14.center);
		\draw [oWire, in=90, out=-90, looseness=1.00] (17.center) to (16.center);
		\draw [style=cWire, in=180, out=0, looseness=1.00] (18) to (19.center);
		\draw [oWire] (16.center) to (27.center);
		\draw [oWire, in=90, out=-90, looseness=1.00] (27.center) to (28.center);
		\draw [thick gray dashed edge] (24.center) to (26.center);
		\draw [thick gray dashed edge] (26.center) to (29.center);
		\draw [thick gray dashed edge] (29.center) to (25.center);
		\draw [thick gray dashed edge] (25.center) to (24.center);
		\draw [oWire] (30.center) to (17.center);
		\draw [oWire, in=90, out=-90, looseness=1.00] (46.center) to (13.center);
		\draw [oWire, in=90, out=-90, looseness=1.25] (14.center) to (49);
		\draw [cWire] (50) to (49);
		\draw [cWire, in=180, out=0, looseness=1.00] (52.center) to (11.center);
	\end{pgfonlayer}
\end{tikzpicture}
 \eeq
 This equality follows immediately from the following set of rewrites (where the first and last equality follow from the special case of Lemma~\ref{lem:CCC} where $\Lambda$ is trivial):
  \begin{align}
\InputIfFileExists{Diagrams/identityOnticInfProof.tikz}{}{\input{./figures/Diagrams/identityOnticInfProof.tikz}}&=  %
\InputIfFileExists{Diagrams/identityOnticInfProof2.tikz}{}{\input{./figures/Diagrams/identityOnticInfProof2.tikz}}\\
  &=  %
\InputIfFileExists{Diagrams/identityOnticInfProof3.tikz}{}{\input{./figures/Diagrams/identityOnticInfProof3.tikz}}\\
  &=  %
\InputIfFileExists{Diagrams/identityOnticInfProof4.tikz}{}{\input{./figures/Diagrams/identityOnticInfProof4.tikz}}\\
  &=  %
\InputIfFileExists{Diagrams/identityOnticInfProof5.tikz}{}{\input{./figures/Diagrams/identityOnticInfProof5.tikz}}\\
  &=  %
\InputIfFileExists{Diagrams/identityOnticInfProof6.tikz}{}{\input{./figures/Diagrams/identityOnticInfProof6.tikz}}
 \end{align}
\endproof


\subsection{Proof of Lemma~\ref{prfinfstochastic}} \label{prfinfstoch}
We now prove Lemma~\ref{prfinfstochastic}, that two processes in $\widetilde{\FI}$ are inferentially equivalent if and only if they are associated with the same substochastic map.
\proof
The $\implies$ direction follows immediately from the definition of inferential equivalence and the fact that the following diagram is a tester:
\beq
\InputIfFileExists{Diagrams/exampleTester.tikz}{}{\input{./figures/Diagrams/exampleTester.tikz}}.
\eeq
To prove the $\Leftarrow$ direction,
one can apply the fact that $\mathbf{p^*}=\tilde{\mathbf{p}}^* \circ \sim_{\mathbf{p^*}}$ and then apply Lemma~\ref{eq:newrewrite} to show that
\begin{align}
\InputIfFileExists{Diagrams/sufficientTesterProof2.tikz}{}{\input{./figures/Diagrams/sufficientTesterProof2.tikz}} &= %
\InputIfFileExists{Diagrams/sufficientTesterProof3.tikz}{}{\input{./figures/Diagrams/sufficientTesterProof3.tikz}} \\
 &= %
\InputIfFileExists{Diagrams/sufficientTesterProof4.tikz}{}{\input{./figures/Diagrams/sufficientTesterProof4.tikz}} \\
 &= %
\InputIfFileExists{Diagrams/sufficientTesterProof5.tikz}{}{\input{./figures/Diagrams/sufficientTesterProof5.tikz}}\label{proof:sufficientTester1}
\end{align}
and similarly for $\mathcal{E}$.

Now, starting with the RHS of the implication in Eq.~\eqref{infstochimp},
\beq \label{RHSofimpl}
\InputIfFileExists{Diagrams/sufficientTester1.tikz}{}{\input{./figures/Diagrams/sufficientTester1.tikz}}=%
\InputIfFileExists{Diagrams/sufficientTester2.tikz}{}{\input{./figures/Diagrams/sufficientTester2.tikz}},
\eeq
we will derive the LHS using the result we just proved.
First, note that this equality implies that
\beq
\forall \mathcal{T} \quad %
\InputIfFileExists{Diagrams/sufficientTesterProof6.tikz}{}{\input{./figures/Diagrams/sufficientTesterProof6.tikz}} = %
\InputIfFileExists{Diagrams/sufficientTesterProof7.tikz}{}{\input{./figures/Diagrams/sufficientTesterProof7.tikz}}.
\eeq
 Diagram preservation then allows us to write this as:
\beq\label{proof:sufficientTester2}
\forall \mathcal{T} \quad %
\InputIfFileExists{Diagrams/sufficientTesterProof8.tikz}{}{\input{./figures/Diagrams/sufficientTesterProof8.tikz}} = %
\InputIfFileExists{Diagrams/sufficientTesterProof9.tikz}{}{\input{./figures/Diagrams/sufficientTesterProof9.tikz}}
\eeq
Now, consider a special class of testers, namely, those of the form:
\beq
\InputIfFileExists{Diagrams/sufficientTesterProof11.tikz}{}{\input{./figures/Diagrams/sufficientTesterProof11.tikz}} = %
\InputIfFileExists{Diagrams/sufficientTesterProof10.tikz}{}{\input{./figures/Diagrams/sufficientTesterProof10.tikz}}
\eeq
for any $\mathcal{T}'$. Condition \ref{proof:sufficientTester2} therefore implies that
\beq \forall \mathcal{T}'  \quad  %
\InputIfFileExists{Diagrams/sufficientTesterProof12.tikz}{}{\input{./figures/Diagrams/sufficientTesterProof12.tikz}} = %
\InputIfFileExists{Diagrams/sufficientTesterProof13.tikz}{}{\input{./figures/Diagrams/sufficientTesterProof13.tikz}}.
\eeq
Using Eq.~\eqref{proof:sufficientTester1}, this is equivalent to
\beq \forall \mathcal{T}' \quad %
\InputIfFileExists{Diagrams/sufficientTesterProof14.tikz}{}{\input{./figures/Diagrams/sufficientTesterProof14.tikz}} = %
\InputIfFileExists{Diagrams/sufficientTesterProof15.tikz}{}{\input{./figures/Diagrams/sufficientTesterProof15.tikz}}
\eeq
But this is just the definition of inferential equivalence:
\beq
\begin{tikzpicture}
	\begin{pgfonlayer}{nodelayer}
		\node [style=none] (0) at (-0.5, 0.5) {};
		\node [style=none] (1) at (-0.5, -0.5) {};
		\node [style=none] (2) at (0.5, -0.5) {};
		\node [style=none] (3) at (0.5, 0.5) {};
		\node [style=none] (4) at (0, 0) {$\mathcal{D}$};
		\node [style=none] (5) at (1, -0) {};
		\node [style=none] (6) at (0.5, 0) {};
		\node [style=none] (7) at (-0.5, 0) {};
		\node [style=none] (8) at (-1, -0) {};
		\node [style=none] (9) at (-1.25, 1.25) {};
		\node [style=none] (10) at (-1.25, -1.25) {};
		\node [style=none] (11) at (1.25, -1.25) {};
		\node [style=none] (12) at (1.25, 1.25) {};
		\node [style=none] (14) at (0, 0.5000001) {};
		\node [style=none] (15) at (0, 1) {};
		\node [style=none] (16) at (0, -0.5000001) {};
		\node [style=none] (17) at (0, -1) {};
	\end{pgfonlayer}
	\begin{pgfonlayer}{edgelayer}
\filldraw [fill=white,draw] (0.center) to (1.center) to (2.center) to (3.center) to cycle;
		\draw [CcWire] (5.center) to (6.center);
		\draw [CcWire] (7.center) to (8.center);
		\draw[oWire] (14.center) to (15.center);
		\draw[oWire] (17.center) to (16.center);
	\end{pgfonlayer}
\end{tikzpicture}
\sim_{\mathbf{p^*}}
\begin{tikzpicture}
	\begin{pgfonlayer}{nodelayer}
		\node [style=none] (0) at (-0.5, 0.5) {};
		\node [style=none] (1) at (-0.5, -0.5) {};
		\node [style=none] (2) at (0.5, -0.5) {};
		\node [style=none] (3) at (0.5, 0.5) {};
		\node [style=none] (4) at (0, 0) {$\mathcal{E}$};
		\node [style=none] (5) at (1, -0) {};
		\node [style=none] (6) at (0.5, 0) {};
		\node [style=none] (7) at (-0.5, 0) {};
		\node [style=none] (8) at (-1, -0) {};
		\node [style=none] (9) at (-1.25, 1.25) {};
		\node [style=none] (10) at (-1.25, -1.25) {};
		\node [style=none] (11) at (1.25, -1.25) {};
		\node [style=none] (12) at (1.25, 1.25) {};
		\node [style=none] (14) at (0, 0.5000001) {};
		\node [style=none] (15) at (0, 1) {};
		\node [style=none] (16) at (0, -0.5000001) {};
		\node [style=none] (17) at (0, -1) {};
	\end{pgfonlayer}
	\begin{pgfonlayer}{edgelayer}
\filldraw [fill=white,draw] (0.center) to (1.center) to (2.center) to (3.center) to cycle;
		\draw [CcWire] (5.center) to (6.center);
		\draw [CcWire] (7.center) to (8.center);
		\draw[oWire] (14.center) to (15.center);
		\draw[oWire] (17.center) to (16.center);
	\end{pgfonlayer}
\end{tikzpicture}
\eeq

\endproof

\subsection{Proof of Theorem~\ref{thm:QNF}} \label{appqnf}
We now prove Theorem~\ref{thm:QNF}, which immediately led to the normal form for $\widetilde{\FI}$ given in Corollary~\ref{normalformFI}.
In the equalities that follow, Eq.~\eqref{prfdetails2} follows from Eq.~\eqref{eqb8},  Eq.~\eqref{prfdetails3} follows from Eq.~\eqref{eqB16} and Eq.~\eqref{eq:uniControl}, Eq.~\eqref{prfdetails4} follows from two applications of Eq.~\eqref{identityOnticInf}, Eq.~\eqref{prfdetails9} follows from  Eq.~\eqref{Axiom:PropKnowGenerators}, Eq.~\eqref{prfdetails10} follows from Lemma~\eqref{lem:CCC}.

\proof
\begin{align}
\InputIfFileExists{Diagrams/QNF1.tikz}{}{\input{./figures/Diagrams/QNF1.tikz}}\quad &=\quad %
\InputIfFileExists{Diagrams/QNF2.tikz}{}{\input{./figures/Diagrams/QNF2.tikz}} \label{prfdetails1}
\end{align}
\begin{align}
\qquad &=\quad %
\InputIfFileExists{Diagrams/QNF3.tikz}{}{\input{./figures/Diagrams/QNF3.tikz}} \label{prfdetails2} \\
&=\quad %
\InputIfFileExists{Diagrams/QNF4.tikz}{}{\input{./figures/Diagrams/QNF4.tikz}} \label{prfdetails3} \\
&\sim_{\mathbf{p^*}}\quad %
\InputIfFileExists{Diagrams/QNF5.tikz}{}{\input{./figures/Diagrams/QNF5.tikz}} \label{prfdetails4} \\
&=\quad %
\InputIfFileExists{Diagrams/QNF6.tikz}{}{\input{./figures/Diagrams/QNF6.tikz}}
\end{align}
where
\beq
D\quad =\quad %
\InputIfFileExists{Diagrams/QNF7.tikz}{}{\input{./figures/Diagrams/QNF7.tikz}}
\eeq
We can then further rewrite this as:
\begin{align}
\qquad &= \quad %
\InputIfFileExists{Diagrams/QNF8.tikz}{}{\input{./figures/Diagrams/QNF8.tikz}}\label{prfdetails6} \\
&= \quad %
\InputIfFileExists{Diagrams/QNF9.tikz}{}{\input{./figures/Diagrams/QNF9.tikz}}\label{prfdetails7} \\
&= \quad %
\InputIfFileExists{Diagrams/QNF10.tikz}{}{\input{./figures/Diagrams/QNF10.tikz}} \label{prfdetails8} \\
&= \quad %
\InputIfFileExists{Diagrams/QNF11.tikz}{}{\input{./figures/Diagrams/QNF11.tikz}} \label{prfdetails9} \\
&= \quad %
\InputIfFileExists{Diagrams/QNF12.tikz}{}{\input{./figures/Diagrams/QNF12.tikz}} \label{prfdetails10} \\
&= \quad %
\InputIfFileExists{Diagrams/QNF13.tikz}{}{\input{./figures/Diagrams/QNF13.tikz}}\label{prfdetails11} \\
&= \quad %
\InputIfFileExists{Diagrams/QNF14.tikz}{}{\input{./figures/Diagrams/QNF14.tikz}} \label{prfdetails12} \\
&= \quad %
\InputIfFileExists{Diagrams/QNF15.tikz}{}{\input{./figures/Diagrams/QNF15.tikz}} \label{prfdetails13}
\end{align}
By their construction, one can see that $\Sigma$ is a stochastic map and $\Pi$ is a propositional map.
\endproof

\section{Useful results for \crealist representations}
\subsection{Proof of Theorem~\ref{thm:OntRepNF}} \label{finalproof}
We now prove Theorem~\ref{thm:OntRepNF}.
\proof
First, note that Eq.~\eqref{eq:CausalOntRep}, diagram preservation of $\xi$, and the constraint of ignorability, Eq.~\eqref{constraint6}, imply that
\begin{align}
\InputIfFileExists{Diagrams/OntRepProof20.tikz}{}{\input{./figures/Diagrams/OntRepProof20.tikz}}\quad &= \quad %
\InputIfFileExists{Diagrams/OntRepProof21.tikz}{}{\input{./figures/Diagrams/OntRepProof21.tikz}} \label{LHSontrepn} \\
 &= \quad%
\InputIfFileExists{Diagrams/OntRepProof22.tikz}{}{\input{./figures/Diagrams/OntRepProof22.tikz}}.
\end{align}

Now,  Theorem~\ref{thm:QNF} gives that
\beq \label{normalformtoontrepn}
\InputIfFileExists{Diagrams/OntRep1.tikz}{}{\input{./figures/Diagrams/OntRep1.tikz}}\quad \sim_{{\bf p}^*}\quad %
\InputIfFileExists{Diagrams/QNF16.tikz}{}{\input{./figures/Diagrams/QNF16.tikz}},
\eeq
for some substochastic map $\Sigma$ and some propositional effect $\Pi$. Applying this to decompose the process on the LHS of Eq.~\eqref{LHSontrepn}, one gets
\beq
\InputIfFileExists{Diagrams/QNF17.tikz}{}{\input{./figures/Diagrams/QNF17.tikz}}\quad\sim_{\mathbf{p^*}}\quad  \begin{tikzpicture}
	\begin{pgfonlayer}{nodelayer}
		\node [style=none] (0) at (-1.5, 0.75) {};
		\node [style=none] (1) at (0, -0.75) {};
		\node [style=none] (2) at (0, -1.75) {};
		\node [style=infupground] (3) at (-0.5, 0.75) {};
		\node [style=none] (4) at (0, -0.75) {};
		\node [style=none] (5) at (0, -0.25) {};
		\node [style=none] (6) at (-1.25, 1.25) {};
		\node [style={small black dot}] (7) at (0, -0.25) {};
		\node [style=ignore] (8) at (0, -0.25) {};
	\end{pgfonlayer}
	\begin{pgfonlayer}{edgelayer}
		\draw [style=CcWire, in=180, out=0, looseness=0.75] (0.center) to (3);
		\draw [oWire] (4.center) to (5.center);
		\draw [oWire] (1.center) to (2.center);
	\end{pgfonlayer}
\end{tikzpicture}.
\eeq

Rewriting the LHS of this we obtain
\beq
\InputIfFileExists{Diagrams/QNF18.tikz}{}{\input{./figures/Diagrams/QNF18.tikz}}\quad\sim_{\mathbf{p^*}}\quad  \begin{tikzpicture}
	\begin{pgfonlayer}{nodelayer}
		\node [style=none] (0) at (-1.5, 0.75) {};
		\node [style=none] (1) at (0, -0.75) {};
		\node [style=none] (2) at (0, -1.75) {};
		\node [style=infupground] (3) at (-0.5, 0.75) {};
		\node [style=none] (4) at (0, -0.75) {};
		\node [style=none] (5) at (0, -0.25) {};
		\node [style=none] (6) at (-1.25, 1.25) {};
		\node [style={small black dot}] (7) at (0, -0.25) {};
		\node [style=ignore] (8) at (0, -0.25) {};
	\end{pgfonlayer}
	\begin{pgfonlayer}{edgelayer}
		\draw [style=CcWire, in=180, out=0, looseness=0.75] (0.center) to (3);
		\draw [oWire] (4.center) to (5.center);
		\draw [oWire] (1.center) to (2.center);
	\end{pgfonlayer}
\end{tikzpicture}
\eeq
Using Lemma~\ref{lem:CausalProposition} (stated and proved in Appendix~\ref{lemmaaboutpropns}), we obtain
\beq \label{usesproplemma}
\begin{tikzpicture}
	\begin{pgfonlayer}{nodelayer}
		\node [style=none] (0) at (-1.25, 0.5) {};
		\node [style=none] (1) at (-1.25, -0.5) {};
		\node [style=none] (2) at (0.5, -0) {$\Pi$};
		\node [style=none] (3) at (1.25, -0) {};
		\node [style=none] (4) at (1.25, -0) {};
		\node [style=none] (5) at (0, 1) {};
		\node [style=none] (6) at (0, -1) {};
		\node [style=none] (7) at (0, 0.5) {};
		\node [style=none] (8) at (0, -0.5) {};
	\end{pgfonlayer}
	\begin{pgfonlayer}{edgelayer}
		\draw (4.center) to (5.center);
		\draw (5.center) to (6.center);
		\draw (6.center) to (3.center);
		\draw (3.center) to (4.center);
		\draw [cWire] (0.center) to (7.center);
		\draw [cWire, in=180, out=0, looseness=1.00] (1) to (8.center);
	\end{pgfonlayer}
\end{tikzpicture}
\quad \sim_{\mathbf{p}^*} \quad \begin{tikzpicture}
	\begin{pgfonlayer}{nodelayer}
		\node [style=none] (0) at (-1.25, 0.75) {};
		\node [style=none] (1) at (-1.25, -0.75) {};
		\node [style=none] (2) at (0, 0.75) {};
		\node [style=none] (3) at (0, -0.75) {};
		\node [style=infcopoint] (4) at (0.25, 0.75) {$\chi_\Pi$};
		\node [style=infcopoint] (5) at (0.25, -0.75) {$\top$};
	\end{pgfonlayer}
	\begin{pgfonlayer}{edgelayer}
		\draw [cWire] (0.center) to (2.center);
		\draw [cWire, in=180, out=0, looseness=1.00] (1.center) to (3.center);
	\end{pgfonlayer}
\end{tikzpicture}
\eeq
Substituting this in, we obtain
\beq
\InputIfFileExists{Diagrams/QNF19.tikz}{}{\input{./figures/Diagrams/QNF19.tikz}}\quad\sim_{\mathbf{p^*}}\quad \begin{tikzpicture}
	\begin{pgfonlayer}{nodelayer}
		\node [style=none] (0) at (-1.5, 0.75) {};
		\node [style=none] (1) at (0, -0.75) {};
		\node [style=none] (2) at (0, -1.75) {};
		\node [style=infupground] (3) at (-0.5, 0.75) {};
		\node [style=none] (4) at (0, -0.75) {};
		\node [style=none] (5) at (0, -0.25) {};
		\node [style=none] (6) at (-1.25, 1.25) {};
		\node [style={small black dot}] (7) at (0, -0.25) {};
		\node [style=ignore] (8) at (0, -0.25) {};
	\end{pgfonlayer}
	\begin{pgfonlayer}{edgelayer}
		\draw [style=CcWire, in=180, out=0, looseness=0.75] (0.center) to (3);
		\draw [oWire] (4.center) to (5.center);
		\draw [oWire] (1.center) to (2.center);
	\end{pgfonlayer}
\end{tikzpicture}
\eeq
and so
\beq
\InputIfFileExists{Diagrams/QNF20.tikz}{}{\input{./figures/Diagrams/QNF20.tikz}}\quad\sim_{\mathbf{p^*}}\quad\begin{tikzpicture}
	\begin{pgfonlayer}{nodelayer}
		\node [style=none] (0) at (-1.5, 0.75) {};
		\node [style=none] (1) at (0, -0.75) {};
		\node [style=none] (2) at (0, -1.75) {};
		\node [style=infupground] (3) at (-0.5, 0.75) {};
		\node [style=none] (4) at (0, -0.75) {};
		\node [style=none] (5) at (0, -0.25) {};
		\node [style=none] (6) at (-1.25, 1.25) {};
		\node [style={small black dot}] (7) at (0, -0.25) {};
		\node [style=ignore] (8) at (0, -0.25) {};
	\end{pgfonlayer}
	\begin{pgfonlayer}{edgelayer}
		\draw [style=CcWire, in=180, out=0, looseness=0.75] (0.center) to (3);
		\draw [oWire] (4.center) to (5.center);
		\draw [oWire] (1.center) to (2.center);
	\end{pgfonlayer}
\end{tikzpicture}.
\eeq
Hence, it must be that
\beq
\begin{tikzpicture}
	\begin{pgfonlayer}{nodelayer}
		\node [style=none] (0) at (-2.25, 0.5000001) {};
		\node [style=none] (1) at (-3, 0.5000001) {};
		\node [style=none] (2) at (-1.25, -0) {};
		\node [style=none] (3) at (-2.25, 1) {};
		\node [style=none] (4) at (-1.25, 1.5) {};
		\node [style=none] (5) at (-1.25, -0.5000001) {};
		\node [style=none] (6) at (-2.25, -0) {};
		\node [style=none] (7) at (-1.75, 0.5000001) {$\Sigma$};
		\node [style=infcopoint] (8) at (-0.5, -0) {$\chi_\Pi$};
		\node [style=none] (9) at (-1.25, 1) {};
		\node [style=none] (10) at (0.5, 1) {};
	\end{pgfonlayer}
	\begin{pgfonlayer}{edgelayer}
		\draw [cWire] (1.center) to (0.center);
		\draw (3.center) to (4.center);
		\draw (4.center) to (5.center);
		\draw (5.center) to (6.center);
		\draw (6.center) to (3.center);
		\draw [cWire] (2.center) to (8);
		\draw [cWire] (9.center) to (10.center);
	\end{pgfonlayer}
\end{tikzpicture} \quad =: \quad \begin{tikzpicture}
	\begin{pgfonlayer}{nodelayer}
		\node [style=none] (0) at (-2, -0) {};
		\node [style=none] (1) at (-2.75, -0) {};
		\node [style=none] (2) at (-0.9999998, -0.5000003) {};
		\node [style=none] (3) at (-0.9999998, -0) {};
		\node [style=none] (4) at (0, -0) {};
		\node [style=none] (5) at (0, -0) {};
		\node [style=none] (6) at (0, 0.25) {};
		\node [style=none] (7) at (-2, 0.5000003) {};
		\node [style=none] (8) at (-0.9999998, 0.5000003) {};
		\node [style=none] (9) at (-0.9999998, -0.5000003) {};
		\node [style=none] (10) at (-2, -0.5000003) {};
		\node [style=none] (11) at (-1.5, -0) {$\Xi_A^B$};
	\end{pgfonlayer}
	\begin{pgfonlayer}{edgelayer}
		\draw [cWire] (1.center) to (0.center);
		\draw [style=CcWire, in=180, out=0, looseness=0.75] (3.center) to (5.center);
		\draw (7.center) to (8.center);
		\draw (8.center) to (9.center);
		\draw (9.center) to (10.center);
		\draw (10.center) to (7.center);
	\end{pgfonlayer}
\end{tikzpicture}
\eeq
is a stochastic map.

Finally, substituting the decomposition of $\Pi$ into Eq.~\eqref{normalformtoontrepn} and then using the definition of $\Xi_A^B$, one obtains
\begin{align}
\begin{tikzpicture}
	\begin{pgfonlayer}{nodelayer}
		\node [style=epiBox] (0) at (0, -0) {};
		\node [style=none] (1) at (0, -1) {};
		\node [style=none] (2) at (0, 1) {};
		\node [style=none] (3) at (-3, -0) {};
		\node [style=none] (4) at (0.75, -0.75) {\footnotesize $\xi$};
		\node [style={up label}] (5) at (-1, -0) {$\morph{\op{A}}{\op{B}}$};
		\node [style={up label}] (6) at (-2.5, -0) {$\morph{\op{A}}{\op{B}}$};
		\node [style={right label}] (7) at (0, 0.5) {$\op{B}$};
		\node [style={right label}] (8) at (0, -0.75) {$\op{A}$};
		\node [style={right label}] (9) at (0, 1.5) {$\Lambda_{\op{B}}$};
		\node [style={right label}] (10) at (0, -1.5) {$\Lambda_{\op{A}}$};
		\node [style=none] (11) at (-1.75, 1) {};
		\node [style=none] (12) at (1, 1) {};
		\node [style=none] (13) at (1, -1) {};
		\node [style=none] (14) at (-1.75, -1) {};
		\node [style=none] (15) at (0, -2) {};
		\node [style=none] (16) at (0, 2) {};
	\end{pgfonlayer}
	\begin{pgfonlayer}{edgelayer}
		\filldraw [fill=gray!30,draw=gray!60] (11.center) to (12.center) to (13.center) to (14.center) to cycle;
		\draw [qWire] (0) to (2.center);
		\draw [qWire] (0) to (1.center);
		\draw [oWire] (16) to (2.center);
		\draw [oWire] (15) to (1.center);
		\draw [CcWire] (3.center) to (0);
	\end{pgfonlayer}
\end{tikzpicture}
\quad
&\sim_\mathbf{p^*}\quad
\begin{tikzpicture}
	\begin{pgfonlayer}{nodelayer}
		\node [style=none] (0) at (-2.25, 0.5000001) {};
		\node [style=none] (1) at (-3, 0.5000001) {};
		\node [style=none] (2) at (-1.25, -0) {};
		\node [style={clear dot}] (3) at (0, -1.5) {};
		\node [style=none] (4) at (-1.25, 1) {};
		\node [style=none] (5) at (0, 0.9999998) {};
		\node [style=none] (6) at (0, -2.25) {};
		\node [style=none] (7) at (0, 0.9999998) {};
		\node [style=epiBox] (8) at (0, 0.9999998) {};
		\node [style=none] (9) at (0, 2.25) {};
		\node [style=none] (10) at (-2.25, 1) {};
		\node [style=none] (11) at (-1.25, 1.5) {};
		\node [style=none] (12) at (-1.25, -0.5000001) {};
		\node [style=none] (13) at (-2.25, -0) {};
		\node [style=none] (14) at (-1.75, 0.5000001) {$\Sigma$};
		\node [style=none] (15) at (1.25, -0) {};
		\node [style=none] (16) at (1.25, -1.5) {};
		\node [style=infcopoint] (17) at (1.5, -0) {$\chi_\pi$};
		\node [style=infcopoint] (18) at (1.5, -1.5) {$\top$};
	\end{pgfonlayer}
	\begin{pgfonlayer}{edgelayer}
		\draw [cWire] (1.center) to (0.center);
		\draw [style=CcWire, in=180, out=0, looseness=0.75] (4.center) to (7.center);
		\draw [oWire] (8) to (9.center);
		\draw [oWire] (5.center) to (6.center);
		\draw (10.center) to (11.center);
		\draw (11.center) to (12.center);
		\draw (12.center) to (13.center);
		\draw (13.center) to (10.center);
		\draw [cWire] (2.center) to (15.center);
		\draw [cWire, in=180, out=0, looseness=1.00] (3) to (16.center);
	\end{pgfonlayer}
\end{tikzpicture}\\
\quad
&=\quad%
\InputIfFileExists{Diagrams/QNF21.tikz}{}{\input{./figures/Diagrams/QNF21.tikz}}
\end{align}
That is, every \crealist representation is inferentially equivalent to updating one's knowledge about the operational procedure to knowledge about functional dynamics.
\endproof

\end{document}